\newcommand*{\orth}{^\perp} 
\newcommand*{\biorth}{^{\perp\perp}} 
\newcommand*{\tens}{\otimes} 
\newcommand*{\lolipop}{\multimap} 
\newcommand*{\pw}{\parr\backslash\with} 
\newcommand*{\tp}{\tens\backslash\oplus} 
\renewcommand*{\fCenter}{\vdash} 
\newcommand*{\Raxocc}[1]{
    \AxiomC{}
    \RightLabel{$ax$}
    \UnaryInfC{$\fCenter$ #1}
}
\newcommand*{\Rax}[1]{\Raxocc{#1$\orth,$ #1}}
\newcommand*{\Rcut}[1]{
    \RightLabel{$cut$}
    \BinaryInfC{$\fCenter$ #1}
}
\newcommand*{\Rtens}[1]{
    \RightLabel{$\tens$}
    \BinaryInfC{$\fCenter$ #1}
}
\newcommand*{\Rparr}[1]{
    \RightLabel{$\parr$}
    \UnaryInfC{$\fCenter$ #1}
}
\newcommand*{\Rwith}[1]{
    \RightLabel{$\with$}
    \BinaryInfC{$\fCenter$ #1}
}
\newcommand*{\Rwiths}[2]{
    \RightLabel{$\with_{#1}$}
    \UnaryInfC{$\fCenter$ #2}
}
\newcommand*{\Rplus}[2]{
    \RightLabel{$\oplus_{#1}$}
    \UnaryInfC{$\fCenter$ #2}
}
\newcommand*{\Rplusone}[1]{\Rplus{1}{#1}}
\newcommand*{\Rplustwo}[1]{\Rplus{2}{#1}}
\newcommand*{\Rplusi}[1]{\Rplus{i}{#1}}
\newcommand*{\Rex}[1]{
\RightLabel{$ex$}
\UnaryInfC{$\fCenter$ #1}
}
\newcommand*{\Rone}{
    \AxiomC{}
    \RightLabel{$1$}
    \UnaryInfC{$\fCenter 1$}
}
\newcommand*{\Rbot}[1]{
    \RightLabel{$\bot$}
    \UnaryInfC{$\fCenter$ #1}
}
\newcommand*{\Rtop}[1]{
    \AxiomC{}
    \RightLabel{$\top$}
    \UnaryInfC{$\fCenter$ #1}
}
\newcommand*{\Rsub}[2]{
    \AxiomC{#1}
    \noLine
    \UnaryInfC{$\fCenter$ #2}
}
\newcommand{\emptyforproof}{{\scriptsize\phantom{x}}}
\newenvironment{myscope}{
\begin{scope}[every node/.style={circle,thick,draw,minimum size=8mm,inner sep=0}, every edge/.style={->,draw=black,thick}, every path/.style={-,draw=black,thick}]
}{
\end{scope}
}
\newenvironment{myscopec}[1]{
\begin{scope}[every node/.style={circle,thick,draw=#1,minimum size=8mm,inner sep=0}, every edge/.style={->,draw=#1,thick}, every path/.style={draw=#1, thick}]
\color{#1}
}{
\end{scope}
}
\newenvironment{myscope_sur}{
\begin{scope}[every edge/.style={-,draw=blue,thick,dashed,line width=2.5pt}, every path/.style={-,draw=blue,thick,dashed,line width=2.5pt}]
}{
\end{scope}
}
\newenvironment{myscope_i}{
\begin{scope}[every node/.style={circle,thick,draw=none,minimum size=8mm,inner sep=0}, every edge/.style={->,draw=none,thick}, every path/.style={-,draw=none,thick}]
\color{white}
}{
\end{scope}
}
\newcommand{\edgelabel}[1]{edge node[draw=none, sloped, align=center]{#1 \\}}
\newcommand{\edgelabeld}[1]{edge[-] node[draw=none, sloped, align=center]{#1 \\}}
\tikzset{
    arete nommee/.style={draw=none,inner sep=1pt,outer sep=2pt,rectangle,minimum size=0pt,auto},
    chemin nomme/.style={arete nommee,sloped},
}
\newcommand*{\ie}{\emph{i.e.}}
\newcommand*{\wolog}{\emph{w.l.o.g.}}
\newcommand*{\eg}{\emph{e.g.}}
\newcommand*{\resp}{resp.}
\newcommand*{\G}{\mathcal{G}}
\newcommand*{\Slices}{\mathcal{S}}
\newcommand*{\Rst}{\mathcal{R}} 
\newcommand*{\iso}{\mathrel{\simeq}} 
\newcommand*{\Eu}{\mathcal{L}} 
\newcommand*{\E}{{\Eu^\dagger}} 
\newcommand*{\AC}{\mathcal{AC}} 
\newcommand*{\eqeq}{=} 
\newcommand*{\eqE}{\eqeq_\E} 
\newcommand*{\eqEu}[1][]{\eqeq_{\Eu_{#1}}} 
\newcommand*{\eqAC}{\eqeq_\AC} 
\newcommand*{\isom}{\eqEu} 
\newcommand*{\thD}{\mathcal{D}}
\newcommand*{\isoc}{\eqeq_\thD} 
\newcommand*{\thS}{\mathcal{S}}
\newcommand*{\isoo}{\eqeq_\thS} 
\newcommand*{\thCP}{\mathcal{C}}
\newcommand*{\distsystem}{\mathfrak{D}}
\newcommand{\symboltext}[2]{\mathrel{\overset{\makebox[0pt]{\mbox{\normalfont\tiny\sffamily \ensuremath{#1}}}}{#2}}} 
\newcommand{\textsymbol}[2]{\mathrel{\underset{\makebox[0pt]{\mbox{\normalfont\tiny\sffamily \ensuremath{#1}}}}{#2}}} 
\newcommand{\eqtext}[1]{\symboltext{#1}{=}} 
\newcommand{\cadratin}{\mathrel{\text{\textemdash}}} 
\newcommand{\edgelab}[1]{\symboltext{#1}{\cadratin}} 
\newcommand{\jump}{\symboltext{j}{\longrightarrow}}
\newcommand{\betabar}{\overline{\beta}}
\newcommand{\etato}{\symboltext{\eta}{\longrightarrow}}
\newcommand{\etafrom}{\symboltext{\eta}{\longleftarrow}}
\newcommand{\etatostar}{\symboltext{\eta^*}{\longrightarrow}}
\newcommand{\etafromstar}{\symboltext{\eta^*}{\longleftarrow}}
\newcommand{\betato}{\symboltext{\beta}{\longrightarrow}}
\newcommand{\betafrom}{\symboltext{\beta}{\longleftarrow}}
\newcommand{\betatostar}{\symboltext{\beta^*}{\longrightarrow}}
\newcommand{\betafromstar}{\symboltext{\beta^*}{\longleftarrow}}
\newcommand{\betabarto}{\symboltext{\betabar}{\longrightarrow}}
\newcommand{\betabarfrom}{\symboltext{\betabar}{\longleftarrow}}
\newcommand{\betabartostar}{\symboltext{\betabar^*}{\longrightarrow}}
\newcommand{\betabartoplus}{\symboltext{\betabar^+}{\longrightarrow}}
\newcommand{\betabarfromstar}{\symboltext{\betabar^*}{\longleftarrow}}
\newcommand{\betabarfromplus}{\symboltext{\betabar^+}{\longleftarrow}}
\newcommand{\vdashv}{\mathrel{\lower .22ex \hbox{$\vdash\!\dashv$}}}
\DeclareRobustCommand{\astsymbol}[1]{\mathrel{\ooalign{{$\vdashv$}\crcr{$\stackrel{#1}{\hphantom{\vdashv}\vphantom{\to}}$}}}}
\DeclareRobustCommand{\astvdashv}{\astsymbol{\ast}}
\DeclareRobustCommand{\vdashveq}{\astsymbol{=}}
\newcommand{\eqcone}{\astsymbol{r}} 
\newcommand{\eqc}{\astsymbol{r^\ast}} 
\newcommand{\eqceq}{\astsymbol{r^=}} 
\newcommand{\eqcc}{\astsymbol{c}} 
\newcommand{\eqccast}{\astsymbol{c^\ast}} 
\newcommand{\comm}[2]{C_{#1}^{#2}} 
\newcommand{\commup}[2]{\textsymbol{\comm{#1}{#2}}{\longleftarrow}} 
\newcommand{\commdown}[2]{\symboltext{\comm{#1}{#2}}{\longrightarrow}} 
\newcommand{\commbi}[2]{\underset{\commup{#1}{#2}}{\overset{\commdown{#2}{#1}}{}}} 
\newcommand*{\pcontext}[1]{\resizebox{!}{7pt}{$\bigoplus$}[#1]}
\newcommand{\dupcut}{\mathrel{\triangleleft}}
\newcommand{\eqb}{\mathrel{\eqeq_{\beta}}}
\newcommand{\eqbn}{\mathrel{\eqeq_{\beta\eta}}}
\newcommand{\eqbb}{\mathrel{\eqeq_{\betabar}}}
\newcommand*{\id}[1]{\ensuremath{\textnormal{id}_{#1}}}
\newcommand*{\ax}[1]{\ensuremath{\textnormal{ax}_{#1}}}
\newcommand*{\cutf}[3]{\ensuremath{#1\symboltext{#3}{\bowtie} #2}}
\newcommand{\isoproofs}[4]{\overset{\ensuremath{#3,#4}}{#1\iso #2}}
\newcommand*{\unique}{\textit{ax}-unique}
\newcommand*{\Unique}{\textit{Ax}-unique}
\newcommand*{\MALL}{\ensuremath{\mathbb{MALL}}}
\newcommand*{\iform}{\iota}
\newcommand*{\tradm}[1]{\ell(#1)}
\newcommand*{\tradc}[1]{\partial(#1)}
\newcommand*{\tradi}[1]{{#1}^\bullet}
\newcommand*{\bigtradi}[1]{\tradi{(#1)}}
\newcommand*{\trado}[1]{{#1}^\circ}
\newcommand*{\bigtrado}[1]{\trado{(#1)}}
\newcommand*{\proofpar}[1]{\textit{#1}}
\newenvironment{tableproof}{
\aboverulesep=.8ex
\belowrulesep=1.2ex
\begin{table}
}{
\end{table}
\aboverulesep=.4ex
\belowrulesep=.65ex
}
\newcommand{\floathere}{\afterpage{\clearpage}}
\newcommand{\coloncoloneqq}{\mathrel{::=}}
\newcommand*{\mass}[1]{m(#1)}
\newcommand*{\density}[1]{d(#1)}
\newcommand*{\ordercut}{\preccurlyeq}
\keywords{Linear Logic, Type Isomorphisms, Multiplicative-Additive fragment, Proof-nets, Sequent calculus, Star-autonomous categories with finite products}
\begin{document}

\title[Type Isomorphisms for Multiplicative-Additive Linear Logic]{Type Isomorphisms for\texorpdfstring{\\}{ }Multiplicative-Additive Linear Logic}
\titlecomment{{\lsuper*}
The present paper is a revised and extended version of~\cite{DiGuardiaLaurent23}, supplying all proofs and with a different organization.}

\author[R.~{Di~Guardia}]{R\'emi {Di~Guardia}\lmcsorcid{0009-0004-8632-108X}}
\author[O.~Laurent]{Olivier Laurent\lmcsorcid{0009-0007-1306-8994}}
\address{Univ Lyon, EnsL, UCBL, CNRS, LIP, F-69342, LYON Cedex 07, France}
\email{remi.di.guardia@ens-lyon.org, olivier.laurent@ens-lyon.fr}


\begin{abstract}
\noindent
We characterize type isomorphisms in the multiplicative-additive fragment of linear logic (MALL), and thus in $\star$-autonomous categories with finite products, extending a result for the multiplicative fragment by Balat and Di~Cosmo~[BDC99].
This yields a much richer equational theory involving distributivity and cancellation laws.
The unit-free case is obtained by relying on the proof-net syntax introduced by Hughes and Van~Glabbeek~[HvG05].
We use the sequent calculus to extend our results to full MALL, including all units, thanks to a study of cut-elimination and rule commutations.
\end{abstract}

\maketitle


\section{Introduction}
\label{sec:intro}

The question of type isomorphisms consists in trying to understand when two types in a type system, or two formulas in a logic, are ``the same''.
The general question can be described in category theory: two objects $A$ and $B$ are isomorphic, denoted $A \iso B$, if there exist morphisms $A\overset{f}{\longrightarrow} B$ and $B\overset{g}{\longrightarrow} A$ such that $f\circ g=\id{B}$ and $g\circ f=\id{A}$; \ie\ if the following diagram commutes:

\begin{adjustbox}{}
\begin{tikzpicture}
	\node[style={circle,draw}] (A) at (0,0) {$A$};
	\node[style={circle,draw}] (B) at (3,0) {$B$};
	\path[bend left] (A) edge[style=->] node[chemin nomme,below]{$f$} (B);
	\path[bend left] (B) edge[style=->] node[chemin nomme]{$g$} (A);
	\path (A) edge[loop left,style=->] node[arete nommee]{$\id{A}$} (A);
	\path (B) edge[loop right,style=->] node[arete nommee]{$\id{B}$} (B);
\end{tikzpicture}
\end{adjustbox}
The arrows $f$ and $g$ are the underlying isomorphisms.
Given a (class of) category, the question is then to find equations characterizing when two objects $A$ and $B$ are isomorphic (in all instances of the class).
The focus here is on pairs of isomorphic objects rather than on the isomorphisms themselves.
For example, in the class of cartesian categories, one finds the following isomorphic objects: $A\times B\iso B\times A$, $(A\times B)\times C\iso A\times(B\times C)$ and $A\times\top\iso A$.
Regarding type systems and logics, one can instantiate the categorical notion.
For instance in typed $\lambda$-calculi: two types $A$ and $B$ are isomorphic if there exist two $\lambda$-terms $M:A\rightarrow B$ and $N:B\rightarrow A$ such that $\lambda x:B.(M~(N~x))\eqbn \lambda x:B.x$ and $\lambda x:A.(N~(M~x))\eqbn \lambda x:A.x$ where $\eqbn$ is $\beta\eta$-equality.
This corresponds to isomorphic objects in the syntactic category generated by terms up to $\eqbn$.
Similarly, type isomorphisms can also be considered in logic, following what happens in the $\lambda$-calculus through the Curry-Howard correspondence: simply replace $\lambda$-terms with proofs, types with formulas, $\beta$-reduction with cut-elimination and $\eta$-expansion with axiom-expansion.
In this way, type isomorphisms are studied in a wide range of theories, such as category theory~\cite{isosoloviev,isossmcc}, $\lambda$-calculus~\cite{isotypes} and proof theory~\cite{mllisos}.
Knowing exactly the isomorphisms of a theory matters for (at least) two reasons.
First is a semantic motivation: isomorphic objects are those that cannot be distinguished by the theory, making isomorphisms central in category theory -- \eg\ unicity of a limit holds only up to isomorphism.
A natural question is then what is the quotient implicitly done by all these results with an ``up to isomorphism'', which then allows one for instance to look only at representatives of these equivalence classes.
Second, isomorphisms in a theory are often, but not always, those expected: \eg\ in a compact close category with finite products, these products are in fact biproducts~\cite{prodccc}.
Furthermore, type isomorphisms have been used to develop practical tools, such as search in a library of a functional programming language~\cite{isosearch,isosearchradanne}.

Following the definition, it is usually easy to prove that the type isomorphism relation is a congruence.
It is then natural to look for an equational theory generating this congruence.
Testing whether or not two types are isomorphic is then much easier.
An equational theory $\mathcal{T}$ is called \emph{sound} with respect to type isomorphisms if types equal up to $\mathcal{T}$ are isomorphic.
It is said \emph{complete} if it equates any pair of isomorphic types.
Given a (class of) category, a type system or a logic, our goal is to find an associated sound and complete equational theory for type isomorphisms.
This is not always possible as the induced theory may not be finitely axiomatisable -- see for instance~\cite{isosum}.

Soundness is usually the easy direction as it is sufficient to exhibit pairs of terms corresponding to each equation.
The completeness part is often harder, and there are in the literature two main approaches to solve this problem.
The first one is a semantic method, relying on the fact that if two types are isomorphic then they are isomorphic in all (denotational) models.
One thus looks for a model in which isomorphisms can be computed -- more easily than in the syntactic model -- and are all included in the equational theory under consideration; this is the approach used in~\cite{isosoloviev,classisos} for example.
Finding such a model simple enough for its isomorphisms to be computed, but still complex enough not to contain isomorphisms absent in the syntax is the difficulty.
The second method is the syntactic one, which consists in studying isomorphisms directly in the syntax.
The analysis of pairs of terms composing to the identity should provide information on their structure and then on their type so as to deduce the completeness of the equational theory; see for example~\cite{isotypes,mllisos}.
The more easily the equality ($\eqbn$ for example) between proof objects can be computed, the easier the analysis of isomorphisms will be.

We place ourselves in the framework of linear logic (LL)~\cite{ll},
the underlying question being ``is there an equational theory corresponding to the isomorphisms between formulas in this logic?''.
LL is a very rich logic containing three classes of propositional connectives: multiplicative, additive and exponential ones.
The multiplicative and additive families provide two copies of each classical propositional connective: two copies of conjunction ($\tens$ and $\with$), of disjunction ($\parr$ and $\oplus$), of true ($1$ and $\top$) and of false ($\bot$ and $0$).
The exponential family is constituted of two modalities $\oc$ and $\wn$ bridging the gap between multiplicatives and additives through four isomorphisms $\oc(A\with B)\iso \oc A\tens\oc B$, $\wn(A\oplus B)\iso \wn A\parr\wn B$, $\oc\top\iso 1$ and $\wn 0\iso\bot$.
In the multiplicative fragment (MLL) of LL (using only $\tens$, $\parr$, $1$ and $\bot$, and corresponding to $\star$-autonomous categories), the question of type isomorphisms was answered positively using a syntactic method based on proof-nets by Balat and Di~Cosmo~\cite{mllisos}: isomorphisms emerge from associativity and commutativity of the multiplicative connectives $\tens$ and $\parr$, as well as unitality of the multiplicative units $1$ and $\bot$.
The question was also solved for the polarized fragment of LL by one of the authors using game semantics~\cite{classisos}.
It is conjectured that isomorphisms in full LL correspond to those in its polarized fragment (constituted of the equations on \autoref{tab:eqisos} together with the four exponential equations above).
As a step towards solving this conjecture, we prove the type isomorphisms in the multiplicative-additive fragment (MALL) of LL are generated by the equational theory of \autoref{tab:eqisos}.
This applies at the same time to the class of $\star$-autonomous categories with finite products (whose corresponding equational theory in its usual syntax is in \autoref{tab:eqisos_cat} on \autopageref{tab:eqisos_cat}).
This situation is much richer than in the multiplicative fragment since isomorphisms include not only associativity, commutativity and unitality, but also the distributivity of the multiplicative connective $\tens$ (\resp\ $\parr$) over the additive $\oplus$ (\resp\ $\with$) as well as the associated cancellation law for the additive unit $0$ (\resp\ $\top$) over the multiplicative connective $\tens$ (\resp\ $\parr$).
Remark also that the $\with$ and the $\oplus$ (corresponding to the categorical product and sum) do not distribute, contrary to what happens for instance in cartesian closed categories.

\begin{table}
\begin{adjustbox}{}
\begin{tabular}{@{}ccccc@{}}
\multicolumn{2}{@{}c}{$\AC$} & & \\
\multicolumn{2}{@{}c}{\downbracefill} & & \\
\toprule
Associativity & Commutativity & Distributivity & Unitality & Cancellation \\
\midrule
$A\tens(B\tens C)=(A\tens B)\tens C$ & $A\tens B=B\tens A$ & $A\tens(B\oplus C)=(A\tens B)\oplus(A\tens C)$ & $A\tens 1=A$ & $A\tens0=0$ \\
$A\parr(B\parr C)=(A\parr B)\parr C$ & $A\parr B=B\parr A$ & $A\parr(B\with C)=(A\parr B)\with(A\parr C)$ & $A\parr \bot=A$ & $A\parr\top=\top$ \\
$A\oplus(B\oplus C)=(A\oplus B)\oplus C$ & $A\oplus B=B\oplus A$ & & $A\oplus 0=A$ & \\
$A\with(B\with C)=(A\with B)\with C$ & $A\with B=B\with A$ & & $A\with \top=A$ & \\
\bottomrule
\multicolumn{5}{@{}c@{}}{\upbracefill} \\
\multicolumn{5}{@{}c@{}}{$\Eu$} \\
\end{tabular}
\end{adjustbox}
\caption{Type isomorphisms in multiplicative-additive linear logic}
\label{tab:eqisos}
\end{table}

Using a semantic approach for completeness looks difficult here.
In particular, most of the known ``concrete'' models of MALL (with the meaning that isomorphisms are more easily computed inside) immediately come with unwanted isomorphisms not valid in the syntax.
For example, one can check that $\top \tens A \iso \top$ in coherent spaces~\cite{ll}, while it is plainly obvious that $\top \tens A \not\iso \top$ in the syntax as the second formula is provable but generally the first one is not.
An idea would then be to consider less ``concrete'' models of MALL or of $\star$-autonomous categories with finite products, typically categories built in a natural way over MLL or $\star$-autonomous categories.
For instance, one could consider the free completion by products and coproducts of $\star$-autonomous categories, hoping the resulting category corresponds to MALL -- in the same spirit as what has been done for bicompletion of $\star$-autonomous categories~\cite{freebicompletion1,freebicompletion2}, except MALL should not be the free bicompletion of MLL as it is not expected to have all limits.
Or one could consider models based on coherence spaces such as~\cite{coherencecompletion1,coherencecompletion2,hypercoherencecompletenessmall}.
We see two main difficulties with this approach.
The first is that isomorphisms may not be that easy to compute in these categories, and we lack results giving isomorphisms of a completion from isomorphisms of the base category.
Especially in our case, where we have distributivity isomorphisms involving both multiplicative and additive connectives, meaning isomorphisms of multiplicative-additive linear logic are not directly deducible from those of multiplicative linear logic and of additive linear logic.
The second obstacle is that these categories may not correspond exactly to MALL, and can have unwanted isomorphisms.
In particular, models built following coherence spaces contain the same unwanted $\top \tens A \iso \top$.

For this reason we prefer to use a syntactic method, and follow the approach from Balat and Di~Cosmo~\cite{mllisos} based on proof-nets~\cite{pn}.
Indeed, proof-nets provide a very good syntax for linear logic where it is very natural to study properties such as composition of proofs by cut, cut-elimination and identity of proofs.
All these highly simplify the problem of isomorphisms.
However, already in~\cite{mllisos} some trick had to be used to deal with units as proof-nets are working perfectly only in the unit-free multiplicative fragment of linear logic.
Indeed, the main result from~\cite{mllpnpspace} rules out ``simple'' canonical proof-nets with multiplicative units, as using such graphs one can solve a PSPACE-complete problem.
Nonetheless, if one forgets the canonical requirement for the units, some nice definitions for MLL exist, \eg~\cite{cohweakdistrcat,hughes*cat}, but one then has to consider such proof-nets up to an equivalence relation known as rewiring.
Looking at additives, there are canonical proof-nets for additive linear logic with units~\cite{allpn}, but these proof-nets hardly seem expandable to multiplicative-additive linear logic with additive units.
If one puts units aside, there is a notion of proof-nets incorporating both multiplicative and additive connectives in such a way that cut-free proofs are represented in a \emph{canonical} way, and cut-elimination can be dealt with in a parallel manner.
This is the syntax of proof-nets introduced by Hughes \& Van~Glabbeek in~\cite{mallpnlong}.
As a complement, proof-nets are not the only tool for considering canonical proofs.
For instance, an approach based on focusing in sequent calculus is developed in~\cite{canonicmultifoc}; however, while it reduces the size of the equivalence class of a proof, there still are different but equivalent proofs.

\paragraph{Sketch of the proof}
Our proof of completeness can be sketched as follows.
\begin{enumerate}[label=(\arabic*)]
\item
A simple but key idea is to use the distributivity, unitality and cancellation equations of \autoref{tab:eqisos} from left to right so as to rewrite formulas into a \emph{distributed} form, where none of these equations can be applied anymore.
Between such distributed formulas, the only isomorphisms left should be associativity and commutativity ones.
Furthermore, units should not play any special role.
All the problem now is to prove this is the case.
\item
Working in sequent calculus, we prove that in any isomorphism between distributed formulas one can indeed replace units by fresh atoms and the resulting formulas are still isomorphic.
The main difficulty here is that this only holds for distributed formulas, as generally units do not behave like atoms at all.
This leads us to identify the following \emph{patterns} in proofs of isomorphisms:
\begin{equation*}
\Rtop{$\top,0$}
\DisplayProof
\qquad
\Rone{}
\Rbot{$\bot, 1$}
\DisplayProof
\end{equation*}
Once proven that unit rules are constrained to these patterns, it is easy to replace each of their uses by an $ax$-rule.
A problem here is that we have to consider proofs up to cut-elimination, which is confluent only up to \emph{rule commutation}~\cite[Theorem~5.1]{calculusformall}.
In particular, instead of the second pattern above, one must consider the following more general one, with a sequence of $\oplus_1$- and $\oplus_2$-rules between the $1$- and $\bot$-rules, that turns the $1$-formula into a more complex one $F$:
\begin{prooftree}
\Rone{}
\doubleLine\dashedLine\RightLabel{$\oplus_i$}\UnaryInfC{$\fCenter F$}
\Rbot{$\bot,F$}
\end{prooftree}
Moreover, we need two main steps to get these patterns.
\begin{enumerate}[label=(2\alph*)]
\item
First, we prove these patterns are in the identity proofs and are preserved by all rule commutations, so as to know the equivalence class of the identity has them.
This is done by giving properties preserved by rule commutations, with the tedious work of checking each rule commutation.
\item
Then, we transpose these patterns backward through cut-elimination to get them in the proofs of isomorphisms.
This needs a detailed study to prove that cutting proofs of an isomorphism cannot ``completely erase'' a unit rule, \ie\ that the behavior of erasing cases are quite constrained.
We do so by a thorough consideration of the evolution of \emph{slices} when eliminating a $cut$-rule, as well as a meticulous monitoring of $\top$-rules during the reduction.
\end{enumerate}
\item
Once the problem reduced to the unit-free case, we can transpose isomorphisms to proof-nets, so as to not have to bother with rule commutations and equivalence classes anymore.
There, we prove that proof-nets associated to isomorphisms have very particular shapes, with local constraints on their axioms.
More precisely, we prove that the three following configurations are forbidden: an axiom between atoms of a same formula, an atom with no axiom on it, and an atom with axioms to several distinct atoms.

\begin{adjustbox}{}
\begin{tikzpicture}
\begin{myscope}
	\coordinate (Tlc) at (0,0);
	\coordinate (Tll) at (-1.25,2.5);
	\coordinate (Tlr) at (1.25,2.5);
	\path (Tlc) edge[-] (Tll);
	\path (Tlc) edge[-] (Tlr);
	\path (Tll) edge[-] (Tlr);

	\node (X) at (-.5,2) {$X$};
	\node (X^) at (.15,1.3) {$X\orth$};
\end{myscope}
\begin{myscopec}{blue}
	\path (X) -- ++(0,.7) -| (X^);
\end{myscopec}
\end{tikzpicture}
\quad\vrule\quad
\begin{tikzpicture}
\begin{myscope}
	\coordinate (Tlc) at (0,0);
	\coordinate (Tll) at (-1.25,2.5);
	\coordinate (Tlr) at (1.25,2.5);
	\path (Tlc) edge[-] (Tll);
	\path (Tlc) edge[-] (Tlr);
	\path (Tll) edge[-] (Tlr);

	\node (X) at (-.5,2) {$X\orth$};
\end{myscope}
\begin{myscopec}{red}
	\path (X) edge[-] (-.5,2.75);
	\path (-.6,2.75) edge[-] (-.4,2.55);
	\path (-.4,2.75) edge[-] (-.6,2.55);
\end{myscopec}
\end{tikzpicture}
\quad\vrule\quad
\begin{tikzpicture}
\begin{myscope}
	\coordinate (Tlc) at (0,0);
	\coordinate (Tll) at (-1.25,2.5);
	\coordinate (Tlr) at (1.25,2.5);
	\path (Tlc) edge[-] (Tll);
	\path (Tlc) edge[-] (Tlr);
	\path (Tll) edge[-] (Tlr);
	
	\coordinate (Trc) at (3,0);
	\coordinate (Trl) at (1.75,2.5);
	\coordinate (Trr) at (4.25,2.5);
	\path (Trc) edge[-] (Trl);
	\path (Trc) edge[-] (Trr);
	\path (Trl) edge[-] (Trr);

	\node (X1) at (-.5,2) {$X$};
	\node (X2) at (0,1) {$X$};
	\node (X^) at (3,2) {$X\orth$};
\end{myscope}
\begin{myscopec}{blue}
	\path (X^) -- ++(0,.7) -| (X1);
\end{myscopec}
\begin{myscopec}{red}
	\path (X^) -- ++(0,-.5) -| (X2);
\end{myscopec}
\end{tikzpicture}
\end{adjustbox}
This means that proof-nets associated to an isomorphism $A \iso B$ have the following general shape, with each atom of $A$ linked to a unique atom of $B$ by means of one or several axioms:

\begin{adjustbox}{}
\begin{tikzpicture}
\begin{myscope}
	\coordinate (Tlc) at (0,1);
	\coordinate (Tll) at (-2.5,4);
	\coordinate (Tlr) at (2.5,4);
	\path (Tlc) edge[-] (Tll);
	\path (Tlc) edge[-] (Tlr);
	\path (Tll) edge[-] (Tlr);

	\coordinate (Trc) at (6.5,1);
	\coordinate (Trl) at (4,4);
	\coordinate (Trr) at (9,4);
	\path (Trc) edge[-] (Trl);
	\path (Trc) edge[-] (Trr);
	\path (Trl) edge[-] (Trr);

	\node (X) at (1,3.4) {$X$};
	\node (X^) at (7,3) {$X\orth$};
	
	\node (X0) at (0,2.2) {$X\orth$};
	\node (X0^) at (5,3.5) {$X$};
	
	\node (Y) at (-.65,3.25) {$Y$};
	\node (Y^) at (6.25,2.25) {$Y\orth$};
	
	\node (X1) at (-1.5,3.5) {$X$};
	\node (X1^) at (8,3.5) {$X\orth$};
\end{myscope}
\begin{myscopec}{blue}
	\path (X) -- ++(0,.8) -| (X^);
	\path (X1) -- ++(0,1.1) -| (X1^);
	\path (X0) -- ++(0,.5) -| (X0^);
\end{myscopec}
\begin{myscopec}{red}
	\path (X) -- ++(0,1) -| (X^);
\end{myscopec}
\begin{myscopec}{olive}
	\path (X0) -- ++(0,.7) -| (X0^);
	\path (Y) -- ++(0,-1.8) -| (Y^);
\end{myscopec}
\end{tikzpicture}
\end{adjustbox}
This is the main challenge, because having such a shape is again only true for isomorphisms between distributed formulas, so we have to use this global property on distributed formulas in order to deduce a local property on the shape of axioms.
We employ here the correctness criterion of Hughes \& Van~Glabbeek~\cite{mallpnlong} in a direct way.
The idea is to prove that a forbidden configuration, typically the third one, implies having in one formula a $\with$ being an ancestor of a $\parr$.
The distributivity hypothesis then gives a $\tens$- or $\oplus$-vertex between these two, thanks to which a cycle contradicting the correctness criterion can be built.
This crucial part of the proof seems very hard to transpose in sequent calculus as it hinges on a geometric reasoning, and even if it were possible we would expect some heavy work on rule commutations, turning this already complicated proof into an utterly bewildering one.
\item
Knowing that proof-nets have very specific shapes, we recognize they correspond only to a rearrangement of formulas because axioms bind an atom of a formula with one of the isomorphic formula in a bijective fashion.
This is enough to finally conclude that the only isomorphisms are associativity and commutativity ones.
\end{enumerate}

\paragraph{Outline}
Once the necessary definitions given (\autoref{sec:def}), our proof of the completeness of the equational theory of \autoref{tab:eqisos} goes in two main steps.
First, we work in the sequent calculus to simplify the problem and reduce it to the unit-free fragment, by lack of a good-enough notion of proof-nets for MALL including units.
This goes through a characterization of the equality of proofs up to axiom-expansion and cut-elimination by means of rule commutations, using a result from~\cite{calculusformall} (Sections~\ref{subsec:no_eta} and~\ref{subsec:gen_eqc}).
This allows us to analyze the behavior of units inside isomorphisms so as to conclude that they can be replaced with fresh atoms, once formulas are simplified appropriately (\autoref{sec:red_unit-free}, and points $(1)$ and $(2)$ of the sketch above).
Secondly, being in a setting admitting proof-nets, we adapt the proof of Balat \& Di~Cosmo~\cite{mllisos} to the framework of Hughes \& Van~Glabbeek's proof-nets~\cite{mallpnlong}.
We transpose the definition of isomorphisms to this new syntax (Sections~\ref{subsec:mall-pn} and~\ref{sec:red_pn}), then prove completeness (\autoref{sec:complet_unit-free}, and points $(3)$ and $(4)$ of the sketch above).
This last step is the core of the work and requires a precise analysis of the structure of proof-nets because of the richer structure induced by the presence of the additive connectives.
The situation is much more complex than in the multiplicative setting since for example sub-formulas can be duplicated through distributivity equations, breaking a linearity property crucial in~\cite{mllisos}.

Finally, seeing MALL as a category, we extend our result to conclude that \autoref{tab:eqisos} (or more precisely its adaptation to the language of categories, \autoref{tab:eqisos_cat} on \autopageref{tab:eqisos_cat}) provides the equational theory of isomorphisms valid in all $\star$-autonomous categories with finite products (\autoref{sec:cat}).
We solve the situation of symmetric monoidal closed categories with finite products as well.

We are quite exhaustive on notions in sequent calculus.
An informed reader could skip these basic definitions and results, and start reading directly from \autoref{sec:red_unit-free} on \autopageref{sec:red_unit-free}.


\section{Definitions}
\label{sec:def}

In this section are first defined formulas and proofs of multiplicative-additive linear logic (\autoref{subsec:mall}).
Then are introduced the standard operations on these sequent calculus proofs: axiom-expansion, cut-elimination and rule commutations (\autoref{subsec:transfo_seq}).
This finally leads to the definition of type isomorphisms for this logic (\autoref{subsec:iso}).

\subsection{Multiplicative-Additive Linear Logic}
\label{subsec:mall}

The multiplicative-additive fragment of linear logic~\cite{ll}, denoted by MALL, has formulas given by the following grammar, where $X$ belongs to a given enumerable set of \emph{atoms}:
\[A,B \coloncoloneqq X \mid X\orth \mid A\tens B \mid A\parr B \mid 1 \mid \bot \mid A \with B \mid A \oplus B \mid \top \mid 0\]

Orthogonality $(\_)\orth$ expands into an involution on arbitrary formulas through ${X\biorth=X}$, ${1\orth=\bot}$, ${\bot\orth=1}$, ${\top\orth=0}$, ${0\orth=\top}$ and De Morgan's laws ${(A\tens B)\orth=B\orth\parr A\orth}$, $(A\parr B)\orth=B\orth\tens A\orth$, ${(A\with B)\orth=B\orth\oplus A\orth}$, ${(A\oplus B)\orth=B\orth\with A\orth}$.
We choose the \emph{non-commutative} De Morgan's laws for duality because this will often result in planar graphs on our illustrations, with axiom links not crossing each others (\eg\ in the identity proof-net, see \autoref{fig:id_bi_u} on \autopageref{fig:id_bi_u}).
These non-commutative De Morgan's laws have been used for instance in the context of cyclic linear logic where this leads to planar proof-nets~\cite{cccypn}.

The \emph{size} $s(A)$ of a formula $A$ is defined as its number of connectives $\tens$, $\parr$, $\with$ and $\oplus$, plus its number of units $1$, $\bot$, $\top$ and $0$ and of atoms and negated atoms (counting occurrences).

Sequents are lists of formulas of the form $\fCenter A_1, \ldots, A_n$.
Sequent calculus rules are, with $A$ and $B$ arbitrary formulas, $\Gamma$ and $\Delta$ contexts (\ie\ lists of formulas) and $\sigma$ a permutation:
\begin{center}
\bottomAlignProof
\Rax{$A$}
\DisplayProof
\hskip 2em
\bottomAlignProof
\AxiomC{$\fCenter \Gamma$}
\Rex{$\sigma(\Gamma)$}
\DisplayProof
\hskip 2em
\bottomAlignProof
\AxiomC{$\fCenter A,\Gamma$}
\AxiomC{$\fCenter A\orth,\Delta$}
\Rcut{$\Gamma,\Delta$}
\DisplayProof
\vskip .5em
\bottomAlignProof
\AxiomC{$\fCenter A,\Gamma$}
\AxiomC{$\fCenter B,\Delta$}
\Rtens{$A\tens B,\Gamma,\Delta$}
\DisplayProof
\hskip 2em
\bottomAlignProof
\AxiomC{$\fCenter A,B,\Gamma$}
\Rparr{$A\parr B,\Gamma$}
\DisplayProof
\hskip 2em
\bottomAlignProof
\Rone
\DisplayProof
\hskip 2em
\bottomAlignProof
\AxiomC{$\fCenter \Gamma$}
\Rbot{$\bot,\Gamma$}
\DisplayProof
\vskip .5em
\bottomAlignProof
\AxiomC{$\fCenter A,\Gamma$}
\AxiomC{$\fCenter B,\Gamma$}
\Rwith{$A\with B,\Gamma$}
\DisplayProof
\hskip 2em
\bottomAlignProof
\AxiomC{$\fCenter A,\Gamma$}
\Rplusone{$A\oplus B,\Gamma$}
\DisplayProof
\hskip 2em
\bottomAlignProof
\AxiomC{$\fCenter B,\Gamma$}
\Rplustwo{$A\oplus B,\Gamma$}
\DisplayProof
\hskip 2em
\bottomAlignProof
\Rtop{$\top,\Gamma$}
\DisplayProof
\end{center}
In practice, we consider exchange rules ($ex$) as incorporated in the conclusion of the rule above, thus dealing with rules like:
\AxiomC{$\fCenter A,B,\Gamma, \Delta$}
\Rparr{$\Gamma, A\parr B,\Delta$}
\DisplayProof
.
In this spirit, by
\AxiomC{$\fCenter \Gamma, A,B,\Delta$}
\Rparr{$\Gamma, A\parr B,\Delta$}
\DisplayProof
we mean that the appropriate permutation is also incorporated in the rule above.
Equivalently, we only consider proofs with exactly one exchange rule below each non-exchange rule.

The main difference with the multiplicative fragment of linear logic (MLL) is the $\with$-rule, which introduces some sharing of the context $\Gamma$.
From this comes the notion of a \emph{slice}~\cite{ll,pn} which is a partial proof missing some additive component.

\begin{defi}[Slice]
\label{def:slice_pt}
For $\pi$ a proof, consider the (non-correct) proof tree obtained by deleting one of the two sub-trees of each $\with$-rule of $\pi$ (thus, in the new proof tree, $\with$-rules are unary):
\begin{center}
\bottomAlignProof
\AxiomC{$\fCenter A,\Gamma$}
\Rwiths{1}{$A\with B,\Gamma$}
\DisplayProof
\hskip 6em
\bottomAlignProof
\AxiomC{$\fCenter B,\Gamma$}
\Rwiths{2}{$A\with B,\Gamma$}
\DisplayProof
\end{center}
The remaining rules form a \emph{slice} of $\pi$.
We denote by $\Slices(\pi)$ the set of slices of $\pi$.
\end{defi}

Slices satisfy a linearity property (validated by proofs of MLL as well): any connective in the conclusion is introduced by at most one rule in a slice.

By \emph{unit-free} MALL, we mean the restriction of MALL to formulas not involving the units $1$, $\bot$, $\top$ and $0$, and as such without the $1$, $\bot$ and $\top$-rules.
When speaking of a \emph{positive} formula, we mean a formula with main connective $\tens$ or $\oplus$, a unit $1$ or $0$, or an atom $X$. 
A \emph{negative} formula is one with main connective $\parr$ or $\with$, a unit $\bot$ or $\top$, or a negated atom $X\orth$.

\subsection{Transformations of proofs}
\label{subsec:transfo_seq}

There are two well-known rewriting relations on proofs, \emph{axiom-expansion} and \emph{cut-elimination}.
These transformations correspond respectively to $\eta$-expansion and $\beta$-reduction in the $\lambda$-calculus, hence the notations in the following definitions.

\begin{defi}
\label{def:eta_pt}
We call \emph{axiom-expansion} the rewriting system $\etato$ described in \autoref{tab:ax_exp_pt}.
\end{defi}

\begin{defi}
\label{def:beta_pt}
We call \emph{cut-elimination} the rewriting system $\betato$ described in Tables~\ref{tab:cut_elim_pt_key} and~\ref{tab:cut_elim_pt} (up to permuting the two branches of any $cut$-rule).
\end{defi}

\begin{rem}
\label{rem:tens_parr_key_cut}
Another possible $\parr-\tens$ key case for cut-elimination would be the following:\\
{
\Rsub{$\pi_1$}{$A,\Gamma$}
\Rsub{$\pi_2$}{$B,\Delta$}
\Rtens{$A\tens B,\Gamma,\Delta$}
\Rsub{$\pi_3$}{$B\orth,A\orth,\Sigma$}
\Rparr{$B\orth\parr A\orth,\Sigma$}
\Rcut{$\Gamma,\Delta,\Sigma$}
\DisplayProof
$\betato$
\Rsub{$\pi_2$}{$B,\Delta$}
\Rsub{$\pi_1$}{$A,\Gamma$}
\Rsub{$\pi_3$}{$B\orth,A\orth,\Sigma$}
\Rcut{$B\orth,\Gamma,\Sigma$}
\Rcut{$\Gamma,\Delta,\Sigma$}
\DisplayProof
}\\
This case can be simulated with the given $\parr-\tens$ key case and a $cut-cut$ commutative case.
\end{rem}

Another relation on proofs is \emph{rule commutation}, which is closely related to cut-elimination as proved in~\cite{calculusformall}.
These commutations associate proofs which differ only by the order in which their rules are applied.

\begin{defi}
\label{def:rule_comm}
We call \emph{rule commutation} the symmetric relation $\eqcone$ described in Tables~\ref{tab:rule_comm} and~\ref{tab:rule_comm_u}.
This corresponds to rule commutations in \emph{cut-free} MALL, \ie\ with no $cut$-rules above the commutation (there may be $cut$-rules below).
In particular, in a $\top-\tens$ commutation we assume the created or erased sub-proof to be cut-free, and in a $\with-\tens$ commutation the duplicated or superimposed sub-proof to be cut-free.
\end{defi}

The restriction of rule commutations in a cut-free setting is important.
First, this choice is more appropriate for our purposes as it has fewer cases and corresponds to the equivalence relation between normal forms, namely cut-free proofs.
Second, it is crucial for the $\with-\tens$ commutation: while there is no such restriction in~\cite{calculusformall}, their proof contains a mistake that can be patched up thanks to this restriction (more details are given in \autoref{subsec:gen_eqc} and a corrected proof is written in \autoref{subsec:proofs_add_1}).

\begin{rem}
\label{rem:rule_comm_cut}
A more general theory of rule commutations exists, without the restriction on cut-free proofs and with commutations involving the $cut$-rule, the latter being the symmetric closure of the commutative cases of cut-elimination in \autoref{tab:cut_elim_pt}.
See~\cite{mallpncom} for these commutations in unit-free MALL (and also with the $mix$-rule).
\end{rem}

\begin{rem}
\label{rem:comm_lacking}
Looking at the commutative cut-elimination cases, as well as rule commutations, there is no commutation with $ax$- and $1$-rules because these rules have no context.
For there is no rule for $0$, there is no commutation with $0$ nor a $\top-0$ key cut-elimination case.
\end{rem}

We denote the reflexive transitive closure of $\etato$ (\resp\ $\betato$, $\eqcone$) by $\etatostar$ (\resp\ $\betatostar$, $\eqc$).
By $\overset{\beta^n}{\longrightarrow}$ we mean a sequence of $n$ $\betato$ steps, and similarly for $\etato$.
Because of the analogy with the $\lambda$-calculus and since there will be no ambiguity, we use the notation $\eqbn$ for equality of proofs up to cut-elimination ($\beta$) and axiom-expansion ($\eta$).
Similarly, $\eqb$ is equality up to cut-elimination only.

\begin{tableproof}[p]
\centering
\begin{tabular}{@{}c@{\qquad}rcl@{}}
\toprule
$\parr-\tens$ &
\Raxocc{$A\tens B,B\orth\parr A\orth$}
\DisplayProof
& $\etato$ &
\Rax{$A$}
\Rax{$B$}
\Rtens{$A\tens B,B\orth,A\orth$}
\Rparr{$A\tens B,B\orth\parr A\orth$}
\DisplayProof
\\\midrule
$\with-\oplus$ &
\Raxocc{$A\oplus B,B\orth\with A\orth$}
\DisplayProof
& $\etato$ &
\Raxocc{$B,B\orth$}
\Rplustwo{$A\oplus B,B\orth$}
\Raxocc{$A,A\orth$}
\Rplusone{$A\oplus B,A\orth$}
\Rwith{$A\oplus B,B\orth\with A\orth$}
\DisplayProof
\\\midrule
$\bot-1$ &
\Raxocc{$1,\bot$}
\DisplayProof
& $\etato$ &
\Rone
\Rbot{$1,\bot$}
\DisplayProof
\\\midrule
$\top-0$ &
\Raxocc{$0,\top$}
\DisplayProof
& $\etato$ &
\Rtop{$0,\top$}
\DisplayProof
\\\bottomrule
\end{tabular}
\caption{Axiom-expansion in sequent calculus (up to a permutation of the conclusion)}
\label{tab:ax_exp_pt}
\end{tableproof}

\begin{tableproof}[p]
\centering
\resizebox{\textwidth}{!}{
\begin{tabular}{@{}c@{\qquad}rcl@{}}
\toprule
$ax$ &
\Rax{$A$}
\Rsub{$\pi$}{$A,\Gamma$}
\Rcut{$A,\Gamma$}
\DisplayProof
& $\betato$ &
\Rsub{$\pi$}{$A,\Gamma$}
\DisplayProof
\\\midrule
$\parr-\tens$ &
\Rsub{$\pi_1$}{$A,\Gamma$}
\Rsub{$\pi_2$}{$B,\Delta$}
\Rtens{$A\tens B,\Gamma,\Delta$}
\Rsub{$\pi_3$}{$B\orth,A\orth,\Sigma$}
\Rparr{$B\orth\parr A\orth,\Sigma$}
\Rcut{$\Gamma,\Delta,\Sigma$}
\DisplayProof
& $\betato$ &
\Rsub{$\pi_1$}{$A,\Gamma$}
\Rsub{$\pi_2$}{$B,\Delta$}
\Rsub{$\pi_3$}{$B\orth,A\orth,\Sigma$}
\Rcut{$A\orth,\Delta,\Sigma$}
\Rcut{$\Gamma,\Delta,\Sigma$}
\DisplayProof
\\\midrule
$\with-\oplus_1$ &
\Rsub{$\pi_1$}{$A_1,\Gamma$}
\Rsub{$\pi_2$}{$A_2,\Gamma$}
\Rwith{$A_1\with A_2,\Gamma$}
\Rsub{$\pi_3$}{$A_2\orth,\Delta$}
\Rplusone{$A_2\orth\oplus A_1\orth,\Delta$}
\Rcut{$\Gamma,\Delta$}
\DisplayProof
& $\betato$ &
\Rsub{$\pi_2$}{$A_2,\Gamma$}
\Rsub{$\pi_3$}{$A_2\orth,\Delta$}
\Rcut{$\Gamma,\Delta$}
\DisplayProof
\\\midrule
$\with-\oplus_2$ &
\Rsub{$\pi_1$}{$A_1,\Gamma$}
\Rsub{$\pi_2$}{$A_2,\Gamma$}
\Rwith{$A_1\with A_2,\Gamma$}
\Rsub{$\pi_3$}{$A_1\orth,\Delta$}
\Rplustwo{$A_2\orth\oplus A_1\orth,\Delta$}
\Rcut{$\Gamma,\Delta$}
\DisplayProof
& $\betato$ &
\Rsub{$\pi_1$}{$A_1,\Gamma$}
\Rsub{$\pi_3$}{$A_1\orth,\Delta$}
\Rcut{$\Gamma,\Delta$}
\DisplayProof
\\\midrule
$\bot-1$ &
\Rone
\Rsub{$\pi$}{$\Gamma$}
\Rbot{$\Gamma,\bot$}
\Rcut{$\Gamma$}
\DisplayProof
& $\betato$ &
\Rsub{$\pi$}{$\Gamma$}
\DisplayProof
\\\bottomrule
\end{tabular}
}
\caption{Cut-elimination in sequent calculus -- key cases (up to a permutation of the conclusion)}
\label{tab:cut_elim_pt_key}
\end{tableproof}

\begin{tableproof}[p]
\resizebox{\textwidth}{!}{
\begin{tabular}{@{}c@{\qquad}rcl@{}}
\toprule
$\parr-cut$ &
\Rsub{$\pi_1$}{$A,B,C,\Gamma$}
\Rparr{$A,B\parr C,\Gamma$}
\Rsub{$\pi_2$}{$A\orth,\Delta$}
\Rcut{$B\parr C,\Gamma,\Delta$}
\DisplayProof
& $\betato$ &
\Rsub{$\pi_1$}{$A,B,C,\Gamma$}
\Rsub{$\pi_2$}{$A\orth,\Delta$}
\Rcut{$B,C,\Gamma,\Delta$}
\Rparr{$B\parr C,\Gamma,\Delta$}
\DisplayProof
\\\midrule
$\tens-cut-1$ &
\Rsub{$\pi_1$}{$A,B,\Gamma$}
\Rsub{$\pi_2$}{$C,\Delta$}
\Rtens{$A,B\tens C,\Gamma,\Delta$}
\Rsub{$\pi_3$}{$A\orth,\Sigma$}
\Rcut{$B\tens C,\Gamma,\Delta,\Sigma$}
\DisplayProof
& $\betato$ &
\Rsub{$\pi_1$}{$A,B,\Gamma$}
\Rsub{$\pi_3$}{$A\orth,\Sigma$}
\Rcut{$B,\Gamma,\Sigma$}
\Rsub{$\pi_2$}{$C,\Delta$}
\Rtens{$B\tens C,\Gamma,\Delta,\Sigma$}
\DisplayProof
\\\midrule
$\tens-cut-2$ &
\Rsub{$\pi_1$}{$B,\Gamma$}
\Rsub{$\pi_2$}{$A,C,\Delta$}
\Rtens{$A,B\tens C,\Gamma,\Delta$}
\Rsub{$\pi_3$}{$A\orth,\Sigma$}
\Rcut{$B\tens C,\Gamma,\Delta,\Sigma$}
\DisplayProof
& $\betato$ &
\Rsub{$\pi_1$}{$B,\Gamma$}
\Rsub{$\pi_2$}{$A,C,\Delta$}
\Rsub{$\pi_3$}{$A\orth,\Sigma$}
\Rcut{$C,\Delta,\Sigma$}
\Rtens{$B\tens C,\Gamma,\Delta,\Sigma$}
\DisplayProof
\\\midrule
$\with-cut$ &
\Rsub{$\pi_1$}{$A,B,\Gamma$}
\Rsub{$\pi_2$}{$A,C,\Gamma$}
\Rwith{$A,B\with C,\Gamma$}
\Rsub{$\pi_3$}{$A\orth,\Delta$}
\Rcut{$B\with C,\Gamma,\Delta$}
\DisplayProof
& $\betato$ &
\Rsub{$\pi_1$}{$A,B,\Gamma$}
\Rsub{$\pi_3$}{$A\orth,\Delta$}
\Rcut{$B,\Gamma,\Delta$}
\Rsub{$\pi_2$}{$A,C,\Gamma$}
\Rsub{$\pi_3$}{$A\orth,\Delta$}
\Rcut{$C,\Gamma,\Delta$}
\Rwith{$B\with C,\Gamma,\Delta$}
\DisplayProof
\\\midrule
$\oplus_1-cut$ &
\Rsub{$\pi_1$}{$A,B_1,\Gamma$}
\Rplusone{$A,B_1\oplus B_2,\Gamma$}
\Rsub{$\pi_2$}{$A\orth,\Delta$}
\Rcut{$B_1\oplus B_2,\Gamma,\Delta$}
\DisplayProof
& $\betato$ &
\Rsub{$\pi_1$}{$A,B_1,\Gamma$}
\Rsub{$\pi_2$}{$A\orth,\Delta$}
\Rcut{$B_1,\Gamma,\Delta$}
\Rplusone{$B_1\oplus B_2,\Gamma,\Delta$}
\DisplayProof
\\\midrule
$\oplus_2-cut$ &
\Rsub{$\pi_1$}{$A,B_2,\Gamma$}
\Rplustwo{$A,B_1\oplus B_2,\Gamma$}
\Rsub{$\pi_2$}{$A\orth,\Delta$}
\Rcut{$B_1\oplus B_2,\Gamma,\Delta$}
\DisplayProof
& $\betato$ &
\Rsub{$\pi_1$}{$A,B_2,\Gamma$}
\Rsub{$\pi_2$}{$A\orth,\Delta$}
\Rcut{$B_2,\Gamma,\Delta$}
\Rplustwo{$B_1\oplus B_2,\Gamma,\Delta$}
\DisplayProof
\\\midrule
$\bot-cut$ &
\Rsub{$\pi_1$}{$A,\Gamma$}
\Rbot{$A,\bot,\Gamma$}
\Rsub{$\pi_2$}{$A\orth,\Delta$}
\Rcut{$\bot,\Gamma,\Delta$}
\DisplayProof
& $\betato$ &
\Rsub{$\pi_1$}{$A,\Gamma$}
\Rsub{$\pi_2$}{$A\orth,\Delta$}
\Rcut{$\Gamma,\Delta$}
\Rbot{$\bot,\Gamma,\Delta$}
\DisplayProof
\\\midrule
$\top-cut$ &
\Rtop{$A,\top,\Gamma$}
\Rsub{$\pi$}{$A\orth,\Delta$}
\Rcut{$\top,\Gamma,\Delta$}
\DisplayProof
& $\betato$ &
\Rtop{$\top,\Gamma,\Delta$}
\DisplayProof
\\\midrule
$cut-cut$ &
\Rsub{$\pi_1$}{$A,B,\Gamma$}
\Rsub{$\pi_2$}{$B\orth,\Delta$}
\Rcut{$A,\Gamma,\Delta$}
\Rsub{$\pi_3$}{$A\orth,\Sigma$}
\Rcut{$\Gamma,\Delta,\Sigma$}
\DisplayProof
& $\betato$ &
\Rsub{$\pi_1$}{$A,B,\Gamma$}
\Rsub{$\pi_3$}{$A\orth,\Sigma$}
\Rcut{$B,\Gamma,\Sigma$}
\Rsub{$\pi_2$}{$B\orth,\Delta$}
\Rcut{$\Gamma,\Delta,\Sigma$}
\DisplayProof
\\\bottomrule
\end{tabular}
}
\caption{Cut-elimination in sequent calculus -- commutative cases (up to a permutation of the conclusion)}
\label{tab:cut_elim_pt}
\end{tableproof}

\begin{tableproof}[p]
\centering
\resizebox{!}{.48\textheight}{
\begin{tabular}{@{}rcl@{}}
\toprule
\Rsub{$\pi$}{$A_1,A_2,B_1,B_2,\Gamma$}
\Rparr{$A_1\parr A_2,B_1,B_2,\Gamma$}
\Rparr{$A_1\parr A_2,B_1\parr B_2,\Gamma$}
\DisplayProof
& $\commdown{\parr}{\parr}$ &
\Rsub{$\pi$}{$A_1,A_2,B_1,B_2,\Gamma$}
\Rparr{$A_1,A_2,B_1\parr B_2,\Gamma$}
\Rparr{$A_1\parr A_2,B_1\parr B_2,\Gamma$}
\DisplayProof
\\\midrule
\Rsub{$\pi_1$}{$A_1,\Gamma$}
\Rsub{$\pi_2$}{$A_2,B_1,\Delta$}
\Rsub{$\pi_3$}{$B_2,\Sigma$}
\Rtens{$A_2,B_1\tens B_2,\Delta,\Sigma$}
\Rtens{$A_1\tens A_2,B_1\tens B_2,\Gamma,\Delta,\Sigma$}
\DisplayProof
& $\commdown{\tens}{\tens}$ &
\Rsub{$\pi_1$}{$A_1,\Gamma$}
\Rsub{$\pi_2$}{$A_2,B_1,\Delta$}
\Rtens{$A_1\tens A_2,B_1, \Gamma,\Delta$}
\Rsub{$\pi_3$}{$B_2,\Sigma$}
\Rtens{$A_1\tens A_2,B_1\tens B_2,\Gamma,\Delta,\Sigma$}
\DisplayProof
\\\midrule
\Rsub{$\pi_1$}{$A_1,\Gamma$}
\Rsub{$\pi_2$}{$B_1,\Delta$}
\Rsub{$\pi_3$}{$A_2,B_2,\Sigma$}
\Rtens{$A_2,B_1\tens B_2,\Delta,\Sigma$}
\Rtens{$A_1\tens A_2,B_1\tens B_2,\Gamma,\Delta,\Sigma$}
\DisplayProof
& $\commdown{\tens}{\tens}$ &
\Rsub{$\pi_2$}{$B_1,\Delta$}
\Rsub{$\pi_1$}{$A_1,\Gamma$}
\Rsub{$\pi_3$}{$A_2,B_2,\Sigma$}
\Rtens{$A_1\tens A_2,B_2, \Gamma,\Sigma$}
\Rtens{$A_1\tens A_2,B_1\tens B_2,\Gamma,\Delta,\Sigma$}
\DisplayProof
\\\midrule
\Rsub{$\pi_1$}{$A_1,B_1,\Gamma$}
\Rsub{$\pi_2$}{$B_2,\Delta$}
\Rtens{$A_1,B_1\tens B_2,\Gamma,\Delta$}
\Rsub{$\pi_3$}{$A_2,\Sigma$}
\Rtens{$A_1\tens A_2,B_1\tens B_2,\Gamma,\Delta,\Sigma$}
\DisplayProof
& $\commdown{\tens}{\tens}$ &
\Rsub{$\pi_1$}{$A_1,B_1,\Gamma$}
\Rsub{$\pi_3$}{$A_2,\Sigma$}
\Rtens{$A_1\tens A_2,B_1,\Gamma,\Sigma$}
\Rsub{$\pi_2$}{$B_2,\Delta$}
\Rtens{$A_1\tens A_2,B_1\tens B_2,\Gamma,\Delta,\Sigma$}
\DisplayProof
\\\midrule
\Rsub{$\pi_1$}{$B_1,\Gamma$}
\Rsub{$\pi_2$}{$A_1,B_2,\Delta$}
\Rtens{$A_1,B_1\tens B_2,\Gamma,\Delta$}
\Rsub{$\pi_3$}{$A_2,\Sigma$}
\Rtens{$A_1\tens A_2,B_1\tens B_2,\Gamma,\Delta,\Sigma$}
\DisplayProof
& $\commdown{\tens}{\tens}$ &
\Rsub{$\pi_1$}{$B_1,\Gamma$}
\Rsub{$\pi_2$}{$A_1,B_2,\Delta$}
\Rsub{$\pi_3$}{$A_2,\Sigma$}
\Rtens{$A_1\tens A_2,B_2,\Delta,\Sigma$}
\Rtens{$A_1\tens A_2,B_1\tens B_2,\Gamma,\Delta,\Sigma$}
\DisplayProof
\\\midrule
\Rsub{$\pi_1$}{$A_1,B_1,\Gamma$}
\Rsub{$\pi_2$}{$A_2,B_1,\Gamma$}
\Rwith{$A_1\with A_2,B_1,\Gamma$}
\Rsub{$\pi_3$}{$A_1,B_2,\Gamma$}
\Rsub{$\pi_4$}{$A_2,B_2,\Gamma$}
\Rwith{$A_1\with A_2,B_2,\Gamma$}
\Rwith{$A_1\with A_2,B_1\with B_2,\Gamma$}
\DisplayProof
& $\commdown{\with}{\with}$ &
\Rsub{$\pi_1$}{$A_1,B_1,\Gamma$}
\Rsub{$\pi_3$}{$A_1,B_2,\Gamma$}
\Rwith{$A_1,B_1\with B_2,\Gamma$}
\Rsub{$\pi_2$}{$A_2,B_1,\Gamma$}
\Rsub{$\pi_4$}{$A_2,B_2,\Gamma$}
\Rwith{$A_2, B_1\with B_2,\Gamma$}
\Rwith{$A_1\with A_2,B_1\with B_2,\Gamma$}
\DisplayProof
\\\midrule
\Rsub{$\pi$}{$A_i,B_j,\Gamma$}
\Rplus{i}{$A_1\oplus A_2,B_j,\Gamma$}
\Rplus{j}{$A_1\oplus A_2,B_1\oplus B_2,\Gamma$}
\DisplayProof
& $\commdown{\oplus_j}{\oplus_i}$ &
\Rsub{$\pi$}{$A_i,B_j,\Gamma$}
\Rplus{j}{$A_i,B_1\oplus B_2,\Gamma$}
\Rplus{i}{$A_1\oplus A_2,B_1\oplus B_2,\Gamma$}
\DisplayProof
\\\midrule
\Rsub{$\pi_1$}{$A_1,A_2,B_1,\Gamma$}
\Rparr{$A_1\parr A_2,B_1,\Gamma$}
\Rsub{$\pi_2$}{$B_2,\Delta$}
\Rtens{$A_1\parr A_2,B_1\tens B_2,\Gamma,\Delta$}
\DisplayProof
& $\commbi{\parr}{\tens}$ &
\Rsub{$\pi_1$}{$A_1,A_2,B_1,\Gamma$}
\Rsub{$\pi_2$}{$B_2,\Delta$}
\Rtens{$A_1,A_2,B_1\tens B_2,\Gamma,\Delta$}
\Rparr{$A_1\parr A_2,B_1\tens B_2,\Gamma,\Delta$}
\DisplayProof
\\\midrule
\Rsub{$\pi_1$}{$B_1,\Gamma$}
\Rsub{$\pi_2$}{$A_1,A_2,B_2,\Delta$}
\Rparr{$A_1\parr A_2,B_2,\Delta$}
\Rtens{$A_1\parr A_2,B_1\tens B_2,\Gamma,\Delta$}
\DisplayProof
& $\commbi{\parr}{\tens}$ &
\Rsub{$\pi_1$}{$B_1,\Gamma$}
\Rsub{$\pi_2$}{$A_1,A_2,B_2,\Delta$}
\Rtens{$A_1,A_2,B_1\tens B_2,\Gamma,\Delta$}
\Rparr{$A_1\parr A_2,B_1\tens B_2,\Gamma,\Delta$}
\DisplayProof
\\\midrule
\Rsub{$\pi_1$}{$A_1,A_2,B_1,\Gamma$}
\Rparr{$A_1\parr A_2,B_1,\Gamma$}
\Rsub{$\pi_2$}{$A_1,A_2,B_2,\Gamma$}
\Rparr{$A_1\parr A_2,B_2,\Gamma$}
\Rwith{$A_1\parr A_2,B_1\with B_2,\Gamma$}
\DisplayProof
& $\commbi{\parr}{\with}$&
\Rsub{$\pi_1$}{$A_1,A_2,B_1,\Gamma$}
\Rsub{$\pi_2$}{$A_1,A_2,B_2,\Gamma$}
\Rwith{$A_1,A_2,B_1\with B_2,\Gamma$}
\Rparr{$A_1\parr A_2,B_1\with B_2,\Gamma$}
\DisplayProof
\\\midrule
\Rsub{$\pi_1$}{$A_1,A_2,B_i,\Gamma$}
\Rparr{$A_1\parr A_2,B_i,\Gamma$}
\Rplusi{$A_1\parr A_2,B_1\oplus B_2,\Gamma$}
\DisplayProof
& $\commbi{\parr}{\oplus_i}$ &
\Rsub{$\pi_1$}{$A_1,A_2,B_i,\Gamma$}
\Rplusi{$A_1,A_2,B_1\oplus B_2,\Gamma$}
\Rparr{$A_1\parr A_2,B_1\oplus B_2,\Gamma$}
\DisplayProof
\\\midrule
\Rsub{$\pi_1$}{$A_1,\Gamma$}
\Rsub{$\pi_2$}{$A_2,B_1,\Delta$}
\Rtens{$A_1\tens A_2,B_1,\Gamma,\Delta$}
\Rsub{$\pi_1$}{$A_1,\Gamma$}
\Rsub{$\pi_3$}{$A_2,B_2,\Delta$}
\Rtens{$A_1\tens A_2,B_2,\Gamma,\Delta$}
\Rwith{$A_1\tens A_2,B_1\with B_2,\Gamma,\Delta$}
\DisplayProof
& $\commbi{\tens}{\with}$ &
\Rsub{$\pi_1$}{$A_1,\Gamma$}
\Rsub{$\pi_2$}{$A_2,B_1,\Delta$}
\Rsub{$\pi_3$}{$A_2,B_2,\Delta$}
\Rwith{$A_2,B_1\with B_2,\Delta$}
\Rtens{$A_1\tens A_2,B_1\with B_2,\Gamma,\Delta$}
\DisplayProof
\\\midrule
\Rsub{$\pi_1$}{$A_1,B_1,\Gamma$}
\Rsub{$\pi_2$}{$A_2,\Delta$}
\Rtens{$A_1\tens A_2,B_1,\Gamma,\Delta$}
\Rsub{$\pi_3$}{$A_1,B_2,\Gamma$}
\Rsub{$\pi_2$}{$A_2,\Delta$}
\Rtens{$A_1\tens A_2,B_2,\Gamma,\Delta$}
\Rwith{$A_1\tens A_2,B_1\with B_2,\Gamma,\Delta$}
\DisplayProof
& $\commbi{\tens}{\with}$ &
\Rsub{$\pi_1$}{$A_1,B_1,\Gamma$}
\Rsub{$\pi_3$}{$A_1,B_2,\Gamma$}
\Rwith{$A_1,B_1\with B_2,\Gamma$}
\Rsub{$\pi_2$}{$A_2,\Delta$}
\Rtens{$A_1\tens A_2,B_1\with B_2,\Gamma,\Delta$}
\DisplayProof
\\\midrule
\Rsub{$\pi_1$}{$A_1,B_i,\Gamma$}
\Rsub{$\pi_2$}{$A_2,\Delta$}
\Rtens{$A_1\tens A_2,B_i,\Gamma,\Delta$}
\Rplusi{$A_1\tens A_2,B_1\oplus B_2,\Gamma,\Delta$}
\DisplayProof
& $\commbi{\tens}{\oplus_i}$ &
\Rsub{$\pi_1$}{$A_1,B_i,\Gamma$}
\Rplusi{$A_1,B_1\oplus B_2,\Gamma$}
\Rsub{$\pi_2$}{$A_2,\Delta$}
\Rtens{$A_1\tens A_2,B_1\oplus B_2,\Gamma,\Delta$}
\DisplayProof
\\\midrule
\Rsub{$\pi_1$}{$A_1,\Gamma$}
\Rsub{$\pi_2$}{$A_2,B_i,\Delta$}
\Rtens{$A_1\tens A_2,B_i,\Gamma,\Delta$}
\Rplusi{$A_1\tens A_2,B_1\oplus B_2,\Gamma,\Delta$}
\DisplayProof
& $\commbi{\tens}{\oplus_i}$ &
\Rsub{$\pi_1$}{$A_1,B_i,\Gamma$}
\Rsub{$\pi_2$}{$A_2,\Delta$}
\Rplusi{$A_2,B_1\oplus B_2,\Delta$}
\Rtens{$A_1\tens A_2,B_1\oplus B_2,\Gamma,\Delta$}
\DisplayProof
\\\midrule
\Rsub{$\pi_1$}{$A_1,B_i,\Gamma$}
\Rsub{$\pi_2$}{$A_2,B_i,\Gamma$}
\Rwith{$A_1\with A_2,B_i,\Gamma$}
\Rplusi{$A_1\with A_2,B_1\oplus B_2,\Gamma$}
\DisplayProof
& $\commbi{\with}{\oplus_i}$ &
\Rsub{$\pi_1$}{$A_1,B_i,\Gamma$}
\Rplusi{$A_1,B_1\oplus B_2,\Gamma$}
\Rsub{$\pi_2$}{$A_2,B_i,\Gamma$}
\Rplusi{$A_2,B_1\oplus B_2,\Gamma$}
\Rwith{$A_1\with A_2,B_1\oplus B_2,\Gamma$}
\DisplayProof
\\\bottomrule
\end{tabular}
}
\caption{Rule commutations not involving a unit rule (up to a permutation of the conclusion)}
\label{tab:rule_comm}
\end{tableproof}

\begin{tableproof}[p]
\centering
\resizebox{!}{.46\textheight}{
\begin{tabular}{@{}rcl@{}}
\toprule
\Rtop{$A_1\parr A_2,\top,\Gamma$}
\DisplayProof
& $\commbi{\parr}{\top}$ &
\Rtop{$A_1, A_2,\top,\Gamma$}
\Rparr{$A_1\parr A_2,\top,\Gamma$}
\DisplayProof
\\\midrule
\Rtop{$A_1\tens A_2,\top,\Gamma,\Delta$}
\DisplayProof
& $\commbi{\tens}{\top}$ &
\Rtop{$A_1,\top,\Gamma$}
\Rsub{$\pi$}{$A_2,\Delta$}
\Rtens{$A_1\tens A_2,\top,\Gamma,\Delta$}
\DisplayProof
\\\midrule
\Rtop{$A_1\tens A_2,\top,\Gamma,\Delta$}
\DisplayProof
& $\commbi{\tens}{\top}$ &
\Rsub{$\pi$}{$A_1,\Gamma$}
\Rtop{$A_2,\top,\Delta$}
\Rtens{$A_1\tens A_2,\top,\Gamma,\Delta$}
\DisplayProof
\\\midrule
\Rtop{$A_1\with A_2,\top,\Gamma$}
\DisplayProof
& $\commbi{\with}{\top}$ &
\Rtop{$A_1,\top,\Gamma$}
\Rtop{$A_2,\top,\Gamma$}
\Rwith{$A_1\with A_2,\top,\Gamma$}
\DisplayProof
\\\midrule
\Rtop{$A_1\oplus A_2,\top,\Gamma$}
\DisplayProof
& $\commbi{\oplus_i}{\top}$ &
\Rtop{$A_i,\top,\Gamma$}
\Rplusi{$A_1\oplus A_2,\top,\Gamma$}
\DisplayProof
\\\midrule
\AxiomC{}
\RightLabel{$\top_0$}
\UnaryInfC{$\fCenter$ $\top_0,\top_1,\Gamma$}
\DisplayProof
& $\commdown{\top}{\top}$ &
\AxiomC{}
\RightLabel{$\top_1$}
\UnaryInfC{$\fCenter$ $\top_0,\top_1,\Gamma$}
\DisplayProof
\\\midrule
\Rtop{$\top,\bot,\Gamma$}
\DisplayProof
& $\commbi{\bot}{\top}$ &
\Rtop{$\top,\Gamma$}
\Rbot{$\top,\bot,\Gamma$}
\DisplayProof
\\\midrule
\Rsub{$\pi_1$}{$A_1,A_2,\Gamma$}
\Rparr{$A_1\parr A_2,\Gamma$}
\Rbot{$A_1\parr A_2,\bot,\Gamma$}
\DisplayProof
& $\commbi{\parr}{\bot}$ &
\Rsub{$\pi_1$}{$A_1,A_2,\Gamma$}
\Rbot{$A_1, A_2,\bot,\Gamma$}
\Rparr{$A_1\parr A_2,\bot,\Gamma$}
\DisplayProof
\\\midrule
\Rsub{$\pi_1$}{$A_1,\Gamma$}
\Rsub{$\pi_2$}{$A_2,\Delta$}
\Rtens{$A_1\tens A_2,\Gamma,\Delta$}
\Rbot{$A_1\tens A_2,\bot,\Gamma,\Delta$}
\DisplayProof
& $\commbi{\tens}{\bot}$ &
\Rsub{$\pi_1$}{$A_1,\Gamma$}
\Rbot{$A_1,\bot,\Gamma$}
\Rsub{$\pi_2$}{$A_2,\Delta$}
\Rtens{$A_1\tens A_2,\bot,\Gamma,\Delta$}
\DisplayProof
\\\midrule
\Rsub{$\pi_1$}{$A_1,\Gamma$}
\Rsub{$\pi_2$}{$A_2,\Delta$}
\Rtens{$A_1\tens A_2,\Gamma,\Delta$}
\Rbot{$A_1\tens A_2,\bot,\Gamma,\Delta$}
\DisplayProof
& $\commbi{\tens}{\bot}$ &
\Rsub{$\pi_1$}{$A_1,\Gamma$}
\Rsub{$\pi_2$}{$A_2,\Delta$}
\Rbot{$A_2,\bot,\Delta$}
\Rtens{$A_1\tens A_2,\bot,\Gamma,\Delta$}
\DisplayProof
\\\midrule
\Rsub{$\pi_1$}{$A_1,\Gamma$}
\Rsub{$\pi_2$}{$A_2,\Gamma$}
\Rwith{$A_1\with A_2,\Gamma$}
\Rbot{$A_1\with A_2,\bot,\Gamma$}
\DisplayProof
& $\commbi{\with}{\bot}$ &
\Rsub{$\pi_1$}{$A_1,\Gamma$}
\Rbot{$A_1,\bot,\Gamma$}
\Rsub{$\pi_2$}{$A_2,\Gamma$}
\Rbot{$A_2,\bot,\Gamma$}
\Rwith{$A_1\with A_2,\bot,\Gamma$}
\DisplayProof
\\\midrule
\Rsub{$\pi$}{$A_i,\Gamma$}
\Rplusi{$A_1\oplus A_2,\Gamma$}
\Rbot{$A_1\oplus A_2,\bot,\Gamma$}
\DisplayProof
& $\commbi{\oplus_i}{\bot}$ &
\Rsub{$\pi$}{$A_i,\Gamma$}
\Rbot{$A_i,\bot,\Gamma$}
\Rplusi{$A_1\oplus A_2,\bot,\Gamma$}
\DisplayProof
\\\midrule
\Rsub{$\pi$}{$\Gamma$}
\RightLabel{$\bot_0$}
\UnaryInfC{$\fCenter$ $\bot_0,\Gamma$}
\RightLabel{$\bot_1$}
\UnaryInfC{$\fCenter$ $\bot_0,\bot_1,\Gamma$}
\DisplayProof
& $\commdown{\bot}{\bot}$ &
\Rsub{$\pi$}{$\Gamma$}
\RightLabel{$\bot_1$}
\UnaryInfC{$\fCenter$ $\bot_1,\Gamma$}
\RightLabel{$\bot_0$}
\UnaryInfC{$\fCenter$ $\bot_0,\bot_1,\Gamma$}
\DisplayProof
\\\bottomrule
\end{tabular}
}
\vskip.5em
{\small In the $\comm{\top}{\top}$ and $\comm{\bot}{\bot}$ commutations, indices serve to identify occurrences of $\top$ and $\bot$, and the index of the rules to identify the distinguished occurrence associated with the rule.}
\caption{Rule commutations involving a unit rule (up to a permutation of the conclusion)}
\label{tab:rule_comm_u}
\end{tableproof}

\subsection{Linear isomorphisms}
\label{subsec:iso}

We denote by $\cutf{\pi}{\pi'}{B}$ the proof obtained by adding a cut on $B$ between proofs $\pi$ with conclusion $\vdash\Gamma, B$ and $\pi'$ with conclusion $\vdash B\orth,\Delta$, and by $\ax{A}$ the proof of $\fCenter A\orth, A$ containing just an $ax$-rule.

\begin{defi}[Isomorphism]
\label{def:iso}
Two formulas $A$ and $B$ are \emph{isomorphic}, denoted $A\iso B$, if there exist proofs $\pi$ of $\fCenter A\orth,B$ and $\pi'$ of $\fCenter B\orth,A$ such that $\cutf{\pi}{\pi'}{B} \eqbn \ax{A}$ and $\cutf{\pi'}{\pi}{A} \eqbn \ax{B}$:
\begin{gather*}
\cutf{\pi}{\pi'}{B} \;=\;
\Rsub{$\pi$}{$A\orth,B$}
\Rsub{$\pi'$}{$B\orth,A$}
\Rcut{$A\orth,A$}
\DisplayProof
\;\eqbn\;
\Rax{$A$}
\DisplayProof
\;=\; \ax{A}
\\
\text{and}\\
\cutf{\pi'}{\pi}{A} \;=\;
\Rsub{$\pi'$}{$B\orth,A$}
\Rsub{$\pi$}{$A\orth,B$}
\Rcut{$B\orth,B$}
\DisplayProof
\;\eqbn\;
\Rax{$B$}
\DisplayProof
\;=\; \ax{B}
\end{gather*}
\end{defi}

By $\isoproofs{A}{B}{\pi}{\pi'}$ we mean the \emph{cut-free} proofs $\pi$ and $\pi'$ define an isomorphism between formulas $A$ and $B$, that is $\pi$ is a proof of $\fCenter A\orth, B$ and $\pi'$ one of $\fCenter B\orth, A$ such that $\cutf{\pi}{\pi'}{B}\eqbn\ax{A}$ and $\cutf{\pi}{\pi'}{A}\eqbn\ax{B}$.
It is quite easy to link these two notations of isomorphisms.

\begin{lem}
\label{lem:iso_isoproofs}
Given two formulas $A$ and $B$, $A\iso B$ if and only if there exist proofs $\pi$ and $\pi'$ such that $\isoproofs{A}{B}{\pi}{\pi'}$.
\end{lem}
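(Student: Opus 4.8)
The plan is to notice that the two statements differ only in the requirement that the witnessing proofs be cut-free, and that this gap is exactly what cut-elimination closes. The backward implication is immediate: if $\isoproofs{A}{B}{\pi}{\pi'}$ holds then $\pi$ and $\pi'$ are, in particular, proofs of $\fCenter A\orth, B$ and $\fCenter B\orth, A$ satisfying the two defining equations $\cutf{\pi}{\pi'}{B}\eqbn\ax{A}$ and $\cutf{\pi'}{\pi}{A}\eqbn\ax{B}$, so $A\iso B$ directly by \autoref{def:iso}.

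For the forward implication, assume $A\iso B$ and fix witnessing proofs $\pi$ of $\fCenter A\orth, B$ and $\pi'$ of $\fCenter B\orth, A$, which may contain cuts. I would invoke the cut-elimination theorem for MALL to obtain cut-free proofs $\hat\pi$ and $\hat\pi'$ with $\pi\betatostar\hat\pi$ and $\pi'\betatostar\hat\pi'$. The key observation is that $\betato$ is a contextual rewriting relation: any reduction step performed inside a subproof can be performed identically inside any larger proof containing it. Consequently, reducing the components reduces the composite, so $\cutf{\pi}{\pi'}{B}\betatostar\cutf{\hat\pi}{\hat\pi'}{B}$ and, symmetrically, $\cutf{\pi'}{\pi}{A}\betatostar\cutf{\hat\pi'}{\hat\pi}{A}$.

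From these reductions I get $\cutf{\pi}{\pi'}{B}\eqb\cutf{\hat\pi}{\hat\pi'}{B}$, and since $\eqb$ is contained in the equivalence relation $\eqbn$, the chain $\cutf{\hat\pi}{\hat\pi'}{B}\eqbn\cutf{\pi}{\pi'}{B}\eqbn\ax{A}$ yields $\cutf{\hat\pi}{\hat\pi'}{B}\eqbn\ax{A}$; likewise $\cutf{\hat\pi'}{\hat\pi}{A}\eqbn\ax{B}$. Hence the cut-free proofs $\hat\pi$ and $\hat\pi'$ satisfy $\isoproofs{A}{B}{\hat\pi}{\hat\pi'}$, which concludes this direction.

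I do not expect a genuine obstacle here: the only non-formal ingredient is the existence of cut-free reducts, namely (weak) normalization of MALL cut-elimination, which is standard. The single point requiring care is the stability just used, that $\betatostar$ is closed under the cut-composition context, so that normalizing the two components also normalizes their composite and the defining equalities transport through $\eqbn$.
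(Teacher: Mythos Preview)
Your proof is correct and follows the same approach as the paper: the backward direction is immediate from the definition, and the forward direction normalizes the witnessing proofs via weak normalization of cut-elimination (the paper cites \autoref{cor:beta_wn}). You are simply more explicit than the paper about why the defining equations transfer to the normalized proofs, spelling out the contextuality of $\betato$ under cut-composition; the paper leaves this step implicit.
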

\begin{proof}
The converse way follows by definition of an isomorphism.
For the direct way, take proofs $\pi$ and $\pi'$ given by the definition of an isomorphism, not necessarily cut-free.
One  may eliminate all $cut$-rules inside, using weak normalization of cut-elimination $\betato$ (\autoref{cor:beta_wn}).
\end{proof}

\floathere

We aim to prove that two MALL formulas are isomorphic if and only if they are equal in the equational theory $\Eu$ defined as follows, with $\AC$ a smaller equational theory to which we will reduce the problem.

\begin{defi}[Equational theories $\Eu$ and $\AC$]
\label{def:E}
We denote by $\Eu$ the equational theory given on \autoref{tab:eqisos} on \autopageref{tab:eqisos}, while $\AC$ denotes the part with associativity and commutativity equations only.
\end{defi}

Given an equational theory $\mathcal{T}$, the notation $A\eqeq_\mathcal{T}B$ means that formulas $A$ and $B$ are equal in the theory $\mathcal{T}$.
As often, the soundness part is easy (but tedious) to prove.

\begin{thm}[Isomorphisms soundness, see~{\cite[Lemma~3]{classisos}}]
\label{th:iso_sound}
If $A\eqEu B$ then $A\iso B$.
\end{thm}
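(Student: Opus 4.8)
The plan is to reduce the statement to a finite check on the generators of $\Eu$. By \autoref{def:E}, $\eqEu$ is the least congruence on formulas containing the sixteen equations of \autoref{tab:eqisos}. Since the relation $\iso$ is itself a congruence — it is reflexive, symmetric, transitive and compatible with each connective — it suffices to verify that every \emph{generating} equation $A = B$ is a genuine isomorphism $A \iso B$. Reflexivity is witnessed by taking $\pi = \pi' = \ax{A}$, since the $ax$ key case of \autoref{tab:cut_elim_pt_key} gives $\cutf{\ax{A}}{\ax{A}}{A} \betatostar \ax{A}$; symmetry is built into the symmetric role of $\pi$ and $\pi'$ in \autoref{def:iso}; transitivity and compatibility with the connectives follow by composing witnesses with cuts and normalizing, relying on weak normalization of $\betato$ (\autoref{cor:beta_wn}) exactly as in \autoref{lem:iso_isoproofs}. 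Finally, negation is also compatible with $\iso$: if $\pi, \pi'$ witness $A \iso B$, then (up to exchange) $\pi', \pi$ witness $A\orth \iso B\orth$. The equations of \autoref{tab:eqisos} therefore split into eight De Morgan dual pairs — associativity of $\tens$ with that of $\parr$, $A\tens 1 = A$ with $A\parr\bot = A$, $A\tens 0 = 0$ with $A\parr\top = \top$, and so on — so it is enough to treat one representative of each.

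For each representative I would exhibit two cut-free witnesses $\pi$ and $\pi'$. The associativity and commutativity equations use only the obvious rearrangements: for $A\tens B \iso B\tens A$ one proves $\vdash B\orth\parr A\orth, B\tens A$ with a $\tens$-rule on two axioms followed by a $\parr$-rule, and dually for $\pi'$; the additive cases are identical with $\with$ and $\oplus$ replacing $\tens$ and $\parr$. The unital equation $A\tens 1 \iso A$ is witnessed for $\vdash (A\tens 1)\orth, A = \vdash \bot\parr A\orth, A$ by an axiom, a $\bot$-rule and a $\parr$-rule, the converse combining an axiom, a $1$-rule and a $\tens$-rule. The cancellation law $A\parr\top \iso \top$ is handled entirely by the $\top$-rule: $\vdash (A\parr\top)\orth, \top = \vdash 0\tens A\orth, \top$ is a single $\top$-rule (whose context is arbitrary), and $\vdash 0, A\parr\top$ is a $\top$-rule followed by a $\parr$-rule. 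The distributivity witness for $A\tens(B\oplus C) \iso (A\tens B)\oplus(A\tens C)$ is the most elaborate: the proof of $\vdash (C\orth\with B\orth)\parr A\orth, (A\tens B)\oplus(A\tens C)$ branches on the $\with$, selecting $\oplus_1$ in the $B\orth$ branch and $\oplus_2$ in the $C\orth$ branch while packaging the matching $\tens$ inside, whereas the converse groups two $\parr$-rules under a single $\with$.

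It then remains to verify, for each representative, that the two round-trips $\cutf{\pi}{\pi'}{B}$ and $\cutf{\pi'}{\pi}{A}$ satisfy the $\eqbn$-conditions of \autoref{def:iso}. For the associative, commutative and unital equations this reduction uses only the key $ax$, $\parr$-$\tens$, $\with$-$\oplus$ and $\bot$-$1$ cases of \autoref{tab:cut_elim_pt_key}, with the $\beta$-normal form matching $\ax{A}$ up to the axiom-expansions of \autoref{tab:ax_exp_pt} (whence the use of $\eqn$ rather than literal equality, since the witnesses are built from axioms on possibly compound sub-formulas). \textbf{The main obstacle} is the cancellation and distributivity cases. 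For cancellation the $\top$-cut case of \autoref{tab:cut_elim_pt} lets the $\top$-rule absorb the entire opposite branch, so both round-trips collapse to a single $\top$-rule; one must check that this is exactly $\ax{\top}$ after the $\top$-$0$ expansion (and $\ax{A\parr\top}$ up to $\eqn$). For distributivity the normalization no longer proceeds by key cases alone: it threads through the $\with$-$\oplus$ key cases together with the commutative cuts of $\tens$ and $\parr$ against $\with$ and $\oplus$ from \autoref{tab:cut_elim_pt}, and one must track which additive slice is selected in each branch and reconcile the outcome with an axiom-expansion on a formula carrying both a $\tens$ and an $\oplus$. This bookkeeping is precisely the routine-but-tedious core of the soundness proof.
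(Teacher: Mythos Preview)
Your proposal is correct and follows essentially the same approach as the paper: exhibit explicit proof witnesses for each generating equation of $\Eu$ and check that both compositions reduce to the (axiom-expanded) identity, relying on $\iso$ being a congruence to lift this to arbitrary $\eqEu$-equal formulas. The paper's proof is considerably more terse (it gives only the commutativity of $\parr$ as a worked example and defers the rest to~\cite{classisos}), while you add the De~Morgan-duality observation to halve the case analysis, but the strategy is the same.
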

\begin{proof}
It suffices to give the proofs for each equation, then check their compositions can be reduced by cut-elimination to an axiom-expansion of an $ax$-rule.
For instance, looking at the commutativity of $\parr$, \ie\ $A\parr B \iso B \parr A$, we set $\pi$ and $\pi'$ the following proofs:
\begin{center}
$\pi=$
\Rax{$B$}
\Rax{$A$}
\Rtens{$B\orth\tens A\orth, B, A$}
\Rparr{$B\orth\tens A\orth, B\parr A$}
\DisplayProof
\quad and\quad $\pi'=$
\Rax{$A$}
\Rax{$B$}
\Rtens{$A\orth\tens B\orth, A, B$}
\Rparr{$A\orth\tens B\orth, A\parr B$}
\DisplayProof
\end{center}
One can check
\begin{center}
\begin{tikzpicture}
\node (1) at (0,0) {
\Rsub{$\pi$}{$B\orth\tens A\orth, B\parr A$}
\Rsub{$\pi'$}{$A\orth\tens B\orth, A\parr B$}
\Rcut{$B\orth\tens A\orth, A\parr B$}
\DisplayProof};
\node (2) at (7,-2) {
\Rax{$B$}
\Rax{$A$}
\Rtens{$B\orth\tens A\orth, A, B$}
\Rparr{$B\orth\tens A\orth, A\parr B$}
\DisplayProof};
\node (3) at (0,-3) {
\Raxocc{$B\orth\tens A\orth, A\parr B$}
\DisplayProof};

\path (1) edge[draw=none] node[draw=none, sloped, align=center]{$\betatostar$} (2);
\path (3) edge[draw=none] node[draw=none, sloped, align=center]{$\etato$} (2);
\end{tikzpicture}
\end{center}
and
\begin{center}
\begin{tikzpicture}
\node (1) at (0,0) {
\Rsub{$\pi'$}{$A\orth\tens B\orth, A\parr B$}
\Rsub{$\pi$}{$B\orth\tens A\orth, B\parr A$}
\Rcut{$A\orth\tens B\orth, B\parr A$}
\DisplayProof};
\node (2) at (7,-2) {
\Rax{$A$}
\Rax{$B$}
\Rtens{$A\orth\tens B\orth, B, A$}
\Rparr{$A\orth\tens B\orth, B\parr A$}
\DisplayProof};
\node (3) at (0,-3) {
\Raxocc{$A\orth\tens B\orth, B\parr A$}
\DisplayProof};

\path (1) edge[draw=none] node[draw=none, sloped, align=center]{$\betatostar$} (2);
\path (3) edge[draw=none] node[draw=none, sloped, align=center]{$\etato$} (2);
\end{tikzpicture} \qedhere
\end{center}
\end{proof}

All the difficulty lies in the proof of the other implication, completeness, on which the rest of this work focuses.


\section{Axiom-expansion}
\label{subsec:no_eta}

The goal of this section is to reduce the study of isomorphisms to proofs with \emph{atomic axioms} only, \ie\ whose $ax$-rules are on a formula $A=X$ or $A=X\orth$ (for $X$ an atom), no more considering axiom-expansion.
Said in another manner, we restrict our study to proofs in normal form for the $\eta$ relation, so as to have only $\eqb$ instead of $\eqbn$ when considering isomorphisms.

To begin with, axiom-expansion is convergent.

\begin{prop}
\label{prop:eta_canonic}
Axiom-expansion $\etato$ is strongly normalizing and confluent.
\end{prop}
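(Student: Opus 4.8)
The plan is to establish the two properties separately, with strong normalization coming first so that confluence can be obtained from local confluence via Newman's lemma.

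\textbf{Strong normalization.} The key observation is that every axiom-expansion step replaces an $ax$-rule acting on a \emph{non-atomic} formula by a subproof whose axioms act on \emph{strictly smaller} formulas (or on no formula at all, in the unit cases). Concretely, I would assign to each proof $\pi$ the measure $m(\pi) = \sum s(A)$, the sum being taken over all $ax$-rules occurring in $\pi$, where the summand attached to an $ax$-rule on $\fCenter A\orth, A$ is the size $s(A)$ (well-defined since $s(A\orth) = s(A)$). Inspecting the four cases of \autoref{tab:ax_exp_pt}: in the $\parr-\tens$ case an axiom on $A \tens B$, of size $s(A)+s(B)+1$, is replaced by two axioms on $A$ and $B$, contributing $s(A)+s(B)$; in the $\with-\oplus$ case an axiom on $A \oplus B$ is likewise replaced by axioms on $A$ and $B$; and in the $\bot-1$ and $\top-0$ cases an axiom on a unit of size $1$ is replaced by subproofs containing no axiom at all. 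In each case the local contribution drops by (at least) $1$, while every other $ax$-rule of the proof is left untouched, so $m$ strictly decreases along $\etato$. As $m$ takes values in $\mathbb{N}$, no infinite reduction sequence exists, which gives strong normalization.

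\textbf{Confluence.} By Newman's lemma it now suffices to prove local confluence: if $\pi \etato \pi_1$ and $\pi \etato \pi_2$, then $\pi_1$ and $\pi_2$ have a common $\etato$-reduct. The redexes of $\etato$ are exactly the $ax$-rules of $\pi$ on non-atomic formulas, and for such an $ax$-rule on $\fCenter A\orth, A$ the applicable rule is uniquely determined by the pair of main connectives (or units) of $A$ and $A\orth$: exactly one of the four lines of \autoref{tab:ax_exp_pt} matches, independently of which of the two formulas one reads as $A$. Two situations then arise. If the two steps reduce the \emph{same} $ax$-rule, this determinism forces $\pi_1 = \pi_2$ and there is nothing to do. If they reduce \emph{distinct} $ax$-rules, these redexes are disjoint subproofs of $\pi$, and expanding one does not affect the other (the fresh axioms it creates sit inside the replaced subproof, hence remain disjoint from the second redex); so the two steps commute and $\pi_1, \pi_2$ converge in a single further step each. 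Local confluence follows, and with strong normalization we conclude that $\etato$ is confluent.

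I do not expect a genuine obstacle here: both arguments are routine rewriting-theory verifications. The only points deserving care are the bookkeeping ensuring that $m$ decreases uniformly across all four expansion cases (in particular treating the unit cases, where axioms simply disappear), and the explicit check that for a given axiom the expansion rule is forced, so that the ``same redex'' branch of local confluence is trivially joinable. Once these are settled, the diamond for disjoint redexes together with Newman's lemma yields the result.
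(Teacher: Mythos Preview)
Your proof is correct and matches the paper's approach: the same size measure for strong normalization, and the same ``no critical pairs / disjoint redexes commute'' argument for confluence. The only cosmetic difference is that the paper packages the second part as the diamond property (one-step joinability) and concludes confluence directly, whereas you name it local confluence and invoke Newman's lemma; but your argument in fact already establishes the full diamond property, so the appeal to Newman is harmless overkill.
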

\begin{proof}
Strong normalization follows from the fact that a $\etato$ step strictly decreases the sum of the sizes of the formulas on which an $ax$-rule is applied.

Confluence can be deduced from the diamond property.
Observe that two distinct steps $\tau\etafrom\pi\etato\phi$ always commute, \ie\ there exists $\mu$ such that $\tau\etato\mu\etafrom\phi$ (there is no critical pair).
Hence the diamond property of $\etato$, thus its confluence.
\end{proof}

\begin{rem}
\label{rem:eta_ax-exp}
As long as there is an $ax$-rule not on an atom, a $\etato$ axiom-expansion step can be applied.
Thus, atomic-axiom proofs correspond to proofs in normal form for $\etato$.
\end{rem}

Thanks to \autoref{prop:eta_canonic}, we denote by $\eta(\pi)$ the unique $\eta$-normal form of a proof $\pi$, \ie\ the proof obtained by expanding iteratively all $ax$-rules in $\pi$ (in any order thanks to confluence).
Thus, we can define $\id{A} = \eta(\ax{A})$, the axiom-expansion of the proof consisting of only one $ax$-rule.
The goal of this section is to prove the following (proof on \autopageref{proof:eta_nom_eqbn}).

\begin{prop}[Reduction to atomic-axiom proofs]
\label{prop:eta_nom_eqbn}
Let $\pi$ and $\tau$ be proofs such that $\pi\eqbn\tau$.
Then $\eta(\pi)\eqb\eta(\tau)$ with, in this sequence, only proofs in $\eta$-normal form.
\end{prop}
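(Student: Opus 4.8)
The plan is to reduce the whole statement to single rewriting steps. Since $\eqbn$ is the equivalence relation generated by $\betato$ and $\etato$, a witness of $\pi \eqbn \tau$ is a finite zigzag $\pi = \rho_0, \rho_1, \dots, \rho_n = \tau$ in which each consecutive pair is related by one $\betato$- or $\etato$-step in either direction. Applying $\eta(\cdot)$ (well defined by \autoref{prop:eta_canonic}) to every proof of the zigzag, it suffices to connect $\eta(\rho_i)$ and $\eta(\rho_{i+1})$ by $\eqb$ through $\eta$-normal proofs for each $i$, and then concatenate. Two preliminary facts make this clean. First, $\betato$ \emph{preserves $\eta$-normal forms}: inspecting Tables~\ref{tab:cut_elim_pt_key} and~\ref{tab:cut_elim_pt}, no cut-elimination step ever creates an $ax$-rule (each step only erases, duplicates, moves or splits already-present rules), so by \autoref{rem:eta_ax-exp} a $\betato$-reduct of an atomic-axiom proof is again atomic-axiom; hence every $\beta$-reduction issued from an $\eta$-normal proof stays $\eta$-normal. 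Second, an $\etato$-step does not change the $\eta$-normal form, so $\eta(\rho_i) = \eta(\rho_{i+1})$ whenever $\rho_i$ and $\rho_{i+1}$ differ by an $\etato$-step.

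It remains to treat a single cut-elimination step, for which I would prove the key commutation: if $\pi \betato \pi'$ then $\eta(\pi) \betatostar \eta(\pi')$. Because $\eta(\cdot)$ only rewrites axioms deep inside subproofs while a $\betato$-step acts on a cut and its principal rules, the two transformations are essentially disjoint, and I distinguish two cases according to the redex. If the reduced cut is \emph{not} an $ax$-key case, then $\eta$-expansion commutes with the step on the nose: the same cut with the same principal rules is present in $\eta(\pi)$ (only the subproofs have their axioms expanded), and firing it once yields exactly $\eta(\pi')$ — duplications ($\with$-$cut$) and erasures ($\top$-$cut$, $\with$-$\oplus$) being reproduced identically on the expanded subproofs, since $\eta$-expansion acts independently on the axioms of each subproof. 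If the reduced cut is the $ax$-key case, write $\pi = D[\cut{\ax{A}}{\mu}]$ and $\pi' = D[\mu]$; by that same independence $\eta(\pi) = \eta(D)[\cut{\id{A}}{\eta(\mu)}]$ and $\eta(\pi') = \eta(D)[\eta(\mu)]$, and since cut-elimination is closed under context everything reduces to a single sublemma about the $\eta$-expanded identity.

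That sublemma — that $\cutf{\id{A}}{\nu}{A} \betatostar \nu$ for every proof $\nu$ of $\fCenter A, \Gamma$ — is the heart of the argument, and where I expect the real work to be. I would prove it by induction on $A$. The base case ($A$ atomic) is one $ax$-key step. For a compound $A$, say $A = A_1 \tens A_2$, the root $\parr$-rule of $\id{A}$ carries the cut formula $A\orth = A_2\orth \parr A_1\orth$; I would commute this cut upward through $\nu$ (using the commutative cases of \autoref{tab:cut_elim_pt}, which terminates since the cut moves strictly up a finite tree and, as axioms are atomic, the compound occurrence of $A$ is eventually introduced by a $\tens$-rule), fire the $\parr$-$\tens$ key case, and apply the induction hypothesis to the two resulting cuts against $\id{A_1}$ and $\id{A_2}$; the upward commutations are then undone when the surrounding rules are reapplied, so the reduct is exactly $\nu$. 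The additive connectives are handled symmetrically with the $\with$-$\oplus$ machinery. The delicate point is precisely this bookkeeping: ensuring the cut can always be propagated to the introduction of $A$ and that reassembling the pieces returns $\nu$ itself rather than a rule-commuted variant, which matters because cut-elimination is only confluent up to rule commutation while we want a genuine $\betatostar$-reduction.

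Finally, I would assemble the pieces: for each $i$, the pair $\eta(\rho_i), \eta(\rho_{i+1})$ is either equal (an $\etato$-step) or joined by a one-directional $\betatostar$-reduction (a $\betato$-step, in whichever direction the original step pointed), and by the first preliminary fact all intermediate proofs of these reductions are $\eta$-normal. Concatenating over $i = 0, \dots, n-1$ yields a sequence from $\eta(\pi)$ to $\eta(\tau)$ consisting of $\betato$-steps (in both directions) between $\eta$-normal proofs only, which is exactly the claimed $\eta(\pi) \eqb \eta(\tau)$ through $\eta$-normal forms. Weak normalization (\autoref{cor:beta_wn}) is available throughout, should one prefer to phrase the sublemma in terms of normal forms.
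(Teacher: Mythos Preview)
Your approach is correct and follows a genuinely different decomposition than the paper's. The paper does not prove your key claim ``$\pi\betato\pi'$ implies $\eta(\pi)\betatostar\eta(\pi')$'' directly; instead it proves a \emph{one-$\eta$-step} commutation (\autoref{lem:betafrom_etato}): given $\tau\betafrom\pi\etato\phi$, either the two steps commute on the nose, or (when the $\betato$ step is an $ax$ key case on the very axiom being expanded) one must $\eta$-expand a bit on both sides and replace the single $\beta$-step by several, all with strictly smaller $ax$-cut sizes (recorded by the multiset measure $a(\cdot)$). This local result is then iterated in \autoref{lem:etafromstar_betatostar} by a well-founded induction on $(a,n,m)$ to obtain the full commutation $\eta(\pi)\betatostar\eta(\pi')$, after which the proposition follows by your same zigzag argument. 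So the paper trades your ``identity sublemma'' for a measure-based termination argument in the style of decreasing diagrams. What you gain is a conceptually cleaner statement (cutting against $\id{A}$ is $\betatostar$-transparent on $\eta$-normal proofs, which is the syntactic form of ``identity is neutral for composition''); what the paper gains is that each individual case is tiny and the global induction is packaged into a standard rewriting measure.

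Two small imprecisions in your write-up are worth flagging. First, your sublemma as stated (``for every proof $\nu$'') is false: if $\nu=\ax{A}$ with $A$ compound, then $\cutf{\id{A}}{\nu}{A}\betato\id{A}\neq\nu$. You only need and only use it for $\eta$-normal $\nu$, and you implicitly assume this (``as axioms are atomic''), so just restrict the statement. Second, after firing the $\parr$--$\tens$ key case the two resulting cuts are \emph{not} literally against $\id{A_1}$ and $\id{A_2}$: the $\parr$-premise of $\id{A}$ still carries a $\tens$-rule, so you need one extra $\tens$--$cut$ commutation on each side before the outer induction hypothesis on $A_i$ applies (and you should mention the $\top$-branch, where a single $\top$--$cut$ step already returns the subproof). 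Your double induction then has shape: outer on the size of $A$, inner on the number of rules of $\nu$; the inner induction handles all commutative pushes through $\nu$, and the outer one absorbs the two extra commutations just mentioned since they do not touch $A$'s size. With these fixes your argument goes through and indeed gives exactly $\nu$ back (not a rule-commuted variant), because every commutation you perform puts the commuted rule \emph{below} the cut, so once the cut vanishes the rules of $\nu$ are reassembled in their original order.
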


This allows us to then consider proofs with atomic axioms only, manipulated through composition by cut and cut-elimination, never to speak again of $\eta$ rewriting.

We will need some intermediate results to reach our goal.
We set $a(\pi\betatostar\tau)$ the multiset of the sizes of the formulas in the $ax$ key cases of these $\betato$ reductions.

\begin{lem}
\label{lem:betafrom_etato}
Let $\pi$, $\tau$ and $\phi$ be proofs such that $\tau\betafrom\pi\etato\phi$.
Then there exists $\mu$ such that $\tau\etatostar\mu\betafrom\phi$ or there exist $\mu_1$ and $\mu_2$ such that $\tau\etatostar\mu_2\betafromstar\mu_1\etafromstar\phi$.
Furthermore, $a(\phi\betato\mu)=a(\pi\betato\tau)$ in the first case and $a(\mu_1\betatostar\mu_2) < a(\pi\betato\tau)$ in the second one.
Diagrammatically:
\begin{center}
\begin{tikzpicture}
\begin{myscope}
	\node[draw=none,minimum size=3mm] (pi) at (0,1.5) {$\pi$};
	\node[draw=none,minimum size=3mm] (phi) at (1.5,1.5) {$\phi$};
	\node[draw=none,minimum size=3mm] (varpi) at (0,0) {$\tau$};

	\path (pi) \edgelabel{$\eta$} (phi);
	\path (pi) \edgelabel{$\beta$} (varpi);
\end{myscope}
\begin{myscopec}{red}
	\node[draw=none,minimum size=3mm] (varphi) at (1.5,0) {$\mu$};
	
	\path (phi) edge[dashed] node[draw=none, sloped, align=center]{$\beta$ \\} (varphi);
	\path (varpi) edge[dashed] node[draw=none, sloped, align=center]{$\eta^*$ \\} (varphi);
	\node[draw=none] at (.75,.75) {$=$};
\end{myscopec}
\end{tikzpicture}
\hskip3em
\begin{tikzpicture}
\begin{myscope}
	\node[draw=none,minimum size=3mm] (pi) at (0,1.5) {$\pi$};
	\node[draw=none,minimum size=3mm] (phi) at (1.5,1.5) {$\phi$};
	\node[draw=none,minimum size=3mm] (varpi) at (0,0) {$\tau$};

	\path (pi) \edgelabel{$\eta$} (phi);
	\path (pi) \edgelabel{$\beta$} (varpi);
\end{myscope}
\begin{myscopec}{red}
	\node[draw=none,minimum size=3mm] (varphi1) at (3,1.5) {$\mu_1$};
	\node[draw=none,minimum size=3mm] (varphi2) at (3,0) {$\mu_2$};
	
	\path (phi) edge[dashed] node[draw=none, sloped, align=center]{$\eta^*$ \\} (varphi1);
	\path (varphi1) edge[dashed] node[draw=none, sloped, align=center]{$\beta^*$ \\} (varphi2);
	\path (varpi) edge[dashed] node[draw=none, sloped, align=center]{$\eta^*$ \\} (varphi2);
	\node[draw=none] at (1.5,.75) {$>$};
\end{myscopec}
\end{tikzpicture}
\end{center}
\end{lem}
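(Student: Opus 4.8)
The plan is to read this statement as a local-confluence diagram between one cut-elimination step $\pi \betato \tau$ and one axiom-expansion step $\pi \etato \phi$, and to prove it by analysing how the expanded $ax$-rule sits with respect to the $cut$-rule fired by $\betato$. The whole dichotomy reduces to one question: \emph{is the axiom expanded by $\etato$ the very axiom consumed by an $ax$ key case of $\betato$?} If it is not, the two redexes act on independent material and can be permuted, which yields the first (measure-preserving) case; if it is, expanding the axiom destroys the $ax$ key case and replaces it by a cascade of strictly smaller key cases, which yields the second (measure-decreasing) case. The engine of the second case is the unit law for identities: for a proof $\pi_0$ of $\vdash A, \Gamma$ one has $\cut{\id{A}}{\pi_0} \betatostar \pi_0$ along a reduction whose only $ax$ key cases occur at the atoms of $A$, hence on formulas of size $< s(A)$ (this holds verbatim for multiplicatives, for the additive connectives, and for the multiplicative units $1$ and $\bot$; the additive units are the delicate point, discussed last).

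For the first case, assume the expanded axiom is not the one consumed by $\betato$; this covers in particular every $\betato$ step that is not an $ax$ key case. Then the $\etato$-redex lies in a subproof that the $\betato$ step either leaves untouched, duplicates, or erases. If it is untouched, the steps simply commute and $\tau \etato \mu \betafrom \phi$. If $\betato$ is a duplicating commutative case, typically $\with-cut$ copying the subproof that holds the redex, then $\tau$ carries two copies of the axiom and two expansions recover $\mu$, so $\tau \etatostar \mu \betafrom \phi$. If $\betato$ erases that subproof, as for a discarded branch of a $\with-\oplus$ key case or the erased premise of a $\top-cut$, then $\tau$ no longer contains the redex and $\tau = \mu \betafrom \phi$ with no expansion at all. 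In every sub-case the step applied to $\phi$ is literally the same cut-elimination case as the one applied to $\pi$, whence $a(\phi \betato \mu) = a(\pi \betato \tau)$.

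For the second case, $\betato$ is the $ax$ key case $\cut{\ax{A}}{\pi_0} \betato \pi_0$ and $\etato$ expands this same axiom, so $A$ is non-atomic and $\tau = \pi_0$. I would keep expanding until the axiom is fully normalised, $\phi \etatostar \mu_1 := \cut{\id{A}}{\pi_0}$, and then invoke the unit law to get $\mu_1 \betatostar \pi_0 =: \mu_2 = \tau$. Thus $\tau \etatostar \mu_2$ requires no expansion step, while $\phi \etatostar \mu_1 \betatostar \mu_2$. Since the $ax$ key cases encountered in $\mu_1 \betatostar \mu_2$ are only on the atoms of $A$, the multiset $a(\mu_1 \betatostar \mu_2)$ consists of sizes strictly below $s(A)$, whereas $a(\pi \betato \tau) = \{\, s(A) \,\}$; replacing one occurrence of $s(A)$ by finitely many strictly smaller entries is a strict decrease for the multiset order, so $a(\mu_1 \betatostar \mu_2) < a(\pi \betato \tau)$.

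The step I expect to be the main obstacle is exactly the additive unit $\top$ (and dually $0$) inside the second case. Expanding $\ax{\top}$ turns the non-erasing $ax$ key case, which keeps $\pi_0$, into a $\top-cut$, which erases $\pi_0$ and leaves a bare $\top$-rule; so the unit law $\cut{\id{\top}}{\pi_0} \betatostar \pi_0$ fails on the nose, and matching the two sides cannot be done by a mere permutation of steps. This is where the careful monitoring of $\top$-rules announced in the introduction is needed: one must track the $\top$-rule through the reduction and exploit that proofs concluding on a $\top$ are rigid up to rule commutation $\eqcone$, rather than rely on the clean identity law. The non-erasing units $1,\bot$ (via the $\bot-1$ key case) and the additive connectives $\with,\oplus$ (via the $\with-\oplus$ key case) cause no such trouble, so the genuinely intricate bookkeeping is confined to $\top$ and $0$.
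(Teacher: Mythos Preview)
Your first case is correct and matches the paper. The gap is in the second case, and it is not where you locate it.

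You assert $\mu_2=\tau$, \ie\ that no $\eta$-expansion is needed on the $\tau$ side. This fails whenever $\pi_0$ itself contains a non-atomic $ax$-rule introducing the cut formula $A$. A minimal instance: take $A=X\tens Y$ and $\pi_0=\ax{A}$. Then in $\mu_1=\cut{\id{A}}{\ax{A}}$ the last rule of $\id{A}$ is the $\parr$-rule on $A\orth$, which \emph{is} the cut formula on that side, so no commutative step applies from there; the only applicable $\betato$ step is the $ax$ key case from the $\pi_0$ side, which is \emph{on $A$}. Hence $a(\mu_1\betatostar\mu_2)$ contains $s(A)$ and is not strictly below $\{s(A)\}=a(\pi\betato\tau)$. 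The cure is exactly what the lemma's shape $\tau\etatostar\mu_2$ anticipates: one must also $\eta$-expand those $ax$-rules inside $\pi_0$, symmetrically in $\phi$ and in $\tau$. This is precisely the paper's move (``expand those $ax$-rules, in both $\phi$ and $\tau$''), and the phenomenon is uniform across all non-atomic $A$ --- it has nothing to do with $\top$ or $0$ specifically.

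Conversely, your worry about $\top$ and $0$ is a false alarm, and the remedy you reach for (rule commutation $\eqcone$, ``monitoring of $\top$-rules'') is the wrong tool: this lemma lives entirely inside $\betato$/$\etato$ rewriting and uses neither $\eqcone$ nor anything from \autoref{sec:red_unit-free}. You correctly observe that firing $\top-cut$ from the expanded-axiom side erases $\pi_0$, but nothing forces that step. Commute the cut into $\pi_0$ first until you reach the rule introducing the cut formula there; after expanding any blocking axiom as above, that rule is a $\top$-rule, and the ensuing $\top-cut$ merely replaces one $\top$-rule by another with the identical conclusion, recovering (the expanded) $\pi_0$ with \emph{no} $ax$ key case at all, so $a=\emptyset<\{s(A)\}$. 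This is exactly how the paper handles it (``the corresponding key cases or $\top-cut$ commutative case''), uniformly with the other connectives and units.

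The paper's route differs from yours only in economy: it keeps the single given $\eta$-step on $r$ rather than fully expanding to $\id{A}$, expands just the $ax$-rules in $\rho$ that introduce $A\orth$, pushes the cut into $\rho$ to the $A\orth$-introduction rule, applies one connective key case (or $\top-cut$), and finishes with $ax$ or unit key cases on strict sub-formulas of $A$.
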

\begin{proof}
Call $r$ the $ax$-rule that $\pi\etato\phi$ expands, and $A$ its formula.
If the cut-elimination step is not an $ax$ key case using $r$, then the two steps commute and there exists $\mu$ such that $\phi\betato\mu$ and $\tau\etato\mu$ (or $\tau\etato\cdot\etato\mu$ if $r$ belongs to a sub-proof duplicated by the $\betato$ step, or $\tau=\mu$ if it belongs to a sub-proof erased by the $\betato$ step).
In particular, $a(\phi\betato\mu)=a(\pi\betato\tau)$ for they use the same rules.

Otherwise, the cut-elimination step is an $ax$ key case on $r$, with a $cut$-rule we call $c$ and a sub-proof $\rho$ in the other branch of $c$ than the one leading to $r$; $A$ is the formula cut by $c$.
The reasoning we apply is depicted on \autoref{fig:proof_betafrom_etato}.
Starting from $\phi$, consider the rules introducing $A\orth$ in (all slices of) $\rho$.
If any of them are $ax$-rules, then these are necessarily on the formula $A$; expand those $ax$-rules, in both $\phi$ and $\tau$ (keeping the same names for proofs $\pi$, $\tau$ and $\rho$ by abuse).
Then, in $\phi$, commute the $cut$-rule $c$ with rules of $\rho$ until reaching the rules introducing $A\orth$ in all slices (which are rules of the main connective of $A\orth$ or $\top$-rules).
Applying the corresponding key cases or $\top-cut$ commutative case (first commuting with a rule of the expanded axiom $r$ if $A$ is a positive formula, and doing it after if $A$ is a negative formula), then the units or $ax$ key cases on strict sub-formulas of $A$ yield $\tau$.
During these $ax$ key cases, we cut on sub-formulas of $A$, so on formulas of a strictly smaller size.
Therefore $\phi\etatostar\cdot\betatostar\cdot\etafromstar\tau$, with $a(\cdot\betatostar\cdot) < a(\pi\betato\tau)$.
\end{proof}

\begin{figure}
\centering
\begin{tikzpicture}
	\node (pi) at (0,0) {$\pi=${\AxiomC{}\RightLabel{$r$}\UnaryInfC{$\fCenter A\orth, A$}\AxiomC{$\rho$}\RightLabel{$c$}\BinaryInfC{\vdots}\DisplayProof}};
	\node (phi) at (8,0) {$\phi=${\AxiomC{}\RightLabel{$ax$}\UnaryInfC{\emptyforproof}\AxiomC{}\RightLabel{$ax$}\UnaryInfC{\emptyforproof}\RightLabel{pos}\BinaryInfC{\emptyforproof}\RightLabel{neg}\UnaryInfC{$\fCenter A\orth, A$}\AxiomC{$\rho$}\RightLabel{$c$}\BinaryInfC{\vdots}\DisplayProof}};
	\node (tau) at (0,-3) {$\tau=${\AxiomC{$\rho$}\noLine\UnaryInfC{\vdots}\DisplayProof}};

	\node (phi') at (8,-6) {\AxiomC{}\RightLabel{$ax$}\UnaryInfC{\emptyforproof}\AxiomC{}\RightLabel{$ax$}\UnaryInfC{\emptyforproof}\RightLabel{pos}\BinaryInfC{\emptyforproof}\RightLabel{neg}\UnaryInfC{$\fCenter A\orth, A$}\AxiomC{\vdots}\RightLabel{$A\orth$}\UnaryInfC{$\gamma$}\RightLabel{$c$}\BinaryInfC{\vdots}\DisplayProof};
	\node (tau') at (0,-9) {\AxiomC{\vdots}\RightLabel{$A\orth$}\UnaryInfC{$\gamma$}\noLine\UnaryInfC{\vdots}\DisplayProof};

	\node (phi'') at (8,-9) {\AxiomC{}\RightLabel{$ax$}\UnaryInfC{\emptyforproof}\AxiomC{}\RightLabel{$ax$}\UnaryInfC{\emptyforproof}\RightLabel{pos}\BinaryInfC{\emptyforproof}\RightLabel{neg}\UnaryInfC{$\fCenter A\orth, A$}\AxiomC{\vdots}\RightLabel{$A\orth$}\UnaryInfC{\emptyforproof}\RightLabel{$c$}\BinaryInfC{$\gamma$}\noLine\UnaryInfC{\vdots}\DisplayProof};
\begin{myscope}
	\path (pi) \edgelabel{$\eta$} (phi);
	\path (pi) \edgelabel{$\beta$} (tau);
\end{myscope}
	\node at (4,-4.5) {\color{red}expand all $ax$-rules on $A\orth$ in $\rho$};
	\node at (4,-7.5) {\color{red}commute $c$ until reaching $A\orth$};
\begin{myscopec}{red}
	\path (phi) \edgelabel{$\eta^*$} (phi');
	\path (tau) \edgelabel{$\eta^*$} (tau');
	\path (phi') \edgelabel{$\beta^*$} (phi'');
	\path (phi'') \edgelabel{$\beta^*$} (tau');
\end{myscopec}
\end{tikzpicture}
\caption{Schematic representation of the second case of the proof of \autoref{lem:betafrom_etato}}
\label{fig:proof_betafrom_etato}
\end{figure}

\begin{fact}
\label{fact:beta_in_eta}
Cut-elimination $\betato$ preserves being in $\eta$-normal form.
\end{fact}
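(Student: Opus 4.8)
The plan is to reduce the statement to a single reduction step and then to a purely local inspection of the rewriting rules. By \autoref{rem:eta_ax-exp}, a proof is in $\eta$-normal form exactly when all its $ax$-rules are \emph{atomic}, \ie\ applied to a formula $X$ or $X\orth$ for some atom $X$. Hence it suffices to prove that if $\pi$ is in $\eta$-normal form and $\pi\betato\tau$ is a single cut-elimination step, then every $ax$-rule of $\tau$ is atomic; the general case follows by an immediate induction on the length of a $\betatostar$ reduction.

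The key observation is that \emph{no} cut-elimination step ever creates a fresh $ax$-rule. I would establish this by inspecting each case of the rewriting system of \autoref{def:beta_pt}, that is, each line of Tables~\ref{tab:cut_elim_pt_key} and~\ref{tab:cut_elim_pt}. In every case, the right-hand side is assembled solely from the sub-proofs occurring on the left-hand side---possibly erased (as in the $ax$, $\with-\oplus_i$ and $\bot-1$ key cases and in the $\top-cut$ commutation), possibly duplicated (as in the $\with-cut$ commutation), and otherwise merely rearranged---together with at most one newly introduced rule, which is always a logical or structural rule ($\tens$, $\parr$, $\with$, $\oplus_i$, $\bot$, $\top$) or a $cut$, but never an $ax$. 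Consequently every $ax$-rule appearing in $\tau$ is a copy of an $ax$-rule already present in $\pi$, and such a copy is applied to the very same formula.

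From this the conclusion is immediate: since $\pi$ is in $\eta$-normal form, each of its $ax$-rules is atomic; as each $ax$-rule of $\tau$ carries the same formula as some $ax$-rule of $\pi$, all $ax$-rules of $\tau$ are atomic as well, and by \autoref{rem:eta_ax-exp} the proof $\tau$ is in $\eta$-normal form.

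The argument involves no real obstacle: the only points deserving a moment's care are the duplicating case $\with-cut$, where one must note that duplicating a sub-proof does not alter the formulas on which its axioms act, and the erasing cases, where axioms are simply discarded---both of which can only shrink, never enlarge, the collection of formulas carried by the $ax$-rules. The entire content is thus the routine exhaustive check that none of the right-hand sides in Tables~\ref{tab:cut_elim_pt_key} and~\ref{tab:cut_elim_pt} contains an $ax$-rule absent from the corresponding left-hand side.
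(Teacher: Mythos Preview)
Your argument is correct and is exactly the natural justification: no $\betato$ step introduces a fresh $ax$-rule, so atomic axioms remain atomic. The paper itself gives no proof at all---the statement is recorded as a bare \emph{Fact}---so your exhaustive inspection of Tables~\ref{tab:cut_elim_pt_key} and~\ref{tab:cut_elim_pt} simply spells out what the authors leave implicit.
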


\begin{lem}
\label{lem:etafromstar_betatostar}
Let $\pi$, $\tau$ and $\phi$ be proofs such that $\tau\etafromstar\pi\betatostar\phi$, with $\tau$ an $\eta$-normal proof.
There exists an $\eta$-normal proof $\mu$ such that $\tau\betatostar\mu\etafromstar\phi$.
Diagrammatically:
\begin{center}
\begin{tikzpicture}
\begin{myscope}
	\node[draw=none,minimum size=3mm] (pi) at (0,1.5) {$\pi$};
	\node[draw=none,minimum size=3mm] (phi) at (0,0) {$\phi$};
	\node[draw=none,minimum size=3mm] (varpi) at (1.5,1.5) {$\tau$};
	\node[draw=none] at (2.75,1.5) {$\eta$-normal};

	\path (pi) \edgelabel{$\beta^*$} (phi);
	\path (pi) \edgelabel{$\eta^*$} (varpi);
\end{myscope}
\begin{myscopec}{red}
	\node[draw=none,minimum size=3mm] (varphi) at (1.5,0) {$\mu$};
	\node[draw=none] at (2.75,0) {$\eta$-normal};

	\path (phi) edge[dashed] node[draw=none, sloped, align=center]{$\eta^*$ \\} (varphi);
	\path (varpi) edge[dashed] node[draw=none, sloped, align=center]{$\beta^*$ \\} (varphi);
\end{myscopec}
\end{tikzpicture}
\end{center}
\end{lem}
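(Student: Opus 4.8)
The plan is to reduce the statement to a commutation between a \emph{single} cut-elimination step and the \emph{full} axiom-expansion to normal form, and then to push the latter through by a well-founded induction governed by the measure $a$ introduced just before \autoref{lem:betafrom_etato}.

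First, since $\pi\etatostar\tau$ with $\tau$ in $\eta$-normal form, confluence and strong normalization of $\etato$ (\autoref{prop:eta_canonic}) force $\tau=\eta(\pi)$. Setting $\mu:=\eta(\phi)$ gives for free both $\phi\etatostar\mu$ and that $\mu$ is $\eta$-normal, so the only thing left to establish is $\eta(\pi)\betatostar\eta(\phi)$. I will also use repeatedly that whenever $\phi\etatostar M$ one has $\eta(M)=\eta(\phi)$, again by confluence of $\etato$. Decomposing $\pi\betatostar\phi$ into single steps and using transitivity of $\betatostar$, it then suffices to prove the single-step statement: \emph{if $\pi\betato\phi$ then $\eta(\pi)\betatostar\eta(\phi)$}.

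This single-step statement I would prove by induction on the pair $\bigl(a(\pi\betato\phi),\,e(\pi)\bigr)$ ordered lexicographically, where the first component uses the multiset ordering and $e(\pi)$ denotes the sum of the sizes of the formulas carried by the $ax$-rules of $\pi$, the quantity that strictly decreases along $\etato$ by the argument of \autoref{prop:eta_canonic}. If $\pi$ is already $\eta$-normal, then so is $\phi$ by \autoref{fact:beta_in_eta}, and $\eta(\pi)=\pi\betato\phi=\eta(\phi)$. Otherwise I pick a first expansion $\pi\etato\pi_1$ and apply \autoref{lem:betafrom_etato} to the peak $\phi\betafrom\pi\etato\pi_1$. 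In the first case one gets $M$ with $\pi_1\betato M$, $\phi\etatostar M$ and $a(\pi_1\betato M)=a(\pi\betato\phi)$; since $e(\pi_1)<e(\pi)$, the induction hypothesis (same first component, smaller second) yields $\eta(\pi_1)\betatostar\eta(M)$, which is exactly $\eta(\pi)\betatostar\eta(\phi)$ by the remarks above. In the second case one gets $M_1,M_2$ with $\pi_1\etatostar M_1$, $M_1\betatostar M_2$, $\phi\etatostar M_2$ and $a(M_1\betatostar M_2)<a(\pi\betato\phi)$; each single step of $M_1\betatostar M_2$ has $a$-value a sub-multiset of $a(M_1\betatostar M_2)$, hence strictly below $a(\pi\betato\phi)$, so the induction hypothesis applies to each such step and the results chain to $\eta(M_1)\betatostar\eta(M_2)$, once more equal to $\eta(\pi)\betatostar\eta(\phi)$.

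The delicate part is precisely this well-foundedness bookkeeping: I must rely on the sharp measure information in \autoref{lem:betafrom_etato}, namely that its first alternative keeps $a$ fixed (so progress is made only through the auxiliary measure $e$), while its second alternative strictly lowers $a$ (so the generated, possibly longer, $\beta$-reduction is handled by the primary induction hypothesis step by step). Once these two facts are in place the lexicographic order terminates, and everything else is routine, using only confluence of $\etato$ to identify $\eta$-normal forms and transitivity of $\betatostar$ to recombine the pieces.
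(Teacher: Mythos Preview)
Your proof is correct, and it takes a genuinely different route from the paper's own argument. Both approaches hinge on \autoref{lem:betafrom_etato} and the measure $a$, but the organisation is not the same.

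The paper keeps the full sequences $\pi\overset{\eta^m}{\longrightarrow}\tau$ and $\pi\overset{\beta^n}{\longrightarrow}\phi$ as they are and performs a single induction on the triple $(a(\pi\betatostar\phi),n,m)$; at each step it peels off one $\etato$ and one $\betato$ from the corners of the square, applies \autoref{lem:betafrom_etato} once, and then invokes the induction hypothesis twice (once to push the remaining $\eta$'s through, once to absorb the remaining $\beta$'s). You instead first observe that $\tau=\eta(\pi)$ and $\mu=\eta(\phi)$ are forced, reduce everything to the statement ``$\pi\betato\phi$ implies $\eta(\pi)\betatostar\eta(\phi)$'' for a \emph{single} $\beta$-step, and prove the latter by induction on the pair $(a(\pi\betato\phi),e(\pi))$. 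Your second component $e(\pi)$ plays the role of the paper's $m$, while the paper's $n$ is handled upfront by your decomposition into single steps and transitivity. Your argument is arguably cleaner: the induction has fewer components, each call to the hypothesis is used only once, and the identifications $\eta(\pi)=\eta(\pi_1)=\eta(M_1)$ etc.\ via confluence replace the explicit diagram-chasing in the paper. The paper's approach, on the other hand, is slightly more self-contained in that it never names $\eta(\cdot)$ and works only with the given sequences. Both are sound; yours is a nice streamlining.
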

\begin{proof}
Assume $\tau\overset{\eta^m}{\longleftarrow}\pi\overset{\beta^n}{\longrightarrow}\phi$.
If we find a $\mu$ respecting the diagram, then it is an $\eta$-normal form thanks to \autoref{fact:beta_in_eta}.
We reason by induction on the lexicographic order of the triple $\left(a(\pi\betatostar\phi), n, m\right)$.
If $n=0$ or $m=0$, then the result trivially holds ($\mu = \tau$ or $\mu = \phi$).

Consider the case $n+1$ and $m+1$.
Therefore, $\tau\overset{\eta^m}{\longleftarrow}\iota\etafrom\pi\betato\kappa\overset{\beta^n}{\longrightarrow}\phi$.
We apply \autoref{lem:betafrom_etato} on $\iota\etafrom\pi\betato\kappa$, yielding $\rho$ such that $\iota\betato\rho\etafromstar\phi$ with $a(\iota\betato\rho)=a(\pi\betato\kappa)$ or $\rho_1$ and $\rho_2$ such that $\iota\etatostar\rho_1\betatostar\rho_2\etafromstar\kappa$ with $a(\rho_1\betatostar\rho_2) < a(\pi\betato\kappa)$.
Both of these cases, and the reasoning we will apply, are illustrated by diagrams preceding them, with the following color convention: in blue are uses of \autoref{lem:betafrom_etato}, in green of confluence of $\etato$ and in red of the induction hypothesis.

\begin{center}
\begin{tikzpicture}
\begin{myscope}
	\node[draw=none,minimum size=3mm] (pi) at (0,1.5) {$\pi$};
	\node[draw=none,minimum size=3mm] (iota) at (1.5,1.5) {$\iota$};
	\node[draw=none,minimum size=3mm] (varpi) at (3,1.5) {$\tau$};
	\node[draw=none] at (4.25,1.5) {$\eta$-normal};
	\node[draw=none,minimum size=3mm] (kappa) at (0,-1.5) {$\kappa$};
	\node[draw=none,minimum size=3mm] (phi) at (0,-3) {$\phi$};

	\path (pi) \edgelabel{$\eta$} (iota);
	\path (iota) \edgelabel{$\eta^m$} (varpi);
	\path (pi) \edgelabel{$\beta$} (kappa);
	\path (kappa) \edgelabel{$\beta^n$} (phi);
\end{myscope}
\begin{myscopec}{blue}
	\node[draw=none,minimum size=3mm] (varphi) at (1.5,-1.5) {$\rho$};

	\node[draw=none] at (.75,0) {$=$};
	\path (kappa) \edgelabel{$\eta^*$} (varphi);
	\path (iota) \edgelabel{$\beta$} (varphi);
\end{myscopec}
\begin{myscopec}{red}
	\node[draw=none] at (2.25,0) {\small $IH$};
	\node[draw=none,minimum size=3mm] (nu) at (3,-1.5) {$\nu$};
	\node[draw=none] at (4.25,-1.5) {$\eta$-normal};

	\path (varpi) \edgelabel{$\beta^*$} (nu);
	\path (varphi) \edgelabel{$\eta^*$} (nu);

	\node[draw=none] at (1.5,-2.25) {\small $IH$};
	\node[draw=none,minimum size=3mm] (mu) at (3,-3) {$\mu$};
	\node[draw=none] at (4.25,-3) {$\eta$-normal};

	\path (nu) \edgelabel{$\beta^*$} (mu);
	\path (phi) \edgelabel{$\eta^*$} (mu);
\end{myscopec}
\end{tikzpicture}
\end{center}
Assume to be in the first case.
Applying the induction hypothesis on $\tau\overset{\eta^m}{\longleftarrow}\iota\betato\rho$, with $\tau$ in $\eta$-normal form, $a(\iota\betato\rho)=a(\pi\betato\kappa)\leq a(\pi\betatostar\phi)$, $1\leq n+1$ and $m<m+1$, there exists an $\eta$-normal proof $\nu$ such that $\tau\betatostar\nu\etafromstar\rho$.
We now apply the induction hypothesis on $\nu\etafromstar\rho\etafromstar\kappa\overset{\beta^n}{\longrightarrow}\phi$, with $\nu$ in $\eta$-normal form, $a(\kappa\overset{\beta^n}{\longrightarrow}\phi)\leq a(\pi\betatostar\phi)$ and $n< n+1$.
We obtain an $\eta$-normal proof $\mu$ such that $\nu\betatostar\mu\etafromstar\phi$.
This concludes the first case.

\begin{center}
\begin{tikzpicture}
\begin{myscope}
	\node[draw=none,minimum size=3mm] (pi) at (0,1.5) {$\pi$};
	\node[draw=none,minimum size=3mm] (iota) at (1.5,1.5) {$\iota$};
	\node[draw=none,minimum size=3mm] (varpi) at (3,1.5) {$\tau$};
	\node[draw=none] at (4.25,1.5) {$\eta$-normal};
	\node[draw=none,minimum size=3mm] (kappa) at (0,-1.5) {$\kappa$};
	\node[draw=none,minimum size=3mm] (phi) at (0,-3) {$\phi$};

	\path (pi) \edgelabel{$\eta$} (iota);
	\path (iota) \edgelabel{$\eta^m$} (varpi);
	\path (pi) \edgelabel{$\beta$} (kappa);
	\path (kappa) \edgelabel{$\beta^n$} (phi);
\end{myscope}
\begin{myscopec}{blue}
	\node[draw=none,minimum size=3mm] (varphi1) at (1.5,0) {$\rho_1$};
	\node[draw=none,minimum size=3mm] (varphi2) at (1.5,-1.5) {$\rho_2$};

	\node[draw=none] at (.75,-.75) {$>$};
	\path (iota) \edgelabel{$\eta^*$} (varphi1);
	\path (varphi1) \edgelabel{$\beta^*$} (varphi2);
	\path (kappa) \edgelabel{$\eta^*$} (varphi2);
\end{myscopec}
\begin{myscopec}{olive}
	\path (varphi1) \edgelabel{$\eta^*$} (varpi);
\end{myscopec}
\begin{myscopec}{red}
	\node[draw=none] at (2.25,-.75) {\small $IH$};
	\node[draw=none,minimum size=3mm] (nu) at (3,-1.5) {$\nu$};
	\node[draw=none] at (4.25,-1.5) {$\eta$-normal};

	\path (varpi) \edgelabel{$\beta^*$} (nu);
	\path (varphi2) \edgelabel{$\eta^*$} (nu);

	\node[draw=none] at (1.5,-2.25) {\small $IH$};
	\node[draw=none,minimum size=3mm] (mu) at (3,-3) {$\mu$};
	\node[draw=none] at (4.25,-3) {$\eta$-normal};

	\path (nu) \edgelabel{$\beta^*$} (mu);
	\path (phi) \edgelabel{$\eta^*$} (mu);
\end{myscopec}
\end{tikzpicture}
\end{center}
Consider the second case.
Using the confluence of $\etato$ on $\tau\overset{\eta^m}{\longleftarrow}\iota\etatostar\rho_1$ (\autoref{prop:eta_canonic}), with $\tau$ in $\eta$-normal form, yields $\tau\etafromstar\rho_1$.
We then apply the induction hypothesis on $\tau\etafromstar\rho_1\betatostar\rho_2$, with $\tau$ in $\eta$-normal form and $a(\rho_1\betatostar\rho_2)< a(\pi\betatostar\phi)$.
This yields an $\eta$-normal proof $\nu$ such that $\tau\betatostar\nu\etafromstar\rho_2$.
We use the induction hypothesis again, this time on $\nu\etafromstar\rho_2\etafromstar\kappa\overset{\beta^n}{\longrightarrow}\phi$, with $\nu$ in $\eta$-normal form, $a(\kappa\overset{\beta^n}{\longrightarrow}\phi)\leq a(\pi\betatostar\phi)$ and $n<n+1$.
We obtain an $\eta$-normal proof $\mu$ with $\nu\betatostar\mu\etafromstar\phi$, solving the second case.
\end{proof}

We can now prove the main result of this section, \autoref{prop:eta_nom_eqbn}.

\begin{proof}[Proof of \autoref{prop:eta_nom_eqbn}]
\label{proof:eta_nom_eqbn}
We reason by induction on the length of the sequence $\pi\eqbn\tau$.
If it is of null length, then $\pi=\tau$ hence $\eta(\pi)=\eta(\tau)$.
Otherwise, there is a proof $\phi$ such that $\pi\cadratin\phi\eqbn\tau$ with $\cadratin~\in\{\betato;\betafrom;\etato;\etafrom\}$.
By induction hypothesis, $\eta(\phi)\eqb\eta(\tau)$, with only axiom-expanded proofs in this sequence.
We distinguish cases according to $\pi\cadratin\phi$.

If $\pi\betato\phi$, then as $\eta(\pi)\etafromstar\pi$ we can apply \autoref{lem:etafromstar_betatostar} to obtain an $\eta$-normal proof $\mu$ such that $\eta(\pi)\betatostar\mu\etafromstar\phi$.
Thus, $\mu=\eta(\phi)$, so $\eta(\pi)\betatostar\eta(\phi)\eqb\eta(\tau)$ and the result holds (with only $\eta$-normal proofs in $\eta(\pi)\betatostar\eta(\phi)$ thanks to \autoref{fact:beta_in_eta}).

Similarly, if $\pi\betafrom\phi$ then, as $\phi\etatostar\eta(\phi)$, there exists an $\eta$-normal proof $\mu$ such that $\pi\etatostar\mu\betafromstar\eta(\phi)$ (\autoref{lem:etafromstar_betatostar}).
Thus $\mu=\eta(\pi)$, and $\eta(\pi)\betafromstar\eta(\phi)\eqb\eta(\tau)$.

Finally, if $\pi\etato\phi$ or $\pi\etafrom\phi$, then $\eta(\pi)=\eta(\phi)\eqb\eta(\tau)$ and the conclusion follows.
\end{proof}


\section{Cut-elimination and rule commutations}
\label{subsec:gen_eqc}

It is not possible to apply the reasoning from the previous section (about axiom-expansion) to cut-elimination, because cut-elimination is confluent only up to rule commutations.
In the literature of linear logic, there already are known results about (strong) normalization and confluence of cut-elimination but most are for the proof-nets syntax and not for the sequent calculus syntax: \eg~\cite{mallpnlong,allpnbox1,allpnbox2,cutelimmallpn} about multiplicative-additive linear logic, \cite{llsnaccattoli,stnormpnmodstrcong} about multiplicative-exponential linear logic, and notably~\cite{snll} for all connectives and second order quantifiers.
The confluence of cut-elimination up to rule commutations in sequent calculus is expected to hold in linear logic (with all connectives and units), but it has not been proved yet.
Fortunately, in the restricted case of MALL, this key result has been proved~\cite[Theorem~5.1]{calculusformall}, which allows us to reduce the problem of isomorphisms from looking at cut-elimination to considering rule commutations.

A first, easy but important, result is that rule commutation is included in equality up to cut-elimination.

\begin{prop}[$\eqcone~\subseteq~\eqb$]
\label{prop:eqc_in_eqbn}
Given proofs $\pi$ and $\tau$, if $\pi\eqcone\tau$ then $\pi\eqb\tau$.
\end{prop}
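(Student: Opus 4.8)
The plan is to treat each commutation scheme of \autoref{tab:rule_comm} and \autoref{tab:rule_comm_u} in isolation, using that $\eqcone$ is symmetric and that $\eqb$ is the equivalence relation generated by $\betato$. Since both $\pi$ and $\tau$ may already be cut-free, hence $\betato$-normal, there is in general no common reduct to hope for; instead I will produce a common \emph{ancestor}, that is a proof $\sigma$ carrying one auxiliary cut with $\sigma\betatostar\pi$ and $\sigma\betatostar\tau$, which immediately gives $\pi\eqb\sigma\eqb\tau$.

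The gadget is uniform. Given a scheme exchanging two rules $R$ and $S$, I take for $\theta$ the side in which $S$ sits \emph{above} $R$ (so that $R$ is the bottommost rule), I let $F$ be the principal formula of $S$, and I set $\sigma:=\cutf{\theta}{\id{F}}{F}$. The topmost cut of $\sigma$ is a key case from neither side: the last rule of $\id{F}$ introduces the \emph{surviving} occurrence of $F$, while the last rule of $\theta$ is $R$, whose principal formula differs from the cut formula $F$. Hence this cut admits its two commutative contractions, into $\theta$ or into $\id{F}$, and it is exactly this choice that realizes the permutation. Contracting into $\theta$ first leaves $R$ at the bottom and, after the cut has travelled up to $S$ and triggered the key case for the main connective of $F$ (\autoref{tab:cut_elim_pt_key}), regrows $S$ above $R$, returning $\theta$; contracting into $\id{F}$ first drags the surviving $S$ to the very bottom, below $R$, and delivers the other side of the scheme. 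I will spell this out on the short cases, such as $\parr$--$\parr$ and $\parr$--$\bot$, the unit being merely carried along in the latter.

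Both reduction sequences terminate on key cases of the form $\cutf{\cdot}{\id{F_1}}{}$ and $\cutf{\cdot}{\id{F_2}}{}$ against identities on the immediate subformulas $F_1,F_2$ of $F$. To discharge these I use the companion fact that cutting any proof against an identity returns it up to $\eqb$, namely $\cutf{\theta'}{\id{C}}{}\eqb\theta'$, whose atomic case is just the $ax$ key case of \autoref{tab:cut_elim_pt_key}. Everything is then organized as a single induction on $s(F)$: the treatment of a scheme whose exchanged formula is $F$ appeals to this fact only at the strictly smaller $F_1,F_2$, so no circularity arises.

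I expect the real work to concentrate in the schemes that do not merely reorder rules but \emph{duplicate} or \emph{erase} a sub-proof: the $\top$--$\tens$ schemes, where a whole branch is created or erased through the $\top$--$cut$ key case, and especially the $\with$--$\tens$ schemes, where a branch is copied through the $\with$--$cut$ commutative case. There the cut-free hypothesis imposed in \autoref{def:rule_comm} on the created/erased (\resp\ duplicated/superimposed) sub-proof is indispensable: it ensures that the material manipulated by the detour carries no cut, so that reducing $\sigma$ creates no uncontrolled new redexes and still converges on the intended proof. This is precisely the point where the argument of~\cite{calculusformall} needs the correction discussed above, so I anticipate the $\with$--$\tens$ case to be the crux; the pure unit reorderings $\top$--$\top$ and $\bot$--$\bot$ will require a small variant of the same gadget.
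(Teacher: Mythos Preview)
Your approach is the same as the paper's: for each commutation exhibit a common $\betato$-ancestor $\sigma$ that reduces to both sides. The paper builds $\sigma$ case by case; you aim for the uniform $\sigma=\cutf{\theta}{\id{F}}{F}$ with $F$ the principal formula of the upper rule $S$.

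The uniformity claim rests on ``the last rule of $\id{F}$ introduces the \emph{surviving} occurrence of $F$'', but this holds only when $F$ is \emph{negative}. If $F=B_1\tens B_2$ or $F=B_1\oplus B_2$, the expansion $\id{F}$ ends with the negative rule ($\parr$, resp.\ $\with$) on $F^\perp$, which is precisely the cut formula on the $\id{F}$ side. The cut in $\sigma$ then admits only one commutative contraction (into $\theta$, past $R$); after that step the last rule on the $\theta$ side is $S$ introducing $F$, the key case fires, and the reduction is forced back to $\theta$. You never reach the other side of the scheme. This is a real gap for the positive--positive commutations $\tens$--$\tens$, $\tens$--$\oplus_i$, $\oplus_i$--$\oplus_j$, none of which you list among the exceptional cases.

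The paper handles $\tens$--$\tens$ by cutting on a \emph{subformula}: the ancestor carries a cut on $A_2$ between $(\pi_1,\ax{A_2},\tens)$ and $(\pi_2,\pi_3,\tens)$, so that neither $\tens$-rule introduces the cut formula and both commutative contractions (into the left $\tens$, into the right $\tens$) are available, yielding the two sides. The same device repairs your argument for all three positive--positive families with minimal change; the remainder of your plan (including the induction on $s(F)$ via the companion fact $\cutf{\theta'}{\id{C}}{C}\eqb\theta'$) is sound and matches the paper's treatment of the other cases such as $\comm{\parr}{\with}$ and $\comm{\oplus_i}{\bot}$.
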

\begin{proof}
It suffices, for each commutation, to give a proof that can be reduced by cut elimination to both sides.
We give here only a few representative cases.
\begin{itemize}
\item \proofpar{$\comm{\with}{\top}$ and $\comm{\top}{\with}$ commutations.}
The proof
\begin{prooftree}
\Rtop{$0,\top$}

\Rtop{$A_1,\top,\Gamma$}
\Rtop{$A_2,\top,\Gamma$}
\Rwith{$A_1\with A_2,\top,\Gamma$}

\Rcut{$A_1\with A_2,\top,\Gamma$}
\end{prooftree}
reduces to both
\begin{center}
\Rtop{$A_1\with A_2,\top,\Gamma$}
\DisplayProof
\hskip 2em and \hskip 2em
\Rtop{$A_1,\top,\Gamma$}
\Rtop{$A_2,\top,\Gamma$}
\Rwith{$A_1\with A_2,\top,\Gamma$}
\DisplayProof
\end{center}
according respectively to whether the $cut$-rule is first commuted with the $\top$-rule on its left or with the $\with$-rule on its right and then with the $\top$-rule on its left in both occurrences.

\item \proofpar{$\comm{\oplus_i}{\bot}$ and $\comm{\bot}{\oplus_i}$ commutations.}
The proof
\begin{prooftree}
\Rsub{$\pi$}{$A_i,\Gamma$}
\Rbot{$A_i,\bot,\Gamma$}
\Rax{$A_i$}
\Rplusi{$A_i\orth, A_1\oplus A_2$}

\Rcut{$A_1\oplus A_2,\bot,\Gamma$}
\end{prooftree}
reduces to both
\begin{center}
\Rsub{$\pi$}{$A_i,\Gamma$}
\Rplusi{$A_1\oplus A_2,\Gamma$}
\Rbot{$A_1\oplus A_2,\bot,\Gamma$}
\DisplayProof
\hskip 2em and \hskip 2em
\Rsub{$\pi$}{$A_i,\Gamma$}
\Rbot{$A_i,\bot,\Gamma$}
\Rplusi{$A_1\oplus A_2,\bot,\Gamma$}
\DisplayProof
\end{center}
according respectively to whether the $cut$-rule is first commuted with the $\bot$-rule on its left or with the $\oplus_i$-rule on its right.

\item \proofpar{$\comm{\tens}{\tens}$ commutations.}
The proof
\begin{prooftree}
\Rsub{$\pi_1$}{$A_1,\Gamma$}
\Rax{$A_2$}
\Rtens{$A_2\orth,A_1\tens A_2,\Gamma$}

\Rsub{$\pi_2$}{$A_2,B_1,\Delta$}
\Rsub{$\pi_3$}{$B_2,\Sigma$}
\Rtens{$B_1\tens B_2,\Delta,\Sigma,A_2$}

\Rcut{$A_1\tens A_2,B_1\tens B_2,\Gamma,\Delta,\Sigma$}
\end{prooftree}
reduces to both
\begin{center}
\resizebox{.975\textwidth}{!}{
\Rsub{$\pi_1$}{$A_1,\Gamma$}
\Rsub{$\pi_2$}{$A_2,B_1,\Delta$}
\Rsub{$\pi_3$}{$B_2,\Sigma$}
\Rtens{$A_2,B_1\tens B_2,\Delta,\Sigma$}
\Rtens{$A_1\tens A_2,B_1\tens B_2,\Gamma,\Delta,\Sigma$}
\DisplayProof
\hskip.5em and\hskip.5em
\Rsub{$\pi_1$}{$A_1,\Gamma$}
\Rsub{$\pi_2$}{$A_2,B_1,\Delta$}
\Rtens{$A_1\tens A_2,B_1,\Gamma,\Delta$}
\Rsub{$\pi_3$}{$B_2,\Sigma$}
\Rtens{$A_1\tens A_2,B_1\tens B_2,\Gamma,\Delta,\Sigma$}
\DisplayProof
}
\end{center}
according respectively to whether the $cut$-rule is first commuted with the $\tens$-rule on its left or on its right.

Other $\comm{\tens}{\tens}$ commutations are similar.

\item \proofpar{$\comm{\parr}{\with}$ and $\comm{\with}{\parr}$ commutations.}
The proof
\begin{prooftree}
\Rsub{$\pi_1$}{$A_1,A_2,B_1,\Gamma$}
\Rparr{$A_1\parr A_2,B_1,\Gamma$}
\Rsub{$\pi_2$}{$A_1,A_2,B_2,\Gamma$}
\Rparr{$A_1\parr A_2,B_2,\Gamma$}
\Rwith{$A_1\parr A_2,B_1\with B_2,\Gamma$}

\Rax{$A_2$}
\Rax{$A_1$}
\Rtens{$A_1,A_2,A_2\orth\tens A_1\orth$}
\Rparr{$A_1\parr A_2,A_2\orth\tens A_1\orth$}

\Rcut{$A_1\parr A_2,B_1\with B_2,\Gamma$}
\end{prooftree}
reduces to both
\begin{center}
\resizebox{.975\textwidth}{!}{
\Rsub{$\pi_1$}{$A_1,A_2,B_1,\Gamma$}
\Rparr{$A_1\parr A_2,B_1,\Gamma$}
\Rsub{$\pi_2$}{$A_1,A_2,B_2,\Gamma$}
\Rparr{$A_1\parr A_2,B_2,\Gamma$}
\Rwith{$A_1\parr A_2,B_1\with B_2,\Gamma$}
\DisplayProof
\hskip.5em and\hskip.5em
\Rsub{$\pi_1$}{$A_1,A_2,B_1,\Gamma$}
\Rsub{$\pi_2$}{$A_1,A_2,B_2,\Gamma$}
\Rwith{$A_1,A_2,B_1\with B_2,\Gamma$}
\Rparr{$A_1\parr A_2,B_1\with B_2,\Gamma$}
\DisplayProof
}
\end{center}
according respectively to whether the $cut$-rule is first commuted with the $\with$-rule on its left or with the $\parr$-rule on its right.\qedhere
\end{itemize}
\end{proof}

Cut-elimination $\betato$ is weakly normalizing (see \autoref{cor:beta_wn}).
Furthermore, rule commutation is its ``core'', in the following meaning.

\begin{restatable}[Rule commutation is the core of cut-elimination~{\cite[Theorem~5.1]{calculusformall}}]{thm}{theqbeqc}
\label{cor:CReqc_simpl}
If two proofs $\pi$ and $\tau$ are $\beta$-equal, then any of their normal forms by $\betato$ are related by $\eqc$.
\end{restatable}

We can extend the previous result to $\beta\eta$-equality.

\begin{thm}
\label{th:eqbn_eqc}
Let $\pi_1$ and $\pi_2$ be $\beta\eta$-equal proofs.
Then, letting $\pi_1'$ (\resp\ $\pi_2'$) be a result of expanding all axioms and then eliminating all cuts in $\pi_1$ (\resp\ $\pi_2$), $\pi_1' \eqc \pi_2'$.
\end{thm}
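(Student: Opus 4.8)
The plan is to assemble this statement directly from the two pieces of machinery already in hand: \autoref{prop:eta_nom_eqbn}, which lets us push all axiom-expansions ahead of every cut-elimination step, and \autoref{cor:CReqc_simpl} (from \cite{calculusformall}), which says that the $\betato$-normal forms of two $\beta$-equal proofs are always related by $\eqc$. Once these are available, the theorem is essentially a bookkeeping combination, so I do not expect any real obstacle inside the argument itself.

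First I would pin down what $\pi_1'$ and $\pi_2'$ actually are. Expanding all axioms in $\pi_1$ produces exactly $\eta(\pi_1)$, the unique $\eta$-normal form guaranteed by \autoref{prop:eta_canonic}; then eliminating all cuts (possible by weak normalization, \autoref{cor:beta_wn}) yields a cut-free proof. Since a cut-free proof is in particular a $\betato$-normal form, $\pi_1'$ is one of the $\betato$-normal forms of $\eta(\pi_1)$, and likewise $\pi_2'$ is a $\betato$-normal form of $\eta(\pi_2)$. The mild point worth checking here is that ``having eliminated all cuts'' genuinely means ``being $\betato$-normal'': this holds because every $cut$-rule admits some reduction step (a key case, a commutative case, or a $\top$-/$ax$-case), so the $\betato$-normal proofs are exactly the cut-free ones.

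Next I would use the hypothesis $\pi_1\eqbn\pi_2$ together with \autoref{prop:eta_nom_eqbn} to obtain $\eta(\pi_1)\eqb\eta(\pi_2)$, that is, the two $\eta$-normal proofs are $\beta$-equal. Applying \autoref{cor:CReqc_simpl} to the $\beta$-equal proofs $\eta(\pi_1)$ and $\eta(\pi_2)$ then tells us that any of their $\betato$-normal forms are related by $\eqc$. Instantiating this with the particular normal forms identified in the previous paragraph gives $\pi_1'\eqc\pi_2'$, which is the desired conclusion.

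Since the argument is this short, the genuine difficulty lives upstream rather than in this theorem: it is concentrated in the delicate interchange of $\eta$-expansion and $\beta$-reduction established in \autoref{prop:eta_nom_eqbn} (itself built on \autoref{lem:betafrom_etato} and \autoref{lem:etafromstar_betatostar}), and in importing \cite[Theorem~5.1]{calculusformall} with its corrected $\with$--$\tens$ commutation. Granting those, the only subtlety to state cleanly in the write-up is that the order ``expand axioms, \emph{then} eliminate cuts'' is exactly what makes both normal forms arise from the same $\eta$-normal, $\beta$-equal pair $\eta(\pi_1)$ and $\eta(\pi_2)$, so that \autoref{cor:CReqc_simpl} applies verbatim.
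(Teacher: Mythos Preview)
Your proposal is correct and follows essentially the same route as the paper: apply \autoref{prop:eta_nom_eqbn} to pass from $\pi_1\eqbn\pi_2$ to $\eta(\pi_1)\eqb\eta(\pi_2)$, then invoke \autoref{cor:CReqc_simpl} on the $\betato$-normal forms $\pi_1'$ and $\pi_2'$ of these $\eta$-normal proofs. The paper's proof is a two-line version of exactly this; your additional remark that cut-free proofs coincide with $\betato$-normal forms is a helpful clarification but not something the paper spells out.
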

\begin{proof}
We have $\pi_1\eqbn\pi_2$, so $\eta(\pi_1)\eqb\eta(\pi_2)$ by \autoref{prop:eta_nom_eqbn}.
By \autoref{cor:CReqc_simpl}, it follows $\eta(\pi_1)\betatostar\pi_1'\eqc\pi_2'\betafromstar\eta(\pi_2)$.
\end{proof}

Note there seems to be a mistake, as well as a forgotten case, in the proof of~\cite[Theorem~5.1]{calculusformall}, as pointed out previously (\autoref{subsec:transfo_seq}).
A main tool of this proof is a measure that decreases when applying a cut-elimination step and is stable by ($\top$-free) rule commutations~\cite[Proposition~B.2]{calculusformall}.
This measure is a multiset of some value associated to each cut-rule.
For rule commutations are not defined in a cut-free setting in~\cite{calculusformall}, a $\with-\tens$ commutation that duplicates a sub-proof can in particular increase the number of cut-rules.
This means that a $\with-\tens$ commutation can increase the measure, as it does not change the value of each cut-rule (as claimed in~\cite{calculusformall}) but can duplicate a value in the multiset.
We give a corrected proof of \autoref{cor:CReqc_simpl} in \autoref{subsec:proofs_add_1}.


\section{Reduction to unit-free distributed formulas}
\label{sec:red_unit-free}

As written in the introduction, our plan is to use proof-nets, which currently exist only for axiom-expanded proofs in unit-free MALL.
For we already reduced the problem to axiom-expanded proofs in \autoref{subsec:no_eta}, it remains to take care of the units.
The main idea is that the multiplicative and additive units can be replaced by fresh atoms for the study of isomorphisms.
However, this is not true in general, for instance we have $(1\oplus A)\tens B \iso B\oplus(A\tens B)$ (using the soundness theorem, \autoref{th:iso_sound}), and this isomorphism uses that $1$ is unital for $\tens$.
Hence, we begin by reducing the problem to so-called \emph{distributed} formulas, using the distributivity, unitality and cancellation equations of \autoref{tab:eqisos} on \autopageref{tab:eqisos} (\autoref{subsec:red_distr}).
Then, for theses special formulas, we identify patterns containing units in proofs equal to an identity up to rule commutations, patterns which can be lifted to proofs of isomorphisms (\autoref{subsec:pattern_dist_seqisos}).
Finally, we prove that, in an isomorphism between distributed formulas, replacing units by fresh atoms preserves being an isomorphism, reducing the study of isomorphisms to distributed formulas in the unit-free fragment (\autoref{subsec:compl_units}).

\subsection{Reduction to distributed formulas}
\label{subsec:red_distr}

\begin{defi}[Distributed formula]
\label{def:dist}
A formula is \emph{distributed} if it does not have any sub-formula of the form
\begin{equation*}
A\tens(B\oplus C)
\quad
(A\oplus B)\tens C
\quad
A\tens 1
\quad
1\tens A
\quad
A\oplus 0
\quad
0\oplus A
\quad
A\tens 0
\quad
0\tens A
\end{equation*}
or their duals
\begin{equation*}
(C\with B)\parr A
\quad
C\parr(B\with A)
\quad
\bot\parr A
\quad
A\parr\bot
\quad
\top\with A
\quad
A\with\top
\quad
\top\parr A
\quad
A\parr\top
\end{equation*}
(where $A$, $B$ and $C$ are any formulas).
\end{defi}

\begin{rem}
\label{rem:dist}
This notion is stable under duality: if $A$ is distributed, so is $A\orth$.
\end{rem}

\begin{defi}
\label{def:distsystem}
We call $\distsystem$ the following rewriting system on MALL formulas.
\begin{center}
\begin{tabular}{cccc|cccc|cccc|cccc}
\multicolumn{3}{c}{$A\tens(B\oplus C)$} & \multicolumn{2}{c}{$\rightarrow$} & \multicolumn{3}{c|}{$(A\tens B)\oplus(A\tens C)$} & \multicolumn{3}{c}{$(C\with B)\parr A$} & \multicolumn{2}{c}{$\rightarrow$} & \multicolumn{3}{c}{$(C\parr A)\with(B\parr A)$} \\
\multicolumn{3}{c}{$(A\oplus B)\tens C$} & \multicolumn{2}{c}{$\rightarrow$} & \multicolumn{3}{c|}{$(A\tens C)\oplus(B\tens C)$} & \multicolumn{3}{c}{$C\parr(B\with A)$} & \multicolumn{2}{c}{$\rightarrow$} & \multicolumn{3}{c}{$(C\parr B)\with(C\parr A)$} \\\hline
$A\tens 1$ & $\rightarrow$ & $A$ & & & $1\tens A$ & $\rightarrow$ & $A$ & $A\parr\bot$ & $\rightarrow$ & $A$ & & & $\bot\parr A$ & $\rightarrow$ & $A$ \\
$A\oplus 0$ & $\rightarrow$ & $A$ & & & $0\oplus A$ & $\rightarrow$ & $A$ & $A\with\top$ & $\rightarrow$ & $A$ & & & $\top\with A$ & $\rightarrow$ & $A$ \\
$A\tens 0$ & $\rightarrow$ & $0$ & & & $0\tens A$ & $\rightarrow$ & $0$ & $A\parr \top$ & $\rightarrow$ & $\top$ & & & $\top\parr A$ & $\rightarrow$ & $\top$ \\
\end{tabular}
\end{center}
\end{defi}

\begin{fact}
\label{lem:distsystem_sn}
The rewriting system $\distsystem$ is strongly normalizing, with as normal forms distributed formulas.
\end{fact}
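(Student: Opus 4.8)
The plan is to prove the two halves of the statement separately: first that $\distsystem$ is strongly normalizing, then that its normal forms are exactly the distributed formulas. The second half is essentially immediate from the definitions: a formula is in normal form for $\distsystem$ precisely when no left-hand side of a rewriting rule occurs as a sub-formula. But the left-hand sides of the rules in \autoref{def:distsystem} are syntactically identical to the forbidden sub-formula patterns of \autoref{def:dist} (the eight positive patterns $A\tens(B\oplus C)$, $(A\oplus B)\tens C$, $A\tens 1$, $1\tens A$, $A\oplus 0$, $0\oplus A$, $A\tens 0$, $0\tens A$ and their eight duals). Hence having no redex is synonymous with being distributed, and I would dispatch this direction in one sentence.

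The substantive part is strong normalization, and here the natural approach is to exhibit a termination measure: a function $\mu$ from formulas to a well-founded order that strictly decreases under every single-step rewrite $A \to A'$, and which is monotone under the formula constructors so that rewriting inside a sub-formula still decreases the global measure. For the unitality and cancellation rules ($A\tens 1 \to A$, $A\tens 0 \to 0$, and duals), the plain size $s(A)$ already strictly decreases, since each such rule deletes at least one connective or unit. The genuine difficulty is the distributivity rules, e.g.\ $A\tens(B\oplus C) \to (A\tens B)\oplus(A\tens C)$, which \emph{duplicate} the sub-formula $A$ and therefore increase $s$; a simple size count will not work.

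First I would design a weight that penalizes a $\tens$ sitting above an $\oplus$ (and dually a $\parr$ above a $\with$), so that pushing the multiplicative connective downward through the additive one strictly reduces this penalty even though the formula grows. A standard device is to assign to each formula a weight in $\mathbb{N}$ computed bottom-up, where the multiplicative nodes multiply or exponentiate the weights of their arguments while additive nodes add them: concretely one can try $w(X)=w(X\orth)=w(1)=w(\bot)=w(0)=w(\top)=2$, $w(A\oplus B)=w(A\with B)=w(A)+w(B)$, and $w(A\tens B)=w(A\parr B)=w(A)\cdot w(B)$. Then the distributivity step replaces $w(A)\cdot(w(B)+w(C))$ by $w(A)\cdot w(B)+w(A)\cdot w(C)$ at the top --- equal at that node, so one needs the \emph{structure} rather than the value to decrease, which suggests instead using a lexicographic or polynomial interpretation where $w(A\tens B)$ is strictly \emph{super-additive} in a way that makes distribution decrease. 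A cleaner route that I expect to work is a two-component measure $(\#, s)$ ordered lexicographically, where $\#$ counts the number of pairs (multiplicative connective above an additive connective of the dual-incompatible kind, or above a unit that triggers a rule) that are still ``reducible'', weighted by the size of the additive/unit subtree below. I would verify that distributivity strictly decreases $\#$ (it removes one $\tens$-over-$\oplus$ obstruction at the top, and although it creates copies lower down, these sit over strictly smaller additive subtrees), while the unitality and cancellation rules keep $\#$ non-increasing and strictly decrease $s$. Establishing the precise form of this weight so that every rule strictly decreases it, and confirming monotonicity under all four binary constructors, is the main obstacle; once the weight is fixed, the congruence-closure verification (rewriting in a sub-formula decreases the global weight because each constructor is strictly monotone in its arguments) is routine. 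With such a measure in hand, strong normalization follows because $\mathbb{N}$ (or $\mathbb{N}\times\mathbb{N}$ with the lexicographic order) is well-founded, and combined with the normal-form characterization above this completes the \textbf{Fact}.
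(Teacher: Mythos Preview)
The paper gives no proof at all: the statement is declared as a \emph{Fact} and left to the reader, treating strong normalization of this distribution-plus-simplification system as folklore. So there is nothing in the paper to compare your argument against beyond the bare assertion.

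Your plan is sound, and your identification of distributivity as the only obstacle is correct. However, you abandon the polynomial-interpretation route too quickly. The weight you tried, with $w(A\oplus B)=w(A)+w(B)$ and $w(A\tens B)=w(A)\cdot w(B)$, fails only by a tie at the top node; simply add a constant to the additive case, taking $w(A\oplus B)=w(A\with B)=w(A)+w(B)+1$ (and leaf weight $2$). Then the distributivity step turns $w(A)\bigl(w(B)+w(C)+1\bigr)$ into $w(A)w(B)+w(A)w(C)+1$, a strict decrease since $w(A)\ge 2$. All unitality and cancellation rules visibly decrease $w$, and $w$ is strictly monotone in each argument for every constructor, so rewriting in any sub-formula still decreases the global weight. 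This single integer measure handles everything in one shot and avoids the lexicographic $(\#,s)$ machinery, whose ``$\#$ counts obstructions weighted by subtree size'' component you left underspecified (you would still need to check that the two new $\tens$-nodes created by a distribution step each sit over strictly lighter obstructions than the one removed, which is essentially the same computation as above).

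Your one-line argument for the normal-form characterization is exactly right.
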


\begin{prop}
\label{prop:red_to_dist}
If the equational theory denoted $\AC$ -- recall \autoref{tab:eqisos} on \autopageref{tab:eqisos} -- is complete for isomorphisms between distributed formulas of MALL, then $\Eu$ is complete for isomorphisms between arbitrary formulas of MALL.
In other words, if for all $A_d$ and $B_d$ distributed formulas of MALL, it stands that $A_d \iso B_d \implies A_d \eqAC B_d$, then for all $A$ and $B$ arbitrary formulas of MALL, $A \iso B \implies A \eqEu B$.
\end{prop}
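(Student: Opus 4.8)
The plan is to lift isomorphisms and $\eqEu$-equalities between arbitrary formulas to the level of their distributed normal forms, where the assumed completeness of $\AC$ applies, and then to transport the resulting $\AC$-equality back. The whole argument is essentially bookkeeping built on two observations about the rewriting system $\distsystem$ of \autoref{def:distsystem}.

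First I would check that $\distsystem$-rewriting is contained in $\eqEu$. Each rule of \autoref{def:distsystem} is an instance of an equation of \autoref{tab:eqisos} (distributivity, unitality or cancellation), possibly precomposed with commutativity for the left/right-symmetric variants such as $1\tens A\to A$ or $0\tens A\to 0$; all of these equations belong to $\Eu$, and since $\eqEu$ is an equational theory it is closed under reflexivity, symmetry, transitivity and congruence. Hence, writing $A_d$ for a $\distsystem$-normal form of $A$ (which exists and is distributed by \autoref{lem:distsystem_sn}), we obtain $A\eqEu A_d$, and likewise $B\eqEu B_d$. Applying soundness (\autoref{th:iso_sound}) then gives $A\iso A_d$ and $B\iso B_d$.

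Second I would use that $\iso$ is an equivalence relation: it is reflexive via $\id{A}$, symmetric by swapping the two proofs witnessing an isomorphism, and transitive by composing isomorphisms through a $cut$ (and normalizing). Starting from an arbitrary pair $A\iso B$, I pick distributed normal forms $A_d$ and $B_d$; from $A_d\iso A$ (symmetry of the above), $A\iso B$ and $B\iso B_d$, transitivity yields $A_d\iso B_d$. Now the hypothesis of the proposition applies to the distributed formulas $A_d$ and $B_d$, giving $A_d\eqAC B_d$, and since $\AC$ is a sub-theory of $\Eu$ this is in particular $A_d\eqEu B_d$. Chaining $A\eqEu A_d\eqEu B_d\eqEu B$ concludes $A\eqEu B$, as desired.

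There is no genuinely hard step here: the mathematical content has been isolated into the hypothesis about distributed formulas, and what remains is to verify that passing to distributed form respects both $\iso$ and $\eqEu$. The only two points needing explicit (but routine) justification are that every rule of $\distsystem$ lies in $\eqEu$ and that $\iso$ is a bona fide equivalence relation; both follow directly from the definitions and from soundness, with no interaction with proof-nets, cut-elimination or rule commutations.
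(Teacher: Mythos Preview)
Your proof is correct and follows essentially the same approach as the paper: reduce to distributed normal forms via $\distsystem$, use that each $\distsystem$-rule lies in $\Eu$ to get $A\eqEu A_d$ and $B\eqEu B_d$, apply soundness plus the equivalence-relation properties of $\iso$ to obtain $A_d\iso B_d$, invoke the hypothesis for $A_d\eqAC B_d$, and chain back. The only cosmetic difference is that the paper phrases the passage from $A\iso B$ to $A_d\iso B_d$ as ``linear isomorphism is a congruence'' where you (more precisely for what is actually used) spell out symmetry and transitivity of $\iso$.
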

\begin{proof}
Consider an isomorphism $A\iso B$ between two arbitrary formulas $A$ and $B$ of MALL.
Let $A_d$ and $B_d$ be associated distributed formulas, obtained as normal forms of the rewriting system $\distsystem{}$ (using \autoref{lem:distsystem_sn}).
As each rule of this rewriting system corresponds to a valid equality in the theory $\Eu$, we have $A \eqEu A_d$ and $B \eqEu B_d$.

By soundness of $\Eu$ (\autoref{th:iso_sound}) and as linear isomorphism is a congruence, we deduce $A_d \iso A \iso B \iso B_d$.
The completeness hypothesis on $\AC$ for distributed formulas yields $A_d \eqAC B_d$ from $A_d \iso B_d$.
Thus $A \eqEu A_d \eqEu B_d \eqEu B$, using that $\AC$ is included in $\Eu$.
\end{proof}

Therefore, we can now consider only distributed formulas.
We will use this to study units, as well as when solving the unit-free case to prove there are only commutativity and associativity isomorphisms left.

\subsection{Patterns in distributed isomorphisms}
\label{subsec:pattern_dist_seqisos}

In this section, we prove units in distributed isomorphisms are in very specific sets of rules, which are the following patterns.

\begin{defi}[Patterns]
\label{def:patterns_units_isos}
\hfill
\begin{itemize}
\item
We call a \emph{$\top/0$-pattern} the following sub-proof:
\begin{prooftree}
\Rtop{$\top,0$}
\end{prooftree}
\item
Likewise, we set \emph{$1/\bot$-pattern} the sub-proof:
\begin{prooftree}
\Rone{}
\Rbot{$\bot,1$}
\end{prooftree}
\item
A \emph{$1/\oplus/\bot$-pattern} is a $1$-rule followed by a (possibly empty) sequence $\rho$ of $\oplus_i$-rules and finally a $\bot$-rule:
\begin{prooftree}
\Rone{}
\doubleLine\dashedLine\RightLabel{$\rho$}\UnaryInfC{$\fCenter F$}
\Rbot{$\bot,F$}
\end{prooftree}
\end{itemize}
\end{defi}

The main goal of this part is proving $\top$- $1$- and $\bot$-rules in proofs of distributed isomorphisms belong to these patterns.
The two $\top/0$- and $1/\bot$-patterns are those we truly wish for.
However, we will not find directly $1/\bot$-patterns, but the more generic $1/\oplus/\bot$-patterns; this is not a problem, for using $\oplus_i-\bot$ commutations turns a $1/\oplus/\bot$-pattern into a $1/\bot$-pattern.
We will need at different points the following grammar.

\begin{defi}
\label{def:oplus_grammar}
Given a formula $B$, we define the grammar
\begin{equation*}
\pcontext{B} \coloncoloneqq B \mid \pcontext{B}\oplus C \mid C\oplus \pcontext{B}
\end{equation*}
where $C$ ranges over all formulas.
The occurrence of $B$ which is the base case of this grammar is called its distinguished occurrence.
\end{defi}

We first analyze the behavior of units in proofs equal to $\id{A}$ up to rule commutation (recall $\id{A}$ is the axiom-expansion of $\ax{A}$, the proof composed of an $ax$-rule on $A$), and prove they belong to these patterns (\autoref{sec:pattern_dist_seqisos_id}).
We only do so for a \emph{distributed} formula $A$ as we have already seen it is enough in \autoref{subsec:red_distr}, and as the above patterns hold only for these formulas.
We can then obtain our result, for this property is preserved by cut anti-reduction.
However, it is not so easy to prove it, and we will consider slices to do so (\autoref{sec:pattern_dist_seqisos_slices}).
This finally allows us to remove the units (\autoref{sec:pattern_dist_seqisos_proofs}).

\subsubsection{Patterns in identities up to rule commutation}
\label{sec:pattern_dist_seqisos_id}

We study here patterns containing units in proofs equal to an identity up to rule commutation.
The tedious case study on rule commutations and the properties to find in this proof are a great example on why we do not want to do the full proof of completeness in sequent calculus, but wish instead to use proof-nets.

\begin{prop}
\label{prop:eqc_id}
Let $\pi$ be a proof equal, up to rule commutation, to $\id{A}$ with $A$ \emph{distributed}.
Then:
\begin{itemize}
\item the $\top$-rules of $\pi$ are in a $\top/0$-pattern {\Rtop{$\top,0$}\DisplayProof} with $\top$ in $A$ being the dual of $0$ in $A\orth$ (or vice-versa $\top$ in $A\orth$ being the dual of $0$ in $A$);
\item $\bot$-rules and $1$-rules come by pairs in a $1/\oplus/\bot$-pattern
{\Rone{}\doubleLine\dashedLine\RightLabel{$\rho$}\UnaryInfC{$\fCenter F$}\Rbot{$\bot,F$}\DisplayProof}
with $\bot$ in $A$ being the dual of $1$ in $A\orth$ (or vice-versa);
\item there is no sequent in $\pi$ of the shape $\fCenter B\with C$.
\end{itemize}
\end{prop}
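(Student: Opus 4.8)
The plan is to read the three conclusions as invariants of a cut-free proof and to establish them by induction on the length of a chain $\id{A} = \rho_0 \eqcone \cdots \eqcone \rho_n = \pi$ of rule commutations witnessing $\pi \eqc \id{A}$. The invariants are: (I1) every $\top$-rule has conclusion exactly $\vdash \top, 0$; (I2) every $1$-rule is the top and every $\bot$-rule the bottom of a $1/\oplus/\bot$-pattern, the formula $F$ introduced being a $\pcontext{1}$ (\autoref{def:oplus_grammar}) whose distinguished occurrence is the $1$ at the top; and (I3) no $\with$-rule has empty context, equivalently no sequent is $\vdash B \with C$. Two riders travel with (I1) and (I2): the $0$ (resp.\ $\bot$) in the pattern is the dual occurrence of the $\top$ (resp.\ $1$). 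Since a commutation changes neither the conclusion $\vdash A\orth, A$ nor the set of formula occurrences, these duality riders need only be checked once, in $\id{A}$, and are transported along the chain as soon as the local patterns are shown rigid. Throughout I would also use a background invariant: every formula occurring in any $\rho_k$ is a subformula of $A$ or $A\orth$, hence --- as $A$ is distributed, so is $A\orth$ by \autoref{rem:dist}, and distributedness passes to subformulas --- is itself \emph{distributed}; this is plainly preserved, commutations merely reordering rules.

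For the base case I inspect the axiom-expansion rules of \autoref{tab:ax_exp_pt}. A $\top$ is created only by a $\top-0$ step, giving a $\top$-rule of conclusion $\vdash 0, \top$, which is (I1); a dual pair $1/\bot$ is created only by a $\bot-1$ step, giving a $1$-rule immediately above a $\bot$-rule, i.e.\ a $1/\oplus/\bot$-pattern with empty $\oplus$-sequence, which is (I2); and a $\with$-rule is created only by a $\with-\oplus$ step, whose conclusion still carries the dual $\oplus$-formula, which is (I3). The duality riders are immediate from this construction.

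The inductive step is the verification that each commutation of \autoref{tab:rule_comm} and \autoref{tab:rule_comm_u} preserves (I1)--(I3). Preservation of (I3) is direct, since in every $\with$-commutation the displayed $\with$-rules keep a nonempty context. For (I1) the distributedness background does the blocking: a $\top$-rule satisfying (I1) has minimal conclusion $\vdash \top, 0$, so for a $\top$-commutation to fire the non-$\top$ context $\{0\}$ would have to provide the connective the commutation needs, which it cannot, or the other side would create a forbidden subformula --- $\comm{\tens}{\top}$ would need $0 \tens A$ and $\comm{\oplus_i}{\top}$ would need $0 \oplus A$ or $A \oplus 0$, all excluded by \autoref{def:dist}. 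The $1$-rule is even more rigid: by \autoref{rem:comm_lacking} no commutation involves it at all, so the top of each $1/\oplus/\bot$-pattern is fixed. The sole commutation that genuinely reshapes a pattern is $\comm{\oplus_i}{\bot}$, which slides an $\oplus_i$-rule acting on $F$ across the $\bot$-rule, lengthening or shortening $\rho$ while keeping $F$ a $\pcontext{1}$ with the same distinguished $1$; all other $\oplus$- and $\bot$-commutations are blocked, because the internal sequents of a pattern are single-formula (so no parallel thread exists for $\comm{\oplus_j}{\oplus_i}$ or $\comm{\bot}{\bot}$) and because $F$, being a $\pcontext{1}$, is $1$ or an $\oplus$-formula and hence never matches the $\parr$, $\tens$ or $\with$ demanded by $\comm{\parr}{\bot}$, $\comm{\tens}{\bot}$, $\comm{\with}{\bot}$. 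This is exactly why the $1/\bot$-pattern of \autoref{def:patterns_units_isos} must be relaxed to the $1/\oplus/\bot$-pattern before the induction can close.

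I expect the genuine difficulty to lie neither in any single computation nor in the conceptual setup, but in the exhaustiveness of the case analysis and, within it, in the few commutations for which distributedness alone does not suffice. The representative one is $\comm{\with}{\top}$, whose right-to-left direction would merge two minimal $\top$-rules into one and thus break (I1); here $0 \with 0$ is \emph{not} forbidden by \autoref{def:dist}, so the block must come from finer structure. The argument I would use is that the commutation requires a bare $\top$ in the shared context of the $\with$, whereas the dual occurrence of a $\with$-formula $B\orth \with A\orth$ is the $\oplus$-formula $A \oplus B$; because a $\with$-rule shares its context between both branches while the two $\top$-summands of that dual partner are distinct occurrences, the $\oplus$-rules building the partner are forced above the $\with$, leaving an $\oplus$-formula --- never a bare $\top$ --- as the context. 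Establishing several such blockings, and bundling (I1)--(I3) with their duality riders so that the whole package is simultaneously inductive, is the crux of the work; it also illustrates why the remainder of the completeness argument is better carried out in proof-nets, where this geometric content is read off directly rather than reconstructed through rule commutations.
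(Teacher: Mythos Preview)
Your overall strategy—an induction along the commutation chain, carrying invariants that pin the unit rules to their patterns—is exactly the paper's approach. The gap is that your invariant package is too weak to close the induction, and you have in fact put your finger on the hole yourself. Your (I3) says only that no $\with$-rule has empty context; this does not rule out a $\with$-rule whose context is a bare $\top$, and that is precisely the configuration needed for the merging direction of $\comm{\with}{\top}$ to fire and break (I1). Your ad-hoc argument for unreachability (``the dual $\oplus$-formula must sit in the $\with$-context because the two $\oplus_i$ building it are forced above the $\with$'') is not a consequence of (I1)--(I3); it is itself an invariant that must be carried through the induction. Concretely, take $A=0\with 0$, which is distributed: in $\id{A}$ the $\with$-rule has context $\top\oplus\top$ with an $\oplus_1$ in one branch and an $\oplus_2$ in the other, and this \emph{asymmetry} is the sole obstruction to $\comm{\with}{\oplus_i}$ peeling the $\oplus$ away and exposing a bare $\top$. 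Nothing in (I1)--(I3) records that asymmetry, so your induction cannot exclude the bad state.

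The paper's fix is to replace (I3) by a much sharper sequent-level invariant: whenever $B\with C$ occurs in a sequent $S$, one has $S=\;\vdash B\with C,\;\pcontext{C\orth\oplus B\orth}$, with the distinguished $C\orth\oplus B\orth$ the dual occurrence, and moreover in the sub-proof above $S$ the rule on that distinguished $\oplus$ is an $\oplus_2$ in the left branch of the $\with$ and an $\oplus_1$ in the right branch. This single strengthening kills $\comm{\with}{\top}$ outright (no $\top$ can appear in such a context) and simultaneously renders $\comm{\with}{\with}$, $\comm{\parr}{\with}$, $\comm{\with}{\parr}$, $\comm{\tens}{\with}$, $\comm{\with}{\tens}$ inapplicable, while the recorded $\oplus_1$/$\oplus_2$ asymmetry is exactly what survives $\comm{\oplus_i}{\with}$ and $\comm{\with}{\oplus_i}$. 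The paper also phrases (I1) and (I2) at the level of sequents (``if $\top$ is a formula of $S$ then $S=\;\vdash\top,0$''; ``if $\bot$ is a formula of $S$ then $S=\;\vdash\bot,\pcontext{1}$ with the whole sub-proof above being the pattern''), and derives from all three a convenient corollary controlling sequents with several negative formulas; this bundle is what makes the otherwise tedious case analysis go through mechanically.
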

\begin{proof}
The key idea is to find properties of $\id{A}$ preserved by all rule commutations and ensuring the properties described in the statement.
Hence, we prove a stronger property: any sequent $S$ of a proof $\pi$ obtained through a sequence of rule commutations from $\id{A}$ for a distributed formula $A$ respects:
\begin{enumerate}[(1)]
\item\label{item:eqc_id:1}
the formulas of $S$ are distributed;
\item\label{item:eqc_id:2}
if $\top$ is a formula of $S$, then $S=~\fCenter \top,0$, with $0$ in $A\orth$ the dual of $\top$ if $\top$ is a sub-formula of $A$ (or vice-versa);
\item\label{item:eqc_id:3}
if $\bot$ is a formula of $S$, then $S=~\fCenter \bot,\pcontext{1}$ with $\pcontext{1}$ the grammar defined in \autoref{def:oplus_grammar}, where the distinguished occurrence of $1$ is the dual of $\bot$ in $A\orth$ if $\bot$ is a sub-formula of $A$ (or vice-versa), and the sub-proof of $\pi$ above $S$ is a sequence of $\oplus_i$ rules leading to the distinguished $1$, with in addition a $\bot$-rule inside this sequence;
\item\label{item:eqc_id:4}
if $B\with C$ is a formula of $S$, then $S=~\fCenter B\with C,\pcontext{C\orth\oplus B\orth}$ with $\pcontext{C\orth\oplus B\orth}$ the grammar defined in \autoref{def:oplus_grammar}, where the distinguished occurrence of $C\orth\oplus B\orth$ is the dual of $B\with C$ in $A\orth$ if $B\with C$ is a sub-formula of $A$ (or vice-versa), and in the sub-proof of $\pi$ above $S$ the $\oplus$-rules of the distinguished $C\orth\oplus B\orth$ are a $\oplus_2$-rule in the left branch of the $\with$-rule of $B\with C$, and a $\oplus_1$-rule in its right branch;
\item\label{item:eqc_id:5}
if $S$ contains several negative formulas or several positive formulas, then its negative formulas are all $\parr$-formulas or negated atoms.
\end{enumerate}
Remark that \ref{item:eqc_id:5} is a corollary of properties \ref{item:eqc_id:2}, \ref{item:eqc_id:3} and \ref{item:eqc_id:4}.
As in $\eqcone$ there is no commutation with a $cut$-rule (in particular no $cut-\top$ commutation) and no $\tens-\top$ commutation creating a sub-proof with a $cut$-rule, it follows that $\pi$ is cut-free and has the sub-formula property, making \ref{item:eqc_id:1} trivially true.
We prove that the fully expanded axiom respects properties \ref{item:eqc_id:2}, \ref{item:eqc_id:3} and \ref{item:eqc_id:4}, and that these properties are preserved by any rule commutation of $\eqcone$.

\proofpar{The fully expanded axiom respects the properties.}
We prove by induction on the distributed formula $A$ properties \ref{item:eqc_id:2}, \ref{item:eqc_id:3} and \ref{item:eqc_id:4}.
Notice that sub-formulas of $A$ are also distributed.
By symmetry, assume $A$ is positive.

If $A\in\{X,1,0\}$ where $X$ is an atom, then:
\[\id{A}\in\left\{\text{\Rax{$X$}\DisplayProof};\text{\Rone\Rbot{$\bot,1$}\DisplayProof};\text{\Rtop{$\top,0$}\DisplayProof}\right\}\]
Each of these proofs respects \ref{item:eqc_id:2}, \ref{item:eqc_id:3} and \ref{item:eqc_id:4}.

Assume the result holds for $B$ and $C$, and that $A=B\tens C$.
The proof $\id{A}$ is:
\begin{prooftree}
\Rsub{$\id{B}$}{$B\orth,B$}
\Rsub{$\id{C}$}{$C\orth,C$}
\Rtens{$C\orth,B\orth,B\tens C$}
\Rparr{$C\orth\parr B\orth,B\tens C$}
\end{prooftree}
We have to prove the sequents $\fCenter C\orth,B\orth,B\tens C$ and $\fCenter C\orth\parr B\orth,B\tens C$ respect the properties.
The latter respects \ref{item:eqc_id:2}, \ref{item:eqc_id:3} and \ref{item:eqc_id:4} trivially for it has neither a $\top$, $\bot$ nor $\with$ formula.
As $C\orth\parr B\orth$ is distributed, it follows that neither $C\orth$ nor $B\orth$ can be a $\top$, $\bot$ or $\with$ formula, and as such the former sequent also respects the properties.

Suppose $A=B\oplus C$ with sequents of $\id{B}$ and $\id{C}$ respecting the properties.
Now, $\id{A}$ is:
\begin{prooftree}
\Rsub{$\id{C}$}{$C\orth,C$}
\Rplustwo{$C\orth,B\oplus C$}
\Rsub{$\id{B}$}{$B\orth,B$}
\Rplusone{$B\orth,B\oplus C$}
\Rwith{$C\orth\with B\orth, B\oplus C$}
\end{prooftree}
The sequent $\fCenter C\orth\with B\orth, B\oplus C$ respects \ref{item:eqc_id:2}, \ref{item:eqc_id:3} and \ref{item:eqc_id:4}, as the $\oplus$ is the dual of the $\with$.
By symmetry, we show the properties are also fulfilled by $\fCenter B\orth,B\oplus C$, and they will be respected by $\fCenter C\orth,B\oplus C$ with a similar proof.
As the formulas are distributed, $B\orth$ cannot be a $\top$ formula, hence the sequent respects \ref{item:eqc_id:2}.
If $B\orth$ is not a $\bot$ nor $\with$ formula, then \ref{item:eqc_id:3} and \ref{item:eqc_id:4} hold for $\fCenter B\orth,B\oplus C$.
If it is, then using that $\fCenter B\orth,B$ respects \ref{item:eqc_id:3} and \ref{item:eqc_id:4}, it follows that $B\oplus C$ is also of the required shape, for $B$ was.

\proofpar{Every possible rule commutation preserves the properties.}
We show that each rule commutation preserves properties \ref{item:eqc_id:2}, \ref{item:eqc_id:3} and \ref{item:eqc_id:4}, using every time the notations from Tables~\ref{tab:rule_comm} and~\ref{tab:rule_comm_u} in \autoref{def:rule_comm}, on Pages~\pageref{tab:rule_comm} and~\pageref{tab:rule_comm_u}.
By symmetry, we treat only one case for $\tens-\tens$, $\parr-\tens$, $\with-\tens$ and $\oplus_i-\tens$ commutations.
\begin{description}
\item[$\top$-commutations]
Using property~\ref{item:eqc_id:2}, we cannot do any commutation between a $\top$-rule and a $\parr$, $\tens$, $\with$, $\oplus_i$, $\bot$ or $\top$-rule, so no commutations at all involving a $\top$-rule.
\item[$\bot$-commutations]
Using property~\ref{item:eqc_id:3}, we cannot do any commutation between a $\bot$-rule and a $\parr$, $\tens$, $\with$ or $\bot$-rule.
A commutation between a $\bot$ and a $\oplus_i$-rule preserves property \ref{item:eqc_id:3}: we have by hypothesis $\Gamma$ empty and $A_1\oplus A_2$ of the right shape.
It also respects \ref{item:eqc_id:2} and \ref{item:eqc_id:4} trivially.
\item[$\comm{\parr}{\parr}$ commutation]
We have to show the properties for $\fCenter A_1,A_2,B_1\parr B_2,\Gamma$.
Because $\fCenter A_1\parr A_2,B_1\parr B_2,\Gamma$ respects them, negative formulas of $\Gamma$ are $\parr$-formulas or negated atoms by \ref{item:eqc_id:5}.
By distributivity, if $A_1$ (or $A_2$) is a negative formula, then it must be a $\parr$ one or a negated atom.
Thus, $\fCenter A_1,A_2,B_1\parr B_2,\Gamma$ fulfills \ref{item:eqc_id:2}, \ref{item:eqc_id:3} and \ref{item:eqc_id:4}.
\item[$\comm{\oplus_j}{\oplus_i}$ commutation]
We show the properties for $\fCenter A_i,B_1\oplus B_2,\Gamma$.
As $\fCenter A_1\oplus A_2,B_1\oplus B_2,\Gamma$ respects them, negative formulas of $\Gamma$ are $\parr$-formulas or negated atoms by \ref{item:eqc_id:5}.
If $A_i$ is positive, a $\parr$ or a negated atom, then we are done.
Otherwise, as $\fCenter A_i,B_j,\Gamma$ fulfills the properties, it follows $\Gamma$ is empty and $B_j$ of the desired shape.
By \ref{item:eqc_id:1}, $B_j$ is not $0$, thus $A_i$ is not $\top$.
Whether $A_i$ is $\bot$ or $\with$, the sequent $\fCenter A_i, B_1\oplus B_2$ respects the properties.
\item[$\comm{\tens}{\tens}$ commutation]
We have to show the properties for $\fCenter A_1\tens A_2,B_1,\Gamma,\Delta$.
Because ${\fCenter A_1\tens A_2,B_1\tens B_2,\Gamma,\Delta,\Sigma}$ respects them, negative formulas of $\Gamma$ and $\Delta$ are $\parr$-formulas or negated atoms by \ref{item:eqc_id:5}.
If $B_1$ is positive, a $\parr$ or a negated atom, then we are done.
Otherwise, as $\fCenter A_2,B_1,\Delta$ fulfills the properties, it follows $\Delta$ is empty and $B_1$ of the desired shape, so $B_1$ is a $0$, $1$ or $\oplus$-formula.
This is impossible as $B_1\tens B_2$ is distributed by~\ref{item:eqc_id:1}.
\item[$\comm{\with}{\with}$, $\comm{\parr}{\with}$, $\comm{\with}{\parr}$, $\comm{\tens}{\with}$ and $\comm{\with}{\tens}$ commutations]
These cases are impossible by property~\ref{item:eqc_id:4}.
\item[$\comm{\oplus_i}{\with}$ and $\comm{\with}{\oplus_i}$ commutations]
In these cases, \ref{item:eqc_id:4} for $\fCenter A_1\with A_2,B_1\oplus B_2,\Gamma$ implies $\Gamma$ empty and $B_1\oplus B_2$ of the desired shape.
Thus $B_i$ of the desired shape ($B_1\oplus B_2$ is not the distinguished formula as it has the same rule $\oplus_i$ in both branches of the $\with$-rule), proving the result for $\fCenter A_1\with A_2,B_i$.
For $\fCenter A_1, B_1\oplus B_2$ (and similarly $\fCenter A_2, B_1\oplus B_2$), $A_1$ cannot be a $\top$ by \ref{item:eqc_id:1}, and if it is a $\bot$ or a $\with$, then the hypothesis on $\fCenter A_1, B_i$ implies that the properties are also respected in $\fCenter A_1, B_1\oplus B_2$.
\item[$\comm{\oplus_i}{\parr}$ and $\comm{\parr}{\oplus_i}$ commutations]
Let us show the properties for $\fCenter A_1,A_2,B_1\oplus B_2,\Gamma$ in the first commutation and $\fCenter A_1\parr A_2,B_i,\Gamma$ in the second.
As they hold for $\fCenter A_1, A_2, B_i,\Gamma$, negative formulas in $A_1, A_2, B_i,\Gamma$ are $\parr$-formulas or negated atoms by \ref{item:eqc_id:5} and the result follows.
\item[$\comm{\oplus_i}{\tens}$ commutation]
We have to prove $\fCenter A_1,B_1\oplus B_2,\Gamma$ respects the properties.
Because $\fCenter A_1\tens A_2,B_1\oplus B_2,\Gamma,\Delta$ fulfills them, negative formulas of $\Gamma$ are $\parr$ or negated atoms by \ref{item:eqc_id:5}.
If $A_1$ is a negative formula other than a $\parr$ or an atom, then for $\fCenter A_1,B_i,\Gamma$ respects the properties we have that $\Gamma$ is empty and $B_i$ of the desired shape.
By \ref{item:eqc_id:1}, $B_i$ is not a $0$, so $A_1$ is not a $\top$.
But then $B_1\oplus B_2$ also has the wished shape for $A_1$, and $\fCenter A_1,B_1\oplus B_2$ fulfills the properties.
\item[$\comm{\tens}{\oplus_i}$ commutation]
We prove $\fCenter A_1\tens A_2,B_i,\Gamma,\Delta$ respects the properties.
As they are fulfilled by $\fCenter A_1\tens A_2,B_1\oplus B_2,\Gamma,\Delta$, negative formulas of $\Gamma$ and $\Delta$ are $\parr$ or atoms by \ref{item:eqc_id:5}.
As $A_1\tens A_2$ is distributed by \ref{item:eqc_id:1}, $A_1$ cannot be a $0$, $1$ nor $\oplus$ formula, so by $\fCenter A_1,B_i,\Gamma$ fulfilling the properties it follows that $B_i$ cannot be a negative formula other than a $\parr$ or an atom.
The conclusion follows.
\item[$\comm{\tens}{\parr}$ commutation]
We prove the properties for $\fCenter A_1, A_2,B_1\tens B_2,\Gamma,\Delta$.
As they are respected by $\fCenter  A_1\parr A_2,B_1\tens B_2,\Gamma,\Delta$, according to \ref{item:eqc_id:5} negative formulas of $\Gamma$ and $\Delta$ can only be $\parr$-formulas or atoms.
As $A_1\parr A_2$ is distributed by \ref{item:eqc_id:1}, $A_1$ and $A_2$ are positive or $\parr$-formulas or atoms.
The conclusion follows.
\item[$\comm{\parr}{\tens}$ commutation]
We prove the properties for $\fCenter A_1\parr A_2, B_1,\Gamma$.
As $\fCenter A_1, A_2,B_1,\Gamma$ respects them, by \ref{item:eqc_id:5} negative formulas of $\Delta$ and $B_1$ can only be $\parr$-formulas or atoms, proving the result.
\end{description}
Therefore, we proved the expanded axiom respects these properties, and they are preserved by all rule commutations.
The conclusion follows.
\end{proof}

\subsubsection{Surgery on slices}
\label{sec:pattern_dist_seqisos_slices}

In order to prove that $\top$ and $\bot$-rules in proofs of distributed isomorphisms belong to $\top/0$- and $1/\oplus/\bot$-patterns, we show these patterns are preserved by cut anti-reduction and conclude as the identity has these patterns.
However, proving this preservation is not easy and uses the notion of slices (recall \autoref{def:slice_pt} on \autopageref{def:slice_pt}).

Cut-elimination can be extended from proofs to slices except that some reduction steps produce failures for slices: when a $\with_i$ faces a $\oplus_i$ and conversely.
The reduction of the slice
\begin{prooftree}
  \AxiomC{$\fCenter A,\Gamma$}
  \Rplus{1}{$A\oplus B,\Gamma$}
  \AxiomC{$\fCenter B\orth,\Delta$}
  \Rwiths{1}{$B\orth\with A\orth,\Delta$}
  \Rcut{$\Gamma,\Delta$}
\end{prooftree}
is a failure since the selected sub-formulas of $A\oplus B$ and its dual do not match.
More precisely, we have a failure exactly when reducing a key case between $\with_i$ and $\oplus_j$ for $i = j$ -- when $i \neq j$ (\ie\ $j = 1-i$) the cut reduces correctly as follows:
\begin{equation*}
  \AxiomC{$\fCenter A_i,\Gamma$}
  \Rplus{i}{$A_1\oplus A_2,\Gamma$}
  \AxiomC{$\fCenter A\orth_i,\Delta$}
  \Rwiths{j}{$A\orth_2\with A\orth_1,\Delta$}
  \Rcut{$\Gamma,\Delta$}
\DisplayProof
\quad
\betato
\quad
  \AxiomC{$\fCenter A_i,\Gamma$}
  \AxiomC{$\fCenter A\orth_i,\Delta$}
  \Rcut{$\Gamma,\Delta$}
\DisplayProof
\end{equation*}

Given two slices $s\in\Slices(\pi)$ and $r\in\Slices(\rho)$ with respective conclusions $\fCenter A,\Gamma$ and $\fCenter A\orth,\Delta$, their composition by cut $\cutf{s}{r}{A}$ reduces either to a slice of a normal form of $\cutf{\pi}{\rho}{A}$ or to a failure.

\begin{lem}
\label{lem:cut_slice1}
Let $\pi_1$ and $\pi_2$ be proofs such that $\pi_1\betato\pi_2$.
For each $s_2\in\Slices(\pi_2)$, there exists $s_1\in\Slices(\pi_1)$ such that $s_1\betato s_2$ or $s_1 = s_2$.
Reciprocally, for each $s_1\in\Slices(\pi_1)$, $s_1$ reduces to a failure or there exists $s_2\in\Slices(\pi_2)$ such that $s_1\betato s_2$ or $s_1 = s_2$.
\end{lem}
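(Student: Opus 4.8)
The plan is to fix the single cut-elimination step $\pi_1 \betato \pi_2$, say it reduces a cut-rule $c$ together with the rule(s) immediately above it (the \emph{redex}), and to argue by a case analysis on the shape of this redex, \ie\ on which line of \autoref{tab:cut_elim_pt_key} or \autoref{tab:cut_elim_pt} is applied. The observation organizing every case is that the reduction is local: $\pi_1$ and $\pi_2$ coincide everywhere outside the redex. Accordingly, I would first classify the slices of $\pi_1$ by whether they \emph{pass through} $c$. Following the branch from $c$ down to the root, each $\with$-rule encountered has $c$ in exactly one of its two premises; call a slice \emph{active} if at every such $\with$-rule it selects the premise containing $c$, and \emph{inactive} otherwise. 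An inactive slice does not contain the redex at all, so it is untouched by the reduction and is literally also a slice of $\pi_2$: this yields the alternative $s_1 = s_2$ in both directions, and reduces the real work to active slices.

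For an active slice $s_1$ the redex is present, with each of its $\with$-rules now unary, and I would reduce it using the evident slice-level version of the same line of the table. The case analysis then splits into three behaviors. In the \emph{transport} cases — the $ax$, $\parr-\tens$ and $\bot-1$ key cases, and every commutative case except $\with-cut$ and $\top-cut$ — the $\with$-rules of the redex are preserved and all additive choices carry over verbatim, so $s_1 \betato s_2$ with $s_2$ obtained from the same choices. In the $\top-cut$ case the sub-proof facing the $\top$-rule is erased, so $s_1$ reduces (discarding whichever slice of that sub-proof it had selected) to the unique slice consisting of the $\top$-rule. In the $\with-cut$ case the sub-proof $\pi_3$ is duplicated into the two branches of the created $\with$-rule, but an active slice exercises only one branch, hence only one copy of a slice of $\pi_3$, so the slice-level $\with-cut$ step sends $s_1$ to the slice selecting the corresponding branch.

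The sole failure-producing case is the $\with-\oplus_i$ key case. There the redex $\with$-rule becomes a unary $\with_k$ in $s_1$, fixing one additive selection, while the $\oplus_i$-rule fixes the dual selection. By the computation recalled just before the statement, the slice-level key case succeeds exactly when the two selections match, giving $s_1 \betato s_2$, and produces a failure exactly when they mismatch. This is precisely the disjunction ``$s_1$ reduces to a failure or $s_1 \betato s_2$'' of the reciprocal direction; together with the inactive slices it settles that direction.

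For the direct direction I would run the correspondence backwards: given $s_2 \in \Slices(\pi_2)$, if $s_2$ avoids the reduced region it is already a slice of $\pi_1$, giving $s_1 = s_2$; otherwise it lies in the image of an active slice, which I reconstruct by making the \emph{matching} additive choice at the redex $\with$-rule (the one that avoids a failure), by grafting back an arbitrary slice of the erased sub-proof in the $\top-cut$ case, and by reusing the copy of $\pi_3$'s slice dictated by the branch $s_2$ selects in the $\with-cut$ case. I expect the main obstacle to be exactly this bookkeeping in the $\with-cut$ and $\top-cut$ cases, where the number of slices genuinely changes (a slice of $\pi_3$ is duplicated, \resp\ a slice of the erased sub-proof disappears), so that the lift $s_2 \mapsto s_1$ is non-canonical and one must check a suitable choice always exists; the remaining cases amount to routine verification that additive selections are transported correctly by each line of the cut-elimination tables.
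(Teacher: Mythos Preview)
Your proposal is correct and follows exactly the approach the paper indicates: a case analysis on the cut-elimination step, with the equality alternative arising precisely when the redex lies outside the considered slice. The paper's own proof is the single sentence ``We can check that each cut-elimination step respects this property, with the equality case coming from a reduction in $\pi_1\betato\pi_2$ on rules not in the considered slice $s_2$ (\resp\ $s_1$)'', and your write-up is a faithful and careful unfolding of that sentence, including the correct identification of the $\with-cut$, $\top-cut$, and $\with-\oplus_i$ cases as the only ones requiring genuine bookkeeping.
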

\begin{proof}
We can check that each cut-elimination step respects this property, with the equality case coming from a reduction in $\pi_1\betato\pi_2$ on rules not in the considered slice $s_2$ (\resp\ $s_1$).
\end{proof}

\begin{lem}
\label{lem:cut_slice}
Let $\pi_1$ and $\pi_2$ be proofs whose composition over $A$ reduces to a proof $\tau$.
For each $s\in\Slices(\tau)$, there exist $s_1\in\Slices(\pi_1)$ and $s_2\in\Slices(\pi_2)$ such that $\cutf{s_1}{s_2}{A}$ reduces to $s$.
\end{lem}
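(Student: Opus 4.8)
The plan is to obtain this as a corollary of \autoref{lem:cut_slice1}, by iterating its \emph{reverse} direction along the reduction sequence and then identifying the slices of a composition with pairs of slices. First I would make the reduction explicit: since $\cutf{\pi_1}{\pi_2}{A}$ reduces to $\tau$, there is a finite sequence
\[\cutf{\pi_1}{\pi_2}{A} = \phi_0 \betato \phi_1 \betato \cdots \betato \phi_n = \tau.\]
Fixing $s\in\Slices(\tau)=\Slices(\phi_n)$, I would prove by downward induction on $k$ that there is a slice $s^{(k)}\in\Slices(\phi_k)$ with $s^{(k)}\betatostar s$. The base case $k=n$ is just $s^{(n)}=s$. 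For the inductive step, I apply the first assertion of \autoref{lem:cut_slice1} to $\phi_k\betato\phi_{k+1}$: given $s^{(k+1)}\in\Slices(\phi_{k+1})$, it yields $s^{(k)}\in\Slices(\phi_k)$ with either $s^{(k)}\betato s^{(k+1)}$ or $s^{(k)}=s^{(k+1)}$, so in both cases $s^{(k)}\betatostar s^{(k+1)}\betatostar s$. Instantiating $k=0$ produces a slice $s_0\in\Slices(\cutf{\pi_1}{\pi_2}{A})$ with $s_0\betatostar s$.

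It then remains to observe that slices of the composition are exactly compositions of slices. Indeed, the $cut$-rule added at the root of $\cutf{\pi_1}{\pi_2}{A}$ is not a $\with$-rule, so the set of $\with$-rules of the composition is the disjoint union of those of $\pi_1$ and those of $\pi_2$. Choosing one branch at each $\with$-rule of the composition therefore amounts to choosing, independently, a slice $s_1\in\Slices(\pi_1)$ and a slice $s_2\in\Slices(\pi_2)$, and the slice so obtained is precisely $\cutf{s_1}{s_2}{A}$. Consequently $s_0=\cutf{s_1}{s_2}{A}$ for some such $s_1$ and $s_2$, whence $\cutf{s_1}{s_2}{A}=s_0\betatostar s$, which is the desired conclusion.

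The one point I would state carefully — and which is the only place where something could go wrong — is that I use exclusively the \emph{first} assertion of \autoref{lem:cut_slice1}, which produces a genuine predecessor slice and never a failure; failures arise only in the second (forward) assertion, which plays no role here. Everything else is routine bookkeeping of the canonical correspondence between $\Slices(\cutf{\pi_1}{\pi_2}{A})$ and $\Slices(\pi_1)\times\Slices(\pi_2)$, so I do not expect a genuine obstacle: the lemma is essentially the transitive, compositional repackaging of the single-step statement already established.
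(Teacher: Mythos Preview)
Your argument is correct and matches the paper's approach exactly: the paper's proof is the single line ``By induction on the sequence $\cutf{\pi_1}{\pi_2}{A}\betatostar\tau$, using \autoref{lem:cut_slice1}'', and you have simply unpacked that induction, correctly invoking only the backward (failure-free) assertion of \autoref{lem:cut_slice1} and the obvious decomposition $\Slices(\cutf{\pi_1}{\pi_2}{A})\cong\Slices(\pi_1)\times\Slices(\pi_2)$.
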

\begin{proof}
By induction on the sequence $\cutf{\pi_1}{\pi_2}{A}\betatostar\tau$, using \autoref{lem:cut_slice1}.
\end{proof}

\begin{lem}
\label{lem:com_slice}
Let $\pi_1$ and $\pi_2$ be cut-free proofs of $\fCenter\Gamma$ with $\top$-rules all in $\top/0$-patterns.
Assume that $\pi_1\eqc\pi_2$, where in this sequence there is no rule commutation involving a $\top$-rule.
Then for each slice $s_1\in\Slices(\pi_1)$, there exists a unique $s_2\in\Slices(\pi_2)$ such that $s_1$ and $s_2$ make the same choices for additive connectives in $\Gamma$.
\end{lem}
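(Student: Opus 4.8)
The plan is to reduce the claim to a single commutation step and then compose, splitting the statement into an \emph{injectivity} part (which yields the uniqueness of $s_2$) and an \emph{invariance} part (which yields its existence). Throughout, I read off from a slice $s$ its \emph{additive choices in $\Gamma$}: for every $\with$- (\resp\ $\oplus$-) subformula occurrence of $\Gamma$ that is introduced by a rule of $s$, I record which premise was kept (\resp\ which of $\oplus_1,\oplus_2$ was used). By the linearity of slices each such occurrence is resolved at most once, so this data is well defined. Crucially, since all $\top$-rules sit in $\top/0$-patterns, no $\with$ or $\oplus$ of $\Gamma$ is ever absorbed in the context of a $\top$-rule, so the only way an additive occurrence of $\Gamma$ can fail to be resolved in $s$ is that it lies in a branch discarded by a $\with$-choice lower down.

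First I would prove injectivity: in any cut-free proof $\rho$ with $\top$-rules in $\top/0$-patterns, two slices with the same additive choices in $\Gamma$ coincide. Suppose $s\neq s'$; then they must keep different premises at some $\with$-rule lying in both, and I take such a rule $w$ as low as possible. Being present in both slices, $w$ lies on a fully selected downward path, so the $\with$-formula occurrence it introduces descends to a genuine subformula occurrence of $\Gamma$ (the subformula property holds because $\rho$ is cut-free, and the occurrence cannot be discarded below $w$ without removing $w$ itself from the slice). Hence both slices resolve this occurrence, with different choices, contradicting the equality of additive choices. Applied to $\pi_2$, this gives the uniqueness of $s_2$.

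For existence I would show that the set of additive choices realised by slices is invariant under a single $\top$-free commutation, and then compose along the sequence $\pi_1 \eqc \pi_2$; note that each intermediate proof still has its $\top$-rules in $\top/0$-patterns, since no commutation of the sequence touches a $\top$-rule. The argument is a case analysis over Tables~\ref{tab:rule_comm} and~\ref{tab:rule_comm_u}. For the commutations that merely transpose two rules without duplication ($\parr-\parr$, $\tens-\tens$, $\oplus_j-\oplus_i$, $\parr-\tens$, $\parr-\oplus_i$, $\tens-\oplus_i$, $\parr-\bot$, $\tens-\bot$, $\oplus_i-\bot$, $\bot-\bot$), the multiset of rules and the $\with$-branching are unchanged, so slices correspond verbatim and record identical additive choices. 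The substantive cases are those where a subproof is duplicated or merged across a $\with$ ($\with-\with$, $\with-\tens$, $\with-\parr$, $\with-\oplus_i$, and $\with-\bot$): there one traces the branches and checks that a slice keeping premise $\epsilon$ of the moved $\with$ together with internal choices $c$ corresponds to the unique slice on the other side keeping premise $\epsilon$ and making the same internal choices $c$, hence the same additive choices in $\Gamma$ (for $\with-\bot$ the copied rule is non-additive, so this is immediate). This is precisely where excluding $\top$-commutations is essential: a $\with-\top$ step would replace a single $\top$-leaf — which resolves no additive occurrence of its context — by a $\with$ of two $\top$-leaves resolving it on the left and on the right, so the realised additive choices would genuinely change. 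This pathology is ruled out both by forbidding $\top$-commutations and by the $\top/0$-pattern hypothesis, under which a $\with-\top$ step cannot even apply.

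I expect the case analysis of the $\with$-duplicating commutations to be the main obstacle, since there the two slice sets are not literally equal and one must verify branch by branch that the correspondence preserves the additive choices read off in $\Gamma$ — keeping track of the $\oplus$-injections and of the cut-freeness of the duplicated subproof built into \autoref{def:rule_comm}. Making the informal notion ``same choices for additive connectives in $\Gamma$'' precise enough to survive duplication, and checking that it is stable under composition of the per-step correspondences, is the delicate bookkeeping; the injectivity lemma is what finally turns these per-step correspondences into the asserted \emph{unique} $s_2$.
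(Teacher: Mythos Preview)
Your proposal is correct and follows the same approach as the paper, whose proof is a one-liner: ``This can be easily checked for each equation in $\eqcone$, save for those involving $\top$.'' Your separation of the argument into an injectivity lemma (uniqueness) and a per-commutation invariance check (existence) is a clean way to organise exactly that case analysis, and your observation that the $\top/0$-pattern hypothesis forbids any $\top$-commutation (so intermediate proofs retain the hypothesis) makes explicit what the paper leaves implicit.
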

\begin{proof}
This can be easily checked for each equation in $\eqcone$, save for those involving $\top$.
\end{proof}

Given a choice $\mathcal{C}$ of premises for some additive connectives of a sequent, we say a slice is \emph{on $\mathcal{C}$} if each $\with_i$- and $\oplus_i$-rule in this slice takes premise $i$ for a connective in $\mathcal{C}$ whose chosen premise is $i$.
This concept will be essential when speaking about proof-nets (in \autoref{sec:def_pn} for instance), as it is related to additive resolutions.

\begin{lem}
\label{lem:id_slice}
Given a choice $\mathcal{C}$ of premises for additive connectives of $A$ (but not $A\orth$), there exists a unique slice of $\Slices(\id{A})$ on it, which furthermore makes on $A\orth$ the dual choices of $\mathcal{C}$.
\end{lem}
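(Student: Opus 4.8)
The plan is to proceed by structural induction on $A$, following the recursive construction of $\id{A}=\eta(\ax{A})$ dictated by the axiom-expansion rules of \autoref{tab:ax_exp_pt}. Recall that a slice of $\id{A}$ is obtained by selecting, for each $\with$-rule of $\id{A}$, one of its two premises, and that being \emph{on} $\mathcal{C}$ constrains, for every additive connective $c$ of $A$ that is introduced by a rule of the slice, that rule to take premise $\mathcal{C}(c)$. The additive connectives of $A$ and of $A\orth$ are paired by duality, and in $\id{A}$ exactly one member of each pair is a $\with$ (hence literally sliced) while the other is an $\oplus$ (hence resolved by a fixed $\oplus_1$- or $\oplus_2$-rule). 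The target statement should then follow by matching, at each connective, the premise picked by $\mathcal{C}$ with the one realized by the slice, and reading off the induced choice on the dual connective in $A\orth$.

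For the base cases $A\in\{X,X\orth,1,\bot,\top,0\}$, the proof $\id{A}$ contains no $\with$-rule, so it is its own unique slice; as $A$ has no additive connective, $\mathcal{C}$ is empty and both the existence/uniqueness and the dual-choice claims hold vacuously. For a multiplicative $A$, say $A=B\tens C$ with $A\orth=C\orth\parr B\orth$, the proof $\id{A}$ places a $\tens$- and a $\parr$-rule (neither additive) above the sub-proofs $\id{B}$ and $\id{C}$; hence $\Slices(\id{A})$ is the product $\Slices(\id{B})\times\Slices(\id{C})$, and $\mathcal{C}$ splits as choices $\mathcal{C}_B,\mathcal{C}_C$ on the additives of $B$ and $C$. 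Applying the induction hypothesis to $\id{B}$ and $\id{C}$ yields unique slices on $\mathcal{C}_B$ and $\mathcal{C}_C$ making the dual choices on $B\orth$ and $C\orth$; their juxtaposition is the desired unique slice, and the dual choices on $A\orth=C\orth\parr B\orth$ are exactly those assembled from $B\orth$ and $C\orth$. The case $A=B\parr C$ is symmetric.

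The additive cases are the heart of the argument. Take $A=B\oplus C$, so $A\orth=C\orth\with B\orth$ by the order-reversing De~Morgan law. Inspecting the $\with$-$\oplus$ axiom-expansion, $\id{A}$ ends with the $\with$-rule on $C\orth\with B\orth$ whose left premise is $\id{C}$ followed by an $\oplus_2$-rule and whose right premise is $\id{B}$ followed by an $\oplus_1$-rule. Thus the top $\oplus$ of $A$ is \emph{not} sliced directly: a slice must keep one branch of the dual $\with$, and this choice forces the resolution of the top $\oplus$. If $\mathcal{C}$ selects premise $i$ for the top $\oplus$, the slice on $\mathcal{C}$ must contain the $\oplus_i$-rule, which lives in the branch of the $\with$ of index $3-i$ (left branch carrying $\oplus_2$, right branch carrying $\oplus_1$); keeping this branch is exactly the dual choice $3-i$ on the top $\with$ of $A\orth$. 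Inside the surviving branch sits $\id{C}$ (if $i=2$) or $\id{B}$ (if $i=1$), to which the induction hypothesis applies with the restriction of $\mathcal{C}$, giving uniqueness and the dual choices for the remaining connectives. The case $A=B\with C$ is symmetric, the top $\with$ of $A$ being sliced directly and forcing the $\oplus_i$-resolution of its dual in $A\orth$.

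The main subtlety to get right is the interplay between slicing and the choices $\mathcal{C}$ prescribes for the $\oplus$-connectives of $A$: unlike a $\with$, a $\oplus$ of $A$ is never sliced, so the premise it takes is not chosen directly but forced by which branch of its dual $\with$ in $A\orth$ survives. One must check that this forced resolution matches $\mathcal{C}$ and track the premise inversion $i\mapsto 3-i$ coming from the non-commutative De~Morgan laws, which is precisely what makes the slice realize the \emph{dual} choice on $A\orth$. A secondary point is that once a $\with$-branch is deleted, every additive connective inside it disappears from the slice, so the corresponding entries of $\mathcal{C}$ impose no constraint; the uniqueness claim must therefore be read as uniqueness of the resulting slice, and the induction hypothesis must be applied only to the surviving sub-proof.
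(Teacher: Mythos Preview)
Your proof is correct and follows exactly the approach the paper indicates: a direct structural induction on $A$ along the axiom-expansion rules defining $\id{A}$. The paper's own proof is a one-liner (``Direct induction on $A$, following the definition of $\id{A}$ on \autoref{tab:ax_exp_pt}''), and you have simply spelled out the cases it leaves implicit, including the key observation that in the $\oplus/\with$ case the $\oplus_i$-rule sits in branch $3-i$ of the dual $\with$, yielding the dual choice.
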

\begin{proof}
Direct induction on $A$, following the definition of $\id{A}$ on \autoref{tab:ax_exp_pt}.
\end{proof}

We now prove a partial reciprocal to \autoref{lem:cut_slice}.
Notice in the following statement that the assumption is on the composition over $A$, and the conclusion on the existence of a slice for the composition over $B$, the other formula.

\begin{lem}
\label{lem:id_match_pt}
Let $\pi$ and $\pi'$ be cut-free proofs respectively of $\fCenter A\orth,B$ and $\fCenter B\orth,A$, whose composition over $A$ reduces to $\id{B}$ up to rule commutation.
Set $\rho$ a normal form of $\cutf{\pi}{\pi'}{B}$, \ie\ any result of eliminating all $cut$-rules in this proof.
Then, for any slice $s$ of $\pi$, there exists a slice $s'$ of $\pi'$ such that $\cutf{s}{s'}{B}$ reduces to a slice of $\rho$.
\end{lem}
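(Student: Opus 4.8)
The plan is to reduce everything to a question of matching additive choices, and to produce the required slice $s'$ of $\pi'$ by travelling through the composition over $A$ and the identity. First I would record two observations. A slice of a cut-free proof with conclusion $\fCenter\Delta$ determines, by making each $\with$-rule unary, an additive resolution of (the connectives of) each formula of $\Delta$; and a cut $\cutf{s}{s'}{B}$ between a slice resolving $B$ and a slice resolving $B\orth$ by \emph{dual} resolutions reduces without any failure, since at every additive connective a $\with_i$ then faces a $\oplus_j$ with $i\neq j$, while the multiplicative, unit and axiom steps never fail. Moreover $\cutf{s}{s'}{B}$ is exactly a slice of the composition $\cutf{\pi}{\pi'}{B}$, because the $\with$-rules of this composition are precisely those of $\pi$ together with those of $\pi'$. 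Hence, fixing the reduction $\cutf{\pi}{\pi'}{B}\betatostar\rho$ defining $\rho$ and tracking $\cutf{s}{s'}{B}$ along it through \autoref{lem:cut_slice1}, any such failure-free slice necessarily reaches a slice of $\rho$. So it suffices to exhibit a slice $s'$ of $\pi'$ whose resolution of $B\orth$ is dual to the resolution $\mathcal{C}$ that $s$ fixes on $B$.

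To build such an $s'$, I would use the hypothesis on the composition over $A$. Let $\tau$ be a normal form of $\cutf{\pi'}{\pi}{A}$, so $\tau$ is cut-free and $\tau\eqc\id{B}$. By \autoref{lem:id_slice} there is a unique slice $u$ of $\id{B}$ whose resolution of $B$ is $\mathcal{C}$, and this $u$ makes on $B\orth$ the dual resolution $\mathcal{C}\orth$. Since $\tau\eqc\id{B}$ and, by \autoref{prop:eqc_id}, all $\top$-rules of $\id{B}$ (hence of every proof in its commutation class) lie in $\top/0$-patterns, which forbid any $\top$-commutation, the hypotheses of \autoref{lem:com_slice} are met. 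It therefore yields a slice $u'$ of $\tau$ making the same additive choices as $u$ on the conclusion $\fCenter B\orth,B$; in particular $u'$ resolves $B$ by $\mathcal{C}$ and $B\orth$ by $\mathcal{C}\orth$.

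I would then pull $u'$ back across the composition over $A$. Applying \autoref{lem:cut_slice} to $\cutf{\pi'}{\pi}{A}\betatostar\tau$ for the slice $u'$ provides slices $\hat s'\in\Slices(\pi')$ and $\hat s\in\Slices(\pi)$ with $\cutf{\hat s'}{\hat s}{A}$ reducing to $u'$. In this reduction the occurrence $B\orth$ comes from $\pi'$ and is pure context for the cut on $A$: its internal additive rules are untouched, so the resolution of $B\orth$ in $u'$ is literally that of $\hat s'$. Thus $\hat s'$ resolves $B\orth$ by $\mathcal{C}\orth$. Setting $s':=\hat s'$, the resolution of $B$ in $s$ (namely $\mathcal{C}$) and the resolution of $B\orth$ in $s'$ (namely $\mathcal{C}\orth$) are dual, so by the first paragraph $\cutf{s}{s'}{B}$ reduces without failure, hence to a slice of $\rho$, as required. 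The point worth stressing is that I pair $\hat s'$ with the \emph{original} $s$, not with $\hat s$: the slices $s$ and $\hat s$ of $\pi$ may differ on $A\orth$, but both resolve $B$ by $\mathcal{C}$, which is all that the cut over $B$ sees.

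The main obstacle is precisely this existence of a matching $s'$: there is no a priori reason for $\pi'$ to contain a slice dual to $s$ on $B$, and it is exactly the assumption that the composition over $A$ equals $\id{B}$ which buys this, by forcing $\pi'$ to be rich enough to realise on $B\orth$ every resolution realised on $B$. The two delicate technical points are the transfer of additive choices from $\id{B}$ to its commutation-equivalent normal form $\tau$ (which is why the $\top/0$-pattern analysis of \autoref{prop:eqc_id} is needed to legitimise \autoref{lem:com_slice}) and the book-keeping showing that the resolution of the context formula $B\orth$ survives the pull-back through \autoref{lem:cut_slice}; everything else is the routine check that dual resolutions never trigger an additive failure.
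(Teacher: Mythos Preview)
Your proof is correct and follows essentially the same route as the paper's: fix the additive choices $\mathcal{C}$ of $s$ on $B$, use \autoref{lem:id_slice} to get a slice of $\id{B}$ on $\mathcal{C}$ and $\mathcal{C}\orth$, transfer it to $\tau$ via \autoref{prop:eqc_id} and \autoref{lem:com_slice}, pull back through \autoref{lem:cut_slice} to obtain a slice of $\pi'$ with the dual resolution on $B\orth$, and conclude that the cut over $B$ never fails.

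One small imprecision: your claim that during the reduction $\cutf{\hat s'}{\hat s}{A}\betatostar u'$ the additive rules on $B\orth$ are ``untouched'' is not literally true, since a $\top$--$cut$ commutative case on the $\hat s$ side can absorb part of $\hat s'$ and thereby erase rules acting on $B\orth$. The paper handles this by observing the direction of the argument: such steps can only \emph{erase} choices going forward, hence going backward from $u'$ to $\hat s'$ can only \emph{create} choices. Since $u'$ already realises $\mathcal{C}\orth$ on every connective where $s$ makes a choice (this is what the id-slice together with \autoref{lem:com_slice} guarantees), $\hat s'$ realises those same dual choices, and any extra choices $\hat s'$ might make are on connectives where $s$ is silent and therefore cannot cause failure. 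Your conclusion is unaffected, but the justification needs this refinement.
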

\begin{proof}
Take $s\in\Slices(\pi)$, and denote by $\mathcal{C}$ the choices made in $s$ on $\with$ and $\oplus$ connectives of the formula $B$.
We will use that the composition over the formula $A$ reduces to $\id{B}$ up to $\eqc$ to find $s'\in\Slices(\pi')$ that makes the dual choices of $\mathcal{C}$ on additive connectives of $B\orth$.
This ensures no failure happens during the reduction of $\cutf{s}{s'}{B}$, which thus reduces to a slice of $\rho$ (\autoref{lem:cut_slice1}).

By hypothesis, call $\tau$ a cut-free proof resulting from cut-elimination of $\cutf{\pi}{\pi'}{A}$, with $\tau\eqc\id{B}$.
By \autoref{lem:id_slice}, there is a (unique) slice of $\id{B}$ with choices $\mathcal{C}$ on $B$ and dual choices $\mathcal{C}\orth$ on $B\orth$.
Applying \autoref{prop:eqc_id}, all proofs in the sequence $\tau\eqc\id{B}$ have $\top$-rules with only $0$ in their context; in particular, there cannot be any commutation involving a $\top$-rule in this sequence.
Using \autoref{lem:com_slice}, there is a slice $t$ of $\tau$ with choices $\mathcal{C}$ and $\mathcal{C}\orth$.
According to \autoref{lem:cut_slice}, we have slices $r\in\Slices(\pi)$ and $r'\in\Slices(\pi')$ whose composition on $A$ reduces to $t$.
In particular, $r$ makes choices $\mathcal{C}$ on $B$ and $r'$ choices $\mathcal{C}\orth$ on $B\orth$, as these choices are those in the resulting slice $t$, and no reversed cut-elimination step can modify them (a $\top-cut$ commutative case can erase such choices, but taking it in the other direction can only create choices).
Therefore, $r'$ makes on $B\orth$ the dual choices of $s$ on $B$, and we can take such a slice as $s'$.
\end{proof}

\autoref{lem:id_match_pt} will be used to prove that any non-$ax$-rule $r$ in $\pi$ is not erased in all slices by $\with-\oplus_i$ key cases during normalization.
More precisely, taking a slice $s$ containing $r$, of principal connective in, say $A\orth$, the lemma gives a slice $s'\in\Slices(\pi')$ that has no failure for a composition over $B$.
As $r$ does not introduce a cut formula, the only way for it to be erased during the reduction is by a $\top-cut$ commutative case, in which case its principal formula becomes a sub-formula of the sequent on which a $\top$-rule is applied.
This will be enough to conclude in our cases, as we know that the resulting normal form only has $\top$-rules of the shape {\Rtop{$\top,0$}\DisplayProof} (using \autoref{th:eqbn_eqc} and \autoref{prop:eqc_id}).

\begin{lem}
\label{lem:top_eat_visible}
Let $r$ be a non-$ax$-rule in a cut-free slice $s$ of conclusion $\fCenter \Gamma, A$, with the conclusion sequent of $r$ of the shape $\Sigma_\Gamma,\Sigma_A$ where $\Sigma_\Gamma$ is a sub-sequent of $\Gamma$ and $\Sigma_A$ of $A$.
Also assume the principal formula of $r$ belongs to $\Sigma_\Gamma$, \ie\ is not a sub-formula of $A$.
Take $s'$ a cut-free slice of conclusion $\fCenter A\orth, \Delta$ such that $\cutf{s}{s'}{A}$ reduces to a cut-free slice $s''$.
Then either there is in $s''$ a rule of the same kind as $r$, applied on a sequent $\fCenter \Sigma_\Gamma,\Theta$ with the same principal formula as $r$, or $\Sigma_\Gamma$ is a sub-sequent of a $\top$-rule in $s''$, whose main $\top$-formula is not in $\Sigma_\Gamma$.
\end{lem}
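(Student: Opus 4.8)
The plan is to reason by induction on the length of the reduction $\cutf{s}{s'}{A}\betatostar s''$, tracking the fate of the rule $r$ together with the occurrences of $\Sigma_\Gamma$ sitting in its conclusion. The crucial preliminary observations are that, since $s$ and $s'$ are \emph{slices}, every $\with$-rule in them is unary, so the $\with-cut$ commutative case never duplicates a sub-proof and \emph{no} cut-elimination step of $\cutf{s}{s'}{A}$ duplicates any rule; that, as we assume $\cutf{s}{s'}{A}$ reduces all the way to a \emph{cut-free} slice $s''$, no failure between a $\with$- and an $\oplus$-rule occurs along the way; and that every cut appearing during the reduction is a descendant of the cut on $A$, hence always separates a sub-formula of $A$ coming from $s$ from its dual coming from $s'$. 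I will use this last provenance remark repeatedly.

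First I would set up the invariant maintained along the reduction: for each intermediate proof $t$, either \textbf{(a)} $t$ contains a rule of the same kind as $r$, with the same principal formula, whose conclusion has the shape $\fCenter \Sigma_\Gamma,\Theta$ for some context $\Theta$ (the very occurrences of $\Sigma_\Gamma$ being preserved), or \textbf{(b)} $\Sigma_\Gamma$ is a sub-sequent of the conclusion of a $\top$-rule of $t$ whose principal $\top$-formula does not belong to $\Sigma_\Gamma$. The base case is immediate: in $\cutf{s}{s'}{A}$ the rule $r$ is still present with conclusion $\fCenter \Sigma_\Gamma,\Sigma_A$, so \textbf{(a)} holds with $\Theta=\Sigma_A$. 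For the inductive step I would check that each cut-elimination step preserves the disjunction \textbf{(a)}-or-\textbf{(b)}. Once \textbf{(b)} holds it is stable, since a $\top$-rule has no premises and only grows its context through further $\top-cut$ steps, never losing the (never-cut) occurrences of $\Sigma_\Gamma$. When \textbf{(a)} holds, the key point is that the principal formula of $r$ lies in $\Gamma$-material, which is never a cut formula; hence $r$ is never the active rule of a key case, it is never the cut formula of a $\bot-1$ or $ax$ key case, and by the no-duplication remark it is never copied. Thus a step either leaves $r$ untouched, or commutes a cut past $r$ — in which case $r$ keeps its kind and principal formula while only its $s'$-side context grows, so $\Sigma_\Gamma$ is preserved and \textbf{(a)} persists — or erases the branch containing $r$.

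The main obstacle, and the only way to leave alternative \textbf{(a)}, is the $\top-cut$ commutative case erasing the branch in which $r$ sits. Here the argument I would give is that a $\top-cut$ step erases the branch \emph{opposite} to the one carrying its $\top$-rule; since $r$ belongs to $s$ and cuts separate $s$-material (on the $A$-side) from $s'$-material (on the $A^\perp$-side), the $\top$-rule that erases the branch of $r$ must come from $s'$, so its principal $\top$ is a sub-formula of $A^\perp$ or of $\Delta$ and, in particular, does not belong to $\Sigma_\Gamma$ (this is also where I would invoke that a $\top$-rule has no premises, so $r$ cannot lie above a $\top$-rule of $s$). In such a step the context of the erased branch is absorbed into the $\top$-rule's conclusion; as the occurrences of $\Sigma_\Gamma$ travel down unchanged to that context, being $\Gamma$-material and hence never cut, they become a sub-sequent of the resulting $\top$-rule, establishing \textbf{(b)}. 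Evaluating the invariant at $t=s''$ — which is cut-free, so in case \textbf{(a)} the context $\Theta$ contains no cut formula — then yields exactly the two alternatives in the statement.
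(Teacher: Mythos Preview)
Your overall approach—induction on the length of the reduction, maintaining the disjunction \textbf{(a)}-or-\textbf{(b)}—is exactly the paper's, and most of your preliminary observations (unary $\with$ in slices, no failures, cut formulas are always $A$-subformulas) are correct and useful.

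There is however a genuine gap in your provenance argument. You claim that ``cuts separate $s$-material (on the $A$-side) from $s'$-material (on the $A^\perp$-side)'' and deduce that the $\top$-rule erasing $r$ must come from $s'$. This is false once a $\parr\text{-}\tens$ key case has fired: after
\[
\pi_1\overset{A_1}{\bowtie}\bigl(\pi_2\overset{A_2}{\bowtie}\pi_3\bigr),
\]
with $\pi_1,\pi_2$ from $s$ and $\pi_3$ from $s'$, the outer cut has $\pi_1$ (from $s$) on its $A_1$-side and the whole inner-cut proof—which contains $r\in\pi_2$, also from $s$—on its $A_1^\perp$-side. If $\pi_1$ is a single $\top$-rule $t_0$, the subsequent $\top\text{-}cut$ step erases the branch of $r$ using a $\top$-rule that comes from $s$, not $s'$; its principal $\top$ may perfectly well be a sub-formula of $\Gamma$. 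So your inference ``its principal $\top$ is a sub-formula of $A^\perp$ or of $\Delta$'' does not hold.

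The fix is simpler than what you attempted and is what the paper (implicitly) does: at the moment of the $\top\text{-}cut$ step, the absorbing $\top$-rule sits on one premise of that cut while $r$—hence the occurrences forming $\Sigma_\Gamma$—sits in the other premise. The principal $\top$ is therefore a different occurrence from every occurrence in $\Sigma_\Gamma$, regardless of whether the $\top$-rule originated in $s$ or $s'$. The same remark applies verbatim when, in alternative \textbf{(b)}, your current $\top$-rule is itself swallowed by a further $\top\text{-}cut$ step: the new principal $\top$ lies on the opposite branch, hence outside $\Sigma_\Gamma$. With this correction your induction goes through as written.
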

\begin{proof}
By hypothesis, $r$ does not introduce a $cut$-formula, for these formulas are sub-formulas of $A$ or $A\orth$.
Therefore, the only reductions in $\cutf{s}{s'}{A}\betatostar s''$ that can erase $r$ are $\top-cut$ commutative cases, for the $\with_i-\oplus_j$ key cases in the reduction do not lead to a failure.
If no such erasure happens, then we are done: other cut commutative cases involving $r$ may modify its conclusion sequent, but only in the part coming from $A$, thus $\Sigma_\Gamma$ is preserved.

If a $\top-cut$ reduction erases $r$, then $\Sigma_\Gamma$ is in the context of the resulting $\top$-rule $t$, possibly as a sub-sequent.
Furthermore, the principal $\top$-formula of $t$ is not in $\Sigma_\Gamma$.
This is enough to conclude, using an induction on the number of steps in $\cutf{s}{s'}{A}\betatostar s''$.
Notice that $t$ may be erased too during the reduction but, like $r$, this would lead to $\Sigma_\Gamma$ being a sub-sequent of the context of another $\top$-rule.
\end{proof}

\begin{rem}
\label{rem:sub-sequent_property}
Cut-free MALL proofs have the \emph{sub-sequent property}: every sequent in such a proof is a sub-sequent of the conclusion sequent, as every rule respects this property, except the $cut$-rule.
\end{rem}

\subsubsection{Patterns in proofs of isomorphisms}
\label{sec:pattern_dist_seqisos_proofs}

Thanks to the technical results on slices, we can finally deduce that $\top$ and $\bot$-rules in proofs of distributed isomorphisms belong to $\top/0$- and $1/\oplus/\bot$-patterns.

\begin{lem}
\label{lem:eqct_id_top}
If $\isoproofs{A}{B}{\pi}{\pi'}$ with $A$ and $B$ distributed, then all $\top$-rules in $\pi$ and $\pi'$ are in a $\top/0$-pattern.
\end{lem}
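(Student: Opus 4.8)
The plan is to isolate one $\top$-rule and transport it, through cut-elimination, into a normal form whose $\top$-rules we already understand. By the symmetry of the situation under the exchange $(\pi,A)\leftrightarrow(\pi',B)$ of the two proofs and the two conclusion formulas, and since duals of distributed formulas are distributed (\autoref{rem:dist}), it suffices to consider a single $\top$-rule $t$ of $\pi$, of conclusion $\vdash\top,\Gamma_t$, whose displayed $\top$ is a sub-formula of the first conclusion formula $A\orth$ (the subcase where it is a sub-formula of $B$, and the case of $\pi'$, follow from this symmetry). I therefore keep the $A\orth$-side, looking at the composition $\cutf{\pi}{\pi'}{B}$ over $B$, which by \autoref{def:iso} satisfies $\cutf{\pi}{\pi'}{B}\eqbn\ax{A}\eqbn\id{A}$.

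First I record the shape of the relevant normal form. Working, as we may by \autoref{subsec:no_eta}, with atomic-axiom proofs, let $\rho$ be a cut-free result of eliminating all cuts in $\cutf{\pi}{\pi'}{B}$; since $\cutf{\pi}{\pi'}{B}\eqbn\id{A}$, \autoref{th:eqbn_eqc} gives $\rho\eqc\id{A}$, and \autoref{prop:eqc_id} (applied to the distributed $A$) then guarantees that every $\top$-rule of $\rho$, hence of each of its slices, has conclusion exactly $\vdash\top,0$, with its $\top$ on one of the two conclusion occurrences dual to its $0$ on the other. Next I bring $t$ into such a slice. Fix a slice $s$ of $\pi$ containing $t$. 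Because the complementary composition $\cutf{\pi'}{\pi}{A}$ reduces to $\id{B}$, \autoref{lem:id_match_pt} furnishes a slice $s'$ of $\pi'$ for which $\cutf{s}{s'}{B}$ reduces, with no additive failure, to a slice $s''$ of $\rho$.

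Now I track $t$ with \autoref{lem:top_eat_visible}, taking $A\orth$ as the surviving context and $B$ as the cut formula; the hypothesis holds since the principal formula $\top$ of $t$ descends from $A\orth$. Writing $\Sigma$ for the part of $t$'s conclusion descending from $A\orth$ (so $\top\in\Sigma$), the lemma produces a $\top$-rule of $s''$ whose conclusion has $\Sigma$ as a sub-sequent and whose principal formula is the $\top$ of $t$ — the erasing alternative of \autoref{lem:top_eat_visible} is excluded precisely because $\top\in\Sigma$ must then be the main formula. That rule has conclusion $\vdash\top,0$ by the previous paragraph, and since its $\top$ lies on the $A\orth$-occurrence contributed by $\pi$, \autoref{prop:eqc_id} places its $0$ on the $A$-occurrence contributed by $\pi'$. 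As $\Sigma$ consists only of $A\orth$-descendants from $\pi$, it cannot contain that $0$, whence $\Sigma=\{\top\}$: the part of $\Gamma_t$ descending from $A\orth$ is empty, so $\Gamma_t$ consists entirely of sub-formulas of $B$.

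It remains to prove that this $B$-part equals a single $0$, and this is the step I expect to be the main obstacle: a single cut-composition only constrains the side it keeps, and here $\Gamma_t$ sits on the side cut away, so \autoref{lem:top_eat_visible} applied to $t$ says nothing about it. To close the gap I would run the complementary composition $\cutf{\pi'}{\pi}{A}$ over $A$, which keeps the $B$-side and reduces to $\id{B}$, and apply \autoref{lem:top_eat_visible} to the rules decomposing the $B$-descendants of $\Gamma_t$, whose principal formulas now lie on the kept side. These $B$-descendants are introduced in $\pi$ as context of the $\top$-rule $t$; but in a normal form $\eqc\id{B}$ every $\top$-rule is $\vdash\top,0$ (\autoref{prop:eqc_id}) and the slices of $\id{B}$ are the rigid ones of \autoref{lem:id_slice}, so no genuine connective or atom of $B$ can there be introduced as $\top$-context. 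Using in addition that $0$ carries no rule (\autoref{rem:comm_lacking}) and hence can be neither created nor erased except by $\top$-absorption, this forces $\Gamma_t$ to reduce to a single $0$. Hence $t$ has conclusion $\vdash\top,0$, i.e.\ a $\top/0$-pattern, and by the symmetry noted at the outset the same holds for every $\top$-rule of $\pi$ and of $\pi'$.
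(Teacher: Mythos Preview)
Your first composition (over $B$) is essentially the paper's argument and correctly yields that the $A^\perp$-part of $t$'s conclusion is just the principal $\top$, so the remaining context $\Gamma_t$ consists of $B$-sub-formulas only. The problem is the second half.

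You propose to ``apply \autoref{lem:top_eat_visible} to the rules decomposing the $B$-descendants of $\Gamma_t$'', but there are no such rules: $t$ is a $\top$-rule, hence a leaf of the proof tree, and every formula in $\Gamma_t$ is introduced purely as \emph{context} of $t$, never as the principal formula of any rule above. Your subsequent observations---that in a normal form $\eqc\id{B}$ all $\top$-contexts are $\{0\}$, and that $0$ carries no rule---are correct facts about the target but supply no mechanism carrying $\Gamma_t$ from $\pi$ into a $\top$-context of the normal form. That transport is precisely what is missing.

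The paper fills this by applying \autoref{lem:top_eat_visible} to $t$ \emph{itself} in the composition over $A$. Now the kept side is $B$ and the principal $\top$ of $t$ lies on the cut side $A^\perp$; hence $t$ cannot survive into the normal form $\tau$ (whose sequents contain only sub-formulas of $B$ and $B^\perp$), so the only alternative is absorption: $\Gamma_t$ lands as a sub-sequent of the context of some $\top$-rule of $\tau$, whose principal $\top$ is not in $\Gamma_t$. Since that rule is $\vdash\top,0$ by \autoref{prop:eqc_id}, one gets $\Gamma_t\subseteq\{0\}$.

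You also omit the step $\Gamma_t\neq\emptyset$, without which $\Gamma_t\subseteq\{0\}$ does not give the pattern. The paper extracts this from the \emph{first} composition: having established there that $t$ is not absorbed, if $\Gamma_t$ were empty then $t$'s conclusion $\vdash\top$ contains no $B$-sub-formula, so no cut over $B$ can ever reach or modify $t$, and $t$ would appear unchanged as $\vdash\top$ in $s''$---contradicting the $\vdash\top,0$ shape of $\top$-rules in $\rho$.
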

\begin{proof}
Consider $t$ a $\top$-rule {\Rtop{$\Gamma,\Delta$}\DisplayProof} in $\pi$, with $\Gamma$ occurrences of sub-formulas of $A\orth$ and $\Delta$ of $B$ (as the cut-free $\pi$ has for conclusion sequent $\fCenter A\orth, B$, see \autoref{rem:sub-sequent_property}).
By symmetry, say the main $\top$-formula of $t$ belongs to $\Gamma$, \ie\ is a sub-formula of $A\orth$.
Call $s$ a slice the $\top$-rule $t$ belongs to.

Set $\rho$ a normal form of $\cutf{\pi}{\pi'}{B}$.
By \autoref{lem:id_match_pt}, there exists a slice $s'\in\Slices(\pi')$ such that $\cutf{s}{s'}{B}$ reduces to a slice $s''\in\Slices(\rho)$.
By \autoref{lem:top_eat_visible} applied to $t$, $t$ is either preserved and $\Gamma$ as well, or $t$ is absorbed by another $\top$-rule (during a $\top-cut$ commutative case) and $\Gamma$ stays in the context of a $\top$-rule, possibly as a sub-sequent.
But by \autoref{th:eqbn_eqc}, $\rho\eqc\id{A}$ and, by \autoref{prop:eqc_id}, the only $\top$-rules of $\rho$, and so of $s''$, are {\Rtop{$\top,0$}\DisplayProof} rules, with $\top$ being the dual occurrence of $0$.
Thus, $\top$ and $0$ are not both sub-formula of $A\orth$, and it follows $\Gamma$ is either a sub-sequent of $\top$ or one of $0$.
For $\Gamma$ contains $\top$, we conclude that $\Gamma$ is $\top$.
Moreover, it follows that $t$ had not been erased during a $\top-cut$ commutative case, using \autoref{lem:top_eat_visible} (otherwise there would be at least two $\top$-formulas in the resulting $\top$-rule).
This implies that $\Delta$ cannot be empty: if it were, $t$ could not commute with any $cut$-rule, as it could not do a $\top-cut$ commutative case for it does not have a cut formula, which is a sub-formula of $B$, in its context, and no other cut-elimination step can change this.
Thus, if $\Delta$ were empty then $t$ would be a rule of $s''$, impossible as it is a {\Rtop{$\top$}\DisplayProof} rule.

Similarly, set $\tau$ a normal form of $\cutf{\pi}{\pi'}{A}$.
By \autoref{lem:id_match_pt} and \autoref{lem:top_eat_visible}, $t$ is either preserved during the reduction (in a slice) and $\Delta$ as well, or $t$ is absorbed by another $\top$-rule and $\Delta$ stays in the context of a $\top$-rule, possibly as a sub-sequent.
But $t$ cannot be preserved: its main formula is a sub-formula of $A\orth$, and in $\tau$ there are only occurrences of sub-formulas of $B$ and $B\orth$.
Therefore, $\Delta$ in $\tau$ is in the context of a $\top$-rule, and does not contain its principal $\top$-formula.
But by \autoref{th:eqbn_eqc} and \autoref{prop:eqc_id}, the only $\top$-rules of $\tau$ are {\Rtop{$\top,0$}\DisplayProof} rules, with $\top$ being the dual occurrence of $0$.
Hence, $\Delta$ must be a sub-sequent of $0$.
As $\Delta$ cannot be empty, $\Delta$ is $0$.

Thus, any $\top$-rule $t$ in $\pi$ (and $\pi'$ by symmetry) is of the shape {\Rtop{$\top,0$}\DisplayProof}.
\end{proof}

\begin{lem}
\label{lem:top_0_strategy}
Let $\pi$ be a proof whose $\top$-rules are all in a $\top/0$-pattern.
There is a cut-elimination strategy in $\pi$ whose $\top-cut$ commutative cases are all of the form:
\begin{center}
\Rtop{$\top,0$}
\Rtop{$\top,0$}
\Rcut{$\top,0$}
\DisplayProof
$\betato$
\Rtop{$\top,0$}
\DisplayProof
\end{center}
\end{lem}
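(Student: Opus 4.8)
The plan is to exhibit one particular scheduling of $\betato$ and to maintain throughout the reduction the \emph{invariant} that every $\top$-rule is a $\top/0$-pattern $\vdash \top, 0$. First I would check which reduction steps can break this invariant. A key case, or a commutative case not involving a $\top$-rule, never alters a $\top$-rule (the $\with-cut$ case may duplicate a whole sub-proof, but the copies of a $\top/0$-pattern are again $\top/0$-patterns). The only step that can enlarge the conclusion of a $\top$-rule is the $\top-cut$ commutative case, which takes a $\top$-rule $\vdash \top, 0$ whose context occurrence $0$ is cut against a proof of $\vdash \top, \Delta$ and produces a $\top$-rule of conclusion $\vdash \top, \Delta$. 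Hence preserving the invariant amounts to firing this step only when $\Delta$ has reduced to the single formula $0$, i.e.\ when the opposite premise is the bare $\top$-rule $\vdash \top, 0$; and in exactly that case the step has the required form $\vdash \top, 0$ against $\vdash \top, 0$ giving $\vdash \top, 0$. So it suffices to design a strategy under which every $\top-cut$ commutative case is performed between two bare $\top$-rules.

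The strategy I would use is \emph{lazy} on $\top-cut$ redexes. Whenever the proof contains a $\top$-rule $t = {\vdash \top, 0}$ at the root of one premise of a cut $c$, with $c$ taken on the context occurrence $0$ of $t$, I defer $c$ and work instead on the opposite premise $\sigma$. There the cut formula is $0\orth = \top$, an occurrence of $\top$, which by the invariant is introduced inside $\sigma$ by a $\top$-rule $t'$ that is itself a $\top/0$-pattern. I push $c$ upward through $\sigma$ by commutative cases, following the thread of this $\top$-occurrence, until $c$ sits directly below $t'$; at that moment $c$ is a $\top-cut$ redex whose opposite premise is the bare rule $t'$. By first discharging, in the same lazy manner, every other cut of $\sigma$ on that thread, I arrange that when $c$ reaches $t'$ the $t$-side has likewise been reduced to the bare rule $t$, so that the step fires in the pristine form.

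The heart of the argument, and the step I expect to be the main obstacle, is to show that this upward pushing really does terminate in the configuration where both premises are bare $\top$-rules with no residual side context. The difficulty is that commuting a cut upward normally leaves side formulas behind, so a priori $\sigma$ could retain extra formulas in $\Delta$. I would control this by tracking, for each occurrence of $\top$, its \emph{companion} $0$: since every $\top$ is created together with a $0$ by a $\top/0$-pattern and the two are carried down side by side, the $\top$ threaded by $c$ is paired with a companion $0$ all the way to $t'$, and the commutative cases used to move $c$ are precisely those that keep this pair adjacent. It then remains to see that no \emph{other} formula survives on the cut thread: an occurrence of $0$ can only become a standalone cut formula if it is such a companion, because a $0$ created as the non-selected branch of an $\oplus$ is always discarded by the corresponding $\with-\oplus$ key case before it could be exposed, and in the atomic-axiom setting of \autoref{prop:eta_nom_eqbn} no $0$ comes from an axiom (so the \emph{first} standalone $0$ along any thread is necessarily a $\top$-companion). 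Formalizing this, I would phrase the strategy recursively and argue by a well-founded measure, for instance lexicographically on the number of cuts and the distance from $c$ to the introducing $\top$-rules on each side, so that every deferral strictly decreases the measure.

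Finally I would observe that the resulting sequence is a genuine $\betato$-reduction to a cut-free proof: it never blocks, since as long as a cut remains the strategy prescribes a step, and it terminates by the chosen measure together with weak normalization of $\betato$ (\autoref{cor:beta_wn}). By construction every $\top-cut$ commutative case it performs is between two bare $\top$-rules $\vdash \top, 0$ and produces $\vdash \top, 0$, which is exactly the claim.
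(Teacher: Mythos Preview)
Your invariant --- that every $\top$-rule stays of the form $\vdash\top,0$ --- is exactly the right one, and your observation that only the $\top\!-\!cut$ commutative case can break it is correct. But your strategy and its justification are considerably more complicated than necessary, and two parts do not hold up as written.

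First, the ``side context'' worry and the whole companion-$0$ tracking is a red herring. When you commute $c$ upward through rules of $\sigma$, those rules pass \emph{below} $c$; the $t$-side premise stays literally $\vdash\top,0$ the whole way, and when $c$ finally sits below $t'$ its other premise is precisely the conclusion of $t'$, which is $\vdash\top,0$ by the invariant. There is nothing to track, and your appeal to atomic axioms (which the lemma does not assume) is unneeded. Second, your termination measure fails: a $\with\!-\!cut$ commutative case duplicates the cut, so ``number of cuts'' goes up, and ``distance to the introducing $\top$-rule'' is not well-defined across that duplication. Weak normalization alone does not tell you that \emph{your} particular scheduling terminates; you would need strong normalization of $\betabarto$ (\autoref{lem:beta_wn}) together with an argument that your strategy never performs a $cut\!-\!cut$ step.

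The paper avoids all of this by a much shorter route: eagerly fire every step that is \emph{not} a $\top\!-\!cut$ or $cut\!-\!cut$ case (these trivially preserve the invariant). When no such step is available, pick a \emph{highest} remaining cut $c$. Since no cut sits above $c$, the only applicable case on $c$ is $\top\!-\!cut$, so one premise is a bare $\vdash\top,0$ with $0$ the cut formula; hence the other premise contains $\top$. Its last rule $r$ cannot be anything but a $\top$-rule (else some non-$\top\!-\!cut$ case would apply, contradiction), and $r$ cannot be a $\top$-rule on a different occurrence (its conclusion $\vdash\top,0$ would then contain two $\top$'s). So $r$ introduces the cut formula and the step has exactly the required shape. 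Termination is just \autoref{lem:beta_wn}. No pushing, no recursion, no companion tracking.
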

\begin{proof}
Our strategy is the following.
First, while we can apply a cut-elimination step which is not a $\top-cut$ nor a $cut-cut$ commutative case, we do such a reduction step.
These operations preserve that all $\top$-rules are in $\top/0$ patterns.

If no such reduction is possible, consider a highest $cut$-rule, \ie\ one with no $cut$-rule above it.
The only possible cases that can be applied using this $cut$-rule and rules above it are by hypothesis $\top-cut$ commutative cases.
Thus, the $cut$-rule is below a $\top$-rule, so necessarily one of its premises is {\Rtop{$\top,0$}\DisplayProof}, with $\top$ not the formula we cut on.
But then $0$ is the formula we cut on, so there is a $\top$-formula on the other premise; we are in the following situation:
\begin{prooftree}
\Rtop{$\top,0$}
\AxiomC{$\phi$}
\RightLabel{$r$}
\UnaryInfC{$\fCenter\top,\Gamma$}
\Rcut{$\top, \Gamma$}
\end{prooftree}
We prove the rule $r$ above the premise $\fCenter \top, \Gamma$ of the $cut$-rule is the $\top$-rule corresponding to the cut formula $\top$.
If $r$ were not a $\top$-rule, then one could apply a commutative or $ax$ key case, which cannot be.
Plus, $r$ cannot be a $\top$-rule corresponding to another $\top$-formula, because such a rule would have two $\top$-formulas in its conclusion.
Thence, $r$ is the $\top$-rule introducing the formula we cut on, and our sub-proof is:
\begin{prooftree}
\Rtop{$\top,0$}
\Rtop{$\top,0$}
\Rcut{$\top,0$}
\end{prooftree}
A $\top-cut$ commutative case yields
{\Rtop{$\top,0$}\DisplayProof}
which is the allowed $\top-cut$ reduction step.

This reduction strategy terminates (\autoref{lem:beta_wn}) and reaches a cut-free proof.
\end{proof}

\begin{lem}
\label{lem:eqct_id_with}
If $\isoproofs{A}{B}{\pi}{\pi'}$ with $A$ and $B$ distributed, then there is no sequent of the shape $\fCenter D\with E$ in $\pi$ (and $\pi'$).
\end{lem}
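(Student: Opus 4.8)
The plan is to mirror the proof of \autoref{lem:eqct_id_top}: I track a hypothetical single-formula $\with$-sequent through cut-elimination into a normal form that is $\eqc$-equal to an identity, where property~\ref{item:eqc_id:4} of \autoref{prop:eqc_id} outright forbids such sequents. Suppose for contradiction that a sequent $\fCenter D\with E$ occurs in $\pi$, introduced by a $\with$-rule $w$. Since $\pi$ is cut-free with conclusion $\fCenter A\orth,B$, the sub-sequent property (\autoref{rem:sub-sequent_property}) makes $D\with E$ a sub-formula of $A\orth$ or of $B$; by the symmetry exchanging $(A,\pi)$ with $(B,\pi')$, I may assume $D\with E$ is a sub-formula of $A\orth$. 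I pick a slice $s\in\Slices(\pi)$ containing $w$ (in which $w$ becomes a unary $\with_i$-rule whose conclusion is still the single-formula sequent $\fCenter D\with E$, as slicing does not change sequents), and I set $\rho$ a normal form of $\cutf{\pi}{\pi'}{B}$. By \autoref{lem:id_match_pt} there is a slice $s'\in\Slices(\pi')$ such that $\cutf{s}{s'}{B}$ reduces without failure to a slice $s''\in\Slices(\rho)$.

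The heart of the argument is to apply \autoref{lem:top_eat_visible} to $w$ for this composition over $B$: the surviving side is $A\orth$ and the cut side is $B$, so the conclusion of $w$ splits as $\Sigma_\Gamma=\{D\with E\}$ together with an empty $B$-part $\Sigma_A=\emptyset$, and the principal formula $D\with E$ lies in $\Sigma_\Gamma$. The crucial point is that, because $\Sigma_A$ is empty, the commutative cut-cases that \autoref{lem:top_eat_visible} allows have nothing to graft onto the conclusion of $w$ (they only alter the part coming from the cut formula $B$), so its two alternatives specialise to: either (a) $s''$ contains a $\with_i$-rule with principal formula $D\with E$ whose conclusion is \emph{exactly} $\fCenter D\with E$, or (b) $\{D\with E\}$ lies in the context of a $\top$-rule of $s''$ whose principal $\top$-formula is not $D\with E$.

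Both alternatives contradict $\rho\eqc\id{A}$ (\autoref{th:eqbn_eqc}) via \autoref{prop:eqc_id}. In case (b), property~\ref{item:eqc_id:2} forces every $\top$-rule of $\rho$, hence of its slice $s''$, to have conclusion $\fCenter\top,0$; since $D\with E\neq0$ it cannot lie in the context $\{0\}$, excluding this case. In case (a), the $\with_i$-rule of $s''$ descends from a $\with$-rule of $\rho$ with the same conclusion sequent $\fCenter D\with E$. But this sequent contains the $\with$-formula $D\with E$, so property~\ref{item:eqc_id:4} forces it to equal $\fCenter D\with E,\pcontext{E\orth\oplus D\orth}$, which has strictly more than one formula -- contradicting that it is the single-formula sequent $\fCenter D\with E$. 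Hence no such $w$ exists in $\pi$, and the symmetric treatment (composing over $A$, with the roles of $\pi$ and $\pi'$ swapped, when $D\with E$ is a sub-formula of $B$) disposes of $\pi'$ as well.

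The step I expect to be the main obstacle is precisely the claim that the conclusion of $w$ acquires no extra context during reduction, so that the surviving rule in case (a) is genuinely a single-formula $\with$-sequent and thus collides with property~\ref{item:eqc_id:4}. This rests on the $\Sigma_A=\emptyset$ refinement of \autoref{lem:top_eat_visible}: although the cut on $B$ never sits immediately below $w$ (whose context contains no $B$-sub-formula), the sub-proof above $w$ could in principle be erased wholesale by a $\top-cut$ commutative case, which is exactly alternative (b). Ruling that out is what forces me to import the shape $\fCenter\top,0$ of the $\top$-rules of $\rho$ from the earlier $\top$-pattern analysis (\autoref{lem:eqct_id_top} and property~\ref{item:eqc_id:2}) and to secure a failure-free reduction through the slice-matching of \autoref{lem:id_match_pt}.
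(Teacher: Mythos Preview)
Your proof is correct but takes a genuinely different route from the paper's. The paper does \emph{not} invoke \autoref{lem:top_eat_visible} here; instead it uses \autoref{lem:top_0_strategy} to choose a particular normal form $\rho$ of $\cutf{\pi}{\pi'}{B}$ in which every $\top\text{-}cut$ commutative step is of the trivial shape
\Rtop{$\top,0$}\Rtop{$\top,0$}\Rcut{$\top,0$}\DisplayProof$\betato$\Rtop{$\top,0$}\DisplayProof.
Since such steps cannot erase the one-formula sequent $\fCenter D\with E$, and every other reduction step clearly preserves it (the formula $D\with E$ lies in $A\orth$ and is therefore never a cut formula), the sequent survives verbatim into $s''$ and hence into $\rho$, contradicting the third bullet of \autoref{prop:eqc_id}. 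No case split is needed.

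Your approach instead mirrors the proof of \autoref{lem:eqct_id_top}: take an arbitrary normal form, apply \autoref{lem:top_eat_visible}, and rule out the $\top$-absorption branch \emph{a posteriori} using the $\top/0$-pattern of $\top$-rules in $\rho$ supplied by \autoref{prop:eqc_id}. The crux of your argument is the refinement $\Sigma_A=\emptyset\Rightarrow\Theta=\emptyset$, which is indeed valid (since no cut formula---always an occurrence in $B$ or $B\orth$---can equal the single occurrence $D\with E\in A\orth$, the cut is never immediately below $w$ and so no commutative step ever touches $w$'s conclusion). This refinement is implicit in the proof of \autoref{lem:top_eat_visible} but is not part of its statement, so you are right to flag it as the delicate step. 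What you gain is uniformity with \autoref{lem:eqct_id_top} and independence from \autoref{lem:top_0_strategy}; what the paper gains is a shorter argument that sidesteps the case split entirely by controlling the reduction strategy up front. Note also that for case~(b) you only need \autoref{prop:eqc_id} applied to $\rho$; the separate citation of \autoref{lem:eqct_id_top} is unnecessary.
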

\begin{proof}
Assume \wolog\ $D\with E$ is a sub-formula of $A\orth$, and let $s$ be a slice containing the sequent $\fCenter D\with E$.
Pose $\rho$ a normal form of $\cutf{\pi}{\pi'}{B}$ obtained by following the strategy given by \autoref{lem:top_0_strategy}.
By \autoref{lem:id_match_pt}, there exists a slice $s'\in\Slices(\pi')$ such that $\cutf{s}{s'}{B}$ reduces to a slice $s''$ of $\rho$.
Since $D\with E$ is a sub-formula of $A\orth$, it is not a cut formula during the reduction, and the sequent $\fCenter D\with E$ remains in $s''$, so in $\rho$.
Indeed, reducing cuts using these steps preserves having the sequent $\fCenter D\with E$ (as there is no failure in the reduction).
This is trivial for all steps except $\top-cut$.
In the case a $\top-cut$ step, by hypothesis on the reduction strategy, it cannot erase the sequent $\fCenter D\with E$ from the proof.
Thus, the sequent $\fCenter D\with E$ belongs to $\rho$, which is equal to the identity up to rule commutations (\autoref{th:eqbn_eqc}).
This is impossible by \autoref{prop:eqc_id}.
\end{proof}

\begin{lem}
\label{lem:eqct_id_bot}
If $\isoproofs{A}{B}{\pi}{\pi'}$ with $A$ and $B$ distributed, then all $\bot$-rules and $1$-rules belong to $1/\oplus/\bot$-patterns.
\end{lem}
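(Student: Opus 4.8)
The plan is to mirror the structure of the proof of \autoref{lem:eqct_id_top}, while compensating for a crucial asymmetry: a $\bot$-rule, unlike a $\top$-rule, can be genuinely destroyed during cut-elimination (by a $\bot-1$ key case when its $\bot$ is a cut formula), so \autoref{lem:top_eat_visible} will be usable in only one of the two reductions. Fix a $\bot$-rule $b$ in $\pi$; by the symmetry between $(\pi,A\orth,B)$ and $(\pi',B\orth,A)$ assume its principal $\bot$ is a sub-formula of $A\orth$, and write its conclusion as $\fCenter\bot,\Gamma_{A\orth},\Gamma_B$ with $\Gamma_{A\orth}$ occurrences in $A\orth$ and $\Gamma_B$ in $B$ (\autoref{rem:sub-sequent_property}). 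Pick a slice $s$ containing $b$.

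First I would treat the $A\orth$-side of the context exactly as in \autoref{lem:eqct_id_top}. Reducing over $B$, the formula $\bot\in A\orth$ is never a cut formula, hence $b$ is never the active rule of a $\bot-1$ key case; \autoref{lem:id_match_pt} gives a slice $s'$ of $\pi'$ with $\cutf{s}{s'}{B}$ reducing without failure to a slice $s''$ of a normal form $\rho$ of $\cutf{\pi}{\pi'}{B}$, and \autoref{th:eqbn_eqc} gives $\rho\eqc\id A$. Since $\bot$ is not a sub-formula of the cut formula $B$, \autoref{lem:top_eat_visible} applies to $b$: either $b$ survives as a $\bot$-rule of $s''$ with the same principal formula, or $\Gamma_{A\orth}\cup\{\bot\}$ sits in the context of a $\top$-rule of $s''$. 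By \autoref{prop:eqc_id} the $\top$-rules of $\rho$ are in $\top/0$-patterns (conclusion $\fCenter\top,0$) and its $\bot$-rules have conclusion $\fCenter\bot,\pcontext{1}$ with $\pcontext{1}$ a sub-formula of $A$; the second alternative is thus impossible (as $\bot\notin\{\top,0\}$), and the first forces $\Gamma_{A\orth}=\emptyset$, so $b$'s context lies entirely in $B$.

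The hard part is the $B$-side: showing that $\Gamma_B$ is a single formula $\pcontext{1}$ and that the sub-proof $\sigma$ directly above $b$ is a chain of $\oplus$-rules topped by a single $1$-rule. Here I would reduce over $A$, but now $\bot\in A\orth$ is a cut formula and $b$ can be erased, so \autoref{lem:top_eat_visible} no longer applies to $b$ itself. The key idea is to run this reduction with the $\top/0$-respecting strategy of \autoref{lem:top_0_strategy} (available because all $\top$-rules of $\pi,\pi'$ are in $\top/0$-patterns by \autoref{lem:eqct_id_top}): then $b$ cannot be absorbed by a $\top$-rule, and since it carries an $A\orth$-formula it cannot survive into a normal form $\tau\eqc\id B$ of $\cutf{\pi}{\pi'}{A}$, so it must disappear through a single $\bot-1$ key case against the dual $1\in A$ provided by $\pi'$. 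Crucially, all rules of $\sigma$ have principal formulas in $B$, hence are never cut formulas of the reduction over $A$, and are preserved intact into $\tau$ (again without failure, by the slice chosen via \autoref{lem:id_match_pt} and the tracking of \autoref{lem:cut_slice1}). Applying \autoref{prop:eqc_id} to $\tau\eqc\id B$, the region into which $\sigma$ maps can contain no $ax$-, $\tens$-, $\parr$-, $\with$- or $\top$-rule (each would respectively create an axiom, a $\tens$/$\parr$/$\with$ connective, or a $\top/0$-pattern inside what must be the $\oplus$-chain of a $1/\oplus/\bot$-pattern, contradicting \autoref{prop:eqc_id} or \autoref{lem:eqct_id_with}). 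As the surviving rules are then all unary, the whole of $\sigma$ collapses to a single chain: one $1$-rule followed by $\oplus_i$-rules, and $\Gamma_B$ reduces to the single formula $\pcontext{1}$. Thus $b$ together with $\sigma$ is a $1/\oplus/\bot$-pattern.

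Finally I would deduce the statement for $1$-rules. The $1$-rule capping each $b$'s pattern is already a pattern-top, so it remains to see that every $1$-rule arises this way. I would argue symmetrically: a $1$-rule $u$ (conclusion $\fCenter 1$, say with $1\in B$) has its principal formula outside the cut formula $A$, so \autoref{lem:top_eat_visible} shows it survives into $\tau\eqc\id B$ as a pattern-top, whence by \autoref{prop:eqc_id} the rules directly below it on the preserved $B$-side form an $\oplus$-chain; the delicate step is to certify that the rule closing this chain inside $\pi$ is really a $\bot$-rule, which I expect to obtain by matching $u$ with the $\bot$-rule it meets in the $\bot-1$ key case of the reduction over $B$, yielding a bijection between $1$- and $\bot$-rules realized by the patterns. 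I expect the main obstacle throughout to be exactly this destructibility of $\bot$- and $1$-rules: it breaks the clean two-sided use of \autoref{lem:top_eat_visible} available for $\top$, forcing the combination of the $\top/0$-respecting strategy of \autoref{lem:top_0_strategy} with the preservation of the non-cut side to reconstruct the pattern, together with the bookkeeping needed to match every $1$-rule to a $\bot$-rule.
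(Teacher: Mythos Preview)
Your approach is genuinely different from the paper's, and the core of your second step contains a real gap.

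The paper works from the opposite end: it starts with $1$-rules, not $\bot$-rules. A $1$-rule has conclusion $\fCenter 1$, and the paper analyzes \emph{directly in $\pi$} which rule can sit below any sequent of the form $\fCenter\pcontext{1}$: a $\tens$ is excluded by distributivity (it would build $C\tens\pcontext{1}$ with $\pcontext{1}$ a $1$- or $\oplus$-formula), a $\parr$ is impossible (only one formula present), and a $\with$ is excluded by \autoref{lem:eqct_id_with} (the conclusion would be $\fCenter D\with E$). Only $\oplus_i$ or $\bot$ remain, so every $1$-rule heads a $1/\oplus/\bot$-pattern (the terminal $\bot$ must exist since the final sequent has two formulas), and every slice has $\#\bot\geq\#1$. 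A slice-level counting argument closes the $\bot$ side: reducing $\cutf{s}{s'}{B}$ with the strategy of \autoref{lem:top_0_strategy}, only $\bot$--$1$ key cases change these counts, each removing one of each, and the resulting slice of $\rho\eqc\id A$ has $\#\bot=\#1$ by \autoref{prop:eqc_id}; hence $\#\bot=\#1$ already in $s$, and any $\bot$-rule outside a pattern would violate this. No sub-proof is ever tracked through cut-elimination.

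Your first step ($\Gamma_{A\orth}=\emptyset$ via the reduction over $B$) is correct. The gap is in constraining $\Gamma_B$ and $\sigma$. You assert that in $\tau\eqc\id B$ ``the region into which $\sigma$ maps'' is the $\oplus$-chain of a $1/\oplus/\bot$-pattern, and then invoke \autoref{prop:eqc_id} to exclude $ax$-, $\tens$-, $\parr$-, $\with$-, $\top$-rules from $\sigma$. This is circular: \autoref{prop:eqc_id} only constrains sequents of $\tau$ that contain a $\top$, a $\bot$, or a $\with$ formula; it is silent on an arbitrary sequent $\fCenter\Gamma_B$ and on the sub-proof above it. You have not established that $\sigma$'s conclusion is a single $\pcontext{1}$-formula, nor that what lies below $\sigma$ in $\tau$ is a $\bot$-rule --- that is precisely the goal. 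Concretely, nothing in your argument rules out $\Gamma_B=X,X\orth$ with $\sigma$ a lone $ax$-rule: such a $\sigma$ is untouched by the reduction over $A$, and its conclusion $\fCenter X,X\orth$ is a sequent of $\tau$ on which \autoref{prop:eqc_id} says nothing. The paper sidesteps this by never tracking $\sigma$: it constrains the rules adjacent to a $1$-rule purely from the shape of $\fCenter\pcontext{1}$ and distributivity, and then handles $\bot$-rules by counting rather than by per-rule tracking.
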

\begin{proof}
In $\pi$, we look at a possible rule $r$ below a sequent $\fCenter \pcontext{1}$ (\autoref{def:oplus_grammar}).
It cannot be a $\tens$-rule by distributivity, nor a $\parr$-rule for the sequent has a unique formula, nor a $\with$-rule due to \autoref{lem:eqct_id_with}.
If $r$ is a $\oplus_i$-rule, then we keep a sequent $\fCenter \pcontext{1}$, and if it is a $\bot$-rule then it is one of the required shape.

As a consequence, each $1$-rule is followed by some $\oplus_i$-rules and possibly a $\bot$-rule; let us call a \emph{$1/\oplus$-pattern} a $1$-rule followed by a maximal such sequence of $\oplus_i$-rules.
If a $1/\oplus$-pattern stops without a $\bot$-rule below it, we have only one formula in the conclusion sequent of the proof: impossible as $\pi$ is a proof of $\fCenter A\orth,B$.
Thus, the $\bot$-rule exists and to each $1$-rule we can associate a $\bot$-rule leading to a $1/\oplus/\bot$-pattern.
Henceforth, there are at least as many $\bot$-rules as $1$-rules, and as the patterns they belong to have no $\with$-rule, this also holds in any slice.

Consider a slice $s$ of $\pi$.
By \autoref{lem:id_match_pt}, there exists a slice $s'\in\Slices(\pi')$ such that $\cutf{s}{s'}{B}$ reduces to a slice $s''$ of $\rho$, the latter being a normal form of $\cutf{\pi}{\pi'}{B}$ obtained by following the strategy given by \autoref{lem:top_0_strategy}.
Moreover, $s''$ contains as many $\bot$-rules as $1$-rules, as in $\rho$ they belong to a $1/\oplus/\bot$-pattern (\autoref{th:eqbn_eqc} and \autoref{prop:eqc_id}), so are in the same slices.
Furthermore, in $s''$, each $1$ from $A$ (\resp\ $A\orth$) corresponds to a $\bot$ from $A\orth$ (\resp\ $A$).

Remark that, in the reduction $\cutf{s}{s'}{B}\betatostar s''$, the only steps that may erase a $\bot$ or $1$-rule are $\bot-1$ and $\top-cut$ cases.
But a $\top-cut$ commutative case cannot erase non-$\top$-rules by definition of our cut-elimination strategy.
Furthermore, a $\bot-1$ key case erases one $1$-rule and one $\bot$-rule.
Therefore, with $r_s$ the number of $r$-rules of a slice $s$, we have $\bot_s + \bot_{s'} = 1_s + 1_{s'}$ as $\bot_{s''} = 1_{s''}$ and any reduction step in $\cutf{s}{s'}{B}\betatostar s''$ preserves this equality.
But, by our analysis at the beginning of this proof, $1_s \leq \bot_s$ and $1_{s'} \leq \bot_{s'}$.
We conclude $1_s = \bot_s$, \ie\ that $s$ has as many $\bot$-rules than $1$-rules.

However, each $1$-rule, being in a $1/\oplus/\bot$-pattern, belongs to exactly the same slices as the corresponding $\bot$-rule of the pattern.
Hence, a $\bot$-rule not in a $1/\oplus/\bot$-pattern would yield a slice $s$ with strictly more $\bot$-rules than $1$-rules (taking $s$ any slice containing this $\bot$-rule).
Thus, every $\bot$-rule belongs to a $1/\oplus/\bot$-pattern.
\end{proof}

\subsection{Completeness with units from unit-free completeness}
\label{subsec:compl_units}

In the $1/\oplus/\bot$-patterns, moving each $\bot$-rule up to the associated $1$-rule (which can be done up to $\beta$-equality by \autoref{prop:eqc_in_eqbn}) allows us to consider units as fresh atoms introduced by $ax$-rules, whence we reduce the problem to the unit-free fragment.

\begin{thm}[Isomorphisms completeness from unit-free completeness]
\label{th:iso_complet_mall}
If $\AC$ is complete for isomorphisms in unit-free distributed MALL, then $\AC$ is complete for isomorphisms in distributed MALL (\ie\ $A\iso B \implies A\eqAC B$ for all distributed $A$ and $B$).
\end{thm}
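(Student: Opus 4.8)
The plan is to witness the isomorphism by cut-free proofs, use the pattern lemmas to locate every unit rule, and then \emph{substitute} a fresh atom for each unit pattern, turning the problem into one about unit-free distributed formulas, where $\AC$ is assumed complete. First I would use \autoref{lem:iso_isoproofs} to get cut-free proofs $\pi$ of $\fCenter A\orth,B$ and $\pi'$ of $\fCenter B\orth,A$ with $\isoproofs{A}{B}{\pi}{\pi'}$ (expanding axioms first, so both are atomic). By \autoref{lem:eqct_id_top}, \autoref{lem:eqct_id_with} and \autoref{lem:eqct_id_bot}, every $\top$-rule lies in a $\top/0$-pattern, every $1$- and $\bot$-rule in a $1/\oplus/\bot$-pattern, and no sequent has the shape $\fCenter D\with E$. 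Commuting each $\bot$-rule up to its $1$-rule through $\oplus_i-\bot$ commutations replaces every $1/\oplus/\bot$-pattern by a $1/\bot$-pattern; by \autoref{prop:eqc_in_eqbn} this only modifies $\pi,\pi'$ up to $\eqb$ and keeps them cut-free, so $\isoproofs{A}{B}{\pi}{\pi'}$ persists.

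Reading off the proofs of the pattern lemmas, each $\top/0$-pattern of $\pi$ has one formula inside $A\orth$ and the other inside $B$, hence it \emph{matches} a unit of $A$ with a unit of the same kind in $B$ (a $0$ with a $0$, a $\top$ with a $\top$); likewise each $1/\bot$-pattern matches a $1$, resp.\ a $\bot$, of $A$ with one of $B$. This gives a type-preserving matching between the units of $A$ and those of $B$. I then let $\sigma$ replace, for each matched pair, both units by a single fresh atom of the right polarity ($0,1\mapsto X$ and $\top,\bot\mapsto X\orth$), producing unit-free formulas $A_0,B_0$ and proofs $\pi_0=\sigma\pi$, $\pi_0'=\sigma\pi'$ in which every former pattern has become an $ax$-rule. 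As $\sigma$ only relabels leaves, $A_0$ and $B_0$ are again distributed.

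The crux is to check $\isoproofs{A_0}{B_0}{\pi_0}{\pi_0'}$, i.e.\ that $\sigma$ preserves being an isomorphism; this is subtle because a unit rule eliminates by erasing ($\top-cut$ or $\bot-1$) whereas an $ax$-rule eliminates by substitution. The decisive observation is that inside the composite $\cutf{\pi}{\pi'}{B}$ every unit pattern is cut only against the dual unit pattern furnished by the other proof, so that a $\top/0$-against-$\top/0$ merge (taken along the strategy of \autoref{lem:top_0_strategy}) and a $1/\bot$-against-$1/\bot$ merge mirror exactly the $ax$-against-$ax$ key case after $\sigma$. Thus a reduction $\cutf{\pi}{\pi'}{B}\betatostar\rho$ with $\rho\eqc\id{A}$ (from \autoref{th:eqbn_eqc} and \autoref{prop:eqc_id}) lifts through $\sigma$ to $\cutf{\pi_0}{\pi_0'}{B_0}\betatostar\sigma(\rho)$ with $\sigma(\rho)\eqc\id{A_0}=\sigma(\id{A})$; moreover, inspecting the pattern that each such merge creates in $\rho$ and comparing it with the patterns of $\id{A}$ forces the matching induced by $\pi$ and that induced by $\pi'$ to coincide, which is precisely what makes $\pi_0'$ well-formed (its axioms carry matching atoms). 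Making this lifting fully precise, in particular keeping the $\bot$-patterns synchronised with rule commutations, is the main technical obstacle. It yields $\cutf{\pi_0}{\pi_0'}{B_0}\eqbn\ax{A_0}$ and, symmetrically, the other composite, so $A_0\iso B_0$.

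Finally I apply the hypothesis (unit-free completeness) to obtain $A_0\eqAC B_0$ and transfer this equality back along $\sigma^{-1}$: the equations of $\AC$ are pure associativity and commutativity and are blind to whether a leaf is a unit or an atom, so any derivation witnessing $A_0\eqAC B_0$ rewrites verbatim into one witnessing $A\eqAC B$ once each fresh atom is turned back into the unit it replaced (this is unambiguous since the matching is type-preserving). Hence $A\iso B\implies A\eqAC B$ for all distributed $A,B$.
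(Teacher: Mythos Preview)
Your overall architecture is close to the paper's, but you introduce an unnecessary complication that creates a real gap. You propose a \emph{per-pattern} substitution: for each $\top/0$- or $1/\bot$-pattern you match one unit occurrence of $A$ with one of $B$ and replace that specific pair by its own fresh atom. This forces you to argue that (i) the matching is well-defined (the same unit occurrence of $A\orth$ may be introduced by several $\top$- or $\bot$-rules in different slices of $\pi$, and nothing in Lemmas~\ref{lem:eqct_id_top}--\ref{lem:eqct_id_bot} says they all pair it with the \emph{same} occurrence in $B$), (ii) it is a bijection, and (iii) the matchings coming from $\pi$ and from $\pi'$ coincide so that $\sigma\pi'$ is even a proof. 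You flag this as ``the main technical obstacle'' and gesture at comparing patterns in $\rho$ with those of $\id{A}$, but that is exactly the hard part and you do not carry it out; it is essentially an \unique ness statement at the level of unit occurrences, for which no lemma in the paper is available in sequent calculus.

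The paper avoids all of this with a much simpler \emph{uniform} substitution: a single $\sigma$ sending every $\top\mapsto X\orth$, $0\mapsto X$, $\bot\mapsto Y\orth$, $1\mapsto Y$ for two fresh atoms $X,Y$. Then each $\top/0$- or $1/\bot$-pattern becomes an $ax$-rule \emph{regardless of which occurrences are involved}, so no matching is needed. The paper then runs a specific cut-elimination strategy on $\cutf{\pi'}{\tau'}{B}$ that keeps all unit rules inside their patterns, checks that every unit step it performs (a $\top/0$-against-$\top/0$ or $1/\bot$-against-$1/\bot$ merge) is simulated by an $ax$ key case after $\sigma$, and finally undoes $\sigma$ by the honest atom substitution $X\mapsto 0$, $Y\mapsto 1$. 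Replacing your occurrence-level $\sigma$ by this uniform one eliminates the gap entirely and makes paragraphs two and three of your proposal unnecessary.
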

\begin{proof}
Take $A$ and $B$ distributed formulas in MALL (possibly with units) such that $A\iso B$.
We want $A\eqAC B$.
Using \autoref{lem:iso_isoproofs}, we have $\isoproofs{A}{B}{\pi}{\tau}$.
By Lemmas~\ref{lem:eqct_id_top} and~\ref{lem:eqct_id_bot}, $\pi$ and $\tau$ have $\top$-rules only of the shape {\Rtop{$\top,0$}\DisplayProof} and $\bot$- and $1$-rules in $1/\oplus/\bot$-patterns.
Using $\bot$-commutations to move each $\bot$-rule just below the $1$-rule above it, we build $\pi'$ and $\tau'$ such that $\pi'$ and $\tau'$ have $\top$-rules only of the shape {\Rtop{$\top,0$}\DisplayProof}, $\bot$ and $1$-rules of the form {\Rone\Rbot{$\bot,1$}\DisplayProof}, $\pi\eqc\pi'$ and $\tau\eqc\tau'$.
Whence, $\cutf{\pi'}{\tau'}{B}\eqc\cutf{\pi}{\tau}{B}$ and $\cutf{\pi'}{\tau'}{A}\eqc\cutf{\pi}{\tau}{A}$.
By \autoref{prop:eqc_in_eqbn} and \autoref{th:eqbn_eqc}, for any normal form $\rho$ of $\cutf{\pi'}{\tau'}{B}$ (\resp\ $\cutf{\pi'}{\tau'}{A}$), $\rho\eqc\id{A}$ (\resp\ $\rho\eqc\id{B}$).

We reduce cuts in $\cutf{\pi'}{\tau'}{B}$ (and similarly in $\cutf{\pi'}{\tau'}{A}$) following a particular strategy, ensuring that the proofs obtained during the reduction have $\top$-rules only of the shape {\Rtop{$\top,0$}\DisplayProof}, and $\bot$ and $1$-rules of the form {\Rone\Rbot{$\bot,1$}\DisplayProof}.

First, while we can apply a step of cut-elimination which is not a $\top-cut$, $\bot-cut$, $\bot-1$ or $cut-cut$ case, we do such a reduction step.
These operations preserve that all $\top$-rules are applied on sequents $\fCenter \top, 0$ (up to exchange) and $\bot$- and $1$-rules are in the wished pattern.

If no such reduction is possible, consider a highest $cut$-rule, \ie\ one with no $cut$-rule above it.
The only possible cases that can be applied are $\top-cut$, $\bot-cut$ or $\bot-1$.
\begin{itemize}
\item If a $\top-cut$ commutative case can be applied, then the $cut$-rule is below a $\top$-rule, so necessarily one of its premises is {\Rtop{$\top,0$}\DisplayProof}, with $\top$ not the formula we cut on.
But then $0$ is the formula we cut on, so there is a $\top$-formula on the other premise; we are in the following situation:
\begin{prooftree}
\Rtop{$\top,0$}
\AxiomC{$\phi$}
\RightLabel{$r$}
\UnaryInfC{$\top, \Gamma$}
\Rcut{$\top,\Gamma$}
\end{prooftree}
We prove the rule $r$ above the premise $\fCenter \top,\Gamma$ of the $cut$-rule is the $\top$-rule corresponding to the cut formula $\top$.
If it were not the case, then $r$ commutes with the $cut$-rule.
But $r$ cannot be a $\bot$-rule (which cannot have a $\top$-formula in its context), nor a $\top$-rule corresponding to another $\top$-formula (because such a rule would have two $\top$-formulas in its conclusion).
Thence, $r$ is the $\top$-rule introducing the formula we cut on.
Thus, our sub-proof is:
\begin{prooftree}
\Rtop{$\top,0$}
\Rtop{$\top,0$}
\Rcut{$\top,0$}
\end{prooftree}
A $\top-cut$ commutative case yields
{\Rtop{$\top,0$}\DisplayProof}
as if we had done an $ax$ key case.
\item If a $\bot-cut$ commutative case can be applied, then the $cut$-rule is below a $\bot$-rule, so necessarily one of its premises is \raisebox{0pt}[7mm][0mm]{} {\Rone\Rbot{$\bot,1$}\DisplayProof}, with $\bot$ not the formula we cut on.
But then $1$ is the formula we cut on, so there is a $\bot$-formula on the other premise; we are in the following situation:
\begin{prooftree}
\Rone
\Rbot{$\bot,1$}
\AxiomC{$\phi$}
\RightLabel{$r$}
\UnaryInfC{$\bot, \Gamma$}
\Rcut{$\bot,\Gamma$}
\end{prooftree}
We prove the rule $r$ above the premise $\fCenter \bot,\Gamma$ of the $cut$-rule is the $\bot$-rule corresponding to the cut formula $\bot$.
If it were not the case, then $r$ commutes with the $cut$-rule.
But $r$ cannot be a $\top$-rule (which cannot have a $\bot$-formula in its context), nor a $\bot$-rule corresponding to another $\bot$-formula (because such a rule would have two $\bot$-formulas in its conclusion).
Thence, $r$ is the $\bot$-rule introducing the formula we cut on.
Thus, our sub-proof is:
\begin{prooftree}
\Rone
\Rbot{$\bot,1$}
\Rone
\Rbot{$\bot,1$}
\Rcut{$\bot,1$}
\end{prooftree}
We apply a $\bot-cut$ commutative case, followed by a $\bot-1$ key case, obtaining
{\Rone\Rbot{$\bot,1$}\DisplayProof}
as if we had done an $ax$ key case.
\item No $\bot-1$ key case can be applied, for $1$-rules have below them a $\bot$-rule, so not a $cut$-rule.
\end{itemize}

This strategy of reduction allows reaching a normal form $\rho$, with $\top$-rules only of the shape {\Rtop{$\top,0$}\DisplayProof}, and $\bot$ and $1$-rules of the form {\Rone\Rbot{$\bot,1$}\DisplayProof} (\autoref{lem:beta_wn}).

Furthermore, call $\sigma$ the substitution replacing $\top$-, $0$-, $\bot$- and $1$-formulas respectively by $X\orth$, $X$, $Y\orth$ and $Y$, for $X$ and $Y$ fresh atoms.
We can reach $\sigma(\rho)$ by cut-elimination from $\cutf{\sigma(\pi')}{\sigma(\tau')}{\sigma(B)}$, for the reductions we did on units could as well have been done by $ax$ key cases: no $\bot-cut$, nor $\bot-1$, nor $\top-cut$ case was used, except for cases that could be simulated using $ax$ key cases.
Moreover, $\sigma(\id{A})=\id{\sigma(A)}$, and in $\rho\eqc\id{A}$ we can assume not to commute any $\bot$-rule because we start and end with $1$-rules and $\bot$-rules in {\Rone\Rbot{$\bot,1$}\DisplayProof} shapes only, and such commutations could only move the $\bot$-rule below or above some $\oplus_i$-rules according to \autoref{prop:eqc_id}.
Thus, $\sigma(\rho)\eqc\id{\sigma(A)}$.
Using \autoref{prop:eqc_in_eqbn}, it follows $\sigma(\rho)\eqb\id{\sigma(A)}$, and therefore $\cutf{\sigma(\pi')}{\sigma(\tau')}{\sigma(B)}\eqbn\ax{\sigma(A)}$ with $\ax{\sigma(A)}$ the $ax$-rule on $\sigma(A)$.
A similar result holding for a cut over $A$, we have $\sigma(A)\iso\sigma(B)$, these formulas being unit-free and distributed.
For we assume $\AC$ to be complete for unit-free distributed isomorphisms, this yields $\sigma(A)\eqAC\sigma(B)$.
We conclude $A\eqAC B$ by substituting $X$ by $0$ and $Y$ by $1$, as $X$ and $Y$ were fresh.
\end{proof}


\section{Proof-nets for unit-free MALL}
\label{subsec:mall-pn}

We will at present change the syntax, no longer considering sequent calculus proofs but \emph{proof-nets}, choosing the syntax from Hughes \& Van~Glabbeek for unit-free MALL~\cite{mallpnlong}.
A key property of proof-nets is to be a more canonical representation of proofs, as they define a quotient of sequent calculus proofs up to rule commutations~\cite{mallpncom} (we recall Tables~\ref{tab:rule_comm} and~\ref{tab:rule_comm_u} on Pages~\pageref{tab:rule_comm} and~\pageref{tab:rule_comm_u} give rule commutations).\footnote{While the definition of rule commutations in~\cite{mallpncom} differs a little from ours, for they also consider commutations involving a $cut$-rule, both definitions coincide on cut-free proof-nets, which are the objects we want canonical.}
This yields better properties concerning the study of isomorphisms.
As a key example, cut-elimination in proof-nets is confluent and leads to a unique normal form.
This spares us tedious case studies on rule commutations -- like the one in the proof of \autoref{prop:eqc_id}, which was due to the need to relate the different possible cut-free proofs obtained by cut-elimination.
Thus, studying isomorphisms using proof-nets will be less complex.
Nonetheless, we leave an inductive definition for a graphical syntax.
It would have been ideal to use this syntax from the very beginning.
Unfortunately, this was not possible as no notion of proof-nets exists with units and exact quotient on normal forms.

Other definitions of proof-nets exist, see the original one from Girard~\cite{pn}, or others such as~\cite{jumpadd,conflictnet}. Still, the definition we take is one of the most satisfactory, from the point of view of canonicity and cut-elimination for instance (see~\cite{mallpnlong,mallpncom}, or the introduction of~\cite{conflictnet} for a comparison of alternative definitions).
We recall here this definition of proof-nets, and define composition by cut and cut-elimination for this syntax.
Please refer to~\cite{mallpnlong} for more details, as well as the intuitions behind the definition.
In all that follows, as we use proof-nets, we consider only the unit-free fragment of MALL, unless stated otherwise.

\subsection{Proof-net}
\label{sec:def_pn}

A sequent is seen as its syntactic forest, with as internal vertices its connectives and as leaves the atoms of its formulas.
We always identify a formula $A$ with its syntactic tree $T(A)$.
A \emph{cut pair} is a formula $A\ast A\orth$, given a formula $A$; the connective $\ast$ is unordered.
A \emph{cut sequent} $[\Sigma]~\Gamma$ is composed of a list $\Sigma$ of cut pairs and a sequent $\Gamma$. When $\Sigma=\emptyset$ is empty, we denote it simply by $\Gamma$.
For instance, $[X_5\ast X_6\orth]~X_1\with X_2\orth, X_3\oplus X_4\orth$ (where each $X_i$ is an occurrence of the same atom $X$) is a cut sequent, on which we will instantiate the concepts defined in this part.

An \emph{additive resolution} of a cut sequent $[\Sigma]~\Gamma$ is any result of deleting zero or more cut pairs from $\Sigma$ and one argument sub-tree of each additive connective ($\with$ or $\oplus$) of $\Sigma\cup\Gamma$.
A \emph{$\with$-resolution} of a cut sequent $[\Sigma]~\Gamma$ is any result of deleting one argument sub-tree of each $\with$-connective of $\Sigma\cup\Gamma$.
For example, $[~]~X_1\with~, ~\oplus X_4\orth$ is one of the eight additive resolutions of $[X_5\ast X_6\orth]~X_1\with X_2\orth, X_3\oplus X_4\orth$, while $[X_5\ast X_6\orth]~X_1\with~, X_3\oplus X_4\orth$ is one of its two $\with$-resolutions.
Notice the difference on cut pairs: they may be deleted in an additive resolution, but never in a $\with$-resolution.

An \emph{(axiom) link} on $[\Sigma]~\Gamma$ is an unordered pair of complementary leaves in $\Sigma\cup\Gamma$ (labeled with $X$ and $X\orth$ for some atom $X$).
A \emph{linking} $\lambda$ on $[\Sigma]~\Gamma$ is a set of disjoint links on $[\Sigma]~\Gamma$ respecting the following property:
the set made of the leaves of the axiom links of $\lambda$ is the set of leaves of an additive resolution of $[\Sigma]~\Gamma$;
this (unique) additive resolution is denoted $[\Sigma]~\Gamma\upharpoonright\lambda$.
For instance, on the left-most graph of \autoref{fig:ex_pn}, the red axiom links form a linking $\lambda_1 = \{(X_1,X_6\orth);(X_4\orth,X_5)\}$, whose additive resolution is $[X_5\ast X_6\orth]~X_1\with X_2\orth, X_3\oplus X_4\orth\upharpoonright\lambda_1 = [X_5\ast X_6\orth]~X_1\with~, ~\oplus X_4\orth$ (which is the unique additive resolution associated to the set of leaves $\{X_1;X_4\orth;X_5;X_6\orth\}$).

Note the relation between slices in the sequent calculus and linkings in proof-nets: a slice ``belongs'' to an additive resolution, and a $\with$-resolution ``selects'' a slice from a proof.

A set of linkings $\Lambda$ on $[\Sigma]~\Gamma$ \emph{toggles} a $\with$-vertex $W$ if both arguments (called \emph{premises}) of $W$ are in ${[\Sigma]~\Gamma\upharpoonright\Lambda\coloneqq\bigcup_{\lambda\in\Lambda}[\Sigma]~\Gamma\upharpoonright\lambda}$.
We say a link $a$ \emph{depends} on a $\with$-vertex $W$ in $\Lambda$ if there exist $\lambda,\lambda'\in\Lambda$ such that $a\in\lambda\backslash\lambda'$ and $W$ is the only $\with$-vertex toggled by $\{\lambda;\lambda'\}$.
Looking at our running example, and taking $\lambda_1=\{(X_1, X_6\orth), (X_4\orth, X_5)\}$ and $\lambda_2=\{(X_2\orth, X_3)\}$, the $\with$-vertex is toggled by $\{\lambda_1;\lambda_2\}$.
Furthermore, all links depend on this $\with$-vertex for $\lambda_1$ and $\lambda_2$ contain only different pairs.

The graph $\G_\Lambda$ is defined as $[\Sigma]~\Gamma\upharpoonright\Lambda$ with the edges from $\bigcup\Lambda$ and enriched with jump edges $l\rightarrow W$ for each leaf $l$ and each $\with$-vertex $W$ such that there exists $a\in\lambda\in\Lambda$, between $l$ and some $l'$, with $a$ depending on $W$ in $\Lambda$.
When $\Lambda=\{\lambda\}$ is composed of a single linking, we shall simply denote $\G_\lambda=\G_{\{\lambda\}}$ (which is the graph $[\Sigma]~\Gamma\upharpoonright\lambda$ with the edges from $\lambda$ and no jump edge).
For our example, the graphs $\G_{\lambda_1}$, $\G_{\lambda_2}$ and $\G_{\{\lambda_1;\lambda_2\}}$ are illustrated on \autoref{fig:ex_pn}.

In the text of this paper (but not on the graphs), we write $l\jump W$ for a jump edge from a leaf $l$ to a $\with$-vertex $W$.
When drawing proof-nets, we will denote membership of a linking by means of colors.

When we write a $\pw$-vertex, we mean a $\parr$- or $\with$-vertex (a \emph{negative} vertex); similarly a $\tp$-vertex is a $\tens$- or $\oplus$-vertex (a \emph{positive} vertex).
A \emph{switch edge} of a $\pw$-vertex $N$ is an in-edge of $N$, \ie\ an edge between $N$ and one of its premises or a jump to $N$.
A \emph{switching cycle} is a cycle with at most one switch edge of each $\pw$-vertex.
A \emph{$\parr$-switching} of a linking $\lambda$ is any subgraph of $\G_\lambda$ obtained by deleting a switch edge of each $\parr$-vertex; denoting by $\phi$ this choice of edges, the subgraph it yields is $\G_\phi$.
For example, a cycle in a $\parr$-switching is a switching cycle, as in a graph $\G_\phi$ all $\pw$-vertices have one premise.

\begin{defi}[Proof-net]
\label{def:pn}
A \emph{unit-free MALL proof-net} $\theta$ on a cut sequent $[\Sigma]~\Gamma$ is a set of linkings satisfying:
\begin{itemize}[left=\parindent]
\item[\textbf{(P0)}] \emph{Cut:} Every cut pair of $\Sigma$ has a leaf in $\theta$.
\item[\textbf{(P1)}] \emph{Resolution:} Exactly one linking of $\theta$ is on any given $\with$-resolution of $[\Sigma]~\Gamma$.
\item[\textbf{(P2)}] \emph{MLL:} For every $\parr$-switching $\phi$ of every linking $\lambda\in\theta$, $\G_\phi$ is a tree.
\item[\textbf{(P3)}] \emph{Toggling:} Every set $\Lambda\subseteq\theta$ of two or more linkings toggles a $\with$-vertex that is in no switching cycle of $\G_\Lambda$.
\end{itemize}
\end{defi}

These conditions are called the \emph{correctness criterion}.
Condition (P0) is here to prevent unused $\ast$-vertices.
A \emph{cut-free} proof-net is one without $\ast$-vertices (it respects (P0) trivially).
Condition (P1) is a correctness criterion for additive proof-nets~\cite{mallpnlong} and (P2) is the Danos-Regnier criterion for multiplicative proof-nets~\cite{structmult}.
However, (P1) and (P2) together are insufficient for cut-free MALL proof-nets, hence the last condition (P3) taking into account interactions between the slices (see also~\cite{jumpadd} for a similar condition for example).
Sets composed of a single linking $\lambda$ are not considered in (P3), for by (P2) the graph $\G_\lambda$ has no switching cycle.
One can check that our example on \autoref{fig:ex_pn}, $\{\lambda_1;\lambda_2\}$, is a proof-net.

\begin{figure}
\centering
\resizebox{.85\width}{!}{
\begin{tikzpicture}
\begin{myscope}
	\node (P0) at (-1,1) {$X_1$};
	\node (P3) at (1.5,1) {$X_4\orth$};
	\node (W) at (-.25,0) {$\with$};
	\node (O) at (.75,0) {$\oplus$};
	\node (C) at (3.25,0) {$\ast$};
	\node (P4) at (2.5,1) {$X_5$};
	\node (P5) at (4,1) {$X_6\orth$};

	\path (P0) edge[-] (W);
	\path (P3) edge[-] (O);
	\path (P4) edge[-] (C);
	\path (P5) edge[-] (C);
\end{myscope}
\begin{myscopec}{red}
	\path (P0) -- ++(0,.8) -| (P5);
	\path (P3) -- ++(0,.6) -| (P4);
\end{myscopec}
\end{tikzpicture}
}
\resizebox{.85\width}{!}{
\begin{tikzpicture}
\begin{myscope}
	\node (P1) at (-.5,1) {$X_2\orth$};
	\node (P2) at (.5,1) {$X_3$};
	\node (W) at (-1.25,0) {$\with$};
	\node (O) at (1.25,0) {$\oplus$};

	\path (P1) edge[-] (W);
	\path (P2) edge[-] (O);
\end{myscope}
\begin{myscopec}{blue}
	\path (P1) -- ++(0,-1.1) -| (P2);
\end{myscopec}
\end{tikzpicture}
}
\resizebox{.85\width}{!}{
\begin{tikzpicture}
\begin{myscope}
	\node (P0) at (-1.75,1) {$X_1$};
	\node (P1) at (-.25,1) {$X_2\orth$};
	\node (P2) at (.75,1) {$X_3$};
	\node (P3) at (2.25,1) {$X_4\orth$};
	\node (W) at (-1,0) {$\with$};
	\node (O) at (1.5,0) {$\oplus$};
	\node (C) at (4,0) {$\ast$};
	\node (P4) at (3.25,1) {$X_5$};
	\node (P5) at (4.75,1) {$X_6\orth$};

	\path (P0) edge[-] (W);
	\path (P1) edge[-] (W);
	\path (P2) edge[-] (O);
	\path (P3) edge[-] (O);
	\path (P4) edge[-] (C);
	\path (P5) edge[-] (C);
	\path[out=225,in=0] (P2) edge (W);
	\coordinate (c3) at (1.5,.65);
	\path[out=220,in=0] (P3) edge[-] (c3);
	\path[out=180,in=0] (c3) edge (W);
	\coordinate (c4) at (1.5,.55);
	\path[out=215,in=0] (P4) edge[-] (c4);
	\path[out=180,in=0] (c4) edge (W);
	\coordinate (c5) at (1.5,.45);
	\path[out=205,in=0] (P5) edge[-] (c5);
	\path[out=180,in=0] (c5) edge (W);
\end{myscope}
\begin{myscopec}{blue}
	\path (P1) -- ++(0,-1.1) -| (P2);
\end{myscopec}
\begin{myscopec}{red}
	\path (P0) -- ++(0,.8) -| (P5);
	\path (P3) -- ++(0,.6) -| (P4);
\end{myscopec}
\end{tikzpicture}
}
\caption{Graphs from an example of a proof-net: from left to right $\G_{\lambda_1}$, $\G_{\lambda_2}$ and $\G_{\{\lambda_1;\lambda_2\}}$}
\label{fig:ex_pn}
\end{figure}

In the particular setting of isomorphisms, we mainly consider proof-nets with two conclusions.
This allows to define a notion of duality on leaves and connectives.
Consider a cut sequent containing both $A$ and $A\orth$.
For $V$ a vertex in (the syntactic tree $T(A)$ of) $A$, we denote by $V\orth$ the corresponding vertex in $A\orth$.
As expected, $V\biorth=V$.
This also respects orthogonality for formulas on leaves: given a leaf $l$ of $A$, labeled by a formula $X$, the label of $l\orth$ is $X\orth$.
We can also define a notion of duality on premises:
given a premise of a vertex $V\in T(A)$, the dual premise of $V\orth$ is the corresponding premise in $T(A\orth)$.
In other words, if in $L-V-R$ we consider the premise $L$ then in $R\orth-V\orth-L\orth$ its dual premise is $L\orth$.

\subsection{Cut-elimination in proof-nets}
\label{sec:ce_pn}

\begin{defi}[Composition]
\label{def:cut}
For proof-nets $\theta$ and $\psi$ of respective conclusions $[\Sigma]~\Gamma,A$ and $[\Xi]~\Delta,A\orth$, the \emph{composition} over $A$ of $\theta$ and $\psi$ is the proof-net ${\cutf{\theta}{\psi}{A}=\{\lambda\cup\mu\ |\ \lambda\in\theta,\mu\in\psi\}}$, with conclusions $[\Sigma,\Xi,A\ast A\orth]~\Gamma,\Delta$.
\end{defi}

For example, see \autoref{fig:dist_not_uniq_comp} with a composition of the proof-nets on \autoref{fig:dist_not_uniq}.

\begin{defi}[Cut-elimination]
\label{def:cut_elim}
Let $\theta$ be a set of linkings on a cut sequent $[\Sigma]~\Gamma$, and $A\ast A\orth$ a cut pair in $\Sigma$.
Define the \emph{elimination} of $A\ast A\orth$ (or of the cut $\ast$ between $A$ and $A\orth$) as:
\begin{enumerate}[(a)]
\item If $A$ is an atom, delete $A\ast A\orth$ from $\Sigma$ and replace any pair of links $(l,A)$, $(A\orth,m)$ ($l$ and $m$ being other occurrences of $A\orth$ and $A$ respectively) with the link $(l,m)$.
\item If $A=A_1\tens A_2$ and $A\orth=A_2\orth\parr A_1\orth$ (or vice-versa), replace $A\ast A\orth$ with two cut pairs $A_1\ast A_1\orth$ and $A_2\ast A_2\orth$. Retain all original linkings.
\item If $A=A_1\with A_2$ and $A\orth=A_2\orth\oplus A_1\orth$ (or vice-versa), replace $A\ast A\orth$ with two cut pairs $A_1\ast A_1\orth$ and $A_2\ast A_2\orth$. Delete all \emph{inconsistent} linkings, namely those $\lambda\in\theta$ such that in $[\Sigma]~\Gamma\upharpoonright\lambda$ the children $\with$ and $\oplus$ of the cut do not take dual premises.
Finally, ``garbage collect'' by deleting any cut pair $B\ast B\orth$ for which no leaf of $B\ast B\orth$ is in any of the remaining linkings.
\end{enumerate}
\end{defi}

See \autoref{fig:dist_not_uniq_comp_red} for a result on applying steps (b) and (c) to the proof-net of \autoref{fig:dist_not_uniq_comp}.
We use for proof-nets the same notations $\betato$ and $\eqb$ as for the sequent calculus.

\begin{propC}[{\cite[Proposition~5.4]{mallpnlong}}]
\label{prop:cut_elim_correct}
Eliminating a cut in a proof-net yields a proof-net.
\end{propC}

\begin{thmC}[{\cite[Theorem~5.5]{mallpnlong}}]
\label{th:cut_sn}
Cut-elimination of proof-nets is strongly normalizing and confluent.
\end{thmC}

A linking $\lambda$ on a cut sequent $[\Sigma]~\Gamma$ \emph{matches} if, for every cut pair $A\ast A\orth$ in $\Sigma$, any given leaf $l$ of $A$ is in $[\Sigma]~\Gamma\upharpoonright\lambda$ if and only if $l\orth$ of $A\orth$ is in $[\Sigma]~\Gamma\upharpoonright\lambda$.
A linking matches if and only if, when cut-elimination is carried out, the linking never becomes inconsistent, and thus is never deleted.
This allows defining \emph{Turbo Cut-elimination}~\cite{mallpnlong}, eliminating a cut in a single step by removing inconsistent linkings.


\section{Reduction to proof-nets}
\label{sec:red_pn}

The goal of this section is to shift the study of isomorphisms to the syntax of proof-nets, never to speak of sequent calculus again.
We do so by first defining \emph{desequentialization}, a function from sequent calculus proofs to proof-nets (\autoref{subsec:deseq}).
We then show that cut-elimination in proof-nets simulates the one from sequent calculus (\autoref{subsec:simulation_th}).
Finally, we define isomorphisms directly in the syntax of proof-nets (\autoref{subsec:iso_in_pn}).
Recall that all sequent calculus proofs we consider have expanded axioms, thanks to \autoref{prop:eta_nom_eqbn}.

\subsection{Desequentialization}
\label{subsec:deseq}

We desequentialize a unit-free MALL proof $\pi$ (with expanded axioms) into a set of linkings $\Rst(\pi)$ by induction on $\pi$:

\begin{center}
\AxiomC{}
\RightLabel{$ax$}
\UnaryInfC{$\{\{(X, X\orth)\}\}\triangleright [\emptyset]~X,X\orth$}
\DisplayProof
\hskip 2em
\AxiomC{$\theta\triangleright [\Sigma]~\Gamma$}
\RightLabel{$ex$}
\UnaryInfC{$\theta\triangleright [\Sigma]~\sigma(\Gamma)$}
\DisplayProof
\vskip 1em
\AxiomC{$\theta\triangleright [\Sigma]~A,\Gamma$}
\AxiomC{$\psi\triangleright [\Xi]~A\orth,\Delta$}
\RightLabel{$cut$}
\BinaryInfC{$\{\lambda\cup\mu~|~\lambda\in\theta,\mu\in\psi\}\triangleright [\Sigma,\Xi,A\ast A\orth]~\Gamma,\Delta$}
\DisplayProof
\vskip 1em
\AxiomC{$\theta\triangleright [\Sigma]~A,\Gamma$}
\AxiomC{$\psi\triangleright [\Xi]~B,\Delta$}
\RightLabel{$\tens$}
\BinaryInfC{$\{\lambda\cup\mu~|~\lambda\in\theta,\mu\in\psi\}\triangleright [\Sigma,\Xi]~A\tens B,\Gamma,\Delta$}
\DisplayProof
\hskip 2em
\AxiomC{$\theta\triangleright [\Sigma]~A,B,\Gamma$}
\RightLabel{$\parr$}
\UnaryInfC{$\theta\triangleright [\Sigma]~A\parr B,\Gamma$}
\DisplayProof
\vskip 1em
\AxiomC{$\theta\triangleright [\Xi]~A,\Gamma$}
\AxiomC{$\psi\triangleright [\Phi]~B,\Gamma$}
\RightLabel{$\with$}
\BinaryInfC{$\theta\cup\psi\triangleright [\Xi,\Phi]~A\with B,\Gamma$}
\DisplayProof
\hskip 2em
\AxiomC{$\theta\triangleright [\Sigma]~A,\Gamma$}
\RightLabel{$\oplus_1$}
\UnaryInfC{$\theta\triangleright [\Sigma]~A\oplus B,\Gamma$}
\DisplayProof
\hskip 2em
\AxiomC{$\theta\triangleright [\Sigma]~B,\Gamma$}
\RightLabel{$\oplus_2$}
\UnaryInfC{$\theta\triangleright [\Sigma]~A\oplus B,\Gamma$}
\DisplayProof
\end{center}

This definition uses the implicit tracking of formula occurrences downwards through the rules, and follows~\cite{mallpnlong} with the notation $\theta\triangleright [\Sigma]~\Gamma$ for ``$\theta$ is a set of linkings on the cut sequent $[\Sigma]~\Gamma$''.
As identified in~\cite[Section~5.3.4]{mallpnlong}, desequentializing with both $cut$- and $\with$-rules is complex, for cuts can be shared (or not) when translating a $\with$-rule:
{
\AxiomC{$\theta\triangleright [\Sigma,\Xi]~A,\Gamma$}
\AxiomC{$\psi\triangleright [\Sigma,\Phi]~B,\Gamma$}
\RightLabel{$\with$}
\BinaryInfC{$\theta\cup\psi\triangleright [\Sigma,\Xi,\Phi]~A\with B,\Gamma$}
\DisplayProof
}.
We choose to never share cuts (${\Sigma=\emptyset}$), thus desequentialization is a function.
The cost being that the following $\with-cut$ commutation yields different proof-nets (contrary to the other commutations, see~\cite{mallpncom}).
\resizebox{\textwidth}{!}{
\Rsub{$\pi_1$}{$A,B,\Gamma$}
\Rsub{$\pi_2$}{$A,C,\Gamma$}
\Rwith{$A,B\with C,\Gamma$}
\Rsub{$\pi_3$}{$A\orth,\Delta$}
\Rcut{$B\with C,\Gamma,\Delta$}
\DisplayProof
\hskip 1em $\equiv$ \hskip 1em
\Rsub{$\pi_1$}{$A,B,\Gamma$}
\Rsub{$\pi_3$}{$A\orth,\Delta$}
\Rcut{$B,\Gamma,\Delta$}
\Rsub{$\pi_2$}{$A,C,\Gamma$}
\Rsub{$\pi_3$}{$A\orth,\Delta$}
\Rcut{$C,\Gamma,\Delta$}
\Rwith{$B\with C,\Gamma,\Delta$}
\DisplayProof
}

\begin{rem}
An alternative definition of desequentialization in~\cite{mallpnlong} consists in building a linking by slice.
In this spirit, if a proof-net $\theta$ is obtained by desequentializing a proof $\pi$, there is a bijection between linkings in $\theta$ and slices of $\pi$.
\end{rem}

\begin{thm}[Sequentialization~{\cite[Theorem 5.9]{mallpnlong}}]
\label{th:seq}
A set of linkings on a cut sequent is a translation of a unit-free MALL proof if and only if it is a proof-net.
\end{thm}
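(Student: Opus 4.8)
The statement is the classical sequentialization theorem, and its two directions call for rather different techniques. The forward direction---that every desequentialization $\Rst(\pi)$ is a proof-net---is proved by structural induction on $\pi$, checking that each inference rule preserves the four conditions (P0)--(P3). The backward direction---that every proof-net arises as some $\Rst(\pi)$---is the substantial one, and I would prove it by a well-founded induction on the size of the proof-net, at each step identifying a rule that may be applied last and removing it.

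For the forward direction, the base case is the $ax$-rule, whose single linking $\{(X,X\orth)\}$ trivially satisfies (P0)--(P3). For the inductive step I would group the rules by their action on linkings. The $\parr$- and $\oplus_i$-rules leave the underlying set of linkings unchanged (only reinterpreting the syntactic forest), so the conditions are essentially inherited; one merely checks that the added vertex creates no switching cycle for (P2) and (P3), and that $\oplus$ respects (P1). The $\tens$- and $cut$-rules form the set $\{\lambda\cup\mu\}$ of unions; here (P2) is the delicate point, since one must rule out a cycle in any $\parr$-switching of $\lambda\cup\mu$, which follows because $\lambda$ and $\mu$ share no vertices and each is a tree under switching. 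The $\with$-rule, taking the disjoint union $\theta\cup\psi$, is the critical case for (P1) and (P3): the new $\with$-vertex is toggled exactly by any two linkings drawn from distinct premises, and lies in no switching cycle of the corresponding graph $\G_\Lambda$, giving (P3); while (P1) holds since a $\with$-resolution selects precisely one premise, hence precisely one of $\theta,\psi$.

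For the backward direction I would argue by induction on $\theta$ on a cut sequent $[\Sigma]~\Gamma$. If some conclusion is a $\parr$-formula $A\parr B$, deleting its $\parr$-vertex (splitting $A\parr B$ into conclusions $A,B$) again yields a proof-net---(P2) and (P3) can only improve when a $\parr$ is removed---and the induction hypothesis supplies a proof to which a final $\parr$-rule is appended. If all conclusions are positive, I would first attempt to peel a terminal $\oplus$: whenever every surviving linking selects the same premise of some terminal $\oplus$, the corresponding $\oplus_i$-rule may be taken last. When no such uniform $\oplus$ is available, the conclusions (together with the cut pairs, which behave like $\tens$-vertices) reduce to $\tens$-formulas and atoms, and the core of the argument becomes a \emph{splitting lemma}: one locates a terminal positive vertex whose removal disconnects $\G_\lambda$, for every surviving $\lambda$, into independent components that are themselves proof-nets, to which the induction hypothesis applies.

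The main obstacle is precisely this splitting lemma in the presence of additives. For a single-linking net it is just the Danos--Regnier splitting-$\tens$ argument driven by (P2); but for a genuinely additive net the graph $\G_\lambda$ changes from slice to slice, so a vertex splitting one slice need not split another, and the interaction is controlled only through the global toggling condition (P3). Making the induction go through requires using (P3) to exhibit a $\with$-vertex toggled by all of $\theta$ and lying in no switching cycle, then choosing the splitting vertex coherently across every slice and verifying that each resulting sub-net still satisfies all four conditions---in particular that (P1) and (P3) are inherited. This is the technically demanding part, and it explains why the criterion must include (P3) rather than merely the conjunction of the additive condition (P1) and the multiplicative condition (P2).
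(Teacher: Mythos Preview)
The paper does not supply a proof of this theorem; it simply quotes it from Hughes and Van~Glabbeek~\cite[Theorem~5.9]{mallpnlong}. So there is nothing to compare your attempt against in this paper beyond noting that your outline is broadly the strategy of the cited reference: correctness of $\Rst(\pi)$ by structural induction on $\pi$, and sequentialization by repeatedly locating a \emph{sequentializing vertex} (\autoref{def:sequentializing}) and removing it.

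Your sketch does have one concrete gap in the backward direction. Your case split is ``some conclusion is a $\parr$-formula'' versus ``all conclusions are positive'', which omits the case of a terminal $\with$-vertex. A terminal $\with$ is, like a terminal $\parr$, always sequentializing: by (P1) the linkings of $\theta$ partition according to which premise of the $\with$ they select, and each block is a proof-net on the corresponding premise sequent (checking (P3) for each block uses that the removed $\with$ is not toggled within a block). This is the base case that lets you peel $\with$-rules before you ever reach the situation where all terminal vertices are positive; without it your ``all conclusions are positive'' case is unreachable in general.

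Beyond that, your identification of the hard step is accurate: when only terminal $\tens/\oplus/\ast$-vertices remain, one must locate a splitting positive vertex simultaneously for all linkings, and (P3) is exactly the tool that guarantees this. You should be aware, however, that the actual argument in~\cite{mallpnlong} is substantially more delicate than a per-slice Danos--Regnier split followed by a global coherence check; the proof builds a notion of \emph{domination} and an ordering on $\with$-vertices driven by (P3), and establishing that the components obtained after splitting again satisfy (P3) is where most of the work lies. Your sketch names the obstacle correctly but does not yet supply the mechanism that overcomes it.
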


\begin{defi}[Identity proof-net]
\label{def:id_pn}
We call \emph{identity proof-net} of a unit-free formula $A$, the proof-net corresponding to the proof $\id{A}$ (the axiom-expansion of $\ax{A}=${
\Rax{$A$}
\DisplayProof
}).
\end{defi}

\subsection{Simulation of cut elimination}
\label{subsec:simulation_th}

We show here that cut elimination in proof-nets mimics the one in sequent calculus, which will allow us in the next section to consider isomorphisms on proof-nets only.
As written in \autoref{subsec:deseq}, proof-nets have difficulties with the $\with-cut$ commutation, which corresponds to superimposing $\ast$-vertices.

\begin{defi}[$\dupcut$]
\label{def:dupcut}
Let $\theta$ and $\psi$ be proof-nets.
We denote $\theta\dupcut\psi$ if there exists a $\ast$-vertex $C$ in $\theta$ such that the syntactic forest of $\psi$ is the syntactic forest of $\theta$ where the syntactic tree of $C$ is duplicated into the syntactic trees of $C_0$ and $C_1$ (which are different occurrences of $C$), $\theta=\theta_0\sqcup\theta_1$\footnote{The symbol $\sqcup$ means a union $\cup$ which happens to be between disjoint sets.} and $\psi=\psi_0\sqcup\psi_1$ with, for $i\in\{0;1\}$, $\psi_i=\theta_i$ up to assimilating $C_i$ with $C$.
\end{defi}

See \autoref{fig:dupcut} for a graphical representation of this concept, as well as the link with the $\with-cut$ commutative case of cut-elimination.

\begin{figure}
\begin{adjustbox}{}
\begin{tikzpicture}
	\node at (-3.5,.5) {\phantom{x}};
\begin{myscope}
	\node (wl) at (-2,.5) {$\with$};
	\node (C) at (.7,-.5) {$\ast$};
\end{myscope}
	\draw (-2.5,1.5) edge[draw=red,->] (wl);
	\draw (-1.5,1.5) edge[draw=blue,->] (wl);
	\node at (1.45,-.5) {$C$};
	\coordinate (l) at (-1.4,3);
	\coordinate (r) at (2.8,3);
	\draw (l) edge (r);
	\draw (l) edge (C);
	\draw (r) edge (C);
\begin{myscope}
	\node (X1) at (-1.8,2.55) {$X\orth$};
	\node (X2) at (-0.65,2.55) {$X$};
	\node (X3) at (.25,2.55) {$Y$};
	\node (X4) at (1.15,2.55) {$Y\orth$};
	\node (X5) at (2.05,2.55) {$X$};
	\node (X6) at (3.1,2.55) {$X\orth$};
	\path[draw=red] (X1) -- ++(0,.6) -| (X2);
	\path[draw=blue] (X1) -- ++(0,.8) -| (X2);
	\path[draw=red] (X3) -- ++(0,.6) -| (X4);
	\path[draw=blue] (X5) -- ++(0,.6) -| (X6);
\end{myscope}

	\node at (4.4,1) {$\dupcut$};

\begin{myscope}
	\node (wr) at (5.5,.5) {$\with$};
	\node (C0) at (8.2,-.5) {$\ast$};
	\node (C1) at (12.6,-.5) {$\ast$};
\end{myscope}
	\draw (5,1.5) edge[draw=red,->] (wr);
	\draw (6,1.5) edge[draw=blue,->] (wr);
	\node at (8.95,-.5) {$C_0$};
	\node at (13.35,-.5) {$C_1$};
	\coordinate (l0) at (6.1,3);
	\coordinate (r0) at (10.3,3);
	\draw (l0) edge (r0);
	\draw (l0) edge (C0);
	\draw (r0) edge (C0);
	\coordinate (l1) at (10.5,3);
	\coordinate (r1) at (14.7,3);
	\draw (l1) edge (r1);
	\draw (l1) edge (C1);
	\draw (r1) edge (C1);
\begin{myscope}
	\node (X1') at (5.7,2.55) {$X\orth$};
	\node (X2'0) at (6.85,2.55) {$X$};
	\node (X3'0) at (7.75,2.55) {$Y$};
	\node (X4'0) at (8.65,2.55) {$Y\orth$};
	\node (X5'0) at (9.55,2.55) {$X$};
	\node (X2'1) at (11.25,2.55) {$X$};
	\node (X3'1) at (12.15,2.55) {$Y$};
	\node (X4'1) at (13.05,2.55) {$Y\orth$};
	\node (X5'1) at (13.95,2.55) {$X$};
	\node (X6') at (15,2.55) {$X\orth$};
	\path[draw=red] (X1') -- ++(0,.6) -| (X2'0);
	\path[draw=blue] (X1') -- ++(0,.8) -| (X2'1);
	\path[draw=red] (X3'0) -- ++(0,.6) -| (X4'0);
	\path[draw=blue] (X5'1) -- ++(0,.6) -| (X6');
\end{myscope}
\end{tikzpicture}
\end{adjustbox}
\vskip1em
\begin{adjustbox}{}
\AxiomC{$\fCenter A,B,\Gamma$}
\AxiomC{$\fCenter A,D,\Gamma$}
\Rwith{$A,B\with D,\Gamma$}
\AxiomC{$\fCenter A\orth,\Delta$}
\Rcut{$\fCenter B\with D,\Gamma,\Delta$}
\DisplayProof
$\betato$
\AxiomC{$\fCenter A,B,\Gamma$}
\AxiomC{$\fCenter A\orth,\Delta$}
\Rcut{$\fCenter B,\Gamma,\Delta$}
\AxiomC{$\fCenter A,D,\Gamma$}
\AxiomC{$\fCenter A\orth,\Delta$}
\Rcut{$\fCenter D,\Gamma,\Delta$}
\Rwith{$B\with D,\Gamma,\Delta$}
\DisplayProof
\end{adjustbox}
\caption{Illustration of $\dupcut$ (\autoref{def:dupcut}) and correspondence with $\with-cut$ cut-elimination}
\label{fig:dupcut}
\end{figure}

\begin{lem}[Simulation - $\beta$]
\label{lem:simulation_beta}
Let $\pi$ and $\pi'$ be unit-free MALL proof trees such that $\pi\betato\pi'$.
Then either $\Rst(\pi)=\Rst(\pi')$, $\Rst(\pi)\dupcut\Rst(\pi')$ or $\Rst(\pi)\betato\Rst(\pi')$.
\end{lem}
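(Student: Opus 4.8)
The plan is to proceed by case analysis on the cut-elimination rule used in the step $\pi\betato\pi'$, following the tables of \autoref{def:beta_pt}, namely Tables~\ref{tab:cut_elim_pt_key} and~\ref{tab:cut_elim_pt}. Since we work in unit-free MALL, the unit cases ($\bot-1$, $\bot-cut$, $\top-cut$) do not occur, and since all axioms are expanded (\autoref{prop:eta_nom_eqbn}), any $ax$ key case reduces a cut on an atom. The three possible conclusions of the lemma correspond neatly to three groups of rules: the key cases produce a genuine cut-elimination step on proof-nets ($\Rst(\pi)\betato\Rst(\pi')$), the commutative cases not involving $\with$ leave the desequentialization unchanged ($\Rst(\pi)=\Rst(\pi')$), and the single $\with-cut$ commutative case produces a duplication of a $\ast$-vertex ($\Rst(\pi)\dupcut\Rst(\pi')$).

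For the key cases, I would match each reduction with the corresponding clause of \autoref{def:cut_elim}. An $ax$ key case on an atom $X$ reroutes the axiom link through the cut, which is exactly the atomic elimination (clause (a)), so $\Rst(\pi)\betato\Rst(\pi')$. A $\parr-\tens$ key case splits the cut $A\tens B\ast B\orth\parr A\orth$ into cuts on $A$ and $B$ while retaining every linking, matching clause (b). A $\with-\oplus_i$ key case splits the cut $A_1\with A_2\ast A_2\orth\oplus A_1\orth$ into cuts on $A_1$ and $A_2$; the argument chosen by the $\oplus_i$-rule forces, via the dual-premise condition, exactly the linkings selecting the matching $\with$-argument to survive, the others being inconsistent and deleted, and the then-unused cut pair being garbage collected --- this is clause (c). In each case one checks that the resulting set of linkings and cut sequent coincide with $\Rst(\pi')$.

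For the commutative cases $\parr-cut$, $\tens-cut-1$, $\tens-cut-2$, $\oplus_1-cut$, $\oplus_2-cut$ and $cut-cut$, the rules permuted around the cut ($\parr$, $\tens$, $\oplus_i$ and $cut$) all desequentialize without altering either the set of linkings or the list of cut pairs: reordering them yields literally the same set of linkings on the same cut sequent, hence $\Rst(\pi)=\Rst(\pi')$. This is the concrete manifestation of the fact that proof-nets quotient sequent proofs by rule commutations.

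The remaining case, and the main obstacle, is the $\with-cut$ commutation, where the sub-proof $\pi_3$ is duplicated and with it the cut on $A$. On the left the desequentialization has a single $\ast$-vertex $C$ whose linkings split as $\theta_0\sqcup\theta_1$ according to the premise chosen by the $\with$-rule; on the right this $\ast$-vertex is replicated into $C_0$ and $C_1$, one in each branch of the $\with$, and the linkings split accordingly. Verifying that this is precisely $\Rst(\pi)\dupcut\Rst(\pi')$ in the sense of \autoref{def:dupcut} --- that is, $\psi_i=\theta_i$ up to identifying $C_i$ with $C$, and that the two halves are genuinely disjoint --- is the delicate point, but it is exactly the situation depicted in \autoref{fig:dupcut}. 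Assembling the three groups yields the three alternatives of the statement.
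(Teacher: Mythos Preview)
Your proof is correct and follows essentially the same approach as the paper: a case analysis on the cut-elimination step, grouping key cases with the corresponding clauses (a)--(c) of \autoref{def:cut_elim}, the non-$\with$ commutative cases with equality of desequentializations, and the $\with$--$cut$ case with $\dupcut$. You are in fact slightly more thorough than the paper's proof text, which omits explicit mention of the $cut$--$cut$ commutative case that you correctly include among those yielding $\Rst(\pi)=\Rst(\pi')$.
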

\begin{proof}
We reason by cases according to the step $\pi\betato\pi'$.
Recall we desequentialize by separating all cuts, and use the notations for steps from \autoref{def:cut_elim}.
If $\pi\betato\pi'$ is an $ax$ (\resp\ $\parr-\tens$, $\with-\oplus$) key case, then using a step (a) (\resp\ (b), (c)), we get $\Rst(\pi)\betato\Rst(\pi')$.
If it is a $\parr-cut$, $\tens-cut-1$, $\tens-cut-2$, $\oplus_1-cut$ or $\oplus_2-cut$ commutative case, then $\Rst(\pi)=\Rst(\pi')$.
Finally, in a $\with-cut$ commutative case, we duplicate the $cut$-rule: $\Rst(\pi)\dupcut\Rst(\pi')$ (see \autoref{fig:dupcut}).
\end{proof}

Nonetheless, the $\dupcut$ relation is not a hard problem since two proofs differing by a $\with-cut$ commutation yield proof-nets equal up to cut-elimination.

\begin{lem}[$\dupcut~\subseteq~\eqb$]
\label{lem:dupcut_eqb}
Let $\theta$ and $\theta'$ be proof-nets such that $\theta\dupcut\theta'$.
Then $\theta\eqb\theta'$.
\end{lem}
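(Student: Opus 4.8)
The plan is to show that $\theta$ and $\theta'$ admit a common $\betato$-reduct, from which $\theta\eqb\theta'$ follows at once. By \autoref{def:dupcut}, $\theta'$ is obtained from $\theta$ by duplicating a single $\ast$-vertex $C$ (carrying a cut pair $A\ast A\orth$) into two copies $C_0$ and $C_1$, splitting the linkings so that $\theta=\theta_0\sqcup\theta_1$ matches $\theta'=\psi_0\sqcup\psi_1$ with $\psi_i=\theta_i$ after identifying $C_i$ with $C$; every other $\ast$-vertex and every link not touching $C$ is left untouched. First I would eliminate in $\theta$ the single cut $C$, via the sequence of steps (a)/(b)/(c) of \autoref{def:cut_elim} that decomposes $A$ down to atoms, obtaining $\theta\betatostar\rho$; and in $\theta'$ the two cuts $C_0$ and $C_1$, obtaining $\theta'\betatostar\rho'$. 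Neither elimination touches the remaining cuts. The goal is then to prove $\rho=\rho'$.

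To compare the two eliminations I would rely on the fact that eliminating a cut acts on each linking essentially independently: this is precisely the content of Turbo Cut-elimination recalled after \autoref{th:cut_sn}, where eliminating a cut amounts to deleting the linkings that do not match at that cut and reconnecting the atomic links of the survivors through it. Under the bijection $\theta_i\ni\lambda\leftrightarrow\lambda\in\psi_i$ induced by $C_i\equiv C$, a linking $\lambda\in\theta_i$ has exactly the same links and the same additive resolution $[\Sigma]~\Gamma\upharpoonright\lambda$ as its image, except that the leaves lying on $C$ now lie on $C_i$. Consequently $\lambda$ matches at $C$ in $\theta$ if and only if its image matches at $C_i$ in $\theta'$, so the same linkings get deleted; and the atomic reconnection performed by the step (a) sweeps — which depends only on the link structure of each linking, not on which copy of the cut carries the leaves — produces identical links on both sides. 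Since removing $C$ from $\theta$ and removing both $C_0$ and $C_1$ from $\theta'$ leaves the same underlying cut sequent (the duplicated subtree disappearing in either case), we obtain $\rho=\rho'$ on the nose.

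Having a common reduct, $\theta\betatostar\rho=\rho'\betafromstar\theta'$ gives $\theta\eqb\theta'$ immediately; alternatively one may invoke confluence and strong normalization (\autoref{th:cut_sn}) and merely verify that $\theta$ and $\theta'$ have the same normal form. The main obstacle is the bookkeeping of the second paragraph: one must check that across all three elimination steps (a), (b), (c) — in particular the deletion of inconsistent linkings and the final garbage collection in step (c) — the outcome depends only on each linking's internal data (its links and its additive resolution) and not on the identity of the cut ($C$ versus $C_0/C_1$) it uses. Once this locality of cut-elimination is in place, the transparency of cut duplication to the normal form is automatic.
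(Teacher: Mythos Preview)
Your proposal is correct. The paper's main proof proceeds differently, by induction on the cut formula $A$: a single elimination step on $C$ in $\theta$ and on each $C_i$ in $\theta'$ either yields equal proof-nets (atomic case) or proof-nets again related by $\dupcut$ on strictly smaller cuts (the $\tens$ and $\oplus$ cases), so the induction closes. Your approach instead eliminates the duplicated cut(s) in one sweep and argues, linking by linking, that matching and atomic reconnection depend only on the internal data of each linking and not on whether it sits over $C$ or $C_i$; this is precisely the alternative proof the paper itself records in \autoref{rem:turbo_cut}. The inductive route sidesteps the global bookkeeping about garbage collection you flag as the main obstacle, while your route avoids the induction at the cost of that locality check --- which, as the remark confirms, does go through.
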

\begin{proof}
By \autoref{def:dupcut} of $\dupcut$, there exists a $\ast$-vertex $C$ in $\theta$, with $\theta=\theta_0\sqcup\theta_1$, such that $\theta'$ is $\theta$ where the syntactic tree of $C$ is duplicated into $C_0$ and $C_1$, and linkings in $\theta_0$ (\resp\ $\theta_1$) use $C_0$ (\resp\ $C_1$) as $C$.

We reason by induction on the formula $A$ of $C$ (and also $C_0$ and $C_1$); \wolog\ $A$ is positive.
Applying a step of cut-elimination on $C$ in $\theta$ yields a proof-net $\Theta$.
On the other hand, a corresponding step of cut-elimination on $C_0$ and one on $C_1$ in $\theta'$ yields $\Theta'$.

If $A$ is an atom, then we applied step (a), and we find $\Theta=\Theta'$.

If $A$ is a $\tens$-formula, \ie\ $A=A_0\tens A_1$, then we applied step (b) and produced cuts $A_0\ast A_0\orth$ and $A_1\ast A_1\orth$ in $\Theta$, and two occurrences of these cuts in $\Theta'$.
Thus, $\Theta\dupcut\Xi\dupcut\Theta'$ with $\Xi$ the proof-net $\Theta$ where the cut on $A_0$ is duplicated.
By induction hypothesis, $\Theta\eqb\Xi\eqb\Theta'$.
It follows $\theta\eqb\theta'$ as $\theta\betato\Theta\eqb\Theta'\betafrom\cdot\betafrom\theta'$.

Finally, if $A$ is a $\oplus$-formula with $A=A_0\oplus A_1$, then we used step (c), producing cuts $A_0\ast A_0\orth$ and $A_1\ast A_1\orth$ in $\Theta$, and two occurrences of these cuts in $\Theta'$.
Remark that inconsistent linkings in $\theta'$ for these steps are exactly those of $\theta$, and therefore the same cuts are garbage collected.
Whence, $\Theta\dupcut\cdot\dupcut\Theta'$, $\Theta\dupcut\Theta'$ or $\Theta=\Theta'$ (according to the number of cuts garbage collected).
In all cases, using the induction hypothesis we conclude $\theta\eqb\theta'$.
\end{proof}

\begin{rem}
\label{rem:turbo_cut}
Another proof of \autoref{lem:dupcut_eqb}, using the Turbo Cut-elimination procedure and no induction, is possible.
We use the Turbo Cut-elimination procedure on $C$ in $\theta$, yielding a proof-net $\Theta$; we also use it in $\theta'$ on $C_0$ then $C_1$, yielding $\Theta'$.
Whence, $\theta\betatostar\Theta$ and $\Theta'\betafromstar\theta'$.
It remains to prove that $\Theta=\Theta'$.
Remark that $\Theta$ and $\Theta'$ can only differ by their linkings, for they have the same syntactic forest.
Notice that a linking in $\theta_i$, $i\in\{0;1\}$, matches for $C$ in $\theta$ if and only if it matches for $C_i$ in $\theta'$, because this linking uses $C_i$ as $C$.
Thence, the same linkings stay in $\Theta$ and $\Theta'$, and $\Theta=\Theta'$ follows.
\end{rem}

\begin{thm}[Simulation Theorem]
\label{th:simulation}
Let $\pi$ and $\pi'$ be unit-free MALL proof trees with expanded axioms.
If $\pi\eqb\pi'$, then $\Rst(\pi)\eqb\Rst(\pi')$.
\end{thm}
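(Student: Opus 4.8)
The plan is to reduce the statement to its single-step version via a straightforward induction, letting the two preceding lemmas do all the real work. Recall that $\eqb$ is the equivalence relation generated by cut-elimination, so $\pi \eqb \pi'$ means there is a finite sequence $\pi = \rho_0, \rho_1, \dots, \rho_n = \pi'$ in which each consecutive pair is related by $\betato$ or $\betafrom$. I would argue by induction on the length $n$ of such a sequence. Two inclusions form the backbone of the argument: first, $\betato \subseteq \eqb$, which is immediate since $\eqb$ is by definition the equivalence closure of $\betato$; and second, $\dupcut \subseteq \eqb$, which is exactly \autoref{lem:dupcut_eqb}.

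For the base case $n = 0$ we have $\pi = \pi'$, hence $\Rst(\pi) = \Rst(\pi')$ and in particular $\Rst(\pi) \eqb \Rst(\pi')$. For the inductive step, write the sequence as $\pi \cadratin \phi \eqb \pi'$ with $\cadratin \in \{\betato, \betafrom\}$; the induction hypothesis gives $\Rst(\phi) \eqb \Rst(\pi')$, so by transitivity of $\eqb$ it suffices to establish $\Rst(\pi) \eqb \Rst(\phi)$ for the single step $\pi \cadratin \phi$. Applying \autoref{lem:simulation_beta} to this step (using it on the oriented pair and absorbing the case $\cadratin = {\betafrom}$ through symmetry of $\eqb$) leaves three possibilities: $\Rst(\pi) = \Rst(\phi)$, or $\Rst(\pi) \dupcut \Rst(\phi)$, or $\Rst(\pi) \betato \Rst(\phi)$. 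In the first case the desequentializations are literally equal; in the second, \autoref{lem:dupcut_eqb} yields $\Rst(\pi) \eqb \Rst(\phi)$; in the third, $\betato \subseteq \eqb$ yields the same. Thus $\Rst(\pi) \eqb \Rst(\phi)$ in every case, and combining with the induction hypothesis closes the induction.

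I do not expect a genuine obstacle here: the delicate combinatorics of the $\with$-$cut$ commutation — the only step where desequentialization fails to be a simulation on the nose — have already been isolated in \autoref{lem:simulation_beta} and tamed in \autoref{lem:dupcut_eqb}. The theorem itself is then a purely formal diagram chase. The only points requiring a moment's care are that all three alternatives of \autoref{lem:simulation_beta} land inside the single relation $\eqb$, and that the reversed orientation $\betafrom$ is absorbed by the symmetry of $\eqb$; notably, neither confluence nor strong normalization of cut-elimination (\autoref{th:cut_sn}) is needed for this step, only the inclusions $\betato \subseteq \eqb$ and $\dupcut \subseteq \eqb$ together with the fact that $\eqb$ is an equivalence relation.
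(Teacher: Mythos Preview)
Your proposal is correct and follows exactly the approach the paper intends: the paper's own proof is the single sentence ``This is a corollary of Lemmas~\ref{lem:simulation_beta} and~\ref{lem:dupcut_eqb},'' and what you have written is precisely the routine induction that unpacks this corollary. There is nothing to add or correct.
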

\begin{proof}
This is a corollary of Lemmas~\ref{lem:simulation_beta} and~\ref{lem:dupcut_eqb}.
\end{proof}

\subsection{Isomorphisms in proof-nets}
\label{subsec:iso_in_pn}

A notion of isomorphism $\isoproofs{A}{B}{\theta}{\psi}$ can be defined directly on proof-nets: $\theta$ and $\psi$ are two cut-free proof-nets of respective conclusions $A\orth,B$ and $B\orth,A$ such that $\cutf{\theta}{\psi}{B}$ and $\cutf{\psi}{\theta}{A}$ reduce by cut-elimination to identity proof-nets.
Thanks to the Simulation Theorem (\autoref{th:simulation}), we obtain:

\begin{thm}[Type isomorphisms in proof-nets]
\label{th:red_to_pn}
Let $A$ and $B$ be two unit-free MALL formulas.
If $A\iso B$ then there exist two proof-nets $\theta$ and $\psi$ such that $\isoproofs{A}{B}{\theta}{\psi}$.
\end{thm}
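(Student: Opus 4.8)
The plan is to transport a sequent-calculus isomorphism across the desequentialization function $\Rst$, using the Simulation Theorem as the main bridge. First I would apply \autoref{lem:iso_isoproofs} to the hypothesis $A \iso B$, obtaining cut-free sequent-calculus proofs $\pi_0$ of $\fCenter A\orth, B$ and $\pi_0'$ of $\fCenter B\orth, A$ with $\isoproofs{A}{B}{\pi_0}{\pi_0'}$. Since desequentialization is defined only on proofs with expanded axioms, I would replace these by their $\eta$-normal forms $\pi = \eta(\pi_0)$ and $\pi' = \eta(\pi_0')$: axiom-expansion preserves cut-freeness, and as composition by cut is compatible with $\eqbn$, the pair $(\pi,\pi')$ still satisfies $\cutf{\pi}{\pi'}{B} \eqbn \ax{A}$ and $\cutf{\pi'}{\pi}{A} \eqbn \ax{B}$.

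The crux is to upgrade these $\eqbn$-identities, living in the sequent calculus, into $\eqb$-identities between proof-nets. Here I would use that $\cutf{\pi}{\pi'}{B}$ already has expanded axioms, the $cut$-rule not destroying $\eta$-normality, hence equals its own $\eta$-normal form, while $\id{A} = \eta(\ax{A})$ by \autoref{def:id_pn}. Applying \autoref{prop:eta_nom_eqbn} to $\cutf{\pi}{\pi'}{B} \eqbn \ax{A}$ therefore yields $\cutf{\pi}{\pi'}{B} \eqb \id{A}$, with only $\eta$-normal proofs in the sequence, and symmetrically $\cutf{\pi'}{\pi}{A} \eqb \id{B}$.

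I would then set $\theta = \Rst(\pi)$ and $\psi = \Rst(\pi')$. By \autoref{th:seq} these are cut-free proof-nets, of conclusions $A\orth, B$ and $B\orth, A$. Because the $cut$-rule is desequentialized exactly as the composition of \autoref{def:cut}, we have $\Rst(\cutf{\pi}{\pi'}{B}) = \cutf{\theta}{\psi}{B}$ and $\Rst(\cutf{\pi'}{\pi}{A}) = \cutf{\psi}{\theta}{A}$. The Simulation Theorem (\autoref{th:simulation}), applied to $\cutf{\pi}{\pi'}{B} \eqb \id{A}$, then gives $\cutf{\theta}{\psi}{B} \eqb \Rst(\id{A})$, and likewise $\cutf{\psi}{\theta}{A} \eqb \Rst(\id{B})$. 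Finally, since $\Rst(\id{A})$ is cut-free, hence a normal form, and cut-elimination of proof-nets is confluent and strongly normalizing (\autoref{th:cut_sn}), the relation $\cutf{\theta}{\psi}{B} \eqb \Rst(\id{A})$ forces $\cutf{\theta}{\psi}{B} \betatostar \Rst(\id{A})$, i.e. $\cutf{\theta}{\psi}{B}$ reduces to the identity proof-net of $A$; symmetrically $\cutf{\psi}{\theta}{A}$ reduces to that of $B$. This is exactly $\isoproofs{A}{B}{\theta}{\psi}$.

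The individual steps are routine given the accumulated machinery, so I do not expect a genuine obstacle here; the real difficulty lies in the later sections establishing completeness on proof-nets. The one delicate point to get right is the bookkeeping around the $\eta$/$\beta$ distinction: noticing that $\cutf{\pi}{\pi'}{B}$ is $\eta$-normal, so that \autoref{prop:eta_nom_eqbn} converts the $\eqbn$-isomorphism condition into the $\eqb$ form demanded by the Simulation Theorem, and then invoking uniqueness of proof-net normal forms to turn the symmetric relation $\eqb$ into an actual reduction onto the already-normal identity proof-net.
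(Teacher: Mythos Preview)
Your proposal is correct and follows essentially the same approach as the paper: obtain cut-free, axiom-expanded sequent proofs via \autoref{lem:iso_isoproofs} and $\eta$-normalization, use \autoref{prop:eta_nom_eqbn} to pass from $\eqbn$ to $\eqb$, desequentialize, apply the Simulation Theorem, and invoke confluence and strong normalization of proof-net cut-elimination to turn $\eqb$ into an actual reduction onto the identity proof-net. Your write-up is in fact slightly more explicit than the paper's about why $\cutf{\pi}{\pi'}{B}$ is already $\eta$-normal, which is the point that justifies the application of \autoref{prop:eta_nom_eqbn}.
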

\begin{proof}
Using \autoref{lem:iso_isoproofs} followed by \autoref{prop:eta_nom_eqbn}, there exist unit-free MALL cut-free proofs with expanded axioms $\pi$ and $\tau$, respectively of $\fCenter A\orth,B$ and $\fCenter B\orth,A$, such that $\cutf{\pi}{\tau}{B}\eqb\id{A}$ and $\cutf{\tau}{\pi}{A}\eqb\id{B}$.
We will prove that $\cutf{\Rst(\pi)}{\Rst(\tau)}{B}$ reduces by cut-elimination to $\Rst(\id{A})$, and by symmetry a similar reasoning entails $\cutf{\Rst(\tau)}{\Rst(\pi)}{A}$ reduces to $\Rst(\id{B})$.
This is enough to conclude: $\theta \coloneqq \Rst(\pi)$ and $\psi \coloneqq \Rst(\tau)$ are cut-free proof-nets whose composition over $B$ (\resp\ $A$) yields after cut-elimination the identity proof-net of $A$ (\resp\ $B$).

By the Simulation Theorem (\autoref{th:simulation}), from $\cutf{\pi}{\tau}{B}\eqb\id{A}$ we deduce $\Rst(\cutf{\pi}{\tau}{A})\eqb\Rst(\id{A})$.
But $\Rst(\cutf{\pi}{\tau}{A})=\cutf{\Rst(\pi)}{\Rst(\tau)}{A}$ by definition of $\Rst$ (\autoref{subsec:deseq}).
Using convergence of cut-elimination (\autoref{th:cut_sn}) and that $\Rst(\id{A})$ is cut-free, one gets $\cutf{\Rst(\pi)}{\Rst(\tau)}{A}$ reduces by cut-elimination to $\Rst(\id{A})$, as wanted.
\end{proof}

\begin{rem}
\label{rem:iso_iso_pn}
The converse of \autoref{th:red_to_pn} holds.
Indeed, the goal of the next section is to prove $\isoproofs{A}{B}{\theta}{\psi} \implies A\eqEu B$.
Therefore, using in addition \autoref{th:iso_sound}:
\[\isoproofs{A}{B}{\theta}{\psi} \implies A\eqEu B \implies A\iso B\]
\end{rem}


\section{Completeness}
\label{sec:complet_unit-free}

Our method relates closely to the one used by Balat and Di~Cosmo in~\cite{mllisos}, with some more work due to the distributivity isomorphisms.
We work on proof-nets, as they highly simplify the problem by representing proofs up to rule commutations~\cite{mallpncom}.
The core of the proof is as follows.
Looking at the expected isomorphisms between distributed formulas, there should be only associativity and commutativity, \ie\ reordering of atoms in formulas.
In particular, a proof-net of an isomorphism $A \iso B$ should simply be a bijection between atoms of $A$ and atoms of $B$.
We prove it is indeed the case: such a proof-net has a very particular shape with each atom of $A$ having a unique axiom link on it, which goes to an atom of $B$ (and vice-versa); we call bipartite \unique\ such a proof-net.
Once this is proved, it is not difficult (albeit a bit long) to conclude only rearrangements of atoms are possible, so only associativity and commutativity of connectives.
However, proving that proof-nets of an isomorphism are bipartite \unique\ is complicated; in particular, this only holds for isomorphisms between distributed formulas (\eg\ the proof-nets of a distributivity isomorphism are not of this shape).
This is why we go through intermediate steps: we first prove that axiom links are between $A$ and $B$ (\ie\ there is no link between two atoms of $A$), then that every atom has a link on it, and finally we prove the unicity of this link.
More formally, the special shapes of proof-nets we consider are the followings.

\begin{defi}[Full, \Unique, Bipartite proof-net]
A proof-net is called \emph{full} if any of its leaves has (at least) one link on it.
Furthermore, if for any leaf there exists a unique link on it (possibly shared among several linkings), then we call this proof-net \emph{\unique}.

\noindent
A cut-free proof-net is \emph{bipartite} if it has two conclusions, $A$ and $B$, and each of its links is between a leaf of $A$ and a leaf of $B$ (no link between leaves of $A$, or between leaves of $B$).
\end{defi}
For instance:
\begin{itemize}
\item the three proof-nets on \autoref{fig:id_bi_u} (\autopageref{fig:id_bi_u}) are bipartite and \unique\
\item the top-left proof-net of \autoref{fig:no_anticut} (\autopageref{fig:no_anticut}) is non-bipartite, full and non-\unique; meanwhile, its top-right proof-net is bipartite and non-full (so non-\unique)
\item both proof-nets on \autoref{fig:dist_not_uniq} (\autopageref{fig:dist_not_uniq}) are bipartite, full and non-\unique\
\end{itemize}

Our proof of completeness can be sketched as follows:
\begin{enumerate}[label=(\arabic*)]
\item
we start by studying identity proof-nets and easily prove they are bipartite \unique\ (\autoref{subsec:id});
\item
then, we show isomorphisms yield bipartite full proof-nets (\autoref{sec:red_bi_full}) -- the difficulty here is that bipartiteness and fullness are not preserved by cut anti-reduction, so we have to use a reasoning specific to isomorphisms;
\item
next is proven that distributed isomorphisms have \unique\ proof-nets (\autoref{sec:dist_bi_u}), which is the core of the argument and where we need the distributivity hypothesis;
\item
we then exploit bipartite \unique ness so as to rename atoms in $A \iso B$ to make $A$ and $B$ having no repetitions of atoms (\ie\ each atom appears at most once in $A$), which is called \emph{non-ambiguousness} (\autoref{sec:red_nonambi});
\item
finally, isomorphisms between non-ambiguous formulas are easily characterized as being exactly compositions of associativity and commutativity (\autoref{sec:compl_unit_free}).
\end{enumerate}
Looking at how isomorphisms of MLL are characterized in~\cite{mllisos}, the key differences are that \unique ness is given for free as there is no slice nor distributivity isomorphism.
In particular, steps $(4)$ and $(5)$ are similar to their MLL counterpart -- but more complex in MALL as there are more connectives.

\subsection{Properties of identity proof-nets}
\label{subsec:id}

Using an induction on the formula $A$, we can prove the following results on the identity proof-net of $A$, and in particular that it is bipartite \unique.
See \autoref{fig:id_bi_u} for a graphical intuition.

\begin{figure}
\centering
\resizebox{.8\width}{!}{
\begin{tikzpicture}
\begin{myscope}
	\node (P^) at (-1,1) {$X\orth$};
	\node (P) at (1,1) {$X$};
\end{myscope}
\begin{myscopec}{red}
	\path (P) -- ++(0,.8) -| (P^);
\end{myscopec}
\end{tikzpicture}
}
\resizebox{.8\width}{!}{
\begin{tikzpicture}
\begin{myscope}
	\node (P^) at (-3,1) {$X\orth$};
	\node (Q^) at (-1,1) {$Y\orth$};
	\node (Q) at (1,1) {$Y$};
	\node (P) at (3,1) {$X$};
	\node (p) at (-2,0) {$\parr$};
	\node (T) at (2,0) {$\tens$};
	\coordinate (c) at (.6,1.25);

	\path (P^) edge (p);
	\path (Q^) edge (p);
	\path (P) edge (T);
	\path (Q) edge (T);
\end{myscope}
\begin{myscopec}{red}
	\path (P) -- ++(0,.8) -| (P^);
	\path (Q) -- ++(0,.6) -| (Q^);
\end{myscopec}
\end{tikzpicture}
}
\resizebox{.8\width}{!}{
\begin{tikzpicture}
\begin{myscope}
	\node (P^) at (-3,1) {$X\orth$};
	\node (Q^) at (-1,1) {$Y\orth$};
	\node (Q) at (1,1) {$Y$};
	\node (P) at (3,1) {$X$};
	\node (W) at (-2,0) {$\with$};
	\node (O) at (2,0) {$\oplus$};
	\coordinate (c) at (.6,1.25);

	\path (P^) edge (W);
	\path (Q^) edge (W);
	\path (Q) edge (O);
	\path (P) edge (O);
	\path[out=225,in=-15] (Q) edge (W);
	\path[out=135,in=45] (P) edge[-] (c);
	\path[out=225,in=5] (c) edge (W);
\end{myscope}
\begin{myscopec}{red}
	\path (P^) -- ++(0,.8) -| (P);
\end{myscopec}
\begin{myscopec}{blue}
	\path (Q^) -- ++(0,-.6) -| (Q);
\end{myscopec}
\end{tikzpicture}
}
\caption{Identity proof-nets (from left to right: atom, $\parr\backslash\tens$ and $\with\backslash\oplus$)}
\label{fig:id_bi_u}
\end{figure}

\begin{lem}
\label{lem:id_ax}
The axiom links of an identity proof-net are exactly the $(l,l\orth)$, for any leaf $l$.
\end{lem}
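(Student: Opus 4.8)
The plan is to prove this by structural induction on the unit-free formula $A$, relying on the inductive description of $\id{A}$ recalled in the proof of \autoref{prop:eqc_id} and on the fact that, in desequentialization, axiom links are created (and never subsequently altered) only by $ax$-rules. Since $\id{A}$ is a proof of $\fCenter A\orth, A$, its leaves come in dual pairs $l$ (a leaf of $A$) and $l\orth$ (the corresponding leaf of $A\orth$), and the statement amounts to showing that the set of links of $\Rst(\id{A})$ is exactly $\{(l,l\orth)\}$. By the symmetry $A\leftrightarrow A\orth$ (note that $\id{A\orth}$ is $\id{A}$ with its two conclusions read in reverse order), it suffices to treat the cases where $A$ is an atom, a $\tens$-formula, or a $\oplus$-formula.

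For the base case, when $A = X$ or $A = X\orth$, we have $\id{A} = \ax{A}$ and $\Rst(\id{A})$ carries the single link $(A, A\orth)$, which is $(l,l\orth)$ for the unique leaf $l$. For the inductive step, if $A = B\tens C$ then $\id{A}$ is obtained from $\id{B}$ and $\id{C}$ by a $\tens$-rule followed by a $\parr$-rule; neither rule modifies linkings beyond juxtaposing the two sub-nets, so the links of $\Rst(\id{A})$ are exactly those of $\Rst(\id{B})$ together with those of $\Rst(\id{C})$. If $A = B\oplus C$ then $\id{A}$ is a $\with$-rule applied to $\id{C}$-then-$\oplus_2$ and $\id{B}$-then-$\oplus_1$; the $\oplus_i$-rules leave links untouched, and the $\with$-rule takes the union of the two sets of linkings, so again the links of $\Rst(\id{A})$ are the union of those of $\Rst(\id{B})$ and $\Rst(\id{C})$. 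In both cases the induction hypothesis identifies these with the dual leaf-pairs of $B$ and of $C$.

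It then remains to check that a dual leaf-pair of $B$ (\resp\ $C$), as furnished by the induction hypothesis, really is a dual leaf-pair of $A$. This is the only bookkeeping point, and I do not expect it to be a genuine obstacle: leaf-duality is defined by the syntactic correspondence between the trees of $A$ and $A\orth$, and the non-commutative De Morgan laws make this correspondence restrict exactly to the correspondence between $B$ and $B\orth$ (\resp\ $C$ and $C\orth$), preserving dual atom labels. Hence the dual $l\orth$ of a leaf $l$ of $B$ computed inside $B$ coincides with the one computed inside $A$, and the union of the two sets of dual pairs is precisely the set of dual leaf-pairs of $A$. The whole argument is thus a routine induction whose only subtlety is ensuring that the notion of duality used at each level is the restriction of the global one, which holds by construction.
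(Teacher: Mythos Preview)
Your proposal is correct and follows exactly the same route as the paper: a straightforward structural induction on the formula $A$, using the inductive description of $\id{A}$ and the way desequentialization handles each rule. The paper's own proof is just the one line ``By induction on the formula (see \autoref{fig:id_bi_u})'', so your write-up is simply a more detailed unfolding of that same argument.
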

\begin{proof}
By induction on the formula (see \autoref{fig:id_bi_u}).
\end{proof}

\begin{cor}
\label{cor:id_bi_u}
An identity proof-net is bipartite \unique.
\end{cor}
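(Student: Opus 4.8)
The statement to prove is Corollary~\ref{cor:id_bi_u}: an identity proof-net is bipartite \unique. The plan is to derive this directly from Lemma~\ref{lem:id_ax}, which has just been established and which gives an exact description of the axiom links of an identity proof-net: they are precisely the pairs $(l,l\orth)$, ranging over all leaves $l$ of the conclusion. This characterization does essentially all the work, and the corollary should follow by unwinding the definitions of \emph{bipartite}, \emph{full} and \emph{\unique}.

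First I would recall the setup. The identity proof-net of a unit-free formula $A$ has conclusions $A\orth$ and $A$, and by the duality on leaves and connectives set up in \autoref{subsec:mall-pn}, each leaf $l$ of $A$ has a dual leaf $l\orth$ in $A\orth$ carrying the dual atom label. To establish bipartiteness, I would invoke Lemma~\ref{lem:id_ax}: every link is of the form $(l,l\orth)$. Since $l$ and $l\orth$ lie in dual formulas, and a leaf of $A$ is never its own dual (the dual sits in $A\orth$), each such link connects a leaf of $A$ to a leaf of $A\orth$. Hence no link is internal to $A$ or internal to $A\orth$, which is exactly bipartiteness.

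For fullness and \unique ness, I would again read off Lemma~\ref{lem:id_ax}. Fullness asks that every leaf carries at least one link: given any leaf $m$, the pair $(m,m\orth)$ is a link of the identity proof-net by the lemma, so $m$ is covered. For \unique ness I must check that each leaf carries a \emph{unique} link. Suppose a leaf $m$ occurred in two distinct links; by Lemma~\ref{lem:id_ax} each of these links has the form $(l,l\orth)$, and since $m$ is one of the two endpoints, the other endpoint is forced to be $m\orth$ in both cases (the dual of a leaf is unique). Thus the two links coincide, a contradiction, so there is exactly one link on $m$. This gives \unique ness, and combined with the previous paragraph the identity proof-net is bipartite \unique, as claimed.

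I do not expect any real obstacle here: the corollary is a bookkeeping consequence of Lemma~\ref{lem:id_ax} together with the observation that the map $l\mapsto l\orth$ is an involution pairing leaves of $A$ with leaves of $A\orth$. The only point requiring a modicum of care is making the duality argument precise, namely that $l\orth$ is genuinely determined by $l$ and lies in the opposite conclusion; this is guaranteed by the definition of duality on leaves given just before \autoref{sec:ce_pn}, where $V\biorth=V$ and the label of $l\orth$ is the dual of the label of $l$. One could alternatively prove the corollary by a direct induction on $A$ mirroring the proof of Lemma~\ref{lem:id_ax} and reading the desired properties off the three base and inductive pictures in \autoref{fig:id_bi_u}, but reusing the lemma is shorter and avoids repeating the induction.
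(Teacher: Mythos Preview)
Your proof is correct and takes essentially the same approach as the paper: both derive the corollary directly from Lemma~\ref{lem:id_ax}. The paper's proof is simply the one-liner ``This follows from \autoref{lem:id_ax}'', and your version just unpacks the definitions explicitly.
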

\begin{proof}
This follows from \autoref{lem:id_ax}.
\end{proof}

\begin{lem}
\label{lem:id_choices}
Let $\lambda$ be a linking of an identity proof-net and $V$ an additive vertex in its additive resolution.
Recall $V\orth$ is the dual occurrence of $V$ in $A\orth$ (\resp\ $A$) when $V$ belongs to $A$ (\resp\ $A\orth$).
Then $V\orth$ is also inside the additive resolution of $\lambda$, and its kept premise in this resolution is the dual premise of the one kept for $V$ -- \ie\ if the left (\resp\ right) premise of $V$ is kept, then the right (\resp\ left) premise of $V\orth$ is kept.
\end{lem}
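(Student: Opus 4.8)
The plan is to proceed by induction on the formula $A$, following the inductive definition of $\id{A}$ given by axiom-expansion (\autoref{tab:ax_exp_pt}) and its desequentialization (\autoref{subsec:deseq}). Throughout, I would rely on \autoref{lem:id_ax}, which says that the links of any linking $\lambda$ of $\id{A}$ are the pairs $(l,l\orth)$; consequently the set of leaves occurring in $\lambda$ is closed under duality, and the additive resolution of $\lambda$ is entirely determined by this leaf set. Since the statement is invariant under exchanging $A$ and $A\orth$ — the identity proof-nets of $A$ and of $A\orth$ coincide, and $V\mapsto V\orth$ is an involution — I may assume without loss of generality that $A$ is \emph{positive}, so that its main connective, when it has one, is $\tens$ or $\oplus$. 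The base case $A=X$ (or $X\orth$) is vacuous, there being no additive vertex.

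For the multiplicative case $A=B\tens C$, with $A\orth = C\orth\parr B\orth$, every linking of $\id{A}$ has the form $\lambda_B\cup\lambda_C$, where the $\parr$-rule of the desequentialization leaves the linking set unchanged and the $\tens$-rule forms the pairwise unions of a linking $\lambda_B$ of $\id{B}$ with a linking $\lambda_C$ of $\id{C}$. All additive vertices lie inside $B$, $C$, $B\orth$ or $C\orth$, and since the leaves of $\lambda_B$ (\resp\ $\lambda_C$) are exactly those of the corresponding linking of $\id{B}$ (\resp\ $\id{C}$), the additive resolution of $\lambda_B\cup\lambda_C$ restricts on the $B,B\orth$ part (\resp\ the $C,C\orth$ part) to the additive resolution of $\lambda_B$ in $\id{B}$ (\resp\ $\lambda_C$ in $\id{C}$). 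The claim then follows componentwise from the induction hypothesis applied to $\id{B}$ and $\id{C}$.

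For the additive case $A=B\oplus C$, with $A\orth=C\orth\with B\orth$, the desequentialization produces two families of linkings, those coming from $\id{B}$ through an $\oplus_1$-rule and those coming from $\id{C}$ through an $\oplus_2$-rule, merged by the $\with$-rule. Consider a linking $\lambda$ of the first family, the second being symmetric. Its leaves lie in $B\orth,B$, so its additive resolution keeps the left premise $B$ of the top $\oplus = B\oplus C$ and the right premise $B\orth$ of the top $\with = C\orth\with B\orth$, deleting the subtrees of $C$ and $C\orth$. The additive vertices in this resolution are therefore exactly the two top vertices together with the additive vertices of $B$ and $B\orth$ retained by $\lambda$. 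For the inner vertices, the duality of premises is given by the induction hypothesis on $\id{B}$. For the top pair, I would check directly, using the non-commutative De Morgan law $(B\oplus C)\orth = C\orth\with B\orth$, that the left premise $B$ of the $\oplus$ is the dual premise of the right premise $B\orth$ of the $\with$; hence the kept premises are dual, as required, and $B\oplus C$ being in the resolution forces its dual $C\orth\with B\orth$ into it as well.

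The main obstacle is not conceptual but is the premise-duality bookkeeping in the additive case: one must verify that the $\oplus_i$-rule chosen on the $A$-side in the definition of $\id{A}$ is paired with keeping precisely the dual premise of the $\with$ on the $A\orth$-side. This matching is exactly what the non-commutative De Morgan laws are designed to make hold — keeping the left premise of an $\oplus$ corresponds to keeping the (right) dual premise of its dual $\with$ — so the verification reduces to reading off \autoref{tab:ax_exp_pt} and tracking left/right under duality.
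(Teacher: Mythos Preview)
Your induction is correct, but the paper's proof is a one-paragraph direct argument that avoids induction entirely. It uses \autoref{lem:id_ax} more sharply than you do: instead of arguing that the leaf set of $\lambda$ is closed under duality and then unwinding the inductive construction of $\id{A}$, the paper simply picks any leaf $l$ that is a left-ancestor of $V$ in the additive resolution of $\lambda$ (such a leaf exists since $V$ is in the resolution and its left premise is kept), observes that the link on $l$ must be $(l,l\orth)$ by \autoref{lem:id_ax}, and notes that $l\orth$ is then a right-ancestor of $V\orth$, which forces $V\orth$ into the resolution with its right premise kept. Your approach has the virtue of tracking exactly how linkings of $\id{A}$ decompose along the connectives, which is pedagogically useful, but it re-derives structure that \autoref{lem:id_ax} already packages; the paper's argument extracts the conclusion in three lines from that lemma alone.
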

\begin{proof}
Assume \wolog\ that the left premise of $V$ is kept in $\lambda$.
There is a left-ancestor $l$ of $V$ in the additive resolution of $\lambda$, hence with a link $a\in\lambda$ on it.
By \autoref{lem:id_ax}, $a=(l,l\orth)$.
As $l\orth$ is a right-ancestor of $V\orth$, the conclusion follows.
\end{proof}

The next result allows to go from exactly one linking on any $\with$-resolution of $A\orth, A$ -- \autoref{def:pn} (P1) -- to exactly one linking on any additive resolution of $A$.

\begin{lem}
\label{lem:id_res}
In the identity proof-net of $A$, exactly one linking is on any given additive resolution of the conclusion $A$.
\end{lem}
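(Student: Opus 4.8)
The plan is to establish the stronger statement, by induction on the unit-free formula $A$, that the map sending a linking $\lambda$ of the identity proof-net $\id{A}$ to the restriction to $A$ of its additive resolution $A\orth,A\upharpoonright\lambda$ is a \emph{bijection} onto the additive resolutions of $A$; the lemma is then exactly the injectivity-plus-surjectivity of this map. Since $\id{A}$ and $\id{A\orth}$ coincide up to swapping the two conclusions, and additive resolutions of $A$ and $A\orth$ are in dual correspondence, I may assume by symmetry that $A$ is positive, so that it suffices to treat $A=X$, $A=B\tens C$ and $A=B\oplus C$.

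For injectivity I would argue uniformly, without induction. If two linkings $\lambda,\lambda'$ have additive resolutions restricting to the same resolution $\mathcal{R}$ of $A$, then by \autoref{lem:id_choices} their resolutions on $A\orth$ are both forced to be the dual of $\mathcal{R}$, so $A\orth,A\upharpoonright\lambda = A\orth,A\upharpoonright\lambda'$; and by \autoref{lem:id_ax} the links of each are exactly the pairs $(l,l\orth)$ ranging over the leaves of this common resolution, whence $\lambda=\lambda'$. This already shows that at most one linking lies on any additive resolution of $A$.

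Surjectivity is where I would run the induction, following the inductive definition of $\id{A}$ from \autoref{subsec:deseq}. In the base case $A=X$ there is a single linking and a single (empty) additive resolution. For $A=B\tens C$ the linkings of $\id{A}$ are the $\lambda\cup\mu$ with $\lambda\in\id{B}$ and $\mu\in\id{C}$, and, $\tens$ and $\parr$ being non-additive, the additive resolutions of $A$ factor as a product of those of $B$ and of $C$; the induction hypotheses for $B$ and $C$ then produce, for each target resolution, the required pair $(\lambda,\mu)$ and hence the linking $\lambda\cup\mu$. For $A=B\oplus C$ the root rule is a $\with$, so $\id{A}=\theta_B\sqcup\theta_C$ where $\theta_B$ (resp.\ $\theta_C$) consists of the linkings of $\id{B}$ (resp.\ $\id{C}$) placed on the leaves of $B,B\orth$ (resp.\ $C,C\orth$); by \autoref{lem:id_ax} a linking of $\theta_B$ touches only leaves of $B$, so its additive resolution keeps the $B$-premise of the top $\oplus$, matching precisely those additive resolutions of $A$ that select $B$. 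As every additive resolution of $A$ selects either $B$ or $C$ and then resolves inside, the induction hypotheses supply a realizing linking in $\theta_B$ or in $\theta_C$.

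The main point to handle carefully—and the only real obstacle—is the bookkeeping that bridges additive resolutions of $A$ alone with the additive resolutions of the full sequent $A\orth,A$ carried by a linking: one must check at each step that the choice at the top additive vertex of $A$ is \emph{forced} by which leaves the linking touches (via \autoref{lem:id_ax}), and that the dual choice on $A\orth$ is the one dictated by \autoref{lem:id_choices}, so that restricting to $A$ is both well defined and compatible with the inductive decomposition. Once this is in place, injectivity and surjectivity together yield exactly one linking on each additive resolution of $A$, as claimed.
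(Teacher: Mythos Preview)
Your argument is correct, but the paper takes a shorter route that avoids any induction. Instead of proving surjectivity structurally, the paper reduces the whole statement to condition (P1) of \autoref{def:pn}: given an additive resolution $R$ of $A$, it builds a $\with$-resolution $R'$ of $A\orth,A$ by keeping the choices of $R$ on $A$ and, for each $\with$-vertex $W$ of $A\orth$, taking the dual of the premise $R$ chose for $W\orth$ in $A$; then \autoref{lem:id_choices} shows that a linking is on $R$ iff it is on $R'$, and (P1) gives exactly one linking on $R'$. This handles existence and uniqueness in one stroke. Your separation into injectivity (via \autoref{lem:id_choices} and \autoref{lem:id_ax}) and surjectivity (via the inductive definition of $\id{A}$) works and is perhaps more self-contained, since it does not appeal to the correctness criterion; but it is longer and re-proves what (P1) already guarantees.
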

\begin{proof}
Consider such an additive resolution $R$. There is an associated $\with$-resolution $R'$ of $A\orth,A$ by taking the choices of premise of $R$ on $A$ and, for a $\with$-vertex $W$ of $A\orth$, taking the dual premise chosen in $R$ for $W\orth$.
By \autoref{lem:id_choices}, a linking $\lambda$ is on $R$ if and only if it is on $R'$.
Meanwhile, by \autoref{def:pn} (P1) there is a unique linking $\lambda$ on $R'$; thus the same holds on $R$.
\end{proof}

\subsection{Bipartite full proof-nets}
\label{sec:red_bi_full}

We prove here that proof-nets of isomorphisms are bipartite full.
Neither fullness, \unique ness nor bipartiteness is preserved by cut anti-reduction.
A counter-example is given on \autoref{fig:no_anticut}, with a non bipartite proof-net and a non full one whose composition reduces to the identity proof-net (bipartite \unique\ by \autoref{cor:id_bi_u}).
However, if \emph{both} compositions yield identity proof-nets, 
we get bipartiteness and fullness.

\begin{figure}
\begin{adjustbox}{}
\begin{tikzpicture}
\begin{myscope}
	\node (A00) at (-9,3) {$A$};
	\node (P0) at (-8,2) {$\parr$};
	\node (A01) at (-7,3) {$A\orth$};
	\node (T0) at (-7,1) {$\tens$};
	\node (B0) at (-6,2) {$B$};
	\path (A00) edge (P0);
	\path (A01) edge (P0);
	\path (P0) edge (T0);
	\path (B0) edge (T0);
	
	\node (B10) at (-5,1) {$B\orth$};
	\node (W1) at (-4,0) {$\with$};
	\node (B11) at (-4,2) {$B\orth$};
	\node (P1) at (-3,1) {$\parr$};
	\node (A10) at (-3,3) {$A$};
	\node (T1) at (-2,2) {$\tens$};
	\node (A11) at (-1,3) {$A\orth$};
	\path (B10) edge (W1);
	\path (B11) edge (P1);
	\path (A10) edge (T1);
	\path (A11) edge (T1);
	\path (T1) edge (P1);
	\path (P1) edge (W1);
\end{myscope}
\begin{myscopec}{blue}
	\path (A00) -- ++(0,.8) -| (A11);
	\path (A01) -- ++(0,.6) -| (A10);
	\path (B0) -- ++(0,.6) -| (B11);
\end{myscopec}
\begin{myscopec}{red}
	\path (A00) -- ++(0,-1.5) -| (A01);
	\path (B0) -- ++(0,-1.6) -| (B10);
\end{myscopec}
\begin{myscope_i}
	\node () at (-4,-.1) {};
\end{myscope_i}
\end{tikzpicture}
\quad\vrule\quad
\begin{tikzpicture}
\begin{myscope}
	\node (B20) at (5,1) {$B$};
	\node (W2) at (4,0) {$\oplus$};
	\node (B21) at (4,2) {$B$};
	\node (P2) at (3,1) {$\tens$};
	\node (A20) at (3,3) {$A\orth$};
	\node (T2) at (2,2) {$\parr$};
	\node (A21) at (1,3) {$A$};
	\path (B20) edge (W2);
	\path (B21) edge (P2);
	\path (A20) edge (T2);
	\path (A21) edge (T2);
	\path (T2) edge (P2);
	\path (P2) edge (W2);
	
	\node (A30) at (9,3) {$A\orth$};
	\node (P3) at (8,2) {$\tens$};
	\node (A31) at (7,3) {$A$};
	\node (T3) at (7,1) {$\parr$};
	\node (B3) at (6,2) {$B\orth$};
	\path (A30) edge (P3);
	\path (A31) edge (P3);
	\path (P3) edge (T3);
	\path (B3) edge (T3);
\end{myscope}
\begin{myscopec}{olive}
	\path (A21) -- ++(0,.8) -| (A30);
	\path (A20) -- ++(0,.6) -| (A31);
	\path (B21) -- ++(0,.6) -| (B3);
\end{myscopec}
\begin{myscope_i}
	\node () at (4,-.1) {};
\end{myscope_i}
\end{tikzpicture}
\end{adjustbox}
\hrule\vskip.5em
\begin{adjustbox}{}
\begin{tikzpicture}
\begin{myscope}
	\node (A00) at (-9,3) {$A$};
	\node (P0) at (-8,2) {$\parr$};
	\node (A01) at (-7,3) {$A\orth$};
	\node (T0) at (-7,1) {$\tens$};
	\node (B0) at (-6,2) {$B$};
	\path (A00) edge (P0);
	\path (A01) edge (P0);
	\path (P0) edge (T0);
	\path (B0) edge (T0);
	
	\node (B10) at (-5,1) {$B\orth$};
	\node (W1) at (-4,0) {$\with$};
	\node (B11) at (-4,2) {$B\orth$};
	\node (P1) at (-3,1) {$\parr$};
	\node (A10) at (-3,3) {$A$};
	\node (T1) at (-2,2) {$\tens$};
	\node (A11) at (-1,3) {$A\orth$};
	\path (B10) edge (W1);
	\path (B11) edge (P1);
	\path (A10) edge (T1);
	\path (A11) edge (T1);
	\path (T1) edge (P1);
	\path (P1) edge (W1);
	
	\node (B20) at (5,1) {$B$};
	\node (W2) at (4,0) {$\oplus$};
	\node (B21) at (4,2) {$B$};
	\node (P2) at (3,1) {$\tens$};
	\node (A20) at (3,3) {$A\orth$};
	\node (T2) at (2,2) {$\parr$};
	\node (A21) at (1,3) {$A$};
	\path (B20) edge (W2);
	\path (B21) edge (P2);
	\path (A20) edge (T2);
	\path (A21) edge (T2);
	\path (T2) edge (P2);
	\path (P2) edge (W2);
	
	\node (A30) at (9,3) {$A\orth$};
	\node (P3) at (8,2) {$\tens$};
	\node (A31) at (7,3) {$A$};
	\node (T3) at (7,1) {$\parr$};
	\node (B3) at (6,2) {$B\orth$};
	\path (A30) edge (P3);
	\path (A31) edge (P3);
	\path (P3) edge (T3);
	\path (B3) edge (T3);
	
	\node (*) at (0,-.5) {$\ast$};
	\path (W1) edge (*);
	\path (W2) edge (*);
\end{myscope}
\begin{myscopec}{blue}
	\path (A00) -- ++(0,.8) -| (A11);
	\path (A01) -- ++(0,.6) -| (A10);
	\path (B0) -- ++(0,.6) -| (B11);
	
	\path (A21) -- ++(0,.8) -| (A30);
	\path (A20) -- ++(0,.6) -| (A31);
	\path (B21) -- ++(0,.6) -| (B3);
\end{myscopec}
\begin{myscopec}{red}
	\path (A00) -- ++(0,-1.5) -| (A01);
	\path (B0) -- ++(0,-1.6) -| (B10);

	\path (A21) -- ++(0,-3.5) -| (A30);
	\path (A20) -- ++(0,-1.7) -| (A31);
	\path (B21) -- ++(0,-.55) -| (B3);
\end{myscopec}
\end{tikzpicture}
\end{adjustbox}
\caption{Non bipartite proof-net (top-left), non full proof-net (top-right) and one of their compositions yielding the identity proof-net (bottom) (jump edges not represented)}
\label{fig:no_anticut}
\end{figure}

\begin{lem}
\label{lem:id_match}
Let $\theta$ and $\theta'$ be cut-free proof-nets of respective conclusions $A\orth,B$ and $B\orth,A$, such that $\cutf{\theta'}{\theta}{A}$ reduces to the identity proof-net of $B$.
For any linking $\lambda\in\theta$, there exists $\lambda'\in\theta'$ such that $\lambda\cup\lambda'$ matches in the composition $\cutf{\theta}{\theta'}{B}$ of $\theta$ and $\theta'$ over $B$.
\end{lem}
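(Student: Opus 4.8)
The plan is to reduce matching for the $B$-cut to an equality of additive choices, and to produce the required $\lambda'$ by pulling a suitable linking of $\id{B}$ back through the reduction of the composition over $A$. Fix $\lambda\in\theta$ and let $\mathcal{C}$ be the additive resolution it induces on the conclusion $B$ of $\theta$, namely the restriction to $B$ of the additive resolution $[A\orth]\,B\upharpoonright\lambda$ (the choices made by $\lambda$ at the $\with$- and $\oplus$-vertices of $B$). In the composition over $B$, the linking $\lambda\cup\lambda'$ matches exactly when the additive choices on $B$ (from $\lambda$) and on $B\orth$ (from $\lambda'$) are dual at the children of the cut, i.e.\ when $\lambda'$ induces on $B\orth$ the dual resolution $\mathcal{C}\orth$. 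So it suffices to exhibit some $\lambda'\in\theta'$ inducing $\mathcal{C}\orth$ on $B\orth$.

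To find it, I would work inside $\Theta=\cutf{\theta'}{\theta}{A}$, whose conclusion is $[A\ast A\orth]\,B\orth, B$ and which reduces to $\id{B}$ by hypothesis. By \autoref{lem:id_res} there is a unique linking $\nu$ of $\id{B}$ on the additive resolution $\mathcal{C}$ of $B$, and by \autoref{lem:id_choices} this $\nu$ makes on $B\orth$ exactly the dual choices $\mathcal{C}\orth$. Now I would pull $\nu$ back across the elimination of the $A$-cut. Cut-elimination never creates linkings: it only discards the non-matching ones and rewires the survivors, which is precisely the content of Turbo Cut-elimination, whereby the linkings of $\id{B}$ are literally the rewirings of the matching linkings $\mu\cup\lambda_0$ of $\Theta$ (with $\mu\in\theta'$ and $\lambda_0\in\theta$). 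Since eliminating the $A$-cut only touches vertices inside $A$ and $A\orth$, it leaves the additive resolutions read off the surviving conclusions $B\orth$ (coming from $\mu$) and $B$ (coming from $\lambda_0$) unchanged. Hence the preimage $\mu$ of $\nu$ induces exactly $\mathcal{C}\orth$ on $B\orth$, and I would take $\lambda'=\mu$.

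It then remains to conclude: as $\lambda$ induces $\mathcal{C}$ on $B$ and $\lambda'=\mu$ induces $\mathcal{C}\orth$ on $B\orth$, the $\with$- and $\oplus$-children of the cut in $\cutf{\theta}{\theta'}{B}$ select dual premises throughout, which is exactly the condition for $\lambda\cup\lambda'$ to match. The main obstacle is this pull-back step: one must argue carefully both that the reduction of the $A$-cut is surjective onto the linkings of $\id{B}$ and that it preserves the additive resolutions restricted to $B\orth$ and $B$. Formulating the reduction via Turbo Cut-elimination -- where the surviving linkings are by definition the matching ones, merely rewired -- makes both points transparent, so that the given $\nu$ always admits a preimage $\mu$ with the desired choices on $B\orth$; termination and existence of the normal form $\id{B}$ are guaranteed by \autoref{th:cut_sn}.
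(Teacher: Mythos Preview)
Your proof is correct and follows essentially the same approach as the paper: reduce matching over $B$ to duality of additive choices, invoke \autoref{lem:id_res} (together with \autoref{lem:id_choices}) to obtain a linking $\nu$ of $\id{B}$ with choices $\mathcal{C}$ on $B$ and $\mathcal{C}\orth$ on $B\orth$, then pull $\nu$ back through the reduction of the $A$-cut to extract a linking of $\theta'$ with the required choices on $B\orth$. The paper phrases the pull-back slightly differently---writing the preimage as $\mu\cup\mu'$ with $\mu\in\theta$, $\mu'\in\theta'$ and arguing that $\mu\cup\mu'$ matches over both $A$ and $B$---but the content is the same, and your appeal to Turbo Cut-elimination for surjectivity and preservation of conclusion-side resolutions is exactly the right justification.
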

\begin{proof}
Let us consider a linking $\lambda\in\theta$, and call $\mathcal{C}$ the choices of premise on additive connectives of $B$ that $\lambda$ makes.
We search some $\lambda'\in\theta'$ making the dual choices of premise on additive connectives of $B\orth$ compared to $\mathcal{C}$.
Consider the composition of $\theta$ and $\theta'$ over $A$.
It reduces to the identity proof-net of $B$ by hypothesis.
By \autoref{lem:id_res}, there exists a unique linking $\nu$ in the identity proof-net of $B$ corresponding to $\mathcal{C}$.
Furthermore, this linking $\nu$ of the identity proof-net is derived from some $\mu\cup\mu'$ for $\mu$ a linking of $\theta$ and $\mu'$ one of $\theta'$, with $\mu\cup\mu'$ matching for a cut over $A$: a linking in the identity proof-net is a linking of the form $\mu\cup\mu'$ where axiom links $(l,m_1)$, $(m_1\orth, m_2)$, \dots, $(m_n\orth, l\orth)$ in $\mu$ and $\mu'$ are replaced with $(l,l\orth)$, with $l$ a leaf of $B$ and the $m_i$ and $m_i\orth$ of $A\orth$ and $A$ (because an identity proof-net has only links of the form $(l,l\orth)$ by \autoref{lem:id_ax}).
Therefore, $\mu$ makes the choices $\mathcal{C}$ on $B$ and $\mu\cup\mu'$ matches for the composition of $\theta$ and $\theta'$ over both $A$ and $B$.
But $\lambda$ makes the same choices $\mathcal{C}$ on $B$ as $\mu$: $\lambda\cup\mu'$ also matches for a cut over $B$.
\end{proof}

\begin{rem}
\label{rem:id_match_pt}
\autoref{lem:id_match} is the analogue of \autoref{lem:id_match_pt} in proof-nets.
Indeed, \autoref{lem:id_match_pt} states that given two sequent calculus proofs $\pi$ and $\pi'$ composing to the identity on $B$, and $s$ a slice of $\pi$, there exists a slice $s'$ of $\pi'$ such that $\cutf{s}{s'}{B}$  does not fail when reducing cuts -- \ie\ the two slices make the dual choices for additive connectives in $B$.
Seeing a linking as a slice, this corresponds to having two matching linkings.
\end{rem}

\begin{cor}
\label{cor:id_qbi}
Assuming $\isoproofs{A}{B}{\theta}{\theta'}$, $\theta$ and $\theta'$ are bipartite.
\end{cor}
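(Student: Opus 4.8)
The plan is to prove that $\theta$ is bipartite; the statement for $\theta'$ then follows by running the same argument on the symmetric isomorphism $\isoproofs{B}{A}{\theta'}{\theta}$. Since the conclusions of $\theta$ are $A\orth$ and $B$, I must exclude two kinds of links: one with both endpoints among the leaves of $A\orth$, and one with both endpoints among the leaves of $B$. In both cases the mechanism is the same. A forbidden link $a$ lies in some linking $\lambda\in\theta$. Using \autoref{lem:id_match} I extend $\lambda$ to a union $\lambda\cup\lambda'$ with $\lambda'\in\theta'$ that \emph{matches} in a suitable composition; since a matching linking never becomes inconsistent, it is never deleted and survives cut-elimination to a linking of the normal form. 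By hypothesis that normal form is an identity proof-net, whose links are, by \autoref{lem:id_ax}, exactly the pairs $(l,l\orth)$ of dual leaves. If $a$ is untouched by the cut-elimination it persists unchanged into this identity proof-net, where it cannot occur, a contradiction.

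For a link $a$ with both endpoints in $A\orth$, I work in the composition $\cutf{\theta}{\theta'}{B}$, whose conclusions are $A\orth$ and $A$ and which reduces to $\id{A}$. Here \autoref{lem:id_match} applies directly and provides $\lambda'\in\theta'$ with $\lambda\cup\lambda'$ matching over $B$. The leaves of $A\orth$ are conclusion leaves of this composition, hence take no part in eliminating the cut on $B$; thus $a$ is unchanged along $\cutf{\theta}{\theta'}{B}\betatostar\id{A}$ and appears in some linking of $\id{A}$. But both endpoints of $a$ lie in $A\orth$, whereas every link of $\id{A}$ joins $A\orth$ to $A$ by \autoref{lem:id_ax}; contradiction. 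Consequently $\theta$ has no link internal to $A\orth$.

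For a link $a$ with both endpoints in $B$, the very same survival argument applies to the other composition $\cutf{\theta'}{\theta}{A}$, whose conclusions are $B\orth$ and $B$ and which reduces to $\id{B}$: now the leaves of $B$ are conclusion leaves, untouched by the elimination of the cut on $A$, so $a$ would persist into $\id{B}$, again impossible. The one point requiring care, and the main obstacle, is the \emph{direction} in which \autoref{lem:id_match} is invoked: as stated it matches a linking of $\theta$ over $B$, using that the composition over $A$ reduces to an identity, whereas here I need to match a linking of $\theta$ over $A$. This is precisely the symmetric statement, proved by the identical argument with the roles of the two compositions exchanged; both are available because $\isoproofs{A}{B}{\theta}{\theta'}$ guarantees that \emph{both} $\cutf{\theta}{\theta'}{B}$ and $\cutf{\theta'}{\theta}{A}$ reduce to identity proof-nets. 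With the matching partner secured the survival argument closes this case as well, and therefore $\theta$ is bipartite.
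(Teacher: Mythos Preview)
Your proof is correct and follows essentially the same approach as the paper. The paper compresses your two cases into one via ``\wolog\ there is a link $a$ \dots\ between leaves of $A\orth$'', relying implicitly on the very symmetry of \autoref{lem:id_match} that you spell out; your explicit handling of the $B$-internal case and of which direction of \autoref{lem:id_match} is needed (using the other composition and the other identity hypothesis) is exactly what underlies that ``\wolog''. The paper cites \autoref{cor:id_bi_u} rather than \autoref{lem:id_ax} for the contradiction, but these are equivalent here.
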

\begin{proof}
We proceed by contradiction: \wolog\ there is a link $a$ in some linking $\lambda\in\theta$ which is between leaves of $A\orth$.
Remember that in the notation $\isoproofs{A}{B}{\theta}{\theta'}$, both proof-nets $\theta$ and $\theta'$ are assumed cut-free.
Thence, one can apply \autoref{lem:id_match}: there exists $\lambda'\in\theta'$ such that $\lambda\cup\lambda'$ matches for a cut over $B$.
For $a$ does not involve leaves of $B$, it stays in the linking of the normal form resulting from $\lambda\cup\lambda'$ (after eliminating all cuts in the composition).
But this normal form is by hypothesis the identity proof-net of $A$, which is bipartite by \autoref{cor:id_bi_u}.
Thus, there cannot be an axiom link between leaves of $A\orth$ inside: contradiction.
\end{proof}

\begin{lem}
\label{lem:ret_half_full}
Assume $\theta$ and $\theta'$ are cut-free proof-nets of respective conclusions $A\orth,B$ and $B\orth,A$, and that their composition over $B$ reduces to the identity proof-net of $A$.
Then any leaf of $A\orth$ (\resp\ $A$) has at least one axiom link on it in $\theta$ (\resp\ $\theta'$).
\end{lem}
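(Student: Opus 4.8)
The plan is to argue by contradiction, relying on two facts: cut-elimination can only \emph{re-route} existing axiom links and never endows a link-less leaf with a link, and the identity proof-net is full (every leaf carries a link). By symmetry it suffices to treat a leaf $l$ of $A\orth$ and show it carries a link in $\theta$, the case of a leaf of $A$ in $\theta'$ being identical with the roles of $\theta$ and $\theta'$ exchanged. So suppose, for contradiction, that no linking of $\theta$ contains a link on $l$.

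First I would observe that $l$ remains link-less all the way down to the normal form. Since $\theta$ has conclusions $A\orth,B$ and $\theta'$ has conclusions $B\orth,A$, every link of a linking $\lambda\cup\mu$ of the composition $\cutf{\theta}{\theta'}{B}$ that touches a leaf of $A\orth$ must come from the $\theta$-part $\lambda$; hence $l$ carries no link in any linking of $\cutf{\theta}{\theta'}{B}$. Now $l$ is a leaf of the conclusion $A\orth$, which is untouched by cut-elimination, as the latter only rewrites the cut pair $B\ast B\orth$ and its residual cut pairs; so $l$ persists as a conclusion leaf throughout the reduction. The key point is then a short inspection of the three steps of \autoref{def:cut_elim}: step~(b) retains all linkings verbatim, step~(c) only deletes inconsistent linkings and garbage-collects (both of which only remove links), and the atom step~(a) replaces a pair of links $(l',A_0)$, $(A_0\orth,m)$ by $(l',m)$, which merely re-routes links of leaves that \emph{already} had one. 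Consequently a link-less conclusion leaf stays link-less after every step, and $l$ still carries no link in the normal form.

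By hypothesis that normal form is the identity proof-net of $A$, which is bipartite \unique\ by \autoref{cor:id_bi_u}; being \unique\ it is in particular full, so every one of its leaves — including $l$ — carries a link. This contradicts the previous paragraph, so no leaf of $A\orth$ can be link-less in $\theta$, and the same reasoning applied to $A$ and $\theta'$ settles the symmetric claim. The only delicate point is the step-by-step verification that cut-elimination never creates a link on a previously link-less leaf; this is routine but must be checked for each reduction case, taking care that $l$, being a conclusion (non-cut) leaf, is never itself one of the cut atoms that gets deleted or merged away.
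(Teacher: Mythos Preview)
Your proof is correct and follows essentially the same approach as the paper's: argue by contradiction, observe that a link-less leaf of $A\orth$ stays link-less in the composition and throughout cut-elimination (since each reduction step only merges or deletes links, never creating one on a previously unused leaf), and conclude from \autoref{cor:id_bi_u} that the identity proof-net is full. The paper's version is more terse, summarizing your case analysis of steps~(a)--(c) in a single sentence, but the argument is the same.
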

\begin{proof}
Towards a contradiction, assume \wolog\ a leaf $l$ of $A\orth$ has no link on it in $\theta$.
Then, the composition over $B$ of $\theta$ and $\theta'$ has no link on $l$ either.
And reducing cuts cannot create links using $l$, for it only takes links $(l,m)$ and $(m,n)$ to merge them into $(l,n)$.
However, the identity proof-net of $A$ is \unique\ by \autoref{cor:id_bi_u}, thence full: contradiction.
\end{proof}

\begin{rem}
\label{rem:ret_half_full}
\autoref{lem:ret_half_full} holds not only for isomorphisms but more generally for retractions, which are formulas $A$ and $B$ such that there exist proofs $\pi$ of $\fCenter A\orth,B$ and $\pi'$ of $\fCenter B\orth,A$ whose composition by cut over $B$ (but not necessarily over $A$) is equal to the axiom on $\fCenter A\orth,A$ up to axiom-expansion and cut-elimination.
An example of retraction is given by the proof-nets on \autoref{fig:no_anticut}, yielding a retraction between $(A\parr A\orth)\tens B$ and $((A\parr A\orth)\tens B)\oplus B$, which is not an isomorphism.
As another example, in MLL there is a well-known retraction between $A$ and $(A\lolipop A)\lolipop A = (A\tens A\orth)\parr A$.
\end{rem}

\begin{thm}
\label{th:iso_bipartite}
Assuming $\isoproofs{A}{B}{\theta}{\theta'}$, $\theta$ and $\theta'$ are bipartite full.
\end{thm}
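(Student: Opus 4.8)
The plan is to assemble the theorem directly from the two results just established, namely \autoref{cor:id_qbi} for bipartiteness and \autoref{lem:ret_half_full} for fullness, the only real work being to cover \emph{all} leaves of both proof-nets by feeding the correct composition into the half-fullness lemma. Bipartiteness of $\theta$ and $\theta'$ is exactly the content of \autoref{cor:id_qbi}, so nothing remains to be done there; I would simply cite it.

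For fullness, I would first note the bookkeeping: since $\theta$ has conclusions $A\orth, B$ and $\theta'$ has conclusions $B\orth, A$, the leaves of $\theta$ are those of $A\orth$ together with those of $B$, and symmetrically the leaves of $\theta'$ are those of $A$ together with those of $B\orth$. The point is that a single composition only controls ``half'' the leaves, so I would apply \autoref{lem:ret_half_full} twice. Applied to the composition $\cutf{\theta}{\theta'}{B}$ over $B$ — which reduces to the identity proof-net of $A$ by the hypothesis $\isoproofs{A}{B}{\theta}{\theta'}$ — the lemma yields that every leaf of $A\orth$ carries a link in $\theta$ and every leaf of $A$ carries a link in $\theta'$.

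Then I would apply the same lemma with the roles of the two proof-nets exchanged: take $\theta'$ (conclusions $B\orth, A$) in place of the lemma's first proof-net and $\theta$ (conclusions $A\orth, B$) in place of its second, so that the relevant composition is $\cutf{\theta'}{\theta}{A}$ over $A$, which reduces to the identity proof-net of $B$ — again by the isomorphism hypothesis. This second application yields that every leaf of $B\orth$ carries a link in $\theta'$ and every leaf of $B$ carries a link in $\theta$. Combining the two applications, every leaf of $\theta$ (coming from $A\orth$ or from $B$) and every leaf of $\theta'$ (coming from $A$ or from $B\orth$) has a link on it, so both proof-nets are full; together with bipartiteness this is the claim.

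I expect essentially no obstacle here: the genuine difficulty has already been absorbed into \autoref{lem:ret_half_full} and \autoref{cor:id_qbi} (whose proofs rest on \autoref{lem:id_match}, on \autoref{lem:id_res}, and on the fact that cut-elimination only merges links and never creates a link on an unlinked leaf). The sole care needed is to match each family of leaves with the composition that controls it — the $A$-side leaves via the composition over $B$, and the $B$-side leaves via the composition over $A$ — and to check that the hypotheses of \autoref{lem:ret_half_full} are met after the role swap, which they are since both compositions reduce to identity proof-nets by definition of $\isoproofs{A}{B}{\theta}{\theta'}$.
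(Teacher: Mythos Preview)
Your proposal is correct and follows exactly the paper's approach: cite \autoref{cor:id_qbi} for bipartiteness and \autoref{lem:ret_half_full} for fullness. The paper's proof is a single sentence doing precisely this; you have simply spelled out explicitly that \autoref{lem:ret_half_full} must be applied twice (once per composition) to cover all four families of leaves, which the paper leaves implicit.
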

\begin{proof}
By \autoref{cor:id_qbi}, $\theta$ and $\theta'$ are bipartite, and thanks to \autoref{lem:ret_half_full}, they are full.
\end{proof}

\subsection{\Unique ness}
\label{sec:dist_bi_u}

In general, isomorphisms do not yield \unique\ proof-nets.
A counter-example is distributivity: $A\tens(B\oplus C)\iso (A\tens B)\oplus(A\tens C)$, see \autoref{fig:dist_not_uniq}.
Nonetheless, distributivity equations are the only unit-free equations in $\Eu$ not giving \unique\ proof-nets.
Recall we can consider only distributed formulas (\autoref{prop:red_to_dist}).
On distributed formulas, distributivity isomorphisms can be ignored, and isomorphisms between distributed formulas happen to be bipartite \unique.

\begin{figure}
\begin{adjustbox}{}
\begin{tikzpicture}
\begin{myscope}
	\node (lA) at (-6,0) {$A$};
	\node (lB) at (-4.5,0) {$B$};
	\node (lC) at (-3,0) {$C$};
	\node (lW) at (-3.75,-1) {$\oplus$};
	\node (lT) at (-4.5,-2) {$\tens$};
    \path (lB) edge (lW);
    \path (lC) edge (lW);
    \path (lA) edge (lT);
    \path (lW) edge (lT);
    
	\node (lC^) at (-1.5,0) {$C\orth$};
	\node (lA^1) at (0,0) {$A\orth$};
	\node (lB^) at (1.5,0) {$B\orth$};
	\node (lA^0) at (3,0) {$A\orth$};
	\node (lP1) at (-.75,-1) {$\parr$};
	\node (lP0) at (2.25,-1) {$\parr$};
	\node (lO) at (.75,-2) {$\with$};
    \path (lA^0) edge (lP0);
    \path (lB^) edge (lP0);
    \path (lA^1) edge (lP1);
    \path (lC^) edge (lP1);
    \path (lP0) edge (lO);
    \path (lP1) edge (lO);
\end{myscope}
\begin{myscopec}{blue}
	\path (lA) -- ++(0,.8) -| (lA^0);
	\path (lB) -- ++(0,.6) -| (lB^);
\end{myscopec}
\begin{myscopec}{red}
	\path (lA) -- ++(0,-2.5) -| (lA^1);
	\path (lC) -- ++(0,-.8) -| (lC^);
\end{myscopec}
\end{tikzpicture}
\quad\vrule\quad
\begin{tikzpicture}
\begin{myscope}    
	\node (rA0) at (4.5,0) {$A$};
	\node (rB) at (6,0) {$B$};
	\node (rA1) at (7.5,0) {$A$};
	\node (rC) at (9,0) {$C$};
	\node (rT0) at (5.25,-1) {$\tens$};
	\node (rT1) at (8.25,-1) {$\tens$};
	\node (rW) at (6.75,-2) {$\oplus$};
    \path (rA0) edge (rT0);
    \path (rB) edge (rT0);
    \path (rA1) edge (rT1);
    \path (rC) edge (rT1);
    \path (rT0) edge (rW);
    \path (rT1) edge (rW);
    
	\node (rC^) at (10.5,0) {$C\orth$};
	\node (rB^) at (12,0) {$B\orth$};
	\node (rA^) at (13.5,0) {$A\orth$};
	\node (rO) at (11.25,-1) {$\with$};
	\node (rP) at (12,-2) {$\parr$};
    \path (rB^) edge (rO);
    \path (rC^) edge (rO);
    \path (rA^) edge (rP);
    \path (rO) edge (rP);
\end{myscope}
\begin{myscopec}{blue}
	\path (rA^) -- ++(0,.8) -| (rA0);
	\path (rB^) -- ++(0,.6) -| (rB);
\end{myscopec}
\begin{myscopec}{red}
	\path (rA^) -- ++(0,-2.5) -| (rA1);
	\path (rC^) -- ++(0,-.8) -| (rC);
\end{myscopec}
\end{tikzpicture}
\end{adjustbox}
\caption{Proof-nets for $A\tens(B\oplus C)\iso (A\tens B)\oplus(A\tens C)$}
\label{fig:dist_not_uniq}
\end{figure}

\subsubsection{Preliminary results}
\label{subsubsec:4.32+}

We mostly use the correctness criterion through the fact we can sequentialize, \ie\ recover a proof tree from a proof-net by \autoref{th:seq}.
However, in order to prove \unique ness, we make a direct use of the correctness criterion itself to deduce geometric properties of proof-nets.
This part of the proof benefits from the specifics of this syntax.
We begin with two preliminary results.
For $\Lambda$ a set of linkings and $W$ a $\with$-vertex, let $\Lambda^W$ denote the set of all linkings in $\Lambda$ whose additive resolution does not contain the right argument of $W$.
In this definition -- taken from~\cite{mallpnlong} -- we could have chosen left argument instead of right argument, meaning the asymmetry is irrelevant.
We use $\Lambda^W$ to deduce from the toggling condition -- \autoref{def:pn} (P3) -- a simpler and easier to manipulate (but weaker) condition that a proof-net respects, using that $W$ is not toggled by $\Lambda^W$.
Then, given proof-nets of an isomorphism that are not \unique, we will show they do not respect this simpler condition, hence we will reach a contradiction.

\begin{lem}[Adaptation of Lemma 4.32 in~\cite{mallpnlong}]
\label{lem:4.32+}
Let $\omega$ be a jump-free switching cycle in a proof-net $\theta$.
There exists a subset of linkings $\Lambda\subseteq\theta$ and a $\with$-vertex $W$ toggled by $\Lambda$ such that $\omega\subseteq\G_\Lambda$, $\omega\not\subseteq\G_{\Lambda^W}$ and there exists an axiom link $a\in\omega$ depending on $W$ in $\Lambda$.
\end{lem}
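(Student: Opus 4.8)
The plan is to extract a \emph{minimal} set of linkings carrying the cycle and then read off the required $\with$-vertex from the toggling condition (P3). Concretely, I would first choose $\Lambda\subseteq\theta$ minimal for inclusion among the subsets satisfying $\omega\subseteq\G_\Lambda$; such subsets exist since $\omega\subseteq\G_\theta$, and a minimal one is available by finiteness. This gives the first required property $\omega\subseteq\G_\Lambda$ for free. Next I would check that $|\Lambda|\geq 2$: if $\omega$ were contained in $\G_\lambda$ for a single linking $\lambda$, then, since $\omega$ is jump-free and uses at most one switch edge of each $\parr$-vertex, I could delete at each $\parr$-vertex of $\G_\lambda$ a premise edge not traversed by $\omega$, obtaining a $\parr$-switching $\phi$ with $\omega\subseteq\G_\phi$; this contradicts that $\G_\phi$ is a tree by (P2).

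Since $|\Lambda|\geq 2$, condition (P3) supplies a $\with$-vertex $W$ toggled by $\Lambda$ that lies in no switching cycle of $\G_\Lambda$. As $W$ is toggled, both its premises occur in $[\Sigma]\,\Gamma\upharpoonright\Lambda$, so some linking of $\Lambda$ selects the right argument of $W$ and some selects the left one; hence $\Lambda^W\neq\emptyset$ and $\Lambda^W\subsetneq\Lambda$. By minimality of $\Lambda$, the proper subset $\Lambda^W$ cannot carry the whole cycle, so $\omega\not\subseteq\G_{\Lambda^W}$, which is the third required property (note that this step works for \emph{any} toggled vertex, so the only role of (P3) is to guarantee that a toggled vertex exists). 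Fixing an edge $e\in\omega\setminus\G_{\Lambda^W}$ and using jump-freeness, $e$ is either an axiom link or a tree edge; in the latter case the premise endpoint of $e$ is selected only by linkings outside $\Lambda^W$, and following $\omega$ downward from that premise into its subtree reaches a leaf whose incident link in $\omega$ is again carried by no linking of $\Lambda^W$. Either way I obtain an axiom link $a\in\omega$ with $a\in\lambda$ for some $\lambda\in\Lambda\setminus\Lambda^W$ and $a\notin\mu$ for every $\mu\in\Lambda^W$.

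It remains to show that $a$ \emph{depends} on $W$ in $\Lambda$, and this is the technical heart, adapted from Lemma~4.32 of~\cite{mallpnlong}. I have a candidate $\lambda\in\Lambda\setminus\Lambda^W$ with $a\in\lambda$ (selecting the right premise of $W$) and candidates $\lambda'\in\Lambda^W$ with $a\notin\lambda'$; taking $\lambda'$ to select the left premise of $W$ — which exists because $W$ is toggled — the pair $\{\lambda,\lambda'\}$ toggles $W$. The obstacle is that $\{\lambda,\lambda'\}$ may toggle several $\with$-vertices simultaneously, whereas dependence demands a pair toggling $W$ \emph{alone}. I would resolve this by selecting, among all pairs $(\lambda,\lambda')\in(\Lambda\setminus\Lambda^W)\times\Lambda^W$ with $a\in\lambda$ and $\{\lambda,\lambda'\}$ toggling $W$, one minimizing the number of toggled $\with$-vertices, and then proving this minimum equals one. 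Here I would use that two linkings toggle a $\with$-vertex exactly when both reach it and disagree there, together with the rigidity furnished by (P1) (a unique linking on each $\with$-resolution) and the fact that $W$ sits in no switching cycle of $\G_\Lambda$, in order to replace a pair toggling both $W$ and some further vertex $V$ by a strictly ``closer'' pair — still inside $\Lambda$ — and so contradict the choice of a minimizer unless $W$ is the unique toggled vertex. Showing that such a closer pair can always be kept inside $\Lambda$, rather than merely in $\theta$, is precisely the delicate point where the geometric content of the correctness criterion is needed, and I expect it to be the main difficulty of the whole argument.
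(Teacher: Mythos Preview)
Your setup is natural, but the final step—producing a pair $\{\lambda,\lambda'\}\subseteq\Lambda$ toggling \emph{only} $W$ with $a\in\lambda\setminus\lambda'$—is a genuine gap, and it is precisely the place where the paper's proof makes a different choice than you. The paper does not take $\Lambda$ minimal among all subsets carrying $\omega$; it takes $\Lambda$ minimal among the \emph{saturated} subsets carrying $\omega$ (a subset is saturated if any strict superset toggles strictly more $\with$-vertices). Saturated sets satisfy two facts imported from~\cite{mallpnlong}: (S1) if $\Lambda$ is saturated and toggles $W$ then $\Lambda^W$ is saturated, and (S2) if $\Lambda$ is saturated and toggles $W$ then for every $\lambda\in\Lambda$ there is some $\lambda_W\in\Lambda^W$ with $\lambda=\lambda_W$ or $W$ the unique $\with$ toggled by $\{\lambda,\lambda_W\}$. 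With this choice, (S1) plus minimality gives $\omega\not\subseteq\G_{\Lambda^W}$ exactly as in your argument, and then—once one has walked to an axiom link $a\in\lambda\in\Lambda$ with $a\notin\bigcup\Lambda^W$—(S2) immediately supplies $\lambda_W\in\Lambda^W$ with $a\in\lambda\setminus\lambda_W$ and $\{\lambda,\lambda_W\}$ toggling only $W$: dependence is automatic.

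Your plain-minimal $\Lambda$ need not contain any such $\lambda'$. Two linkings that carry $\omega$ between them can perfectly well disagree on many $\with$-vertices at once, and there is no mechanism inside your $\Lambda$ to interpolate a linking that agrees with $\lambda$ everywhere except at $W$. Your proposed fix—minimize the number of toggled vertices over pairs and argue the minimum is one, using (P1) and that $W$ is in no switching cycle—would essentially force you to reprove (S2) from scratch, and the replacement linkings you would construct via (P1) live in $\theta$, not a priori in your minimal $\Lambda$. You correctly flagged this as the delicate point; the resolution is to bake it into the choice of $\Lambda$ via saturation. (A side remark: in the paper's argument, (P3) is not invoked for the choice of $W$—any toggled $W$ works—so you need not restrict to a $W$ avoiding switching cycles.)
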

\begin{proof}
The proof of this lemma uses some facts from~\cite{mallpnlong} reproduced verbatim here.
Write $\lambda\eqtext{W}\lambda'$ if linkings $\lambda,\lambda'\in\theta$ are either equal or $W$ is the only $\with$ toggled by $\{\lambda,\lambda'\}$.
A subset $\Lambda$ of a proof-net $\theta$ is \emph{saturated} if any strictly larger subset toggles more $\with$ than $\Lambda$.
It is straightforward to check that:
\begin{itemize}[left=14pt]
\item[(S1)] If $\Lambda$ is saturated and toggles $W$, then $\Lambda^W$ is saturated.
\item[(S2)] If $\Lambda$ is saturated and toggles $W$ and $\lambda\in\Lambda$, then $\lambda\eqtext{W}\lambda_W$ for some $\lambda_W\in\Lambda^W$.
\end{itemize}

Let us now prove our lemma.
Take $\Lambda$ a minimal saturated subset of $\theta$ with $\G_\Lambda$ containing $\omega$, and $W$ a $\with$-vertex it toggles.
Since $\Lambda$ is minimal, $\omega\not\subseteq\G_{\Lambda^W}$ using (S1), so some edge $e$ of $\omega$ is in $\G_\Lambda$ but not in $\G_{\Lambda^W}$.
We claim that, without loss of generality, $e$ is an axiom link.
If it is indeed the case then, by (S2), $e\in\lambda\in\Lambda$ and $e\notin\lambda_W$ for $e\notin\Lambda^W$, so $e$ depends on $W$ in $\Lambda$.
We now prove our claim by eliminating other possibilities step by step.

Without loss of generality, $e$ is an edge from a leaf $l$ to some $X$, because for any other edge $Y\rightarrow X$ in $\omega$ we have $l\rightarrow Z_1\rightarrow\dots\rightarrow Z_n\rightarrow Y\rightarrow X$ in $\omega$ for some leaf $l$, and $Y\rightarrow X$ is in $\G_{\Lambda^W}$ whenever $l\rightarrow Z_1$ is in $\G_{\Lambda^W}$.
This is because a switching cycle must have an edge using a leaf, for all other edges are in the syntactic forest of the sequent, which is acyclic.

Still without loss of generality, $e$ is not an edge in a syntactic tree.
Indeed, in such a case $e\notin\G_{\Lambda^W}$ implies $l\notin\G_{\Lambda^W}$.
As $e$ belongs to the switching cycle $\omega$, let us look at the other edge in this cycle with endpoint $l$, say $e'$. As  $l\notin\G_{\Lambda^W}$, we also have $e'\notin\G_{\Lambda^W}$.
Remark that $e'$ cannot be an edge in a syntactic tree, for only one such edge has for endpoint the leaf $l$, namely $e$.
We can replace $e$ with $e'$ to assume $e$ is not an edge in a syntactic tree.

As $\omega$ is jump-free, $e$ cannot be a jump edge.
The sole possibility is $e$ being a link.
\end{proof}

For $U$ and $V$ vertices in a tree, their \emph{first common descendant} is the vertex of the tree which is a descendant of both $U$ and $V$ and which has no ancestor respecting this property, with a tree represented with its root at the bottom, which is a \emph{descendant} of the leaves.
Or equivalently, looking at a tree as a partial order of minimal element the root, the first common descendant is the infimum.

\begin{lem}
\label{lem:jump_par}
Let $\theta$ be a proof-net of conclusions $\Gamma,A$, with a jump edge $l\jump W$ between $l,W\in T(A)$.
If $W$ is not a descendant of $l$, then their first common descendant $C$ is a $\parr$.
\end{lem}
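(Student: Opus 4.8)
The plan is to unfold the jump edge into the two linkings witnessing the dependence it records, and then reason purely about the geometry of $T(A)$ together with the toggling condition (P3). By definition of jump edges, $l\jump W$ comes from an axiom link $a=(l,l')$ on $l$ depending on $W$; by the definition of dependence this yields linkings $\lambda,\lambda'\in\theta$ with $a\in\lambda\setminus\lambda'$ such that $W$ is the \emph{only} $\with$-vertex toggled by $\{\lambda;\lambda'\}$. Since $W$ is not a descendant of $l$, I first note that $W\neq C$ and that $l$ and $W$ lie in two distinct argument sub-trees of $C$; as we are unit-free, $C$ is a binary connective, so it suffices to exclude $C=\with$, $C=\tens$ and $C=\oplus$. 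I will repeatedly use that $l$, being an endpoint of $a\in\lambda$, belongs to the additive resolution of $\lambda$, and that resolutions are downward closed, so the whole syntactic path from $l$ down to $C$ lies in that resolution.

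To discard $C=\with$, I would use that $C\neq W$ is then a non-toggled $\with$ of $\{\lambda;\lambda'\}$, so at most one of its arguments appears in the additive resolution of $\{\lambda;\lambda'\}$. The argument sub-tree containing $l$ does appear, through $\lambda$; hence the argument containing $W$ does not, so $W$ is absent from the resolution of $\{\lambda;\lambda'\}$. This contradicts $W$ being toggled, which forces both arguments of $W$, and thus $W$ itself, into that resolution. This case is short and only manipulates resolutions.

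The substantial case is $C$ positive, which I would settle by contradicting (P3): as $W$ is the only $\with$ toggled by $\{\lambda;\lambda'\}$, condition (P3) applied to $\{\lambda;\lambda'\}\subseteq\theta$ states exactly that $W$ lies in no switching cycle of $\G_{\{\lambda;\lambda'\}}$. I would build such a cycle $\omega$ from three pieces of $\G_{\{\lambda;\lambda'\}}$: the syntactic path $P_1$ from $l$ down to $C$ (present through $\lambda$), the syntactic path $P_2$ from $W$ down to $C$ (present because $W$ is toggled, hence lies in the resolution of some $\lambda''\in\{\lambda;\lambda'\}$, and $C$ is a descendant of $W$, so $P_2$ is downward closed in that resolution), and the jump $l\jump W$ itself. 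Since $P_1$ and $P_2$ meet only at $C$, these close into a cycle through $W$. The crux --- the step to verify most carefully --- is that $\omega$ is a \emph{switching} cycle: each $\pw$-vertex along $P_1$ or $P_2$ is crossed by a single in-edge; at $W$ only the jump is an in-edge, the $P_2$-edge leaving $W$ being an out-edge; and at the meeting vertex $C$ the cycle uses both incoming edges, which is harmless precisely because $C$, being positive, is not a $\pw$-vertex. This is exactly where positivity is used, and it explains why the argument does \emph{not} apply when $C=\parr$: there the two edges at $C$ would be two switch edges of the same $\pw$-vertex, so $\omega$ would fail to be switching and no contradiction would follow. Excluding $\with$, $\tens$ and $\oplus$ then leaves $C=\parr$.
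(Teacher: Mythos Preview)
Your argument is correct and follows the same overall plan as the paper: unfold the jump to obtain $\lambda,\lambda'$ with $W$ the unique toggled $\with$, and exhibit a switching cycle through $W$ in $\G_{\{\lambda;\lambda'\}}$ to contradict (P3). The only difference is how the four connective cases are partitioned. The paper observes directly that both $l$ and $W$ lie in the additive resolution of $\lambda$ (the latter because $W$ is toggled, hence present in \emph{each} linking's resolution), so both premises of $C$ appear in a single additive resolution and $C$ cannot be additive; the cycle argument is then needed only for $\tens$. You instead exclude $\with$ via a separate non-toggling argument and run the cycle construction uniformly for $\tens$ and $\oplus$. Both decompositions work; the paper's is marginally shorter for $\oplus$, while yours makes the switching verification at $C$ more explicit and avoids relying on the (easy but unstated in the paper) fact that a toggled $\with$ belongs to every linking's resolution.
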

\begin{proof}
As there is a jump $l\jump W$, there exist linkings $\lambda,\lambda'\in\theta$ such that $W$ is the only $\with$ toggled by $\{\lambda;\lambda'\}$, and a link $a\in\lambda\backslash\lambda'$ using the leaf $l$. In particular, the jump $l\jump W$ is in $\G_{\{\lambda;\lambda'\}}$.
For $l$ and $W$ are both in the additive resolution of $\lambda$, both premises of $C$ are also in this additive resolution, thus $C$ cannot be an additive connective, so not a $\with$ nor a $\oplus$-vertex.

Assume by contradiction that $C$ is a $\tens$.
Call $\delta$ the path in $T(A)$ from $W$ to $C$, and $\mu$ the one from $C$ to $l$ (see \autoref{fig:jump_par}).
Then, $(l\jump W)\delta\mu$ is a switching cycle in $\G_{\{\lambda;\lambda'\}}$.
According to \autoref{def:pn} (P3), there exists a $\with$ toggled by $\{\lambda;\lambda'\}$ not in any switching cycle of $\G_{\{\lambda;\lambda'\}}$.
A contradiction, for $W$ is the only $\with$ toggled by $\{\lambda;\lambda'\}$.
Whence, $C$ can only be a $\parr$-vertex.
\end{proof}

\begin{figure}
\begin{adjustbox}{}
\begin{tikzpicture}
\begin{myscope}
	\node[draw=none] () at (0,.75) {$A$};
	\node[draw=none] () at (-1.5,2.25) {$T(A)$};
	\coordinate (Ac) at (0,1);
	\coordinate (Al) at (-2,4.5);
	\coordinate (Ar) at (2,4.5);
	\path (Ac) edge[-] (Al);
	\path (Ac) edge[-] (Ar);
	\path (Al) edge[-] (Ar);
	
	\node[rectangle,draw=none,minimum size=6mm] (l) at (-1,4.25) {$l$};
	\node[rectangle,draw=none,minimum size=6mm] (W) at (.75,3.3) {$W\with$};
	\node[rectangle,draw=none,minimum size=6mm] (C) at (-.2,2) {$C$};
	\node[draw=none] () at (-1,3.25) {$\mu$};
	\path (C) edge[decorate, decoration={snake}] (l);
	\node[draw=none] () at (.5,2.5) {$\delta$};
	\path (W) edge[decorate, decoration={snake}] (C);
    \path[bend left] (l) \edgelabel{$j$} (W);
\end{myscope}
\end{tikzpicture}
\end{adjustbox}
\caption{Illustration of the proof of \autoref{lem:jump_par}}
\label{fig:jump_par}
\end{figure}

\subsubsection{Isomorphisms of distributed formulas}
\label{subsubsec:proof_iso_bi_u}

Now, let us prove that isomorphisms of distributed formulas are bipartite \unique.
We will consider proof-nets corresponding to an isomorphism that we cut and where we eliminate all cuts not involving atoms.
To give some intuition, let us consider the non-\unique\ proof-nets of \autoref{fig:dist_not_uniq} (on \autopageref{fig:dist_not_uniq}).
Composing them together by cut on $(A\tens B)\oplus(A\tens C)$ gives the proof-net illustrated on \autoref{fig:dist_not_uniq_comp}.
Reducing all cuts not involving atoms yields the proof-net on \autoref{fig:dist_not_uniq_comp_red}, that we call an \emph{almost reduced composition}.
We observe here a switching cycle produced by the two links on $A$ (dashed in blue on \autoref{fig:dist_not_uniq_comp_red}), less visible in the non-reduced composition of \autoref{fig:dist_not_uniq_comp} -- remember that in MALL proof-nets some switching cycles are allowed, and this is one of those.
If we continue to reduce cuts, also eliminating the atomic ones, we obtain the identity proof-net, which has no switching cycle: during these reductions, both links on $A$ are merged.
The idea of this section is to proceed by contradiction, considering a non-\unique\ proof-net of an isomorphism.
We argue that, taking the almost reduced composition of the considered proof-nets, links preventing \unique ness yield switching cycles.
Then, we prove that for these switching cycles not to contradict the toggling condition of proof-nets (\autoref{def:pn} (P3)), there must be a non-distributed formula, reaching a contradiction.

\begin{figure}
\begin{adjustbox}{}
\begin{tikzpicture}
\begin{myscope}
	\node (lA) at (-6,0) {$A$};
	\node (lB) at (-4.5,0) {$B$};
	\node (lC) at (-3,0) {$C$};
	\node (lW) at (-3.75,-1) {$\oplus$};
	\node (lT) at (-4.5,-2) {$\tens$};
    \path (lB) edge (lW);
    \path (lC) edge (lW);
    \path (lA) edge (lT);
    \path (lW) edge (lT);
    
	\node (lC^) at (-1.5,0) {$C\orth$};
	\node (lA^1) at (0,0) {$A\orth$};
	\node (lB^) at (1.5,0) {$B\orth$};
	\node (lA^0) at (3,0) {$A\orth$};
	\node (lP1) at (-.75,-1) {$\parr$};
	\node (lP0) at (2.25,-1) {$\parr$};
	\node (lO) at (.75,-2) {$\with$};
    \path (lA^0) edge (lP0);
    \path (lB^) edge (lP0);
    \path (lA^1) edge (lP1);
    \path (lC^) edge (lP1);
    \path (lP0) edge (lO);
    \path (lP1) edge (lO);
    
	\node (rA0) at (4.5,0) {$A$};
	\node (rB) at (6,0) {$B$};
	\node (rA1) at (7.5,0) {$A$};
	\node (rC) at (9,0) {$C$};
	\node (rT0) at (5.25,-1) {$\tens$};
	\node (rT1) at (8.25,-1) {$\tens$};
	\node (rW) at (6.75,-2) {$\oplus$};
    \path (rA0) edge (rT0);
    \path (rB) edge (rT0);
    \path (rA1) edge (rT1);
    \path (rC) edge (rT1);
    \path (rT0) edge (rW);
    \path (rT1) edge (rW);
    
	\node (rC^) at (10.5,0) {$C\orth$};
	\node (rB^) at (12,0) {$B\orth$};
	\node (rA^) at (13.5,0) {$A\orth$};
	\node (rO) at (11.25,-1) {$\with$};
	\node (rP) at (12,-2) {$\parr$};
    \path (rB^) edge (rO);
    \path (rC^) edge (rO);
    \path (rA^) edge (rP);
    \path (rO) edge (rP);
    
    \node (*) at (3.75, -2.5) {$\ast$};
    \path (lO) edge (*);
    \path (rW) edge (*);
\end{myscope}
\begin{myscopec}{olive}
	\path (lA) -- ++(0,1) -| (lA^0);
	\path (lB) -- ++(0,.6) -| (lB^);
    
	\path (rA^) -- ++(0,1) -| (rA0);
	\path (rB^) -- ++(0,.6) -| (rB);
\end{myscopec}
\begin{myscopec}{orange}
	\path (lA) -- ++(0,-2.7) -| (lA^1);
	\path (lC) -- ++(0,-1) -| (lC^);
    
	\path (rA^) -- ++(0,1.2) -| (rA0);
	\path (rB^) -- ++(0,.8) -| (rB);
\end{myscopec}
\begin{myscopec}{blue}
	\path (lA) -- ++(0,1.2) -| (lA^0);
	\path (lB) -- ++(0,.8) -| (lB^);
    
	\path (rA^) -- ++(0,-2.7) -| (rA1);
	\path (rC^) -- ++(0,-1) -| (rC);
\end{myscopec}
\begin{myscopec}{red}
	\path (lA) -- ++(0,-2.5) -| (lA^1);
	\path (lC) -- ++(0,-.8) -| (lC^);
    
	\path (rA^) -- ++(0,-2.5) -| (rA1);
	\path (rC^) -- ++(0,-.8) -| (rC);
\end{myscopec}
\end{tikzpicture}
\end{adjustbox}
\caption{Proof-nets from \autoref{fig:dist_not_uniq} composed by cut on $(A\tens B)\oplus(A\tens C)$}
\label{fig:dist_not_uniq_comp}
\end{figure}

\begin{figure}
\begin{adjustbox}{}
\begin{tikzpicture}
\begin{myscope}
	\node (lA) at (-6,0) {$A$};
	\node (lB) at (-4.5,0) {$B$};
	\node (lC) at (-3,0) {$C$};
	\node (lW) at (-3.75,-1) {$\oplus$};
	\node (lT) at (-4.5,-2) {$\tens$};
    \path (lB) edge (lW);
    \path (lC) edge (lW);
    \path (lA) edge (lT);
    \path (lW) edge (lT);
    
	\node (lC^) at (-1.5,0) {$C\orth$};
	\node (lA^1) at (0,0) {$A\orth$};
	\node (lB^) at (1.5,0) {$B\orth$};
	\node (lA^0) at (3,0) {$A\orth$};
    
	\node (rA0) at (4.5,0) {$A$};
	\node (rB) at (6,0) {$B$};
	\node (rA1) at (7.5,0) {$A$};
	\node (rC) at (9,0) {$C$};
    
	\node (rC^) at (10.5,0) {$C\orth$};
	\node (rB^) at (12,0) {$B\orth$};
	\node (rA^) at (13.5,0) {$A\orth$};
	\node (rO) at (11.25,-1) {$\with$};
	\node (rP) at (12,-2) {$\parr$};
    \path (rB^) edge (rO);
    \path (rC^) edge (rO);
    \path (rA^) edge (rP);
    \path (rO) edge (rP);
    
    \node[draw=none,minimum size=4mm] (*C) at (3.75, -2.5) {$\ast$};
    \path (lC^) edge (*C);
    \path (rC) edge (*C);
    \node[draw=none,minimum size=4mm] (*A1) at (3.75, -2) {$\ast$};
    \path (lA^1) edge (*A1);
    \path (rA1) edge (*A1);
    \node[draw=none,minimum size=4mm] (*B) at (3.75, -1.5) {$\ast$};
    \path (lB^) edge (*B);
    \path (rB) edge (*B);
    \node[draw=none,minimum size=4mm] (*A0) at (3.75, -1) {$\ast$};
    \path (lA^0) edge (*A0);
    \path (rA0) edge (*A0);
\end{myscope}
\begin{myscopec}{olive}
	\path (lA) -- ++(0,.8) -| (lA^0);
	\path (lB) -- ++(0,.6) -| (lB^);
    
	\path (rA^) -- ++(0,.8) -| (rA0);
	\path (rB^) -- ++(0,.6) -| (rB);
\end{myscopec}
\begin{myscopec}{red}
	\path (lA) -- ++(0,-2.5) -| (lA^1);
	\path (lC) -- ++(0,-.8) -| (lC^);
    
	\path (rA^) -- ++(0,-2.5) -| (rA1);
	\path (rC^) -- ++(0,-.8) -| (rC);
\end{myscopec}
\begin{myscope_sur}
	\path (lA) -- ++(0,.8) -| (lA^0);
    \path (lA^0) edge (*A0);
    \path (rA0) edge (*A0);
	\path (rA^) -- ++(0,.8) -| (rA0);
	\path (rA^) -- ++(0,-2.5) -| (rA1);
    \path (rA1) edge (*A1);
    \path (lA^1) edge (*A1);
	\path (lA) -- ++(0,-2.5) -| (lA^1);
\end{myscope_sur}
\end{tikzpicture}
\end{adjustbox}
\caption{An almost reduced composition of the proof-nets on \autoref{fig:dist_not_uniq}}
\label{fig:dist_not_uniq_comp_red}
\end{figure}

\begin{defi}[Almost reduced composition]
\label{def:almost_reduced}
Take $\theta$ and $\theta'$ cut-free proof-nets of respective conclusions $\Gamma,B$ and $B\orth,\Delta$.
The \emph{almost reduced composition} over $B$ of $\theta$ and $\theta'$ is the proof-net resulting from the composition over $B$ of $\theta$ and $\theta'$ where we repeatedly reduce all cuts not involving atoms (\ie\ not applying step (a) of \autoref{def:cut_elim}).
\end{defi}

Let us fix $A$ and $B$ two unit-free MALL (not necessarily distributed yet) formulas as well as $\theta$ and $\theta'$ such that $\isoproofs{A}{B}{\theta}{\theta'}$.
By \autoref{th:iso_bipartite}, $\theta$ and $\theta'$ are bipartite full.
We denote by $\psi$ the almost reduced composition over $B$ of $\theta$ and $\theta'$.
Here, we can extend our duality on vertices and premises (defined in \autoref{subsec:mall-pn}) to links.

\begin{lem}
\label{lem:iso_bi_u_link}
An axiom link $a=(l,m)$ belongs to some linking $\lambda\in\psi$ if and only if, up to swapping $l$ and $m$, $l$ is a leaf of $A\orth$ (\resp\ $A$), $m$ is in the leaves of $B$ (\resp\ $B\orth$) and there is an axiom link $(l\orth,m\orth)$ in the same linking $\lambda$, that we will denote $a\orth=(l\orth,m\orth)$ (see \autoref{fig:iso_bi_u_link}).
\end{lem}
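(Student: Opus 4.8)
The plan is to exploit the fact that, although the almost reduced composition $\psi$ still carries the atomic cuts coming from the leaves of $B$, finishing their elimination recovers the identity proof-net of $A$, whose links have the very rigid shape given by \autoref{lem:id_ax}. First I would record the basic structural facts about the links of $\psi$. Since $\theta$ and $\theta'$ are bipartite full by \autoref{th:iso_bipartite}, every link of $\theta$ joins a leaf of $A\orth$ to a leaf of $B$ and every link of $\theta'$ joins a leaf of $B\orth$ to a leaf of $A$; as the almost reduced composition only contracts the non-atomic cuts (\autoref{def:almost_reduced}), these are exactly the links surviving in $\psi$. Hence, up to swapping $l$ and $m$, any link $a=(l,m)$ of $\psi$ has $l$ a leaf of $A\orth$ and $m$ a leaf of $B$ (or the dual situation). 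This already gives the ``$l$ is a leaf of $A\orth$, $m$ is a leaf of $B$'' part of the forward implication; it remains to produce the dual link $a\orth=(l\orth,m\orth)$.

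For this, fix a linking $\lambda\in\psi$ containing $a=(l,m)$ with $l$ in $A\orth$ and $m$ in $B$, and continue eliminating the remaining atomic cuts of $\psi$. By confluence and strong normalization of cut-elimination (\autoref{th:cut_sn}) together with the hypothesis $\isoproofs{A}{B}{\theta}{\theta'}$, this reduction leads to $\id{A}$, and $\lambda$ is carried to a linking $\bar\lambda$ of this identity proof-net. The key structural observation is that every reduction chain has length exactly two: the atomic cuts sit only on the leaves of $B$, whereas by bipartiteness a leaf of $A\orth$ (endpoint of a $\theta$-link) or of $A$ (endpoint of a $\theta'$-link) is never adjacent to a cut. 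Thus eliminating the atomic cut $m\ast m\orth$ (step (a) of \autoref{def:cut_elim}) merges $a=(l,m)$ with the unique link $(m\orth,p)$ on $m\orth$ in $\lambda$ — unique by \autoref{def:pn}~(P1), and available because $m$, hence by matching $m\orth$, lies in the additive resolution of $\lambda$ — into the single link $(l,p)$ of $\bar\lambda$, with $p$ a leaf of $A$. Since $\bar\lambda$ is a linking of the identity proof-net, \autoref{lem:id_ax} forces $(l,p)=(l,l\orth)$, hence $p=l\orth$. Therefore $(m\orth,l\orth)\in\lambda$, that is $a\orth=(l\orth,m\orth)\in\lambda$, which establishes the forward direction.

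The backward direction is then immediate by involutivity of duality: if $l$ is a leaf of $A\orth$, $m$ a leaf of $B$ and $(l\orth,m\orth)\in\lambda$, then $(l\orth,m\orth)$ is a link of $\psi$ of the dual type (a leaf of $A$ joined to a leaf of $B\orth$), so applying the forward implication to it yields its own dual $(l,m)\in\lambda$. I expect the only delicate point to be the length-two claim for reduction chains: it is precisely what prevents an $A$- or $A\orth$-leaf from being absorbed into a longer chain, and it is exactly where bipartiteness (rather than mere fullness) is used. Once it is in place, matching and condition (P1) guarantee that each atomic step pairs a $\theta$-link with a $\theta'$-link, and the rigid identity shape of $\bar\lambda$ provided by \autoref{lem:id_ax} does the rest.
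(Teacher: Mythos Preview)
Your argument is essentially the same as the paper's: use bipartiteness of $\theta$ and $\theta'$ to locate the endpoints of any link, observe that the partner leaf $m\orth$ of the atomic cut must also be in the additive resolution of $\lambda$ and hence carries a link $(m\orth,p)$, and then finish the atomic cut-elimination so that \autoref{lem:id_ax} forces $p=l\orth$. One small correction: the uniqueness of the link on $m\orth$ in $\lambda$ is not a consequence of (P1), which concerns uniqueness of \emph{linkings} on $\with$-resolutions; it follows simply from the disjointness clause in the definition of a linking.
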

\begin{proof}
By symmetry, we only need to prove the ``if'' statement.
Linkings of $\psi$ are disjoint union of linkings in $\theta$ and $\theta'$.
By symmetry, assume $(l,m)\in\lambda\in\psi$ comes from a linking in $\theta$.
As $\theta$ is bipartite, one of the leaves, say $l$, is in $A\orth$ and the other, $m$, is a leaf of $B$.
Since the cut $m\ast m\orth$ belongs to the additive resolution of $\lambda$ ($m$ is inside), $m\orth$ is a leaf in this resolution. Thus, there is a link $(m\orth,l')\in\lambda$ for some leaf $l'$, which necessarily belongs to $A$ by bipartiteness of $\theta'$.
It remains to prove $l'=l\orth$.
If we were to eliminate all cuts in $\psi$, we would get the identity proof-net on $A$ by hypothesis.
But eliminating the cut $m\ast m\orth$ yields a link $(l,l')$, which is not modified by the elimination of the other atomic cuts.
By \autoref{lem:id_ax}, $l'=l\orth$ follows.
\end{proof}

\begin{figure}
\centering
\begin{tikzpicture}
\begin{myscope}
	\node[draw=none] at (-5,1.25) {$A\orth$};
	\node[draw=none] at (5,1.25) {$A$};
	\node[draw=none] at (-2,1.25) {$B$};
	\node[draw=none] at (2,1.25) {$B\orth$};
	
	\node[draw=none] (*0-) at (0,1.6) {$\vdots$};
	\node[draw=none,minimum size=6mm] (*0) at (0,2) {$\ast$};
	\node[draw=none] (*1+) at (0,2.65) {$\vdots$};
	
	\node[draw=none] () at (-5,2.3) {$T(A\orth)$};
	\node[draw=none] () at (5,2.3) {$T(A)$};
	\coordinate (A^c) at (-5,1.5);
	\coordinate (A^l) at (-6.25,3);
	\coordinate (A^r) at (-3.75,3);
	\coordinate (Ac) at (5,1.5);
	\coordinate (Al) at (6.25,3);
	\coordinate (Ar) at (3.75,3);
	\node[draw=none,minimum size=6mm] (l^) at (-5.5,2.75) {$l$};
	\node[draw=none,minimum size=6mm] (l0-) at (-3,2.75) {$\dots$};
	\node[draw=none,minimum size=6mm] (l0) at (-2.3,2.75) {$m$};
	\node[draw=none,minimum size=6mm] (l1+) at (-1.5,2.75) {$\dots$};
	\node[draw=none,minimum size=6mm] (l1^-) at (1.5,2.75) {$\dots$};
	\node[draw=none,minimum size=6mm] (l0^) at (2.3,2.75) {$m\orth$};
	\node[draw=none,minimum size=6mm] (l0^+) at (3,2.75) {$\dots$};
	
	\path (l0) edge (*0);
	\path (l0^) edge (*0);
	\path (A^c) edge[-] (A^l);
	\path (A^c) edge[-] (A^r);
	\path (A^l) edge[-] (A^r);
	\path (Ac) edge[-] (Al);
	\path (Ac) edge[-] (Ar);
	\path (Al) edge[-] (Ar);
	
	\path (l^) -- ++(0,.5) -| (l0);
	\node[draw=none] () at (-4,3.5) {$a$};
	
	\node[draw=none,minimum size=6mm] (l) at (5.5,2.75) {$\color{blue}l\orth$};
	\path[draw=blue, dashed] (l) -- ++(0,.5) -| (l0^);
	\node[draw=none] () at (4,3.5) {$\color{blue}a\orth$};
\end{myscope}
\end{tikzpicture}
\caption{Illustration of \autoref{lem:iso_bi_u_link}}
\label{fig:iso_bi_u_link}
\end{figure}

\begin{lem}
\label{lem:same_choices}
Let $\lambda$ be a linking of $\psi$, and $V$ an additive vertex in its additive resolution.
Then $V\orth$ is also inside, with as premise kept the dual premise of the one kept for $V$.
\end{lem}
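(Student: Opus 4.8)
The plan is to recycle the argument of \autoref{lem:id_choices} almost verbatim, replacing the single ingredient that was special to identity proof-nets, namely \autoref{lem:id_ax}, with the analogue we have just established for the almost reduced composition $\psi$, namely \autoref{lem:iso_bi_u_link}. Indeed, both results say the same thing at the level of axiom links: links of $\psi$ come in dual pairs $a = (l,m)$ and $a\orth = (l\orth, m\orth)$ inside a common linking, just as the identity proof-net has only links of the shape $(l, l\orth)$. Since the additive connectives of $\psi$ all live in the conclusions $A\orth$ and $A$ (the remaining cut pairs are atomic, hence additive-free), the duality $V \mapsto V\orth$ from \autoref{subsec:mall-pn} is defined on every additive vertex $V$ occurring in an additive resolution, so the statement makes sense.

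First I would fix $\lambda \in \psi$ and an additive vertex $V$ in its additive resolution, and, by symmetry, assume $V$ lies in $A\orth$ with its left premise kept in $\lambda$. Then I would choose a leaf $l$ that is a left-ancestor of $V$ in the additive resolution $[\Sigma]~A\orth, A \upharpoonright \lambda$; such a leaf exists because the kept premise of $V$ is non-empty in this resolution. By the very definition of a linking, every leaf of $[\Sigma]~A\orth, A \upharpoonright \lambda$ carries a link of $\lambda$, so there is some $a \in \lambda$ incident to $l$.

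Next I would apply \autoref{lem:iso_bi_u_link} to $a = (l, m)$, obtaining the dual link $a\orth = (l\orth, m\orth) \in \lambda$, where $l\orth$ is the dual leaf of $l$ sitting in $A$. The non-commutative De Morgan's laws make the duality on premises swap left and right (as recorded in \autoref{subsec:mall-pn}: from $L - V - R$ one passes to $R\orth - V\orth - L\orth$), so $l\orth$ is a right-ancestor of $V\orth$. Because $l\orth$ belongs to a link of $\lambda$, it lies in $[\Sigma]~A\orth, A \upharpoonright \lambda$; this forces $V\orth$ to be in that additive resolution, with its right premise kept. That right premise is precisely the dual premise of the kept left premise of $V$, which is what we want.

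I do not expect a genuine obstacle here: the content of the lemma is really the existence of the dual link $a\orth$, and this has already been secured in \autoref{lem:iso_bi_u_link} using that both compositions reduce to identity proof-nets. The only point requiring a little care is the bookkeeping of the left/right swap induced by the non-commutative duality, together with the symmetric cases ($V$ in $A$ rather than $A\orth$, and the right premise kept rather than the left); all of these are handled by the same computation up to exchanging the roles of the two arguments.
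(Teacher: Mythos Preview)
Your proposal is correct and follows essentially the same approach as the paper's proof: pick a left-ancestor leaf $l$ of $V$ in the additive resolution, take a link $a\in\lambda$ on it, invoke \autoref{lem:iso_bi_u_link} to get $a\orth\in\lambda$ using $l\orth$, and conclude since $l\orth$ is a right-ancestor of $V\orth$. The paper's version is simply terser, omitting your remarks about where additive vertices can live and the left/right swap under the non-commutative duality.
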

\begin{proof}
Assume \wolog\ that the left premise of $V$ is kept in $\lambda$.
There is a left-ancestor $l$ of $V$ in the additive resolution of $\lambda$, hence with a link $a\in\lambda$ on it.
By \autoref{lem:iso_bi_u_link}, we have $a\orth\in\lambda$, using $l\orth$.
As $l\orth$ is a right-ancestor of $V\orth$, the conclusion follows.
\end{proof}

\begin{lem}
\label{lem:desc_jump_plus}
Let $W$ and $P$ be respectively a $\with$-vertex and a $\oplus$-vertex in $\psi$, with $W$ an ancestor of $P$.
Then for any axiom link $a$ depending on $W$ in $\psi$, $a$ also depends on $P\orth$ in $\psi$.
\end{lem}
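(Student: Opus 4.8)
The plan is to manufacture, from a witness of dependence on $W$, a witness of dependence on $P\orth$ by flipping the additive choice at $P\orth$ and leaning on the dual-locking of choices recorded in \autoref{lem:same_choices}. First I would unfold the hypothesis: $a$ depends on $W$ means there are $\lambda,\lambda'\in\psi$ with $a\in\lambda\setminus\lambda'$ such that $W$ is the only $\with$-vertex toggled by $\{\lambda;\lambda'\}$. Toggling $W$ forces $W$ into both additive resolutions with opposite premises kept; by \autoref{lem:same_choices} the dual $\oplus$-vertex $W\orth$ is toggled too, and as no other $\with$ is toggled, $\lambda$ and $\lambda'$ in fact agree on all additive choices except on the dual pair $\{W;W\orth\}$. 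Say, without loss of generality, that $W,P$ lie in the same conclusion and $P\orth$ in the other. Since $W$ is an ancestor of $P$, the vertex $P$ is on the path from $W$ to the root, so $W$ being in $\lambda$'s resolution forces $P$, hence $P\orth$ by \autoref{lem:same_choices}, to be in it as well, with the premise of $P$ kept by $\lambda$ being the one containing $W$.

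The key construction is then as follows. Let $\mu'$ be the unique linking (\autoref{def:pn}~(P1)) on the $\with$-resolution obtained from $\lambda$ by flipping only the choice at the $\with$-vertex $P\orth$. Applying \autoref{lem:same_choices} once more, $\mu'$ and $\lambda$ differ in their additive resolutions exactly on the dual pair $\{P;P\orth\}$; in particular $\mu'$ keeps the other premise of $P$, which deletes the entire argument subtree above $P$ that contained $W$, so $W$ is no longer in $\mu'$'s resolution. From this I can read off which $\with$-vertices $\{\lambda;\mu'\}$ toggles: $P\orth$ is in both resolutions with opposite premises kept, hence toggled, and its dual $\oplus$-vertex $P$ is toggled; every other difference lies inside the deleted $W$-region, which is present in $\lambda$ but absent from $\mu'$ and therefore toggles nothing. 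Thus $P\orth$ is the \emph{only} $\with$-vertex toggled by $\{\lambda;\mu'\}$, and it remains only to establish $a\notin\mu'$ in order to conclude that $a$ depends on $P\orth$ via the pair $(\lambda,\mu')$.

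The easy sub-case of $a\notin\mu'$ is when the conclusion-endpoint $l$ of $a=(l,m)$ lies above $P$: then, equivalently via \autoref{lem:iso_bi_u_link}, the endpoint of the dual link $a\orth$ lies above $P\orth$ in the premise being discarded, so flipping $P\orth$ deletes that leaf, hence deletes $a\orth$, hence deletes $a$ again by \autoref{lem:iso_bi_u_link}. The hard part will be the remaining sub-case, where the dependence of $a$ on $W$ is not witnessed by tree-ancestry but by a re-routing of $l$ through distinct cut atoms in $\lambda$ and in $\lambda'$, so that $l$ survives the flip. Here I would prove that $a$ still cannot survive the deletion of the $W$-region: assuming $a\in\mu'$, I would assemble the links of $a$, of its $\lambda'$-variant through the other cut atom, and the jumps $l\jump W$, $m\jump W$ present in $\G_{\{\lambda;\lambda'\}}$ into a switching cycle passing through $W$, the geometric closure being provided by \autoref{lem:jump_par}. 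Such a cycle contradicts the toggling condition \autoref{def:pn}~(P3), which requires the uniquely toggled $\with$-vertex $W$ to avoid every switching cycle; this rules out the re-routing scenario and yields $a\notin\mu'$.

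I expect this last switching-cycle bookkeeping, rather than the flip construction itself, to be the main obstacle, since the flip construction and the control of which vertices are toggled are essentially forced by \autoref{lem:same_choices}, whereas ruling out the re-routing survival of $a$ requires the delicate interplay between the jump edges generated by dependence, the correctness criterion \autoref{def:pn}~(P3), and the planarity-style reasoning of Lemmas~\ref{lem:4.32+} and~\ref{lem:jump_par}.
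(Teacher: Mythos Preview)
Your construction of $\mu'=\lambda_{P\orth}$ by flipping the $\with$-choice at $P\orth$, and your analysis showing that $\{\lambda;\mu'\}$ toggles only $P\orth$, are correct and are exactly what the paper does. The gap is in the last step: you insist on proving $a\notin\mu'$ and then spend the ``hard sub-case'' on a switching-cycle argument to rule out the possibility $a\in\mu'$. That argument does not go through as written. The cycle you sketch relies on both jump edges $l\jump W$ and $m\jump W$; these are two in-edges of the same $\with$-vertex $W$, so any path using both is not a switching cycle, and \autoref{lem:jump_par} gives you information about first common descendants, not a closure of such a path. More fundamentally, there is no reason to expect $a\notin\mu'$ in general: when neither endpoint of $a$ is an ancestor of $P$ or $P\orth$, flipping $P\orth$ need not dislodge $a$.

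The paper avoids this entirely by a two-line case split you overlooked. Since $W$ is an ancestor of $P$ (hence $W\orth$ an ancestor of $P\orth$), the differences between $\lambda$ and $\lambda'$ on $A,A\orth$ lie inside the region of ancestors of $P,P\orth$; therefore the pair $\{\lambda';\mu'\}$ \emph{also} toggles only $P\orth$, by the same computation you carried out for $\{\lambda;\mu'\}$. Now simply split on whether $a\in\mu'$: if $a\notin\mu'$ then $a\in\lambda\setminus\mu'$ witnesses dependence on $P\orth$ via $\{\lambda;\mu'\}$; if $a\in\mu'$ then $a\in\mu'\setminus\lambda'$ (recall $a\notin\lambda'$ by hypothesis) witnesses dependence on $P\orth$ via $\{\lambda';\mu'\}$. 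No switching-cycle bookkeeping is needed at all.
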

\begin{proof}
There exist linkings $\lambda,\lambda'\in\psi$ such that $W$ is the only $\with$ toggled by $\{\lambda;\lambda'\}$ and $a\in\lambda\backslash\lambda'$.
We consider a linking $\lambda_{P\orth}$ defined by taking an arbitrary $\with$-resolution of $\lambda$ where we choose the other premise for $P\orth$ (and arbitrary premises for $\with$-vertices introduced this way, meaning for $\with$-ancestors of $P\orth$): by \autoref{def:pn} (P1), there exists a unique linking on it.
By \autoref{lem:same_choices}, the additive resolutions of $\lambda$ and $\lambda_{P\orth}$ (\resp\ $\lambda$ and $\lambda'$) differ, on $A$ and $A\orth$, exactly on ancestors of $P$ and $P\orth$ (\resp\ $W$ and $W\orth$). Thus, the additive resolutions of $\lambda'$ and $\lambda_{P\orth}$ also differ, on $A$ and $A\orth$, exactly on ancestors of $P$ and $P\orth$, for $W$ is an ancestor of $P$.
In particular, $\{\lambda;\lambda_{P\orth}\}$, as well as $\{\lambda';\lambda_{P\orth}\}$, toggles only $P\orth$.
If $a\in\lambda_{P\orth}$, then $a$ depends on $P\orth$ in $\{\lambda';\lambda_{P\orth}\}$.
Otherwise, $a$ depends on $P\orth$ in $\{\lambda;\lambda_{P\orth}\}$.
\end{proof}

The key result to use distributivity is that a positive vertex ``between'' a leaf $l$ and a $\with$-vertex $W$ in the same tree prevents them from interacting, \ie\ there is no jump $l\jump W$.

\begin{lem}
\label{lem:no_jump_above_tp}
Let $l\jump W$ be a jump edge in $\psi$, with $l$ not an ancestor of $W$ and $l,W\in T(A\orth)$ (\resp\ $T(A)$).
Denoting by $N$ the first common descendant of $l$ and $W$, there is no positive vertex in the path between $N$ and $W$ in $T(A\orth)$ (\resp\ $T(A)$).
\end{lem}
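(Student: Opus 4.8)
The plan is to argue by contradiction, working without loss of generality in $T(A\orth)$ (the case of $T(A)$ being symmetric). By \autoref{lem:jump_par} the first common descendant $N$ is a $\parr$-vertex, so I write $N = Q \parr P_l$ with $l$ a leaf of $P_l$ and with $W$ — hence every vertex on the path between $N$ and $W$ — lying in $Q$; dually $N\orth = P_l\orth \tens Q\orth$ is a $\tens$-vertex of $T(A)$, and $l\orth \in P_l\orth$. Suppose some positive vertex $V$ lies strictly between $N$ and $W$, so $V \in Q$ and $W$ is a subformula of one premise of $V$. As in the proof of \autoref{lem:jump_par}, I fix linkings $\lambda,\lambda' \in \psi$ such that $W$ is the only $\with$ toggled by $\{\lambda;\lambda'\}$ and an axiom link $a$ on $l$ with $a \in \lambda \setminus \lambda'$ depending on $W$; by \autoref{lem:iso_bi_u_link} the dual link $a\orth$ on $l\orth$ lies in exactly the same linkings as $a$. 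I then split on the main connective of $V$, using the self-duality of additive resolutions (\autoref{lem:same_choices}) throughout to guarantee that all the dual vertices and dual links invoked below actually occur in $\lambda$.

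If $V = V_1 \tens V_2$ with $W$ inside $V_1$, I would build an explicit switching cycle of $\G_{\{\lambda;\lambda'\}}$ through $W$, contradicting condition \textbf{(P3)} of \autoref{def:pn}, which forces the unique toggled $\with$ (namely $W$) to lie in no switching cycle. Both premises of the $\tens$-vertex $V$ belong to every additive resolution, so $\lambda$ keeps some leaf $l_2$ of $V_2$, carrying a link $b = (l_2,m_2) \in \lambda$ with $m_2$ in $B$, and by \autoref{lem:iso_bi_u_link} its dual $b\orth = (m_2\orth,l_2\orth) \in \lambda$ with $l_2\orth \in V_2\orth \subseteq Q\orth$. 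The cycle runs: the jump $l \jump W$; the downward tree path from $W$ to $V$; a turn at $V$ from its $V_1$-premise to its $V_2$-premise (both premise-edges of the \emph{positive} vertex $V$, hence not switch edges); the upward path to $l_2$; the link $b$, the atomic cut $m_2 \ast m_2\orth$ and $b\orth$ crossing into $T(A)$; the tree path in $T(A)$ from $l_2\orth$ down to $N\orth$ and up to $l\orth$; and finally the switch-edge-free return $l\orth \to m\orth \to m \to l$ along $a\orth$, the atomic cut $m \ast m\orth$ and $a$. The only vertices where two in-edges are used are the positive vertices $V$ and $N\orth$, whose in-edges are not switch edges, so this is a genuine switching cycle; the point is that the positivity of $V$ lets the cycle turn back \emph{before} descending to $N$, thereby dodging the double switch edge of the $\parr$-vertex $N$ that prevents the naive cycle of \autoref{lem:jump_par} from being switching.

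If instead $V = V_1 \oplus V_2$ is an $\oplus$-vertex, I would avoid any cycle computation and reduce directly to \autoref{lem:jump_par} in the dual tree. Since the $\with$-vertex $W$ is an ancestor of the $\oplus$-vertex $V$, \autoref{lem:desc_jump_plus} shows that $a$ — and hence, by \autoref{lem:iso_bi_u_link}, also $a\orth$ — depends on $V\orth$, which is a $\with$-vertex of $T(A)$. This yields a jump $l\orth \jump V\orth$ living entirely in $T(A)$, with $l\orth$ not an ancestor of $V\orth$. But $l\orth \in P_l\orth$ and $V\orth \subseteq Q\orth$ sit in the two distinct premises of $N\orth$, so their first common descendant is $N\orth$, a $\tens$-vertex; this contradicts \autoref{lem:jump_par}, which requires it to be a $\parr$. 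The main obstacle is the $\tens$-case switching cycle: specifying the turn at $V$ and checking, vertex by vertex, that no $\pw$-vertex ever contributes two switch edges — this is precisely where the geometric reasoning available in proof-nets, but not conveniently in sequent calculus, carries the argument.
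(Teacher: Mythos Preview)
Your proof is correct and follows essentially the same approach as the paper: the same case split on whether the positive vertex is a $\tens$ or a $\oplus$, the same switching cycle through $W$ in the $\tens$ case (turning at the positive vertex and at the dual $N\orth$), and the same reduction to \autoref{lem:jump_par} via \autoref{lem:desc_jump_plus} and \autoref{lem:iso_bi_u_link} in the $\oplus$ case. The only cosmetic differences are notation ($V$ versus the paper's $P$) and your more explicit bookkeeping of which vertices contribute two in-edges; one small point you leave implicit but the paper spells out is that $a\neq b$ (since $l$ and $l_2$ lie in distinct premises of $N$), hence $m\neq m_2$, so the two $\ast$-vertices in the cycle are distinct.
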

\begin{proof}
Let $P$ be a vertex on the path between $N$ and $W$ in $T(A\orth)$.
By \autoref{lem:jump_par}, $N$ is a $\parr$-vertex.
We prove by contradiction that $P$ can neither be a $\oplus$ nor a $\tens$-vertex.

Suppose $P$ is a $\oplus$-vertex.
By \autoref{lem:desc_jump_plus}, $a$ depends on $P\orth$, and so does $a\orth$ through \autoref{lem:iso_bi_u_link}: there is a jump edge $l\orth\jump P\orth$.
Applying \autoref{lem:jump_par}, the first common descendant of $l\orth$ and $P\orth$, which is $N\orth$, is a $\parr$-vertex: a contradiction as it is a $\tens$-vertex.

Assume now $P$ to be a $\tens$-vertex.
As there is a jump $l\jump W$, there exist linkings $\lambda,\lambda'\in\psi$ and a leaf $m$ of $B$ such that $W$ is the only $\with$ toggled by $\{\lambda;\lambda'\}$ and $a=(l,m)\in\lambda\backslash\lambda'$.
For $P$ is a $\tens$, there is a leaf $p$ which is an ancestor of $P$ in the additive resolution of $\lambda$, from a different premise of $P$ than $W$; it is used by a link $b=(p,q)\in\lambda$ (see \autoref{fig:no_jump_above_t}).
Remark $q\neq m$, for $a$ and $b$ are two distinct links in the same linking $\lambda$.
Then the switching cycle
$l\jump W\rightarrow P\leftarrow p\edgelab{b}q\rightarrow\ast\leftarrow q\orth\edgelab{b\orth}p\orth\rightarrow P\orth\rightarrow N\orth\leftarrow l\orth\edgelab{a\orth}m\orth\rightarrow\ast\leftarrow m\edgelab{a}l$
(dashed in blue on \autoref{fig:no_jump_above_t})
belongs to $\G_{\{\lambda;\lambda'\}}$, where unlabeled arrows are paths in the syntactic forest.
Contradiction: $W$, the only $\with$ toggled by $\{\lambda;\lambda'\}$, cannot be in any switching cycle of $\G_{\{\lambda;\lambda'\}}$ by \autoref{def:pn} (P3).
\end{proof}

\begin{figure}
\begin{adjustbox}{}
\begin{tikzpicture}
\begin{myscope}
	\node[draw=none] () at (-4.75,-1) {$A\orth$};
	\node[draw=none] () at (4.75,-1) {$A$};
	
	\coordinate (A^c) at (-4.75,-.75);
	\coordinate (A^l) at (-6.5,3);
	\coordinate (A^r) at (-3,3);
	\path (A^c) edge[-] (A^l);
	\path (A^c) edge[-] (A^r);
	\path (A^l) edge[-] (A^r);
	
	\coordinate (Ac) at (4.75,-.75);
	\coordinate (Al) at (6.5,3);
	\coordinate (Ar) at (3,3);
	\path (Ac) edge[-] (Al);
	\path (Ac) edge[-] (Ar);
	\path (Al) edge[-] (Ar);
	
	\node[rectangle,draw=none,minimum size=6mm] (l) at (-5.75,2.75) {$l$};
	\node[rectangle,draw=none,minimum size=6mm] (p) at (-4.75,2.75) {$p$};
	\node[rectangle,draw=none,minimum size=6mm] (W) at (-4,2.25) {$W\with$};
	\node[rectangle,draw=none,minimum size=6mm] (P) at (-4.5,1.25) {$P\tens$};
	\node[rectangle,draw=none,minimum size=6mm] (C) at (-4.75,.25) {$N\parr$};
	\path (p) edge (P);
	\path (W) edge (P);
	\path (P) edge (C);
	\path (l) edge (C);
	
	\node[rectangle,draw=none,minimum size=6mm] (l^) at (5.75,2.75) {$l\orth$};
	\node[rectangle,draw=none,minimum size=6mm] (p^) at (4.75,2.75) {$p\orth$};
	\node[rectangle,draw=none,minimum size=6mm] (W^) at (4,2.25) {$W\orth\oplus$};
	\node[rectangle,draw=none,minimum size=6mm] (P^) at (4.5,1.25) {$P\orth\parr$};
	\node[rectangle,draw=none,minimum size=6mm] (C^) at (4.75,.25) {$N\orth\tens$};
	\path (p^) edge (P^);
	\path (W^) edge (P^);
	\path (P^) edge (C^);
	\path (l^) edge (C^);
	
	\node[draw=none] (B) at (-1.25,-1) {$B$};
	\node[draw=none] (B^) at (1.25,-1) {$B\orth$};
	
	\node[draw=none,minimum size=6mm] () at (-1.5,2.75) {$\dots$};
	\node[draw=none,minimum size=6mm] (q) at (-1,2.75) {$q$};
	\node[draw=none,minimum size=6mm] (m) at (-.5,2.75) {$m$};
	
	\node[draw=none,minimum size=6mm] (m^) at (.5,2.75) {$m\orth$};
	\node[draw=none,minimum size=6mm] (q^) at (1,2.75) {$q\orth$};
	\node[draw=none,minimum size=6mm] () at (1.5,2.75) {$\dots$};
	
	\node[draw=none,minimum size=6mm] (*m) at (0,2) {$\ast$};
	\node[draw=none,minimum size=6mm] (*q) at (0,1.5) {$\ast$};
	\node[draw=none,minimum size=6mm] () at (0,1) {$\vdots$};
	\path (m) edge (*m);
	\path (m^) edge (*m);
	\path (q) edge (*q);
	\path (q^) edge (*q);
	
	\path (l) -- ++(0,.7) -| (m);
	\node[draw=none] () at (-4,3.7) {$a$};
	\path (l^) -- ++(0,.7) -| (m^);
	\node[draw=none] () at (4,3.7) {$a\orth$};
	\path (p) -- ++(0,.5) -| (q);
	\node[draw=none] () at (-2.25,3) {$b$};
	\path (p^) -- ++(0,.5) -| (q^);
	\node[draw=none] () at (2.25,3) {$b\orth$};
	
	\path[bend right] (l) \edgelabel{$j$} (W);
\end{myscope}
\begin{myscope_sur}
	\path[bend right] (l) edge (W);
	\path (W) edge (P);
	\path (p) edge (P);
	\path (p) -- ++(0,.5) -| (q);
	\path (q) edge (*q);
	\path (q^) edge (*q);
	\path (p^) -- ++(0,.5) -| (q^);
	\path (p^) edge (P^);
	\path (P^) edge (C^);
	\path (l^) edge (C^);
	\path (l^) -- ++(0,.7) -| (m^);
	\path (m^) edge (*m);
	\path (m) edge (*m);
	\path (l) -- ++(0,.7) -| (m);
\end{myscope_sur}
\end{tikzpicture}
\end{adjustbox}
\caption{Switching cycle containing $W$ if $P$ is a $\tens$-vertex in the proof of \autoref{lem:no_jump_above_tp}}
\label{fig:no_jump_above_t}
\end{figure}

\begin{thm}
\label{th:iso_bipartite_unique}
Assuming $\isoproofs{A}{B}{\theta}{\theta'}$ with $A$ and $B$ distributed, $\theta$ and $\theta'$ are bipartite \unique.
\end{thm}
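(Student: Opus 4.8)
The plan is to argue by contradiction. Assuming $\isoproofs{A}{B}{\theta}{\theta'}$ with $A$ and $B$ distributed, \autoref{th:iso_bipartite} already gives that $\theta$ and $\theta'$ are bipartite full, so \unique ness is all that remains. Suppose one of them, say $\theta$, is not \unique: some leaf carries two distinct links. Since the links inside a single linking are pairwise disjoint, these two links belong to two different linkings of the almost reduced composition $\psi$ over $B$, whose conclusions are $A\orth$ and $A$; the atomic cuts on $B$ are never reduced in $\psi$, so the failure of \unique ness survives there. Working in $\psi$, \autoref{lem:iso_bi_u_link} pairs each link $a=(l,m)$ with its dual $a\orth=(l\orth,m\orth)$ in the same linking, so that a leaf $l$ of $A\orth$ with two links comes together with the leaf $l\orth$ of $A$ also carrying two links.

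From this data I would extract a jump edge $l\jump W$ confined to a single conclusion tree, say $T(A\orth)$, with $l$ not an ancestor of $W$. The two competing links on $l$ differ, hence depend on some toggled $\with$-vertex; transporting additive choices across duality with \autoref{lem:same_choices}, and refining the relevant pair of linkings to one toggling a single $\with$ (the saturation technique used in \autoref{lem:4.32+}), I obtain a $\with$-vertex $W$ and a link on $l$ depending on it, which yields the jump $l\jump W$. If the distinguishing $\with$ happens to lie on the $B$-side rather than inside $A\orth$, I would instead run the symmetric argument on the almost reduced composition over $A$, whose conclusions $B\orth$ and $B$ are again distributed; so in either case the jump stays within a single conclusion tree.

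Once the jump $l\jump W$ is available, the endgame is short. By \autoref{lem:jump_par} the first common descendant $N$ of $l$ and $W$ is a $\parr$-vertex, so the $\with$-vertex $W$ is an ancestor of the $\parr$-vertex $N$. By \autoref{lem:no_jump_above_tp} there is no positive vertex on the path from $W$ down to $N$, hence this whole path consists of $\parr$- and $\with$-vertices only. But a maximal all-negative path running from a $\with$ down to a $\parr$ must contain a $\parr$ having a $\with$ as one of its arguments: taking the lowest $\with$ on the path, the connective immediately below it towards $N$ is a $\parr$ of which this $\with$ is an argument. This exhibits a sub-formula of the shape $(C\with B)\parr A$ or $C\parr(B\with A)$, contradicting that $A\orth$ (equivalently $A$, by \autoref{rem:dist}) is distributed. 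This contradiction shows $\theta$ is \unique, and by symmetry so is $\theta'$; together with \autoref{th:iso_bipartite} this gives that both are bipartite \unique.

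The main obstacle is the middle step: converting the purely global statement ``some leaf has two links'' into a jump that is confined to a single conclusion tree and oriented ``downward'' (i.e.\ with $l$ not an ancestor of $W$), so that the geometric Lemmas~\ref{lem:jump_par} and~\ref{lem:no_jump_above_tp} apply. This demands careful bookkeeping of additive resolutions through duality (\autoref{lem:iso_bi_u_link} and \autoref{lem:same_choices}), a choice between the two almost reduced compositions according to where the toggled $\with$ sits, and the saturation argument ensuring a single link depends on a single $\with$-vertex. By contrast, the final clash with distributivity, once the all-negative $\with$-to-$\parr$ path is in hand, is immediate.
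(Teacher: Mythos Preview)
Your proposal has a genuine gap precisely at the point you yourself flag as the main obstacle: producing a jump $l\jump W$ with $l$ not an ancestor of $W$ inside a single conclusion tree. The paper does not extract such a jump directly from the pair of links; instead it first builds a concrete jump-free switching cycle
\[
\omega = l\edgelab{a} l_0\rightarrow\ast\leftarrow l_0\orth\edgelab{a\orth} l\orth\edgelab{b\orth} l_1\orth\rightarrow\ast\leftarrow l_1\edgelab{b} l
\]
inside the almost reduced composition $\psi$, using \autoref{lem:iso_bi_u_link} and the atomic cut pairs. \autoref{lem:4.32+} is then applied to $\omega$ itself (it is a statement about switching cycles, not about pairs of links), and delivers not only a $\with$-vertex $W$ and a link of $\omega$ depending on it, but also the extra datum $\omega\not\subseteq\G_{\Lambda^W}$. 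It is this last property that forces $l$ not to be an ancestor of $W$: if $l$ were, say, a left-ancestor of $W$, then every linking of $\Lambda$ containing $a$ or $b$ would keep the left argument of $W$ and hence lie in $\Lambda^W$, dragging $a,b,a\orth,b\orth$ and the cut edges of $\omega$ into $\G_{\Lambda^W}$, a contradiction. Without the cycle $\omega$ and this non-inclusion, nothing in your sketch prevents $l$ from sitting above $W$, and Lemmas~\ref{lem:jump_par} and~\ref{lem:no_jump_above_tp} then do not apply.

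Two smaller points. First, your ``$B$-side'' fallback is based on a misreading of $\psi$: the almost reduced composition over $B$ has conclusions $A\orth,A$ and only atomic cut pairs where $B$ was, so there are no $\with$-vertices on any $B$-side; $W$ lies in $T(A\orth)$ or $T(A)$, and the paper handles this by the symmetry $a\leftrightarrow a\orth$ (via \autoref{lem:iso_bi_u_link}), not by switching compositions. Second, the claim that the two links on $l$ survive in $\psi$ is not automatic: linkings of $\theta$ could be killed as inconsistent during the non-atomic reductions. The paper secures this via \autoref{lem:id_match}, which provides for each linking of $\theta$ a matching linking of $\theta'$, so that the linkings carrying $a$ and $b$ persist in $\psi$.
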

\begin{proof}
We already know that $\theta$ and $\theta'$ are bipartite full thanks to \autoref{th:iso_bipartite}.
We reason by contradiction and assume \wolog\ that $\theta$ is not \unique: there exist a leaf $l$ of $A\orth$ and two distinct leaves $l_0$ and $l_1$ of $B$ with links $a=(l,l_0)$ and $b=(l,l_1)$ in $\theta$.
We consider $\psi$ the almost reduced composition of $\theta$ and $\theta'$ over $B$, depicted on \autoref{fig:iso_bi_u_1}.
By \autoref{lem:id_match}, $a$ and $b$ are also links in $\psi$ (for the linkings they belong to in $\theta$ have matching linkings in $\theta'$, and we did not eliminate atomic cuts).
Using \autoref{lem:iso_bi_u_link}, we have in $\G_\psi$ a switching cycle $\omega=l\edgelab{a} l_0\rightarrow\ast\leftarrow l_0\orth\edgelab{a\orth} l\orth\edgelab{b\orth} l_1\orth\rightarrow\ast\leftarrow l_1\edgelab{b} l$.

Let $\Lambda$ be a set of linkings and $W$ an associated $\with$-vertex given by \autoref{lem:4.32+} applied to $\omega$.
The vertex $W$ belongs to either $T(A)$ or $T(A\orth)$: up to swapping them, $W$ is in $T(A\orth)$.
By \autoref{lem:4.32+}, $a$, $a\orth$, $b$ or $b\orth$ depends on $W$.
So $a$ or $b$ depends on $W$ by \autoref{lem:iso_bi_u_link}; \wolog\ $a$ depends on $W$.
Remark $l$ is not an ancestor of $W$: if it were, by symmetry assume it is a left-ancestor.
Whence $a$ and $b$ belong to $\Lambda^W$, so $a\orth$ and $b\orth$ too (\autoref{lem:iso_bi_u_link}); thus $\omega\subseteq\G_{\Lambda^W}$, contradicting \autoref{lem:4.32+}.
Hence, $l$ is not an ancestor of $W$, and we can apply \autoref{lem:jump_par}: the first common descendant $N$ of $l$ and $W$ in $T(A\orth)$ is a $\parr$.
Using \autoref{lem:no_jump_above_tp}, there is no $\tp$-vertex on the path between the $\parr$-vertex $N$ and its ancestor the $\with$-vertex $W$ in $T(A\orth)$.
But then, considering the first $\with$-vertex in this path, there is a sub-formula of the shape $-\parr(-\with-)$ or $(-\with-)\parr-$ in the distributed $A\orth$, a contradiction.
\end{proof}

\begin{figure}
\begin{adjustbox}{}
\begin{tikzpicture}
\begin{myscope}
	\node[draw=none] (A^) at (-5,.75) {$A\orth$};
	\node[draw=none] (*0-) at (0,1) {$\vdots$};
	\node[draw=none] (*0) at (0,1.5) {$\ast$};
	\node[draw=none] (*1) at (0,2) {$\ast$};
	\node[draw=none] (*1+) at (0,2.5) {$\vdots$};
	\node[draw=none] (A) at (5,.75) {$A$};
	
	\node[draw=none] (B) at (-2,.75) {$B$};
	\node[draw=none] (B^) at (2,.75) {$B\orth$};
	
	\node[draw=none] () at (-5,2.25) {$T(A\orth)$};
	\node[draw=none] () at (5,2.25) {$T(A)$};
	\coordinate (A^c) at (-5,1);
	\coordinate (A^l) at (-6,3);
	\coordinate (A^r) at (-4,3);
	\coordinate (Ac) at (5,1);
	\coordinate (Al) at (6,3);
	\coordinate (Ar) at (4,3);
	\node[draw=none,minimum size=6mm] (l^) at (-5.5,2.75) {$l$};
	\node[draw=none,minimum size=6mm] (l0-) at (-3.25,2.75) {$\dots$};
	\node[draw=none,minimum size=6mm] (l0) at (-2.5,2.75) {$l_0$};
	\node[draw=none,minimum size=6mm] (l1) at (-1.75,2.75) {$l_1$};
	\node[draw=none,minimum size=6mm] (l1+) at (-1,2.75) {$\dots$};
	\node[draw=none,minimum size=6mm] (l1^-) at (1,2.75) {$\dots$};
	\node[draw=none,minimum size=6mm] (l1^) at (1.75,2.75) {$l_1\orth$};
	\node[draw=none,minimum size=6mm] (l0^) at (2.5,2.75) {$l_0\orth$};
	\node[draw=none,minimum size=6mm] (l0^+) at (3.25,2.75) {$\dots$};
	
	\path (l0) edge (*0);
	\path (l0^) edge (*0);
	\path (l1) edge (*1);
	\path (l1^) edge (*1);
	\path (A^c) edge[-] (A^l);
	\path (A^c) edge[-] (A^r);
	\path (A^l) edge[-] (A^r);
	\path (Ac) edge[-] (Al);
	\path (Ac) edge[-] (Ar);
	\path (Al) edge[-] (Ar);
	
	\path (l^) -- ++(0,.5) -| (l0);
	\node[draw=none] () at (-4,3.5) {$a$};
	\path (l^) -- ++(0,1) -| (l1);
	\node[draw=none] () at (-3.75,4) {$b$};
	
	\node[draw=none,minimum size=6mm] (l) at (5.5,2.75) {$l\orth$};
	\path (l) -- ++(0,.5) -| (l0^);
	\node[draw=none] () at (4,3.5) {$a\orth$};
	\path (l) -- ++(0,1) -| (l1^);
	\node[draw=none] () at (3.75,4) {$b\orth$};
\end{myscope}
\end{tikzpicture}
\end{adjustbox}
\caption{Almost reduced composition $\psi$ of $\theta$ and $\theta'$ by cut over $B$ in the proof of \autoref{th:iso_bipartite_unique}}
\label{fig:iso_bi_u_1}
\end{figure}


\subsection{Non-ambiguous formulas}
\label{sec:red_nonambi}

Once our study is restricted to bipartite \emph{\unique} proof-nets, we can also restrict formulas further.
This will make apparent that isomorphisms can only be associativity and commutativity.

\begin{defi}[Non-ambiguous formula]
\label{def:non-ambiguous}
A formula $A$ is \emph{non-ambiguous} if each atom in $A$ occurs at most once (whether positively or negatively).
Otherwise, $A$ is called \emph{ambiguous}.
\end{defi}

\begin{fact}
\label{rem:non-ambiguous}
If $A$ is non-ambiguous, so is $A\orth$.
\end{fact}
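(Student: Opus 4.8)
The plan is to show that the orthogonality operation preserves, for every atom, its number of occurrences in a formula, from which the statement follows immediately. Concretely, for an atom $X$, let me write $n_X(A)$ for the number of occurrences of $X$ in $A$, counting both $X$ and $X\orth$ as occurrences of the atom $X$ (so that $n_X$ ignores polarity). A formula $A$ is then non-ambiguous precisely when $n_X(A) \leq 1$ for every atom $X$.

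First I would prove, by a straightforward induction on the structure of $A$, that $n_X(A\orth) = n_X(A)$ for every atom $X$. The base cases are $A = X$ and $A = X\orth$: here $A\orth = X\orth$ and $A\orth = X$ respectively, and in both cases the single occurrence of the atom $X$ is preserved (only its polarity flips). For the inductive cases, the De Morgan laws give $(B\tens C)\orth = C\orth\parr B\orth$, $(B\parr C)\orth = C\orth\tens B\orth$, $(B\with C)\orth = C\orth\oplus B\orth$ and $(B\oplus C)\orth = C\orth\with B\orth$; in each of these, the multiset of atom occurrences of $A\orth$ is the disjoint union of those of $B\orth$ and $C\orth$, hence by induction hypothesis equals the disjoint union of those of $B$ and $C$, which is exactly the multiset of atom occurrences for $A$. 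The unit cases $1, \bot, \top, 0$ contribute no atoms at all, so $n_X$ is zero on both sides and they are trivial.

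Having established $n_X(A\orth) = n_X(A)$ for all atoms $X$, the statement is immediate: if $A$ is non-ambiguous then $n_X(A) \leq 1$ for every $X$, whence $n_X(A\orth) \leq 1$ for every $X$, that is, $A\orth$ is non-ambiguous. There is essentially no obstacle here; the only point to verify is that the (non-commutative) De Morgan laws merely rearrange and dualize subformulas without ever creating or deleting an atom occurrence, and this is exactly what the case-by-case induction checks. One could alternatively phrase the whole argument by noting that $(\_)\orth$ is an involution inducing a bijection between the leaves of $A$ and those of $A\orth$ that respects the underlying atom, but the inductive formulation makes the preservation of occurrence counts fully explicit.
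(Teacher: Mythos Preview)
Your proof is correct. The paper states this as a \texttt{Fact} with no proof at all, treating it as immediate from the definition; your induction on $A$ showing $n_X(A\orth)=n_X(A)$ makes explicit exactly the reason the paper considers it obvious.
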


For instance, $X\with Y\orth$ is non-ambiguous, whereas $(A\tens B)\oplus(A\tens C)$ is ambiguous.
The reduction to non-ambiguous formulas requires to restrict to distributed formulas first: in $(A\tens B)\oplus(A\tens C)\iso A\tens(B\oplus C)$ we need the two occurrences of $A$ to factorize.
The goal of this section is to prove that we can consider only non-ambiguous formulas (\autoref{prop:red_to_non-amb}) and that isomorphisms between these formulas correspond simply to the existence of arbitrary proof-nets (\autoref{th:bi_u_non-ambi}).
These two results are an adaptation of the work on MLL by Balat \& Di~Cosmo~\cite[Section~3]{mllisos}.\footnote{
Our definition of non-ambiguous is stronger than the one from~\cite{mllisos} (less formulas are non-ambiguous here).
In~\cite{mllisos}, a formula $A$ is non-ambiguous if each atom in $A$ occurs at most once positive and once negative.
This makes for instance $X \with X\orth$ non-ambiguous, whereas by our definition it is ambiguous.
}

\subsubsection{Reduction to non-ambiguous formulas}
\label{subsec:red_to_non-amb}

Given $A \iso B$ with $A$ or $B$ ambiguous, we replace \emph{occurrences} of atoms by new fresh atoms to obtain an isomorphism $A' \iso B'$ between non-ambiguous formulas.
This is possible thanks to \textit{ax}-unicity of the proof-nets.
An important remark is that these ``replacements of occurrences'' are more general than substitutions, because we may replace \emph{different occurrences} of the \emph{same} atom by different atoms (or negated atoms) in a formula, \ie\ we work at the level of occurrences instead of atoms.

\begin{lem}
\label{lem:pn_renamed_is_pn}
Take $\theta$ an \unique\ cut-free proof-net with $(X, X\orth)$ one of its axiom links.
For any atom (or negated atom) $Y$, the set of linkings obtained from $\theta$ by replacing the \emph{occurrences} $X$ and $X\orth$ by $Y$ and $Y\orth$ -- in both the conclusion formulas and the linkings of $\theta$ -- is a proof-net.
\end{lem}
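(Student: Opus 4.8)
The plan is to show that the renaming alters nothing of the combinatorial data on which the correctness criterion depends, and touches only the single axiom link $(X,X\orth)$, which stays well-formed. Write $\theta_Y$ for the set of linkings obtained from $\theta$ by replacing the two leaf occurrences $X$ and $X\orth$ (the endpoints of the fixed link $(X,X\orth)$) with $Y$ and $Y\orth$, both in the conclusion formulas and inside every linking. First I would observe that this renaming is a bijection on leaf occurrences which fixes the positions of all vertices and leaves, and therefore induces an isomorphism between the two syntactic forests that forgets labels, carrying linkings to linkings when a linking is viewed as its set of edges. Consequently the additive resolutions, the $\with$-resolutions, the graphs $\G_\lambda$ and $\G_\Lambda$, the jump edges, the $\parr$-switchings and the switching cycles of $\theta$ and of $\theta_Y$ are in label-preserving bijective correspondence, since all of these notions are defined purely from the tree structure and the edge set of the linkings, never from the atom labels.

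The only piece of data that genuinely depends on atom labels is the well-formedness of axiom links, namely that each link joins a pair of complementary leaves. This is precisely where \unique ness enters. Since $\theta$ is \unique, the occurrence $X$ carries a unique link on it, which must be $(X,X\orth)$, and likewise $X\orth$ carries only this same link; in particular neither of the two renamed occurrences appears in any other link of any linking of $\theta$. Hence, after renaming, the link $(X,X\orth)$ becomes $(Y,Y\orth)$, still a pair of complementary leaves, while every other link is left untouched and thus remains well-formed. The disjointness of the links within a linking and the requirement that the leaves of a linking be exactly the leaves of an additive resolution are both label-independent, so each element of $\theta_Y$ is indeed a linking on the renamed cut sequent.

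It then remains to transfer the four conditions of \autoref{def:pn}. Condition (P0) holds vacuously, as $\theta$, hence $\theta_Y$, is cut-free. Conditions (P1), (P2) and (P3) are statements purely about the combinatorial objects listed above — the $\with$-resolutions and their unique linkings, the trees $\G_\phi$, and the toggled $\with$-vertices together with the switching cycles of the graphs $\G_\Lambda$ — each of which is preserved by the renaming bijection. Therefore every one of these conditions holds for $\theta_Y$ exactly because it holds for $\theta$, and $\theta_Y$ is a proof-net.

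The only delicate point, and the place where the \unique ness hypothesis is indispensable, is the preservation of the complementarity of axiom links: without \unique ness a renamed occurrence $X$ could be paired, in some linking, with an occurrence $X\orth$ that is \emph{not} renamed, producing the ill-formed link $(Y,X\orth)$. The unicity of the link on each of the two occurrences rules this out, which is exactly what makes the renaming safe; everything else is a routine check that the relabeling bijection carries the correctness criterion of $\theta$ onto that of $\theta_Y$.
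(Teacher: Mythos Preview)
Your proof is correct and follows the same approach as the paper: both argue that leaf labels enter only into the well-formedness of axiom links (complementarity of endpoints), while the correctness criterion (P0)--(P3) is purely combinatorial and label-free. Your version is more explicit about the role of \unique ness in guaranteeing that the two renamed occurrences participate in no link other than $(X,X\orth)$, a point the paper leaves implicit.
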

\begin{proof}
Call $A'$ and $B'$ the formulas $A$ and $B$ where the occurrences $X$ and $X\orth$ are replaced by $Y$ and $Y\orth$.
The resulting set of linkings is indeed a proof-net on $A',B'$, because the sole use of labels of leaves is to ensure an axiom link is between dual atoms (at least without cut).
In particular, labels do not matter in the correctness criterion.
\end{proof}

\begin{lem}
\label{lem:pn_iso_bij}
Assume $\isoproofs{A}{B}{\theta}{\psi}$ with $\theta$ and $\psi$ bipartite \unique\ proof-nets.
Take $X_1\orth$ and $X_2$ \emph{occurrences} of the atoms $X\orth$ and $X$ in $A\orth$ and $B$, and call $X_1$ and $X_2\orth$ respectively their dual occurrences in $A$ and $B\orth$ (\ie\ the occurrences given by the duality function).
There is an axiom link $(X_1\orth, X_2)$ in $\theta$ if and only if $(X_1, X_2\orth)$ is an axiom link in $\psi$.
\end{lem}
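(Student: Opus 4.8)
The plan is to reduce everything to \autoref{lem:iso_bi_u_link}, which already pairs each axiom link of an almost reduced composition with its dual. First I would record that the two implications are symmetric: since $\isoproofs{B}{A}{\psi}{\theta}$ holds as well (the roles of the two compositions over $A$ and $B$ just swap), and since axiom links are \emph{unordered} pairs, so that $(X_2\orth,X_1)$ and $(X_1,X_2\orth)$ denote the same link (and likewise $(X_2,X_1\orth)$ and $(X_1\orth,X_2)$), the converse implication is obtained from the direct one by exchanging the roles of $(A,\theta)$ and $(B,\psi)$. Hence it suffices to prove that $(X_1\orth,X_2)\in\theta$ implies $(X_1,X_2\orth)\in\psi$.

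Let $\phi$ denote the almost reduced composition over $B$ of $\theta$ and $\psi$ (\autoref{def:almost_reduced}). The first key step is to check that the link $(X_1\orth,X_2)$ survives into $\phi$. Assuming $(X_1\orth,X_2)\in\lambda_0$ for some linking $\lambda_0\in\theta$, I would apply \autoref{lem:id_match} — which is available because $\cutf{\psi}{\theta}{A}$ reduces to $\id{B}$ by hypothesis — to get $\lambda_0'\in\psi$ such that $\lambda_0\cup\lambda_0'$ matches for the cut over $B$. A matching linking never becomes inconsistent during cut-elimination, so it is not deleted when forming $\phi$; moreover $(X_1\orth,X_2)$ is an \emph{atomic} link and the reductions defining $\phi$ touch only the non-atomic cuts (steps (b) and (c), never (a)), so this link is preserved inside the surviving linking. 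Thus $(X_1\orth,X_2)$ is a link of $\phi$.

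The second step is a direct invocation of \autoref{lem:iso_bi_u_link}: as $X_1\orth$ is a leaf of $A\orth$ and $X_2$ a leaf of $B$, the dual link $(X_1,X_2\orth)$ lies in the same linking of $\phi$. It then remains to locate its origin. The linkings of $\phi$ are disjoint unions of a linking of $\theta$ — whose links, by bipartiteness, join $A\orth$ and $B$ — with a linking of $\psi$ — whose links join $B\orth$ and $A$. Since $(X_1,X_2\orth)$ joins the leaf $X_1$ of $A$ to the leaf $X_2\orth$ of $B\orth$, it cannot be a link of $\theta$, hence is contributed by $\psi$, giving $(X_1,X_2\orth)\in\psi$ as wanted.

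I do not expect a serious obstacle here, since the hard geometric arguments have already been carried out in \autoref{lem:iso_bi_u_link} and \autoref{th:iso_bipartite_unique}; the present statement is essentially a transfer of that pairing from the composition back to the two individual proof-nets. The only points requiring care are the bookkeeping of the first step — verifying that the matching linking furnished by \autoref{lem:id_match} genuinely yields a linking of $\phi$ still containing the atomic link — and making the symmetry argument fully precise, so that applying the direct implication to $\isoproofs{B}{A}{\psi}{\theta}$ and re-reading the unordered links indeed produces exactly the converse.
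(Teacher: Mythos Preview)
Your argument is correct, but it follows a different route from the paper's. The paper never touches the almost reduced composition or \autoref{lem:iso_bi_u_link}; instead it argues directly via the \emph{full} reduction to the identity and uses \textit{ax}-uniqueness in an essential way: the unique link of $\theta$ on $X_1\orth$ is $(X_1\orth,X_2)$, the unique link of $\psi$ on $X_2\orth$ is some $(X_2\orth,X_3)$ with $X_3$ a leaf of $A$, and after eliminating all cuts over $B$ the only link that can appear on $X_1\orth$ is $(X_1\orth,X_3)$; since the result is $\id{A}$, \autoref{lem:id_ax} forces $X_3=X_1$.

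Your approach instead recycles the pairing already established in \autoref{lem:iso_bi_u_link} at the level of the almost reduced composition, together with \autoref{lem:id_match} to guarantee survival of the relevant linking, and then reads off the origin of the dual link from bipartiteness. A pleasant by-product is that your argument never actually invokes \textit{ax}-uniqueness---bipartiteness alone suffices---so you in fact prove a slightly stronger statement than the one claimed. The paper's proof, by contrast, is more self-contained (it does not depend on the machinery of \autoref{subsubsec:proof_iso_bi_u}) and makes the role of \textit{ax}-uniqueness explicit: it is what guarantees a \emph{single} chain from $X_1\orth$ through the cut to a leaf of $A$, which is exactly what pins down $X_3$.
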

\begin{proof}
By symmetry, it suffices to prove that if $(X_1\orth, X_2)$ is an axiom link in $\theta$, then $(X_1, X_2\orth)$ is an axiom link in $\psi$.
By bipartite \unique ness, $(X_1\orth, X_2)$ is the sole axiom link of $\theta$ using $X_1\orth$, and $\psi$ has a unique axiom link $(X_2\orth, X_3)$ on $X_2\orth$ -- with $X_3$ an occurrence in $A, B\orth$ of the same atom as $X_1$.
Assume \wolog\ that $X_2$ is in $B$, so that $X_1$ and $X_3$ are both in $A$.
When composing $\theta$ and $\psi$ by cut over $B$ and reducing all cuts, one gets the identity proof-net of $A$, where there is a unique link on $X_1\orth$, linking it to $X_1$ (\autoref{lem:id_ax}).
But the sole link on $X_1\orth$ that may result from this composition is $(X_1\orth, X_3)$.
Thence, $X_1 = X_3$ and $(X_1, X_2\orth)$ is a link of $\psi$.
\end{proof}

\begin{lem}[Distributed ambiguous isomorphic formulas]
\label{lem:amb_iso}
Let $A$ and $B$ be distributed formulas such that $\isoproofs{A}{B}{\theta}{\psi}$.
There exists a substitution $\sigma$ and distributed formulas $A'$ and $B'$, \emph{non-ambiguous}, such that $A=\sigma(A')$, $B=\sigma(B')$ and $\isoproofs{A'}{B'}{\theta'}{\psi'}$ for some proof-nets $\theta'$ and $\psi'$.
\end{lem}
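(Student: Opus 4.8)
The plan is to read the renaming off the link structure of $\theta$, using that bipartite \unique ness makes the links a perfect matching. First I would invoke \autoref{th:iso_bipartite_unique}: since $A$ and $B$ are distributed and $\isoproofs{A}{B}{\theta}{\psi}$, both $\theta$ and $\psi$ are bipartite \unique. Bipartiteness together with fullness and \unique ness forces the axiom links of $\theta$ to form a bijection: each leaf of $A\orth$ lies on exactly one link, which connects it to exactly one leaf of $B$, and symmetrically. Hence the links of $\theta$ are simultaneously in bijection with the leaves of $A\orth$ and with the leaves of $B$, and by \autoref{lem:pn_iso_bij} each such link $a=(l,m)$ (with $l\in A\orth$, $m\in B$) is mirrored by a link $(l\orth,m\orth)$ of $\psi$, where $l\orth\in A$ and $m\orth\in B\orth$ are the dual occurrences.

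Next I would choose, for every link $a$ of $\theta$, a fresh atom $Z_a$, all the $Z_a$ pairwise distinct and distinct from every atom of $A$ and $B$, and relabel the two occurrences of $a$ (and their duals $l\orth$, $m\orth$) by $Z_a$, $Z_a\orth$, keeping the polarity each occurrence had. This is exactly one application of \autoref{lem:pn_renamed_is_pn} per link, performed coherently on $\theta$ and on $\psi$; \autoref{lem:pn_iso_bij} guarantees that the occurrences relabelled in $\psi$ are precisely $l\orth,m\orth$, so every link of the relabelled $\psi'$ still joins dual literals. Because each leaf of $A\orth$ (\resp\ $B$) lies on a unique link, this relabelling gives pairwise distinct atoms to distinct leaves of $A\orth$ (\resp\ $B$), hence to distinct leaves of $A$ (\resp\ $B\orth$). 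Writing $A'$ and $B'$ for the relabelled formulas, every atom of $A'$ and of $B'$ now occurs once, so $A'$ and $B'$ are non-ambiguous; and since relabelling leaves leaves the connectives untouched, they remain distributed. Taking $\sigma$ to be the substitution sending each $Z_a$ back to the atom underlying the label of $l$ (and fixing all other atoms) yields $A=\sigma(A')$ and $B=\sigma(B')$ immediately.

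It then remains to check that $\theta'$ and $\psi'$ witness $\isoproofs{A'}{B'}{\theta'}{\psi'}$. Iterating \autoref{lem:pn_renamed_is_pn} shows $\theta'$ and $\psi'$ are genuine proof-nets on $A'\orth,B'$ and $B'\orth,A'$, labels playing no role in the correctness criterion beyond matching dual leaves — which the relabelling preserves. For the composition, I would argue that cut-elimination steps (a)--(c) inspect a label only to confirm that the two leaves of an axiom link are dual; since the relabelling is consistent, the reduction of $\cutf{\theta'}{\psi'}{B'}$ mirrors that of $\cutf{\theta}{\psi}{B}$ step by step. The latter reduces to the identity proof-net of $A$, whose links are exactly the $(l,l\orth)$ by \autoref{lem:id_ax}; applying $\sigma^{-1}$ at the level of occurrences, $\cutf{\theta'}{\psi'}{B'}$ reduces to the identity proof-net of $A'$, and symmetrically the composition over $A'$ reduces to the identity proof-net of $B'$.

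The main obstacle I anticipate is bookkeeping a \emph{single} relabelling that simultaneously renders both $A'$ and $B'$ non-ambiguous while staying coherent with \emph{both} proof-nets. This is precisely what bipartite \unique ness and \autoref{lem:pn_iso_bij} buy us: one link carries one fresh name to one leaf of $A\orth$ and one leaf of $B$, and the dual-link correspondence propagates that name to the matching leaves of $A$ and $B\orth$, so no two occurrences ever compete for the same name and no link of $\theta'$ or $\psi'$ is broken. The point requiring care is the claim that cut-elimination is blind to the (consistent) renaming, i.e.\ that relabelling commutes with reduction to normal form; this is a routine inspection of the three cut-elimination cases, none of which reads a label except for the dual-atom test on axiom links.
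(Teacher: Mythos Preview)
Your proposal is correct and follows essentially the same route as the paper: pick one fresh atom per axiom link of $\theta$, use \autoref{lem:pn_iso_bij} to propagate the renaming coherently to $\psi$, apply \autoref{lem:pn_renamed_is_pn} iteratively to keep proof-net correctness, and observe that cut-elimination ignores labels so the compositions still normalize to identity proof-nets. Your explicit invocation of \autoref{th:iso_bipartite_unique} at the outset is in fact a small clarification over the paper, which leaves that dependency implicit through the hypotheses of the two auxiliary lemmas.
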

\begin{proof}
For each axiom link $a_\theta = (X\orth, X)$ of $\theta$, there is an axiom link $a_\psi = (X, X\orth)$ in $\psi$ between the dual occurrences, and vice-versa (\autoref{lem:pn_iso_bij}).
Replace these occurrences of $X\orth$ and $X$ in $A\orth$, $B$ and $\theta$ by respectively $Y_a\orth$ and $Y_a$, for $Y_a$ a fresh atom -- with a different fresh atom $Y_a$ for each link $a_\theta$.
This operation yields a proof-net $\theta'$ on ${A'}\orth, B'$ (by repeated applications of \autoref{lem:pn_renamed_is_pn}).
Similarly, replace each link $a_\psi = (X, X\orth)$ in $\psi$ by $(Y_a, Y_a\orth)$ -- with $Y_a$ the fresh atom used for the link $a_\theta = (X\orth, X)$ of $\theta$.
Again, the resulting $\psi'$ is a proof-net (\autoref{lem:pn_renamed_is_pn}), on ${B'}\orth, A'$ as we renamed occurrences dually.
Since cut-elimination does not depend on labels, the compositions of $\theta'$ and $\psi'$ reduces to the identity proof-nets of $A'$ and $B'$: hence $\isoproofs{A'}{B'}{\theta'}{\psi'}$.
Remark that $A$ (\resp\ $B$) is obtained from $A'$ (\resp\ $B'$) by substituting each $Y_a$ by the atom $X$ of $a_\theta$, for $Y_a$ was fresh.
\end{proof}

\begin{prop}[Reduction to distributed non-ambiguous formulas]
\label{prop:red_to_non-amb}
If $\AC$ is complete for isomorphisms (of proof-nets) between unit-free MALL formulas that are distributed and non-ambiguous, then it is complete for isomorphisms (of proof-nets) between unit-free MALL formulas that are distributed.
In other words:
assume that for all distributed and non-ambiguous unit-free MALL formulas $A'$ and $B'$, $\isoproofs{A'}{B'}{\theta'}{\psi'} \implies A' \eqAC B'$;
then it holds that for all distributed unit-free MALL formulas $A$ and $B$, $\isoproofs{A}{B}{\theta}{\psi} \implies A \eqAC B$.
\end{prop}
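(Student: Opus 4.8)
The plan is to read off this proposition as a direct corollary of \autoref{lem:amb_iso}, which already contains all the substantive work; the only genuinely new ingredient needed here is that the theory $\AC$ is stable under substitution. First I would fix distributed unit-free MALL formulas $A$ and $B$ together with proof-nets witnessing $\isoproofs{A}{B}{\theta}{\psi}$, the goal being to derive $A\eqAC B$. Rather than analyze $\theta$ and $\psi$ directly, the strategy is to push our isomorphism through \autoref{lem:amb_iso} and then transport the equation it produces back along the associated substitution.

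Concretely, applying \autoref{lem:amb_iso} to $\isoproofs{A}{B}{\theta}{\psi}$ yields a substitution $\sigma$ and distributed \emph{non-ambiguous} formulas $A'$ and $B'$ such that $A=\sigma(A')$, $B=\sigma(B')$, and $\isoproofs{A'}{B'}{\theta'}{\psi'}$ for suitable proof-nets $\theta'$ and $\psi'$. Since $A'$ and $B'$ are both distributed and non-ambiguous, the completeness hypothesis of the statement applies verbatim to $\isoproofs{A'}{B'}{\theta'}{\psi'}$ and gives $A'\eqAC B'$. At this point the whole remaining task is to deduce $A\eqAC B$ from $A'\eqAC B'$ and the equalities $A=\sigma(A')$, $B=\sigma(B')$.

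For the final step I would invoke stability of $\eqAC$ under substitution: the theory $\AC$ is the congruence generated by the associativity and commutativity equations of \autoref{tab:eqisos}, and these equations involve only the connectives $\tens$, $\parr$, $\oplus$, $\with$ together with formula metavariables, never any specific atom. Consequently, applying $\sigma$ to every formula occurring in a derivation of $A'\eqAC B'$ turns it into a valid derivation of $\sigma(A')\eqAC\sigma(B')$, whence $A=\sigma(A')\eqAC\sigma(B')=B$. The only point that requires any care is precisely this substitution-stability of $\eqAC$, and it is routine; all the real difficulty -- namely the replacement of occurrences converting an ambiguous isomorphism into a non-ambiguous one, which rests on the bipartite \unique ness of \autoref{th:iso_bipartite_unique} via \autoref{lem:pn_iso_bij} and \autoref{lem:pn_renamed_is_pn} -- is already discharged inside \autoref{lem:amb_iso}, so no further obstacle remains.
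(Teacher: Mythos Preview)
Your proposal is correct and follows essentially the same approach as the paper: apply \autoref{lem:amb_iso}, invoke the completeness hypothesis on the resulting non-ambiguous pair, and transport the resulting $\AC$-equality back along the substitution. Your explicit justification of the substitution-stability of $\eqAC$ is slightly more detailed than the paper's one-line remark, but the structure is identical.
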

\begin{proof}
Take $A$ and $B$ some distributed unit-free MALL formulas, and assume there exist proof-nets $\theta$ and $\psi$ such that $\isoproofs{A}{B}{\theta}{\psi}$.
Using \autoref{lem:amb_iso}, there is a substitution $\sigma$ and distributed \emph{non-ambiguous} formulas $A'$ and $B'$ such that $A=\sigma(A')$, $B=\sigma(B')$ and $\isoproofs{A'}{B'}{\theta'}{\psi'}$ for some proof-nets $\theta'$ and $\psi'$.
By completeness hypothesis on $\AC$ between distributed and non-ambiguous formulas, $A' \eqAC B'$.
As a substitution preserves the equations of an equational theory, it follows that $A = \sigma(A') \eqAC \sigma(B') = B$.
\end{proof}

\subsubsection{Simplification with non-ambiguous formulas}
\label{subsec:red_non-amb}

The goal of this part is to show isomorphisms between non-ambiguous formulas correspond simply to the existence of proof-nets, speaking no more about cut-elimination nor identity proof-nets.
This is done through \autoref{th:bi_u_non-ambi}: when $A$ is non-ambiguous, having proof-nets on $A\orth, B$ and $B\orth, A$ is enough to know their composition over $B$ composes to the identity proof-net of $A$ -- simply because this is the sole proof-net on $A\orth, A$.
This result will be deduced by proving properties close to the ones of identity proof-nets from \autoref{subsec:id}.

\begin{lem}
\label{lem:q_bi_dual_ax}
Let $\theta$ be a proof-net of conclusions $A\orth,A$, with $A$ a non-ambiguous formula.
Axiom links of $\theta$ are of the form $(l\orth,l)$ for $l$ a leaf of $A$.
\end{lem}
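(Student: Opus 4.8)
The plan is to argue by a simple counting of atom occurrences, using only the definition of an axiom link as a pair of complementary leaves together with non-ambiguity of \emph{both} conclusions. First I would fix an arbitrary axiom link $a = (p,q)$ of $\theta$. By definition of a link (\autoref{subsec:mall-pn}), $p$ and $q$ are complementary leaves, so up to naming there is an atom $X$ with $p$ labeled $X$ and $q$ labeled $X\orth$. The goal is to show $\{p,q\}$ is a dual pair $\{l,l\orth\}$.

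Next I would count, in the cut sequent $A\orth,A$, the leaves whose label involves the atom $X$ (with either polarity). Since $A$ is non-ambiguous, $X$ occurs at most once in $A$; since $A\orth$ is non-ambiguous as well by \autoref{rem:non-ambiguous}, $X$ occurs at most once in $A\orth$. Hence there are at most two such leaves in $A\orth,A$ in total. But the presence of the link $a$ forces at least one leaf labeled $X$ (namely $p$) and one labeled $X\orth$ (namely $q$), so there are at least two. Therefore the atom $X$ occurs exactly once in $A$ and exactly once in $A\orth$, one occurrence being labeled $X$ and the other $X\orth$.

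Then I would identify these two occurrences through the leaf-duality function of \autoref{subsec:mall-pn}. Let $l$ denote the unique leaf of $A\orth,A$ labeled $X$. Its dual $l\orth$ lies in the opposite conclusion and is labeled $X\orth$, so $l\orth$ is precisely the unique leaf labeled $X\orth$. Since $p$ is the only leaf labeled $X$ and $q$ the only one labeled $X\orth$, we conclude $\{p,q\} = \{l,l\orth\}$, that is $a = (l\orth,l)$, which is the claim.

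I do not expect a serious obstacle: the argument is essentially a pigeonhole on occurrences, and it never invokes the correctness criterion, only the labels carried by leaves. The only points requiring a little care are to apply non-ambiguity to $A\orth$ as well as to $A$ (via \autoref{rem:non-ambiguous}), rather than to $A$ alone, and to make explicit that the single $X$-occurrence and the single $X\orth$-occurrence are dual leaves. This mirrors the observation used in the proof of \autoref{lem:pn_renamed_is_pn}, that the labels of leaves serve solely to pair complementary atoms.
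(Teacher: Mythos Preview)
Your proposal is correct and follows essentially the same approach as the paper: both pick an arbitrary axiom link, invoke non-ambiguity of $A$ (and hence of $A\orth$ via \autoref{rem:non-ambiguous}) to conclude that the atom involved has a unique $X$-leaf and a unique $X\orth$-leaf in $A\orth,A$, and then observe that these two leaves are dual. Your version is simply more explicit about the counting (``at most two'' then ``at least two''), but the argument is the same.
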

\begin{proof}
Let $a$ be an axiom link of $\theta$, between leaves $l$ and $m$, with the label of $l$ (\resp\ $m$) being $X$ (\resp\ $X\orth$).
By non-ambiguity of $A$, and so $A\orth$ (\autoref{rem:non-ambiguous}), $X$ (and $X\orth$) occurs exactly once in $A\orth, A$ (counting $X$ and $X\orth$ differently).
Moreover, the unique leaf labeled $X\orth$ is the dual of the leaf labeled $X$.
Thus, $m = l\orth$ and $a = (l,l\orth)$.
\end{proof}

\begin{lem}
\label{lem:q_bi_dual_add}
Let $\theta$ be a proof-net of conclusions $A\orth,A$, with $A$ a non-ambiguous formula.
Take a linking $\lambda\in\theta$ and an additive vertex $V$ in its additive resolution.
The vertex $V\orth$ is in the additive resolution of $\lambda$, and $\lambda$ keeps for $V\orth$ the dual premise it keeps for $V$.
\end{lem}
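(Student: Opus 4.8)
The plan is to follow exactly the argument used for \autoref{lem:id_choices} (and reused in \autoref{lem:same_choices}), now that \autoref{lem:q_bi_dual_ax} gives us, for a \emph{non-ambiguous} $A$, the analogue of \autoref{lem:id_ax}: every axiom link of $\theta$ is of the form $(l\orth, l)$. This is the only ingredient that changes compared to the identity case, so I expect the proof to be short and essentially free of obstacles.

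Concretely, by symmetry I would assume that $\lambda$ keeps the left premise of $V$. Since $V$ lies in the additive resolution of $\lambda$ and its left argument sub-tree is the one retained, that sub-tree reaches at least one leaf $l$, which is therefore a left-ancestor of $V$ present in $A\orth, A \upharpoonright \lambda$. Because $l$ belongs to the additive resolution, it is an endpoint of some axiom link $a \in \lambda$; applying \autoref{lem:q_bi_dual_ax}, this link must be $a = (l, l\orth)$, so the dual occurrence $l\orth$ is also a leaf of $A\orth, A \upharpoonright \lambda$.

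It then remains to transfer this to $V\orth$. The key point — and the one thing to state carefully — is that duality swaps the two premises of an additive vertex under the non-commutative De Morgan laws recalled in \autoref{subsec:mall-pn}: the dual of the left premise of $V$ is the \emph{right} premise of $V\orth$. Hence, $l$ being a left-ancestor of $V$ makes $l\orth$ a right-ancestor of $V\orth$. Since $l\orth$ is a leaf kept in the additive resolution and lies in the right sub-tree of $V\orth$, that whole sub-tree is retained, so $V\orth$ is itself in the resolution with its right premise kept; this is precisely the dual premise of the one kept for $V$, which concludes. The only care needed throughout is to keep track of this left/right reversal induced by orthogonality, exactly as in the identity-proof-net lemmas.
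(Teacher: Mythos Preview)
Your proof is correct and follows essentially the same approach as the paper's: pick a leaf $l$ above $V$ in the kept premise, use \autoref{lem:q_bi_dual_ax} to get the link $(l,l\orth)$, and observe that $l\orth$ lies above $V\orth$ on the dual side. The paper's version is terser (it does not spell out the left/right symmetry or the De Morgan reversal), but the argument is the same.
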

\begin{proof}
As $V$ is in the additive resolution $A\orth,A\upharpoonright\lambda$ of $\lambda$, one of its ancestor leaves, say $l$, is in $A\orth,A\upharpoonright\lambda$: there is a link $a\in\lambda$ on it.
By \autoref{lem:q_bi_dual_ax}, $a=(l,l\orth)$.
But $l\orth$ is an ancestor of $V\orth$, so $V\orth$ is in $A\orth,A\upharpoonright\lambda$, with as premise the dual premise chosen for $V$.
\end{proof}

\begin{lem}
\label{lem:bi_u_non-amb-unique}
For $A$ a non-ambiguous formula, there is exactly one proof-net of conclusions $A\orth,A$: the identity proof-net of $A$.
\end{lem}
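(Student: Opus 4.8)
The plan is to show that \emph{any} proof-net $\theta$ of conclusions $A\orth, A$ coincides, as a set of linkings, with the identity proof-net $\id{A}$ (which is indeed a proof-net on $A\orth, A$ by \autoref{def:id_pn} and \autoref{th:seq}). The strategy is to establish that, in both $\theta$ and $\id{A}$, a linking is entirely determined by the additive resolution it induces on the \emph{single} conclusion $A$, and that both proof-nets realize every additive resolution of $A$ exactly once; equality then follows immediately.

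First I would observe that every axiom link, in either proof-net, is of the form $(l, l\orth)$: this is \autoref{lem:id_ax} for $\id{A}$ and \autoref{lem:q_bi_dual_ax} for $\theta$ (using the non-ambiguity of $A$). Consequently a linking $\lambda$ is forced by its additive resolution, since its links are exactly the pairs $(l, l\orth)$ for $l$ a leaf of that resolution. Moreover the additive resolution on $A\orth$ is itself determined by the one on $A$: for each additive vertex $V$ in the resolution the dual vertex $V\orth$ is present with the dual premise kept, which is \autoref{lem:id_choices} for $\id{A}$ and \autoref{lem:q_bi_dual_add} for $\theta$. Combining these two facts, in each proof-net a linking is completely determined by the choices it makes on the additive connectives of $A$, that is, by an additive resolution of $A$ alone.

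Next I would count linkings. For $\id{A}$, \autoref{lem:id_res} already states that exactly one linking lies on each additive resolution of $A$. For $\theta$ I would rerun the argument of \autoref{lem:id_res}, with \autoref{lem:q_bi_dual_add} in place of \autoref{lem:id_choices}: given an additive resolution $R$ of $A$, form the associated $\with$-resolution $R'$ of $A\orth, A$ (keeping the choices of $R$ on the $\with$-vertices of $A$, and the dual choices on the $\with$-vertices of $A\orth$); then a linking of $\theta$ is on $R$ if and only if it is on $R'$, and by the resolution condition \autoref{def:pn}~(P1) there is exactly one linking of $\theta$ on $R'$. Hence $\theta$ too has exactly one linking per additive resolution of $A$.

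Finally I would match the two proof-nets. Both $\theta$ and $\id{A}$ realize every additive resolution $R$ of $A$ exactly once, and by the previous paragraph the unique linking sitting on $R$ is, in both cases, the forced linking $\{(l, l\orth)\}$ over the leaves of $R$, with the dual resolution on $A\orth$. Therefore $\theta$ and $\id{A}$ have identical linkings, so $\theta = \id{A}$, which is the claim. I expect no deep obstacle, as the statement is essentially an assembly of the preceding lemmas; the one point that requires care is the claim that a linking is fully determined by its additive resolution on $A$, which rests on combining the ``all links are $(l,l\orth)$'' property with the dual-premise property so that neither the $A\orth$-side choices nor the axiom links carry any additional freedom.
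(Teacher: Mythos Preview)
Your proof is correct and rests on the same two key facts as the paper's (\autoref{lem:q_bi_dual_ax} and \autoref{lem:q_bi_dual_add}). The paper organizes the argument a bit more economically: instead of comparing $\theta$ to the specific proof-net $\id{A}$ via additive resolutions of $A$ alone, it shows directly that any two proof-nets $\theta,\theta'$ on $A\orth,A$ coincide, by taking a linking $\lambda\in\theta$ on a $\with$-resolution $R$, the unique $\lambda'\in\theta'$ on $R$ given by (P1), and observing that $\lambda$ and $\lambda'$ have the same $\oplus$-choices (forced by duality from $R$) and the same links (forced to be $(l,l\orth)$). This avoids your detour through \autoref{lem:id_res} and the id-specific lemmas, which are subsumed by the general ones anyway.
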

\begin{proof}
We prove that for any proof-nets $\theta$ and $\theta'$ of conclusions $A\orth,A$, it holds that $\theta=\theta'$.
Hence the result when $\theta'$ is the identity proof-net.

Take $\lambda\in\theta$ a linking.
It is on some $\with$-resolution $R$ of $A\orth,A$.
By \autoref{def:pn} (P1), there exists a unique linking $\lambda'\in\theta'$ on $R$.
We wish to prove $\lambda=\lambda'$.
They have the same additive resolution, for their choice on a $\oplus$-vertex $P$ is determined by the premise taken for the $\with$-vertex $P\orth$, which is in $R$ (\autoref{lem:q_bi_dual_add}).
They have the same axiom links on this additive resolution, because any leaf on it is linked to its dual (\autoref{lem:q_bi_dual_ax}).
Therefore, $\lambda=\lambda'$, so $\theta\subseteq\theta'$.
By symmetry, the same reasoning yields $\theta'\subseteq\theta$, thus $\theta=\theta'$.
\end{proof}

\begin{rem}
\label{rem:id_bi_u_not_unique}
This property does not hold outside of non-ambiguous formulas, even distributed.
For instance, there are two bipartite \unique\ proof-nets of conclusions $X_1\orth\tens X_2\orth, X_3\parr X_4$ (where each $X_i$ is an occurrence of the atom $X$): the identity proof net, with axiom links $(X_1\orth,X_4)$ and $(X_2\orth,X_3)$, and the ``swap'' with axiom links $(X_1\orth,X_3)$ and $(X_2\orth,X_4)$ -- see \autoref{fig:id_bi_u_not_unique}.
\end{rem}

\begin{figure}
\begin{adjustbox}{}
\begin{tikzpicture}
\begin{myscope}
	\node (X1^) at (-3,1) {$X_1\orth$};
	\node (X0^) at (-1,1) {$X_2\orth$};
	\node (X0) at (1,1) {$X_3$};
	\node (X1) at (3,1) {$X_4$};
	\node (T) at (-2,0) {$\tens$};
	\node (P) at (2,0) {$\parr$};

	\path (X1^) edge (T);
	\path (X0^) edge (T);
	\path (X0) edge (P);
	\path (X1) edge (P);
\end{myscope}
\begin{myscopec}{red}
	\path (X0) -- ++(0,.6) -| (X0^);
	\path (X1) -- ++(0,.8) -| (X1^);
\end{myscopec}
\end{tikzpicture}
\quad\vrule\quad
\begin{tikzpicture}
\begin{myscope}
	\node (X1^) at (-3,1) {$X_1\orth$};
	\node (X0^) at (-1,1) {$X_2\orth$};
	\node (X0) at (1,1) {$X_3$};
	\node (X1) at (3,1) {$X_4$};
	\node (T) at (-2,0) {$\tens$};
	\node (P) at (2,0) {$\parr$};

	\path (X1^) edge (T);
	\path (X0^) edge (T);
	\path (X0) edge (P);
	\path (X1) edge (P);
\end{myscope}
\begin{myscopec}{red}
	\path (X0) -- ++(0,.8) -| (X1^);
	\path (X1) -- ++(0,.6) -| (X0^);
\end{myscopec}
\end{tikzpicture}
\end{adjustbox}
\caption{Identity proof-net of $X\parr X$ (left-side) and the swap on this formula (right-side)}
\label{fig:id_bi_u_not_unique}
\end{figure}

\begin{thm}[Non-ambiguous isomorphisms]
\label{th:bi_u_non-ambi}
Let $A$ and $B$ be non-ambiguous formulas.
If there exist proof-nets $\theta$ and $\psi$ of respective conclusions $A\orth,B$ and $B\orth,A$, then $\isoproofs{A}{B}{\theta}{\psi}$.
\end{thm}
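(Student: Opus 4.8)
The plan is to reduce everything to the unicity result already established in \autoref{lem:bi_u_non-amb-unique}, which states that a non-ambiguous formula admits a single proof-net on $A\orth,A$, namely its identity proof-net. Recall that establishing $\isoproofs{A}{B}{\theta}{\psi}$ amounts to checking that the two compositions $\cutf{\theta}{\psi}{B}$ and $\cutf{\psi}{\theta}{A}$ reduce by cut-elimination to identity proof-nets. Since the honest work has been done in the unicity lemma, the argument here is essentially a direct application, with no genuine obstacle beyond bookkeeping of conclusions.

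First I would form the composition $\cutf{\theta}{\psi}{B}$. By \autoref{def:cut}, this is a proof-net whose conclusion cut sequent is $[B\ast B\orth]~A\orth,A$. I then eliminate all cuts: by \autoref{prop:cut_elim_correct} each elimination step produces a proof-net, and by \autoref{th:cut_sn} cut-elimination is strongly normalizing and confluent, so there is a unique normal form, which is a cut-free proof-net. This normal form has conclusions $A\orth,A$ (the cut pair $B\ast B\orth$ having been consumed). Now, because $A$ is non-ambiguous, \autoref{lem:bi_u_non-amb-unique} forces this cut-free proof-net to be exactly the identity proof-net of $A$. Hence $\cutf{\theta}{\psi}{B}$ reduces to the identity proof-net of $A$.

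The composition over $A$ is handled symmetrically: $\cutf{\psi}{\theta}{A}$ is a proof-net of conclusion $[A\ast A\orth]~B\orth,B$, whose unique cut-free normal form is, by \autoref{lem:bi_u_non-amb-unique} applied this time to the non-ambiguous formula $B$, the identity proof-net of $B$. Putting the two halves together yields precisely $\isoproofs{A}{B}{\theta}{\psi}$. I would emphasize that the essential mathematical content lies entirely upstream — in the unicity of the proof-net on $A\orth,A$ for non-ambiguous $A$, which itself rests on \autoref{lem:q_bi_dual_ax} and \autoref{lem:q_bi_dual_add} — so that the only point requiring care here is that \emph{both} hypotheses of non-ambiguity ($A$ for the composition over $B$, and $B$ for the composition over $A$) are genuinely used, one for each half.
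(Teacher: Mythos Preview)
Your proof is correct and follows exactly the same approach as the paper: compose, normalize, and invoke \autoref{lem:bi_u_non-amb-unique} for each of the two formulas. The paper's version is simply a two-line compression of what you wrote, and your additional remarks (explicit appeal to \autoref{prop:cut_elim_correct} and \autoref{th:cut_sn}, and the observation that each non-ambiguity hypothesis is used for a different half) are accurate elaborations rather than deviations.
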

\begin{proof}
Both compositions reduce to identity proof-nets.
Indeed, the composition of $\theta$ and $\psi$ by cut over $B$ reduces to a proof-net on $A\orth, A$, that by \autoref{lem:bi_u_non-amb-unique} can only be the identity proof-net of $A$ -- and similarly for a composition by cut over $A$.
\end{proof}


\subsection{Completeness for unit-free distributed MALL}
\label{sec:compl_unit_free}

We now prove the completeness of $\AC$ for unit-free and distributed MALL formulas by reasoning as in~\cite[Section~4]{mllisos}, with some more technicalities for we have to reorder not only $\parr$-vertices but also $\with$-vertices.

\begin{defi}[Sequentializing vertex]
\label{def:sequentializing}
A terminal (\ie\ with no descendant) non-leaf vertex $V$ in a proof-net $\theta$ is called \emph{sequentializing} if, depending on its kind:
\begin{itemize}
\item $\tens\backslash\ast$-vertex: the removal of $V$ in $\G_\theta$ results in two connected components.
\item $\oplus$-vertex: the left or right syntactic sub-tree of $V$ does not belong to $\G_\theta$ (\ie\ has no link on any of its leaves in $\G_\theta$).
\item $\pw$-vertex: a terminal $\pw$-vertex is always sequentializing.
\end{itemize}
\end{defi}

It is easy to check that removing a sequentializing vertex produces proof-net(s).
The sequentialization theorem (\autoref{th:seq}) states there is always a sequentializing vertex.

\begin{lem}
\label{lem:bi_no_pos_seq}
In a bipartite full proof-net with conclusions $A_l\odot A_r,B$, where $\odot\in\{\tens;\oplus\}$, the root of $A_l\odot A_r$ is not sequentializing.
\end{lem}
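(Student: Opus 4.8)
The plan is to fix $V$ for the root of the conclusion $A_l\odot A_r$ and to refute, case by case on $\odot$, the relevant clause of \autoref{def:sequentializing}. Note first that $V$ is terminal, so asking whether it sequentializes is meaningful, and that a bipartite proof-net is cut-free, so $V$ is a genuine $\tp$-vertex (no $\ast$ is involved). Throughout I would use the two standing hypotheses: bipartiteness, so that every axiom link joins a leaf of $A_l\odot A_r$ to a leaf of $B$; and fullness, so that every leaf carries at least one link.

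I would dispatch the case $\odot=\oplus$ first, as it is immediate. Here $V$ is sequentializing precisely when one of $T(A_l)$, $T(A_r)$ carries no link in $\G_\theta$. But $A_l$ and $A_r$ are unit-free formulas, hence each has at least one leaf, and fullness puts a link on every such leaf; so both argument sub-trees belong to $\G_\theta$ and $V$ is not sequentializing.

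The case $\odot=\tens$ is a connectivity argument. Here $V$ is sequentializing iff deleting it from $\G_\theta$ splits the graph into two connected components, so I assume this for contradiction. As $V$ is positive no jump edge points to it, and the only edges removed with $V$ are the two tree edges to the roots of $A_l$ and $A_r$; the remaining vertices all lie in $T(A_l)\cup T(A_r)\cup T(B)$. I claim that $\G_\theta\setminus\{V\}$ is in fact connected, which is the sought contradiction. In any single linking the additive resolution keeps, for each conclusion, a connected sub-tree containing its root (one argument of each additive connective being deleted), and all these resolutions share the conclusion roots; so the parts of $T(A_l)$, $T(A_r)$ and $T(B)$ present in $\G_\theta$ are each connected to their respective roots. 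It then remains to connect the three roots: picking any leaf $\ell$ of $A_l$, fullness gives a link on $\ell$ and bipartiteness makes its other endpoint a leaf $m$ of $B$; in the linking carrying this link, $\ell$ is joined to the root of $A_l$ and $m$ to the root of $B$, so these two roots lie in one component. The symmetric argument joins the root of $A_r$ to the root of $B$, whence all three roots, and therefore all remaining vertices, form a single component.

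The only point needing real care is the connectivity bookkeeping of the $\tens$ case: one must check that passing from a single resolution (a rooted connected sub-tree) to the union $\G_\theta=\bigcup_{\lambda\in\theta}\G_\lambda$ preserves connectedness to the roots, which holds because every resolution retains the roots of the conclusions. Granting this, the lemma falls out directly from the definitions of bipartite, full and sequentializing, together with the fact that positive vertices receive no jumps.
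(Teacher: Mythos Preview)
Your proof is correct and follows essentially the same approach as the paper: use fullness to get links on leaves of both $A_l$ and $A_r$, and bipartiteness to ensure these links land in $B$, which prevents the root from being sequentializing in either the $\tens$ or the $\oplus$ sense. The paper's version is more concise, handling both values of $\odot$ uniformly by exhibiting the single path $l\,\text{---}\,m\,\text{---}\,\cdots\,\text{---}\,s\,\text{---}\,r$ (with the middle segment in $T(B)$), whereas you split into cases and, for $\tens$, go further and establish full connectedness of $\G_\theta\setminus\{V\}$ rather than merely connecting the two premises; but the core argument is identical.
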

\begin{proof}
Let $l$ be a leaf of $A_l$ and $r$ one of $A_r$.
By bipartiteness and fullness, there are leaves $m$ and $s$ of $B$ with axiom links $(l,m)$ and $(r,s)$ in the proof-net, see \autoref{fig:bi_no_pos_seq}.
As there is a path in $T(B)$ between $m$ and $s$, whether $\odot=\oplus$ or $\odot=\tens$, it is not sequentializing.
\end{proof}

\begin{figure}
\centering
\begin{tikzpicture}
\begin{myscope}
	\node[draw=none] (A0) at (-5,1) {$A_l$};
	\node[draw=none] (A1) at (-2,1) {$A_r$};
	\node[draw=none,minimum size=0mm] (A) at (-3.5,.25) {$\odot$};
	\node[draw=none] (B) at (3,.25) {$B$};
	\path (A0) edge (A);
	\path (A1) edge (A);
	
	\node[draw=none] () at (-5,2.25) {$T(A_l)$};
	\node[draw=none] () at (-2,2.25) {$T(A_r)$};
	\node[draw=none] () at (3,1.25) {$T(B)$};
	\coordinate (A0c) at (-5,1.25);
	\coordinate (A0l) at (-6.25,3);
	\coordinate (A0r) at (-3.75,3);
	\coordinate (A1c) at (-2,1.25);
	\coordinate (A1l) at (-3.25,3);
	\coordinate (A1r) at (-.75,3);
	\coordinate (Bc) at (3,.5);
	\coordinate (Br) at (.75,3);
	\coordinate (Bl) at (5.25,3);
	\path (A0c) edge[-] (A0l);
	\path (A0c) edge[-] (A0r);
	\path (A0l) edge[-] (A0r);
	\path (A1c) edge[-] (A1l);
	\path (A1c) edge[-] (A1r);
	\path (A1l) edge[-] (A1r);
	\path (Bc) edge[-] (Br);
	\path (Bc) edge[-] (Bl);
	\path (Bl) edge[-] (Br);
	
	\node[draw=none,minimum size=6mm] (l) at (-5,2.75) {$l$};
	\node[draw=none,minimum size=6mm] (r) at (-2,2.75) {$r$};
	\node[draw=none,minimum size=6mm] (l^) at (1.75,2.75) {$m$};
	\node[draw=none,minimum size=6mm] (r^) at (4,2.75) {$s$};
	
	\path[draw=black, thick] (l) -- ++(0,.5) -| (l^);
	\path[draw=black, thick] (r) -- ++(0,.7) -| (r^);
	
	\path[bend right=100] (l^) edge[-] (r^);
\end{myscope}
\end{tikzpicture}
\caption{Illustration of the proof of \autoref{lem:bi_no_pos_seq}}
\label{fig:bi_no_pos_seq}
\end{figure}

\begin{lem}[Reordering $\parr$-vertices]
\label{lem:iso_complet_mult}
Let $\theta$ be a bipartite \unique\ proof-net of conclusions $A=A_l\parr A_r$ and $B=B_l\odot B_r$ with $\odot\in\{\tens;\oplus\}$ and $A$ a distributed formula.
Then $\odot=\tens$ and there exist two bipartite \unique\ proof-nets of respective conclusions $A_l',B_l$ and $A_r',B_r$ where $A_l'\parr A_r'$ is equal to $A_l\parr A_r$ up to associativity and commutativity of $\parr$.
\end{lem}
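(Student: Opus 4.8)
The plan is to combine sequentialization (\autoref{th:seq}) with a geometric use of the correctness criterion, in the spirit of Section~\ref{sec:dist_bi_u}. Note first that a \unique\ proof-net is in particular full, since by definition every leaf carries (exactly) one link, so bipartite fullness is available throughout. I would begin by ruling out $\odot=\oplus$. Sequentialize $\theta$ into a cut-free sequent proof $\pi$ of $\fCenter A,B$ via \autoref{th:seq}. In $\pi$ the formula $B=B_l\oplus B_r$ has its root introduced by a single rule, which can only be an $\oplus_1$- or an $\oplus_2$-rule, say $\oplus_1$; since slices are obtained by cutting $\with$-branches only (\autoref{def:slice_pt}), every slice of $\pi$ keeps this $\oplus_1$-rule and hence selects $B_l$. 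Thus no leaf of $B_r$ lies in any additive resolution, so no leaf of $B_r$ carries a link in $\G_\theta$, contradicting fullness. Hence $\odot=\tens$.

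For the splitting, write $A$ in its maximal $\parr$-decomposition $A_1\parr\cdots\parr A_n$ (up to associativity and commutativity of $\parr$, with $n\ge 2$ and each $A_i$ not $\parr$-rooted). As $A$ is distributed, no $A_i$ is $\with$-rooted, so each $A_i$ is an atom or is $\tens$- or $\oplus$-rooted. Successively removing the terminal $\parr$-vertices at the top of $A$ (each a terminal $\pw$-vertex, hence sequentializing by \autoref{def:sequentializing}) yields a proof-net $\theta_n$ of $\fCenter A_1,\dots,A_n,B_l\tens B_r$; this decomposition is needed because, while $A$ is intact, \autoref{lem:bi_no_pos_seq} forbids the root of $B$ from being sequentializing. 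The heart of the argument is the claim that in $\theta_n$ the leaves of each component $A_i$ are all linked to leaves of $B_l$, or all to leaves of $B_r$. Granting this, partition $\{1,\dots,n\}$ into $I_l$ and $I_r$ according to the side each $A_i$ reaches; both are non-empty, since $B_l$ and $B_r$ are non-empty and carry links by fullness. The root $\tens$ of $B$ is then splitting: deleting it from $\G_{\theta_n}$ separates the $A_i$ ($i\in I_l$) together with $B_l$ from the $A_i$ ($i\in I_r$) together with $B_r$, producing two proof-nets, on $\fCenter \{A_i\}_{i\in I_l},B_l$ and $\fCenter \{A_i\}_{i\in I_r},B_r$. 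Reintroducing the top $\parr$-vertices on each side gives proof-nets of conclusions $A_l',B_l$ and $A_r',B_r$, where $A_l'$ (\resp\ $A_r'$) is the $\parr$ of the components $A_i$ with $i\in I_l$ (\resp\ $i\in I_r$), so that $A_l'\parr A_r'=A_l\parr A_r$ up to associativity and commutativity of $\parr$. Both pieces are bipartite (their links are among those of $\theta$, each between the two conclusions) and \unique\ (each leaf retains the unique link it had in $\theta$).

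The main obstacle is the claim that each component $A_i$ reaches a single side of $B$, which is exactly where distributivity is used and which I would prove by contradiction through the correctness criterion, as in \autoref{lem:no_jump_above_tp}. Suppose some $A_i$ had leaves $p,q$ whose links go respectively to $B_l$ and to $B_r$. Choosing linkings realizing these links and, when $A_i$ is $\oplus$-rooted, exploiting the toggling condition (\autoref{def:pn} (P3)) together with \autoref{lem:jump_par} as in Section~\ref{subsubsec:proof_iso_bi_u}, one builds a switching cycle through the root $\tens$ of $B$ and the subtree of $A_i$: the cycle climbs from the $\tens$ into $B_l$, crosses the link to $p$, descends to the first common descendant of $p$ and $q$ inside $A_i$, climbs back to $q$, and crosses its link into $B_r$ before returning to the $\tens$. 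Since $A_i$ is not $\with$-rooted (distributedness) and the root of $B$ is a $\tens$, this cycle survives a suitable $\parr$-switching and contradicts (P2), or contradicts (P3) for the relevant two-linking set. Checking that such a cycle can always be extracted — controlling the additive choices inside $A_i$ and verifying that it is a genuine switching cycle — is the delicate, geometric part, entirely analogous to the reasoning already carried out for \autoref{th:iso_bipartite_unique}.
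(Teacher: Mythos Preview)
Your argument for ruling out $\odot=\oplus$ via sequent calculus is incorrect. You claim that in a sequentialization $\pi$ of $\theta$, the root $\oplus$ of $B$ is introduced by a single rule. This fails in MALL: if $A$ (or $B$) contains a $\with$-connective, the corresponding $\with$-rule duplicates the context, so the root of $B$ may be introduced by an $\oplus_1$-rule in one branch and an $\oplus_2$-rule in another. Distributedness of $A$ does not forbid $\with$ deep inside $A$ (e.g.\ $((X\with Y)\tens Z)\parr W$ is distributed), and $B$ may contain $\with$ as well. Hence different slices may select different sides of $B_l\oplus B_r$, and your conclusion that one side carries no link does not follow.

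Your ``claim'' that each $A_i$ reaches only one side of $B$, to be proved by a direct switching-cycle argument, is also problematic. The two links witnessing a violation may live in different linkings, so you cannot simply invoke (P2); and the analogy with \autoref{lem:no_jump_above_tp} and \autoref{th:iso_bipartite_unique} is loose, since those arguments exploit the very specific structure of an almost reduced composition with duality $a\mapsto a\orth$ between links, which is absent here. More fundamentally, a terminal $\tens$ in a correct proof-net is not automatically splitting: correctness only guarantees that \emph{some} terminal vertex is sequentializing, not that a chosen one is. You would essentially be reproving sequentialization.

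The paper's proof avoids both difficulties with a single observation. After stripping the terminal $\parr$-vertices, the roots of the $A_i$ are $\tp$-vertices or atoms (by distributedness), and by the argument of \autoref{lem:bi_no_pos_seq} none of them is sequentializing. Sequentialization (\autoref{th:seq}) then forces the root of $B$ to be the sequentializing vertex. This single fact gives \emph{both} conclusions at once: since every leaf of $B$ carries a link (fullness), a sequentializing $\oplus$ is impossible, so $\odot=\tens$; and a sequentializing $\tens$ is by definition splitting, which immediately yields the partition of the $A_i$ without any further geometric work.
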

\begin{proof}
We remove all terminal (hence sequentializing) $\parr$-vertices, all in $A$, without modifying the linkings.
The resulting graph is a proof-net of conclusions $A_1,\dots,A_n,B_l\odot B_r$ (see \autoref{fig:iso_complet_mult}).
The roots of the new trees $A_i$ cannot be $\with$-vertices because $A$ is distributed: so they are $\tp$-vertices or atoms.
These $\tp$-vertices are not sequentializing, since by bipartiteness and fullness every leaf of each $A_i$ is connected to the formula $B_l\odot B_r$ (reasoning as in the proof of \autoref{lem:bi_no_pos_seq}).
Thus, the sequentializing vertex of this proof-net is necessarily $B_l\odot B_r$.
It follows $\odot=\tens$, because all leaves of $B$ are connected to leaves in $A_1,\dots,A_n$, so if $\odot=\oplus$ then $B_l\odot B_r$ cannot be sequentializing.
Removing the sequentializing $B_l\tens B_r$ gives two proof-nets, with a partition of the $A_i$ into two classes: those linked to leaves of $B_l$ and the others linked to leaves of $B_r$.
We recover from these proof-nets bipartite \unique\ ones by adding $\parr$-vertices under the $A_i$ in an arbitrary order, yielding formulas $A_l'$ (with those linked to $B_l$) and $A_r'$ (with those linked to $B_r$).
As we only removed and put back $\parr$-vertices, $A_l'\parr A_r'$ is equal to $A_l\parr A_r$ up to associativity and commutativity of $\parr$.
\end{proof}

\begin{figure}
\resizebox{\textwidth}{!}{
\begin{tikzpicture}
\begin{myscope}
	\node[draw=none] (B0) at (-5,.75) {$B_l$};
	\node[draw=none] (B1) at (-2,.75) {$B_r$};
	\node[draw=none,minimum size=0mm] (B) at (-3.5,0) {$\odot$};
	\node[draw=none] () at (-3.5,-.5) {$B$};
	\path (B0) edge (B);
	\path (B1) edge (B);
	
	\node[draw=none] () at (-5,2.25) {$T(B_l)$};
	\coordinate (B0c) at (-5,1);
	\coordinate (B0l) at (-6.25,3);
	\coordinate (B0r) at (-3.75,3);
	\path (B0c) edge[-] (B0l);
	\path (B0c) edge[-] (B0r);
	\path (B0l) edge[-] (B0r);
	\node[draw=none] () at (-2,2.25) {$T(B_r)$};
	\coordinate (B1c) at (-2,1);
	\coordinate (B1l) at (-3.25,3);
	\coordinate (B1r) at (-.75,3);
	\path (B1c) edge[-] (B1l);
	\path (B1c) edge[-] (B1r);
	\path (B1l) edge[-] (B1r);
	
	\node[draw=none] () at (1.5,2.75) {$T(A_1)$};
	\coordinate (A0c) at (1.5,2);
	\coordinate (A0l) at (.75,3);
	\coordinate (A0r) at (2.25,3);
	\path (A0c) edge[-] (A0l);
	\path (A0c) edge[-] (A0r);
	\path (A0l) edge[-] (A0r);
	\node[draw=none] () at (3,2.75) {$T(A_2)$};
	\coordinate (A1c) at (3,2);
	\coordinate (A1l) at (2.25,3);
	\coordinate (A1r) at (3.75,3);
	\path (A1c) edge[-] (A1l);
	\path (A1c) edge[-] (A1r);
	\path (A1l) edge[-] (A1r);
	\node[draw=none] () at (4.5,2.75) {$\dots$};
	\coordinate (A2c) at (4.5,2);
	\node[draw=none] () at (6,2.75) {$T(A_n)$};
	\coordinate (A3c) at (6,2);
	\coordinate (A3l) at (5.25,3);
	\coordinate (A3r) at (6.75,3);
	\path (A3c) edge[-] (A3l);
	\path (A3c) edge[-] (A3r);
	\path (A3l) edge[-] (A3r);
	\node[draw=none,minimum size=0mm] (p0) at (2.25,1.5) {$\parr$};
	\node[draw=none,minimum size=0mm] (p1) at (5.25,1.5) {$\parr$};
	\path (A0c) edge[dotted] (p0);
	\path (A1c) edge[dotted] (p0);
	\path (A2c) edge[dotted] (p1);
	\path (A3c) edge[dotted] (p1);
	\node[draw=none] () at (3.75,1.25) {$\vdots$};
	\node[draw=none,minimum size=0mm] (A0) at (5,.75) {$\parr$};
	\node[draw=none,minimum size=0mm] (A1) at (2,.75) {$\parr$};
	\node[draw=none,minimum size=0mm] (A) at (3.5,0) {$\parr$};
	\node[draw=none] () at (3.5,-.5) {$A$};
	\path (A0) edge[dotted] (A);
	\path (A1) edge[dotted] (A);
	
	\coordinate (l0) at (-5,3);
	\coordinate (l0^) at (1.5,3);
	\coordinate (l1) at (-2,3);
	\coordinate (l1^) at (4.25,3);
	\coordinate (l2) at (-1.5,3);
	\coordinate (l2^) at (2.75,3);
	\coordinate (l3) at (-5.5,3);
	\coordinate (l3^) at (6.5,3);
	\coordinate (l4) at (-4,3);
	\coordinate (l4^) at (4.5,3);
	\path[draw=black, thick] (l0) -- ++(0,.2) -| (l0^);
	\path[draw=black, thick] (l1) -- ++(0,.4) -| (l1^);
	\path[draw=black, thick] (l2) -- ++(0,.6) -| (l2^);
	\path[draw=black, thick] (l3) -- ++(0,1) -| (l3^);
	\path[draw=black, thick] (l4) -- ++(0,.8) -| (l4^);
\end{myscope}
\end{tikzpicture}
}
\caption{Proof-net of \autoref{lem:iso_complet_mult} with all terminal $\parr$-vertices removed}
\label{fig:iso_complet_mult}
\end{figure}

\begin{lem}[Reordering $\with$-vertices]
\label{lem:iso_complet_add}
Let $\theta$ be a bipartite \unique\ proof-net of conclusions $A=A_l\with A_r$ and $B=B_l\oplus B_r$ with $A$ a distributed formula.
Then there exist two bipartite \unique\ proof-nets of respective conclusions $A_l',B_l$ and $A_r',B_r$ where $A_l'\with A_r'$ is equal to $A_l\with A_r$ up to associativity and commutativity of $\with$.
\end{lem}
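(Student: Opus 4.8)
The plan is to mirror the proof of \autoref{lem:iso_complet_mult}, trading the multiplicative pair $(\parr,\tens)$ for the additive pair $(\with,\oplus)$. First I would peel off the maximal tree of $\with$-vertices at the root of $A$: each terminal $\with$ is a $\pw$-vertex, hence sequentializing by \autoref{def:sequentializing}, and removing them one after another exhibits $A$, up to associativity and commutativity of $\with$, as a $\with$-combination of subformulas $A_1,\dots,A_n$ whose roots are \emph{not} $\with$-vertices (by maximality of the tree). Contrary to the $\parr$-case, each removal of a $\with$ splits the set of linkings, so this peeling yields a partition $\theta=\theta_{A_1}\sqcup\dots\sqcup\theta_{A_n}$, where $\theta_{A_i}$ gathers the linkings whose additive resolution selects $A_i$; each $\theta_{A_i}$ is a bipartite proof-net on $A_i,B$, full on the leaves of $A_i$ but only partially full on $B$.

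The crux is the following \emph{alignment claim}: for every $i$, all axiom links on leaves of $A_i$ reach the same argument of the root $\oplus$ of $B$ (either every leaf of $A_i$ is linked into $B_l$, or every one into $B_r$). I would prove it by contradiction. If some $A_i$ had a leaf $l$ linked into $B_l$ and a leaf $l'$ linked into $B_r$, then by bipartite \unique ness these are the only links on $l,l'$, and $l,l'$ cannot co-occur in a single linking, since a linking selects only one argument of the root $\oplus$ of $B$; hence their first common descendant $V$ in $T(A)$ is an additive vertex, while the first common descendant of their partners in $T(B)$ is the positive root $\oplus$ of $B$. Choosing two linkings $\lambda,\lambda'$ realizing $(l,\cdot)$ and $(l',\cdot)$, I would build a switching cycle in $\G_{\{\lambda;\lambda'\}}$ running through $V$ and through the root $\oplus$ of $B$. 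Distributivity of $A$ is what makes this cycle genuine: when $V$ is an $\oplus$ the cycle is immediately switching, and when $V$ is a $\with$, the first non-$\with$ vertex on the path from $V$ down to the root of $A_i$ cannot be a $\parr$ (as $A$ avoids the patterns $(C\with B)\parr A$ and $C\parr(B\with A)$), hence is a $\tp$-vertex, providing exactly the positive vertex needed to close the switching cycle through $V$ without using two of its switch edges. This cycle then contradicts the toggling condition \autoref{def:pn}~(P3) (or (P2) when a single linking already suffices), precisely as in the proofs of \autoref{lem:no_jump_above_tp} and \autoref{th:iso_bipartite_unique}. I expect this to be the main obstacle, because the splitting of linkings by $\with$ together with fullness forces a global-to-local geometric argument rather than the purely sequential reasoning available for $\parr$.

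Granting the alignment claim, I would finish as follows. Partition $\{A_1,\dots,A_n\}$ into the classes $S_l$ and $S_r$ of subformulas linked into $B_l$ and into $B_r$ respectively; both classes are nonempty because $\theta$ is full, so every leaf of $B_l$ and of $B_r$ carries a link and therefore forces some $A_i$ into the corresponding class. Regrouping the top $\with$-tree up to associativity and commutativity of $\with$, I set $A_l'$ to be the $\with$ of the members of $S_l$ and $A_r'$ that of $S_r$, so that $A_l'\with A_r'$ equals $A_l\with A_r$ up to these equations and $\theta$ becomes, with the same linkings, a proof-net of conclusions $A_l'\with A_r',B$. Removing its (sequentializing) root $\with$ splits it into proof-nets $\theta^l$ on $A_l',B$ and $\theta^r$ on $A_r',B$; by the alignment claim every linking of $\theta^l$ selects $B_l$, so the argument $B_r$ of the root $\oplus$ of $B$ carries no link in $\theta^l$, this $\oplus$ is sequentializing, and removing it discards $B_r$ and leaves a proof-net on $A_l',B_l$. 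Bipartiteness is inherited and \unique ness is preserved, as no leaf loses or gains a link during this surgery; the symmetric construction yields a bipartite \unique\ proof-net on $A_r',B_r$, which concludes.
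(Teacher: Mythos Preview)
Your overall plan matches the paper's: peel off the terminal $\with$-vertices to get proof-nets $\theta_i$ on $A_i, B_l\oplus B_r$, prove an alignment claim (each $A_i$ links entirely into one side of $B$), then regroup. The finishing part is fine. Where you diverge from the paper, and where the gap lies, is the proof of the alignment claim.

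The paper proves it by \emph{sequentialization}, exactly mirroring \autoref{lem:iso_complet_mult}: inside each proof-net $\theta_i$, if the root $\oplus$ of $B$ were not sequentializing, remove all terminal $\parr$-vertices of $\theta_i$; by distributivity the resulting roots on the $A_i$-side are $\tp$-vertices or atoms, none of which is sequentializing by the argument of \autoref{lem:bi_no_pos_seq}, and removing $\parr$-vertices does not make the $\oplus$ sequentializing either --- so no terminal vertex is sequentializing, contradicting \autoref{th:seq}. Your remark that ``the splitting of linkings by $\with$ \dots\ forces a global-to-local geometric argument rather than the purely sequential reasoning available for $\parr$'' is precisely where you go astray: the sequential reasoning \emph{does} adapt, because removing the $\with$-vertices already hands you separate proof-nets $\theta_i$, and one then runs the argument of \autoref{lem:iso_complet_mult} verbatim inside each one.

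Your alternative geometric argument has three concrete gaps. First, ``$l,l'$ never co-occur in a linking'' does not imply that their first common descendant $V$ is additive: a $\tens$-vertex $V$ whose two subtrees each contain an $\oplus$ above $l$ and $l'$ respectively can separate them just as well. Second, even when $V$ happens to be positive and you obtain a jump-free switching cycle in $\G_{\{\lambda,\lambda'\}}$, this does not by itself contradict (P3): condition (P3) only guarantees \emph{some} toggled $\with$ outside every switching cycle, and $\{\lambda,\lambda'\}$ will in general toggle many $\with$-vertices not lying on your cycle. The machinery that handles this issue in \autoref{th:iso_bipartite_unique} is \autoref{lem:4.32+} together with \autoref{lem:no_jump_above_tp}, but the latter relies on the duality structure of the almost-reduced composition (\autoref{lem:iso_bi_u_link}), which is not available for $\theta_i$. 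Third, when $V$ is a $\with$, a positive vertex $P$ below $V$ does not let you ``close the switching cycle through $V$ without using two of its switch edges'': any path in $T(A_i)$ joining $l$ and $l'$ must use both in-edges of $V$, and $P$ sits on the wrong side (below $V$) to provide an alternative route.
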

\begin{proof}
We remove all terminal $\with$-vertices in the proof-net, which are all in $A$.
The resulting graphs are $n$ proof-nets $\theta_i$ (for terminal negative vertices are sequentializing), with $\theta_i$ of conclusions $A_i,B_l\oplus B_r$.
An illustration is similar to \autoref{fig:iso_complet_mult}, except we have $n$ proof-nets each having a sole $T(A_i)$ and with in common $T(B)$.
We claim that the root of $B_l\oplus B_r$ is sequentializing in every $\theta_i$.
Assuming it for now, this implies that, in each $\theta_i$, $A_i$ is linked only to either $B_l$ or $B_r$ and we can remove the root of $B_l\oplus B_r$.
Once this $\oplus$-vertex removed, we put back the removed $\with$-vertices of $A$ in another order: we put together all $\theta_i$ linked to $B_l$ on one side, and all those to $B_r$ on another side, yielding two proof-nets of conclusions $B_l,A_l'$ and $B_r,A_r'$.
These proof-nets are bipartite \unique\ ones for adding a $\with$ is taking the disjoint union of linkings.
We indeed have $A_l'\with A_r'$ equal to $A$ up to associativity and commutativity of $\with$, because we only reordered $\with$-vertices.

Let us now prove our claim: the root of $B_l\oplus B_r$ is sequentializing in every $\theta_i$.
Towards a contradiction, assume it is not for some $\theta_i$.
Remove all terminal $\parr$-vertices of $\theta_i$, which are in $A$ if any.
As in the proof of \autoref{lem:iso_complet_mult}, the roots of the new trees cannot be negative vertices because the formula $A$ is distributed: hence, they are $\tp$-vertices or atoms.
These $\tp$-vertices cannot be sequentializing, since by bipartiteness and fullness every leaf of $A^i$ is connected to the formula $B_l\oplus B_r$ (reasoning as in the proof of \autoref{lem:bi_no_pos_seq}).
Since removing $\parr$-vertices preserves that $B_l\oplus B_r$ is not sequentializing, we have a proof-net with no sequentializing vertices, a contradiction.
\end{proof}

\begin{thm}[Isomorphisms completeness for unit-free MALL]
\label{th:iso_complet}
Given $A$ and $B$ two distributed unit-free MALL formulas, if $\isoproofs{A}{B}{\theta}{\psi}$ for some proof-nets $\theta$ and $\psi$, then $A\eqAC B$.
\end{thm}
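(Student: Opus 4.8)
The plan is to combine the two reordering lemmas with the characterization of non-ambiguous isomorphisms. First I would invoke \autoref{prop:red_to_non-amb} to reduce to the case where $A$ and $B$ are moreover \emph{non-ambiguous}: it suffices to prove that $\isoproofs{A}{B}{\theta}{\psi}\implies A\eqAC B$ for distributed non-ambiguous $A,B$. Fixing such $A,B$, by \autoref{th:iso_bipartite_unique} the proof-nets $\theta$ (on $A\orth,B$) and $\psi$ (on $B\orth,A$) are bipartite \unique. I then argue by induction on $s(A)$. Since $\eqAC$ is closed under $(\cdot)\orth$ (each associativity or commutativity equation has its De~Morgan dual already in $\AC$) and an isomorphism dualizes (reading $\psi$ and $\theta$ as proof-nets on $A,B\orth$ and on $B,A\orth$ gives $\isoproofs{A\orth}{B\orth}{\psi}{\theta}$), I may assume without loss of generality that $A$ is negative or an atom.

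For the base case $A=X$ (or $X\orth$): as $\theta$ is bipartite full \unique, the single leaf of $A\orth$ carries a unique link, which must reach $B$; by bipartiteness and fullness every leaf of $B$ is then linked to that single leaf, so $B$ has exactly one leaf and no connective, forcing $B=X=A$ and hence $A\eqAC B$.

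The inductive step splits on the root of $A$. Suppose $A=A_l\parr A_r$. I first determine the root of $B$: in $\theta$ the conclusion $A\orth$ is positive (a $\tens$), hence not sequentializing by \autoref{lem:bi_no_pos_seq}, so the sequentializing vertex is the root of $B$, which must therefore be negative (and $B$ is not an atom, else the base case would force $A$ to be an atom); moreover $B$ cannot be $\with$-rooted, for then $B\orth$ would be $\oplus$-rooted and \autoref{lem:iso_complet_mult} applied to $\psi$ would force this positive root to be a $\tens$. Thus $B$ is $\parr$-rooted and $B\orth$ is $\tens$-rooted. Applying \autoref{lem:iso_complet_mult} to $\psi$ then produces $B\orth=\hat B_l\tens\hat B_r$, two bipartite \unique\ proof-nets $\chi_l$ on $A_l',\hat B_l$ and $\chi_r$ on $A_r',\hat B_r$, and the $\AC$-equality $A\eqAC A_l'\parr A_r'$ (reordering $\parr$). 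Reading $\chi_l$ on $A_l',\hat B_l$ as the forward direction of a candidate isomorphism $A_l'{}\orth\iso\hat B_l$, I still need the \emph{opposite} direction, i.e.\ a proof-net on $\hat B_l\orth,{A_l'}\orth$, before I can apply \autoref{th:bi_u_non-ambi}. I obtain it by restricting $\theta$ to the sub-forest on ${A_l'}\orth,\hat B_l\orth$: this sub-forest is link-closed, because by \autoref{lem:pn_iso_bij} the links of $\theta$ are dual to those of $\psi$, so every link leaving ${A_l'}\orth$ lands inside $\hat B_l\orth$ (the atoms of $A_l'$ being exactly those sent into $\hat B_l$). With both directions available, \autoref{th:bi_u_non-ambi} yields $A_l'{}\orth\iso\hat B_l$ (and symmetrically on the right); both formulas are distributed, non-ambiguous, and strictly smaller, so the induction hypothesis gives ${A_l'}\orth\eqAC\hat B_l$, i.e.\ $A_l'\eqAC\hat B_l\orth$, and likewise $A_r'\eqAC\hat B_r\orth$. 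Reassembling, $A\eqAC A_l'\parr A_r'\eqAC\hat B_l\orth\parr\hat B_r\orth\eqAC B$, the last step using $B=\hat B_r\orth\parr\hat B_l\orth$ and commutativity. The case $A=A_l\with A_r$ is entirely parallel, using \autoref{lem:iso_complet_add} in place of \autoref{lem:iso_complet_mult} and the same sequentialization bookkeeping to force $B$ to be $\with$-rooted.

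The main obstacle is precisely the production of the companion, opposite-direction sub-proof-nets required to feed \autoref{th:bi_u_non-ambi}: applied separately to $\theta$ and to $\psi$, the reordering lemmas decompose the two formulas along \emph{incompatible} partitions (one following $A$'s root, the other $B$'s root), so the halves cannot be matched naively. The device that resolves this is the rigidity of the axiom-link bijection of \autoref{lem:pn_iso_bij}, which ensures that the partition of leaves selected on one side is mirrored on the other, making the relevant sub-forests link-closed and hence restrictable; checking that such a link-closed restriction of a proof-net again satisfies the correctness criterion (P0)--(P3) is the technical heart of the argument.
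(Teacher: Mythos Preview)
Your overall architecture matches the paper's: reduce to non-ambiguous via \autoref{prop:red_to_non-amb}, induct on size, use \autoref{th:iso_bipartite_unique}, determine the root connective of $B$ via \autoref{lem:bi_no_pos_seq} and the reordering lemmas, and close with \autoref{th:bi_u_non-ambi}. The divergence is in how you produce the \emph{opposite-direction} proof-nets needed to invoke \autoref{th:bi_u_non-ambi}, and this is where your argument has a real gap.

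You propose to ``restrict $\theta$ to the sub-forest on ${A_l'}\orth,\hat B_l\orth$''. But ${A_l'}\orth$ is \emph{not} a sub-tree of the conclusion $A\orth$ of $\theta$: the reordering lemma applied to $\psi$ produces $A_l'\parr A_r'\eqAC A$, so $A_l'$ is an $\AC$-recombination of the maximal non-$\parr$-rooted sub-trees $A_1,\dots,A_n$ of $A$, and the duals $A_i\orth$ sit \emph{scattered} inside $A\orth$, not as a contiguous sub-tree. Hence there is no sub-forest of $\theta$ with conclusion ${A_l'}\orth$ to restrict to. Your invocation of \autoref{lem:pn_iso_bij} correctly tells you which \emph{leaves} of $A\orth$ are linked to $\hat B_l\orth$ in $\theta$, but that is a statement about leaves, not about a syntactic sub-tree; turning this into a set of linkings on the cut sequent ${A_l'}\orth,\hat B_l\orth$ requires building a new syntactic forest and then checking (P0)--(P3) from scratch. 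You flag this as ``the technical heart'', but it is not a detail: (P3) in particular does not obviously survive such a surgery, and the paper gives no lemma to that effect.

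The paper sidesteps this completely. Having obtained $A'\eqAC A$ from the reordering lemma, it uses \emph{soundness} (\autoref{th:iso_sound}) to get $A\iso A'$, composes with $A\iso B$ to obtain $A'\iso B$, and then invokes \autoref{th:red_to_pn} to produce a \emph{fresh} proof-net on $B\orth,A'$ --- with the correct syntactic tree from the outset. Non-ambiguity then forces every axiom link of this proof-net to lie between $B_0\orth$ and $A_0'$ or between $B_1\orth$ and $A_1'$ (because the atoms of $A_0'$ coincide with those of $B_0$ via $\theta_0$, and are disjoint from those of $A_1',B_1$). Consequently, after removing the terminal negative root of $B\orth$, the positive root of $A'$ is sequentializing, and sequentialization (\autoref{th:seq}) hands you the two desired proof-nets with correctness for free. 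The device you are missing is precisely this detour through soundness and sequentialization, which replaces an ill-defined restriction by an operation guaranteed to output proof-nets.
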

\begin{proof}
We assume $A$ and $B$ to be non-ambiguous formulas by \autoref{prop:red_to_non-amb}.
We reason by induction on the size $s(A)$ of $A$, which is its number of connectives and atoms.
Remark that the size is unaffected by commutation and associativity of connectives, and that $s(A)=s(B)$ for $\theta$ is bipartite \unique\ (\autoref{th:iso_bipartite_unique}), thence $A$ and $B$ have the same number of atoms, so of connectives as they are all binary ones.

If $A$ and $B$ are atoms (\ie\ of size $1$), then the axiom link in $\theta$ between $A\orth$ and $B$ yields $A=B$ and the property holds.
Otherwise, $A\orth$ and $B$ are \emph{both} non atomic.
By \autoref{th:iso_bipartite_unique}, $\theta$ and $\psi$ are bipartite \unique; they have respective conclusions $A\orth, B$ and $B\orth,A$.
By \autoref{lem:bi_no_pos_seq}, one of the formulas $A\orth,B$ is negative, otherwise neither the root of $A\orth$ nor the one of $B$ is sequentializing in $\theta$, contradicting sequentialization (\autoref{th:seq}).
A symmetric reasoning on $\psi$ implies that the other formula is positive, so that either both $A$ and $B$ are positive or both are negative.
Assume \wolog\ that they are positive.
We have two cases: either $A$ is a $\tens$-formula, or both $A$ and $B$ are $\oplus$-formulas (if $A$ is a $\oplus$-formula and $B$ a $\tens$-formula, we switch $A$ and $B$).

Let us consider the first case: $\theta$ is a proof-net of conclusions $A\orth,B$ with $A\orth$ a $\parr$-formula.
By \autoref{lem:iso_complet_mult}, $B$ is a $\tens$-formula and there exist two bipartite \unique\ proof-nets $\theta_0$ and $\theta_1$ of respective conclusions ${A_0'}\orth,B_0$ and ${A_1'}\orth,B_1$, with ${A'}\orth={A_1'}\orth\parr{A_0'}\orth$ equal to $A\orth$ up to associativity and commutativity of $\parr$.
Similarly in the second case, \autoref{lem:iso_complet_add} yields bipartite \unique\ proof-nets $\theta_0$ and $\theta_1$ of respective conclusions ${A_0'}\orth,B_0$ and ${A_1'}\orth,B_1$, with ${A'}\orth={A_1'}\orth\with{A_0'}\orth$ equal to $A\orth$ up to associativity and commutativity of $\with$.
In both cases, remark that $A'$ is distributed and non-ambiguous for it is equal, up to associativity and commutativity, to the distributed and non-ambiguous $A$.
In particular, $A_0'$, $A_1'$, $B_0$ and $B_1$ are also distributed and non-ambiguous, ${A'}\orth\eqAC A\orth$, and $s(A_0')$ and $s(A_1')$ are both less than $s(A)$.
To conclude, we only need proof-nets $\psi_0$ and $\psi_1$ of respective conclusions $B_0\orth,A_0'$ and $B_1\orth,A_1'$.
We will then apply \autoref{th:bi_u_non-ambi} to obtain $\isoproofs{A_0'}{B_0}{\theta_0}{\psi_0}$ and $\isoproofs{A_1'}{B_1}{\theta_1}{\psi_1}$.
Thence, by induction hypothesis, $A_0'\eqAC B_0$ and $A_1'\eqAC B_1$, thus $A'\eqAC B$.
As $A\eqAC A'$, we will finally conclude $A\eqAC B$.

Thus, we look for two proof-nets of respective conclusions $B_0\orth,A_0'$ and $B_1\orth,A_1'$.
As $\isoproofs{A}{B}{\theta}{\psi}$, and $A\iso A'$ by soundness of $\Eu \supseteq \AC$ (\autoref{th:iso_sound}), it follows using \autoref{th:red_to_pn} that $\isoproofs{A'}{B}{\Theta}{\Psi}$ for some proof-nets $\Theta$ and $\Psi$.\footnote{One can easily check that isomorphisms in proof-nets form equivalence classes on formulas.}
In particular, $\Theta$ is a proof-net of conclusions $B\orth,A'$, \ie\ of conclusions $B_1\orth\odot\orth B_0\orth, A_0'\odot A_1'$ with $\odot\in\{\tens;\oplus\}$.
Remark that, by non-ambiguousness of $A'$ (\resp\ $B$), no atom $X$ occurs in both $A_0'$ and $A_1'$ (\resp\ $B_0$ and $B_1$), counting both positive and negative occurrences $X$ and $X\orth$.
We claim that no atom occurs in both $B_0$ and $A_1'$.
Indeed, as $\theta_0$ is a bipartite \unique\ proof-net of conclusions ${A_0'}\orth,B_0$, it follows the atoms of $B_0$ are exactly those of $A_0'$, which we just saw are disjoint from the atoms of $A_1'$.
Similarly with $\theta_1$, no atom occurs in both $B_1$ and $A_0'$.
These four conditions imply that axiom links in $\Theta$ must be between leaves of $B_0\orth$ and $A_0'$, and between leaves of $B_1\orth$ and $A_1'$.
Therefore, once we sequentialize the negative root $\odot\orth$ of $B\orth$ in $\Theta$, the positive root $\odot$ of $A'$ is sequentializing.
After sequentializing both, we obtain two proof-nets of respective conclusions $B_0\orth,A_0'$ and $B_1\orth,A_1'$.
\end{proof}

The above theorem, associated with preceding results, yields our main contribution.

\begin{thm}[Isomorphisms completeness]
\label{th:iso_complete}
\hfill
\begin{description}
\item[$\AC$ is complete for distributed MALL]
given $A$ and $B$ two distributed MALL formulas, if $A\iso B$, then $A\eqAC B$.
\item[$\Eu$ is complete for MALL]
given $A$ and $B$ two MALL formulas, if $A\iso B$, then $A\eqEu B$.
\item[Completeness for fragments of MALL]
for any subset $S$ of $\{\tens ; \parr ; 1 ; \bot ; \with ; \oplus ; \top ; 0\}$, if $A$ and $B$ are formulas using connectives from $S$ only and $A \iso B$, then $A = B$ in the theory $\Eu_S$ obtained by extracting from $\Eu$ the equations that use connectives in $S$ only.
\end{description}
\end{thm}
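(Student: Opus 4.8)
The plan is to assemble the three claims from the reduction theorems established earlier; the first two are immediate compositions of those results, whereas the third requires a careful audit of which equations the whole argument actually uses.

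For the first item, \textbf{$\AC$ is complete for distributed MALL}, I would first settle the unit-free distributed case by chaining \autoref{th:red_to_pn} with \autoref{th:iso_complet}. Given distributed unit-free formulas with $A \iso B$, \autoref{th:red_to_pn} supplies proof-nets $\theta,\psi$ with $\isoproofs{A}{B}{\theta}{\psi}$, and \autoref{th:iso_complet} then yields $A \eqAC B$; this is exactly completeness of $\AC$ for unit-free distributed MALL. That statement is precisely the hypothesis of \autoref{th:iso_complet_mall}, which lifts it to completeness of $\AC$ for distributed MALL with units, giving the first item. The second item, \textbf{$\Eu$ is complete for MALL}, then follows in one step: the completeness of $\AC$ for distributed formulas just obtained is exactly the antecedent of \autoref{prop:red_to_dist}, whose conclusion is that $\Eu$ is complete for isomorphisms between arbitrary MALL formulas, i.e. $A \iso B \implies A \eqEu B$.

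For the third item, \textbf{completeness for fragments}, I would not re-run the machinery but instead revisit the entire chain and verify that every equation of $\Eu$ it invokes uses only connectives occurring in $A$ or $B$ --- hence lies in $\Eu_S$ whenever $A,B$ are built from $S$. The points to check are: (i) the distribution, unitality and cancellation rewrites of $\distsystem$ only fire on a redex that already exhibits both of the connectives involved (e.g. $\tens$ and $\oplus$, or $\tens$ and $1$), so these connectives belong to $S$ and the corresponding equation is in $\Eu_S$; (ii) the unit-elimination of \autoref{th:iso_complet_mall} replaces units by fresh atoms, derives an $\AC$-equality that only reorders the surviving non-unit connectives, and substitutes the units back, so it never introduces a connective absent from $A$ or $B$; and (iii) the structural induction inside \autoref{th:iso_complet} produces $\AC$-equalities that merely associate and commute connectives already present at the roots of the decomposed subformulas.

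The main obstacle is a subtlety of duality hidden in step (iii): the reordering is physically performed on $A\orth$ (for instance on its $\parr$-vertices via the reorder-$\parr$ lemma), while the equation we want concerns $A$ (its $\tens$). Because $\Eu$ is closed under the De Morgan involution and its equations come in dual pairs, an associativity/commutativity step on $\parr$ in $A\orth$ corresponds to the dual associativity/commutativity step on $\tens$ in $A$, which sits in $\Eu_S$ exactly when $\tens \in S$. Carefully matching each such dual step to a connective of $A$ or $B$, and confirming that no reduction or reordering ever steps outside $\Eu_S$, is the delicate part; once done, the derivation of $A \eqEu B$ from the first two items can be reread verbatim as a derivation of $A \eqEu[S] B$, establishing the third item.
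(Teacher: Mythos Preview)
Your arguments for the first two items match the paper's proof exactly.

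For the third item, your approach is correct in spirit but considerably more laborious than the paper's. You propose to re-examine the internals of \autoref{th:iso_complet_mall} and \autoref{th:iso_complet} to audit every equation used, which leads you into the duality concern about $\parr$ in $A\orth$ versus $\tens$ in $A$. The paper avoids all of this by using the first item as a black box: given $A,B$ in the fragment $S$, distribute to $A_d,B_d$ (the rewriting system $\distsystem$ never creates a connective, so $A\eqEu[S]A_d$ and $B\eqEu[S]B_d$), apply the first item to get $A_d\eqAC B_d$, and then observe that since $\AC$ never creates nor erases a connective, every formula along the path $A_d\eqAC B_d$ uses only connectives from $S$, hence each $\AC$ step lies in $\Eu_S$. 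This sidesteps any need to look inside the proof-net argument or worry about duality: the conclusion $A_d\eqAC B_d$ is an equality between formulas, and the connective-preservation property of $\AC$ is purely syntactic on those formulas. Your route would work, but the paper's is a two-line observation rather than a three-layer audit.
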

\begin{proof}
For the first point, using \autoref{th:iso_complet_mall} we only have to prove that $\AC$ is complete for isomorphisms of unit-free distributed formulas.
Take $A$ and $B$ unit-free distributed formulas such that $A\iso B$.
By \autoref{th:red_to_pn} there exist proof-nets $\theta$ and $\psi$ such that $\isoproofs{A}{B}{\theta}{\psi}$.
We conclude by \autoref{th:iso_complet} that $A\eqAC B$.

The second point follows from the first one and \autoref{prop:red_to_dist}.

For the last point, one can reason similarly as for the second one by looking closely to the proof of \autoref{prop:red_to_dist} (on \autopageref{prop:red_to_dist}).
Assume $A$ and $B$ use connectives from $S$ only, and $A_d$ (\resp\ $B_d$) is a normal form of $A$ (\resp\ $B$) for the rewriting system $\distsystem$ presented in \autoref{def:distsystem}.
Then $A\eqEu[S] A_d$ and $B\eqEu[S] B_d$ since $\distsystem$ is included in $\Eu$ and never creates a connective by reduction.
Using the first point we know $A_d\eqAC B_d$ (since $A_d \iso A \iso B\iso B_d$), so that $A_d \eqEu[S] B_d$ because, in the path $A_d\eqAC B_d$, only connectives from $S$ can be used ($\AC\subseteq\Eu$ never creates nor erases a connective).
Finally we conclude $A\eqEu[S] B$.
\end{proof}

The last point gives us for instance an equational theory for isomorphisms of unit-free MALL.
Denote by $\E$ the part of $\Eu$ not involving units, \ie\ with associativity, commutativity and distributivity equations only.
Then $\E$ is complete for unit-free MALL:
given $A$ and $B$ two unit-free MALL formulas, $A\iso B \implies A\eqE B$.
Indeed, if $A\iso B$ in unit-free MALL, then $A\iso B$ in MALL with $A$ and $B$ using only $\{\tens ; \parr ; \with ; \oplus\}$ so that $A\eqE B$ by \autoref{th:iso_complete}.
And, as usual, soundness of $\E$ is easy to check.
We also get back in the same fashion the results for MLL and unit-free MLL by Balat and Di~Cosmo~\cite{mllisos}.

On the contrary, one cannot deduce anything for intuitionistic subsystems such as IMALL in this way (see \autoref{sec:conjectures_isos_smcc} for counter examples).


\section{Star-autonomous categories with finite products}
\label{sec:cat}

Since MALL semantically corresponds to $\star$-autonomous categories with finite products (which also have finite coproducts), we can use our results on MALL to characterize isomorphisms valid in all such categories.
For the historical result of how linear logic can be seen as a category, see~\cite{seelycat}.
The main problem here is that formulas and objects are not strictly the same, and we have to study translations from the logic to the category and vice-versa.

We consider objects of $\star$-autonomous categories described by formulas in the language:
\begin{equation*}
  F \coloncoloneqq X \mid F\tens F \mid 1 \mid F \lolipop F \mid \bot \mid F\with F \mid \top
\end{equation*}
There is some redundancy here since one could define $1$ as $\bot\lolipop\bot$ in any $\star$-autonomous category, but we prefer to keep $1$ in the language as it is at the core of monoidal categories.
Our goal is to prove the theory $\thD$ of \autoref{tab:eqisos_cat} to be sound and complete for the isomorphisms of $\star$-autonomous categories with finite products.

\begin{table}
\resizebox{\textwidth}{!}{
$
\thD\left\{
  \begin{array}{c}
  \left.\begin{array}{rcl@{\quad}rcl@{\quad}rcl}
    F\tens(G\tens H) & = & (F\tens G)\tens H
  & F\tens G & = & G\tens F
  & F\tens 1 & = & F \\
    (F\tens G)\lolipop H & = & F\lolipop (G\lolipop H)
  & & & & 1 \lolipop F & = & F \\ \\
    F\with(G\with H) & = & (F\with G)\with H
  & F\with G & = & G\with F
  & F\with \top & = & F \\
    F\lolipop(G\with H) & = & (F\lolipop G)\with(F\lolipop H)
  & & & & F \lolipop \top & = & \top
  \end{array}\right\}\thS \\ \\
  (F\lolipop\bot)\lolipop\bot = F
\end{array}\right.
$
}
\caption{Type isomorphisms in $\star$-autonomous categories with finite products}
\label{tab:eqisos_cat}
\end{table}

We establish this result from the one on MALL, first proving that MALL (with proofs considered up to $\beta\eta$-equality) defines a $\star$-autonomous category with finite products (\autoref{subsec:cat_MALL}).
Then, we conclude using a semantic method based on this syntactic category (\autoref{subsec:cat_isos}).
In a third step we look at the more general case of symmetric monoidal closed categories -- without the requirement of a dualizing object (\autoref{subsec:smcc_isos}).

\subsection{MALL as a star-autonomous category with finite products}
\label{subsec:cat_MALL}

Before reasoning about categories, we recall here how formulas and proofs manipulated in the previous sections can be seen as objects and morphisms in a category~\cite{lambekscott,seelycat}.
We detail how the logic MALL, with proofs taken up to $\beta\eta$-equality, defines a $\star$-autonomous category with finite products, that we will call \MALL.

Objects of \MALL\ are formulas of MALL, while its morphisms from $A$ to $B$ are proofs of $\fCenter A\orth,B$, considered up to $\beta\eta$-equality.\footnote{We recall that $(\cdot)\orth$ is defined by induction, making it an involution.}
One can check that a proof of MALL is an isomorphism if and only if, when seen as a morphism, it is an isomorphism in \MALL.

We define a bifunctor $\tens$ on \MALL, associating to formulas (\ie\ objects) $A$ and $B$ the formula $A\tens B$ and to proofs (\ie\ morphisms) $\pi_0$ and $\pi_1$ respectively of $\fCenter A_0\orth, B_0$ and $\fCenter A_1\orth,B_1$ the following proof of $\fCenter (A_0\tens A_1)\orth, B_0\tens B_1$:
\begin{prooftree}
\Rsub{$\pi_0$}{$A_0\orth,B_0$}
\Rsub{$\pi_1$}{$A_1\orth,B_1$}
\Rtens{$A_1\orth,A_0\orth,B_0\tens B_1$}
\Rparr{$A_1\orth\parr A_0\orth,B_0\tens B_1$}
\end{prooftree}
One can check that $(\MALL,\tens,1,\alpha,\lambda,\rho,\gamma)$ forms a symmetric monoidal category, where $1$ is the $1$-formula, $\alpha$ are isomorphisms of MALL associated to $(A\tens B)\tens C\iso A\tens(B\tens C)$ seen as natural isomorphisms of \MALL, and similarly for $\lambda$ with $1\tens A\iso A$, $\rho$ with $A\tens 1\iso A$, and $\gamma$ with $A\tens B\iso B\tens A$.

Furthermore, define $A\lolipop B \coloneqq A\orth\parr B$ and $ev_{A,B}$ as the following morphism from $A\tens(A\lolipop B)$ to $B$ (\ie\ a proof of $\fCenter (B\orth\tens A)\parr A\orth,B$):
\begin{prooftree}
\Rax{$B$}
\Rax{$A$}
\Rtens{$B\orth\tens A,A\orth,B$}
\Rparr{$(B\orth\tens A)\parr A\orth,B$}
\end{prooftree}
It can be checked that \MALL\ is a symmetric monoidal closed category with as exponential object $(A\lolipop B,ev_{A,B})$ for objects $A$ and $B$.

Moreover, one can also check that $\bot$ is a dualizing object for this category, making \MALL\ a $\star$-autonomous category.
This relies on the following morphism from $(A\lolipop\bot)\lolipop\bot$ to $A$ (which is an inverse of the currying of $ev_{A,\bot}$):
\begin{prooftree}
\Rone{}
\Rax{$A$}
\Rbot{$A\orth,\bot,A$}
\Rparr{$A\orth\parr\bot,A$}
\Rtens{$1\tens(A\orth\parr\bot),A$}
\end{prooftree}

Finally, $\top$ is a terminal object of \MALL, and $A\with B$ is the product of objects $A$ and $B$, with, as projections $\pi_A$ and $\pi_B$, the following morphisms respectively from $A\with B$ to $A$ and from $A\with B$ to $B$:
\begin{center}
\Rax{$A$}
\Rplustwo{$B\orth\oplus A\orth,A$}
\DisplayProof
\hskip 2em and \hskip 2em
\Rax{$B$}
\Rplusone{$B\orth\oplus A\orth,B$}
\DisplayProof
\end{center}
Therefore, \MALL\ is a $\star$-autonomous category with finite products~\cite{seelycat}.

\subsection{Isomorphisms of star-autonomous categories with finite products}
\label{subsec:cat_isos}

We translate formulas in the language of $\star$-autonomous categories with finite products into MALL formulas and conversely by means of the following translations, with $\tradm{\_}$ from the category to the logic and $\tradc{\_}$ in the reverse direction:
\begin{equation*}
\begin{array}{rcl@{\qquad\quad}rcl}
  \tradm{X} &=& X & \tradc{X} &=& X \\
  & & & \tradc{X\orth} &=& X\lolipop\bot \\
  \tradm{F\tens G} &=& \tradm{F}\tens\tradm{G} & \tradc{A\tens B} &=& \tradc{A}\tens\tradc{B} \\
  \tradm{1} &=& 1 & \tradc{1} &=& 1\\
  \tradm{F\lolipop G} &=& \tradm{F}\orth\parr\tradm{G} & \tradc{A\parr B} &=& ((\tradc{A}\lolipop\bot)\tens(\tradc{B}\lolipop\bot))\lolipop\bot \\
  \tradm{\bot} &=& \bot & \tradc{\bot} &=& \bot \\
  \tradm{F\with G} &=& \tradm{F}\with\tradm{G} & \tradc{A\with B} &=& \tradc{A}\with\tradc{B} \\
  \tradm{\top} &=& \top & \tradc{\top} &=& \top \\
  & & & \tradc{A\oplus B} &=& ((\tradc{A}\lolipop\bot)\with(\tradc{B}\lolipop\bot))\lolipop\bot \\
  & & & \tradc{0} &=& \top\lolipop\bot
\end{array}
\end{equation*}
The translation $\tradm{\_}$ corresponds exactly to the interpretation of the constructions on objects of a $\star$-autonomous categories with finite products in the concrete category \MALL\ of \autoref{subsec:cat_MALL}.

\begin{lem}
\label{lem:trad_isos}
The $\tradc{\_}$ and $\tradm{\_}$ translations satisfy the following properties:
\begin{itemize}
\item $\tradc{A\orth}\isoc \tradc{A}\lolipop\bot$
\item $\tradc{\tradm{F}}\isoc F$
\item $A\isom B$ entails $\tradc{A}\isoc\tradc{B}$
\end{itemize}
\end{lem}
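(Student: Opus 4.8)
The plan is to establish the three items in the order listed, since each later one relies on the earlier facts. Throughout it is convenient to write $\overline{F}$ for the derived dual $F\lolipop\bot$ in the categorical language, and to first record a handful of purely equational facts in $\thD$ that will be reused. The involution law $\overline{\overline{F}}\isoc F$ is exactly the last axiom of $\thD$. From the unit law $1\lolipop F=F$ one gets $\overline{1}\isoc\bot$, hence $\overline{\bot}\isoc 1$ by applying involution, while $\overline{\top}\isoc\tradc{0}$ holds by definition of $\tradc{\_}$. The currying axiom $(F\tens G)\lolipop H=F\lolipop(G\lolipop H)$, specialized to $H=\bot$, yields the key identity $\overline{F\tens G}\isoc F\lolipop\overline{G}$, and equivalently $F\tens G\isoc\overline{F\lolipop\overline{G}}$ after one use of involution. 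These few identities are what turn the double-negation encodings of $\parr$, $\oplus$ and $0$ in the definition of $\tradc{\_}$ into genuine De~Morgan duals of $\tens$, $\with$ and $\top$.

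For the first item, $\tradc{A\orth}\isoc\tradc{A}\lolipop\bot$, I would proceed by induction on the MALL formula $A$. The base cases $X$, $X\orth$, $1$, $\bot$, $\top$, $0$ are direct computations from the definition of $\tradc{\_}$, the involution law, and the small unit facts above (for instance $A=\bot$ uses $\overline{\bot}\isoc 1=\tradc{1}=\tradc{\bot\orth}$, and $A=0$ uses the involution on $\tradc{0}=\overline{\top}$). For the inductive cases I use the non-commutative De~Morgan laws of MALL: e.g.\ for $A=B\tens C$ one has $A\orth=C\orth\parr B\orth$, and unfolding $\tradc{\_}$, applying the induction hypothesis to $B$ and $C$, and simplifying with involution leaves $(\tradc{C}\tens\tradc{B})\lolipop\bot$, which equals $\tradc{A}\lolipop\bot=(\tradc{B}\tens\tradc{C})\lolipop\bot$ by commutativity of $\tens$. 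The cases $\parr$, $\with$, $\oplus$ are analogous, the order reversal forced by the non-commutative De~Morgan laws being always absorbed by commutativity of $\tens$ or $\with$.

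For the second item, $\tradc{\tradm{F}}\isoc F$, I would induct on the categorical formula $F$. The cases $X$, $1$, $\bot$, $\top$ are immediate, and $\tens$, $\with$ follow from the induction hypothesis since both $\tradc{\_}$ and $\tradm{\_}$ commute with these connectives. The only interesting case is $F=G\lolipop H$, where $\tradm{F}=\tradm{G}\orth\parr\tradm{H}$: here I combine the first item (to rewrite $\tradc{\tradm{G}\orth}$ as $\tradc{\tradm{G}}\lolipop\bot$), the induction hypotheses on $G$ and $H$, the involution law, and the currying identity to reduce $\tradc{\tradm{G\lolipop H}}$ to $G\lolipop H$.

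Finally, for the third item, $A\isom B \implies \tradc{A}\isoc\tradc{B}$: since $\isom$ is the congruence generated by the equations of \autoref{tab:eqisos}, since $\isoc$ is itself a congruence, and since $\tradc{\_}$ is defined by structural recursion so that $\tradc{A\star C}$ is built from $\tradc{A}$ and $\tradc{C}$, it suffices to check $\tradc{L}\isoc\tradc{R}$ for each generating equation $L=R$ of $\Eu$. The associativity, commutativity and the $\tens$/$\with$ unit equations translate directly to their $\thD$ counterparts; the equations involving $\parr$, $\oplus$, $\bot$ and $0$ require first unfolding the double-negation encoding and then simplifying via involution and the derived De~Morgan identities. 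I expect the main obstacle to be the distributivity equations $A\tens(B\oplus C)=(A\tens B)\oplus(A\tens C)$ and $A\parr(B\with C)=(A\parr B)\with(A\parr C)$ together with the cancellation equations $A\tens 0=0$ and $A\parr\top=\top$: for these one must chain the currying identity $\overline{F\tens G}\isoc F\lolipop\overline{G}$, the involution law, and the genuinely additive axioms $F\lolipop(G\with H)=(F\lolipop G)\with(F\lolipop H)$ and $F\lolipop\top=\top$ in order to push the encoded $\oplus$, $\parr$ or $0$ through the surrounding multiplicative connective. These are precisely the places where the distributivity and $\top$-cancellation axioms of $\thD$ are actually consumed, mirroring the role distributivity plays on the MALL side.
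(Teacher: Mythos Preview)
Your proposal is correct and follows essentially the same approach as the paper: an induction on $A$ for the first item, an induction on $F$ (relying on the first item for the $\lolipop$ case) for the second, and a case-by-case verification of each generating equation of $\Eu$ for the third, using exactly the derived identities (involution, currying to $\bot$, unit facts) that you isolate upfront. The only cosmetic difference is that the paper unfolds these auxiliary identities inline in each computation rather than stating them once at the beginning.
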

\begin{proof}
The second property relies on the first while the third is independent.
\begin{itemize}
\item By induction on $A$:
{\allowdisplaybreaks
\begin{align*}
\tradc{X\orth}&= X\lolipop\bot = \tradc{X}\lolipop\bot \\
\tradc{X\biorth}&= \tradc{X} = X \isoc (X \lolipop \bot)\lolipop\bot = \tradc{X\orth}\lolipop\bot \\
\tradc{(A\tens B)\orth} &= ((\tradc{B\orth}\lolipop\bot)\tens(\tradc{A\orth}\lolipop\bot))\lolipop\bot\\
  &\isoc (((\tradc{A}\lolipop\bot)\lolipop\bot)\tens((\tradc{B}\lolipop\bot)\lolipop\bot))\lolipop\bot\\
  &\isoc (\tradc{A}\tens\tradc{B})\lolipop\bot = \tradc{A\tens B}\lolipop\bot \\
\tradc{1\orth} &= \bot \isoc 1\lolipop\bot = \tradc{1}\lolipop\bot \\
\tradc{(A\parr B)\orth} &= \tradc{B\orth}\tens\tradc{A\orth} \isoc (\tradc{A}\lolipop\bot)\tens(\tradc{B}\lolipop\bot) \\
  &\isoc (((\tradc{A}\lolipop\bot)\tens(\tradc{B}\lolipop\bot))\lolipop\bot)\lolipop\bot \\
  &= \tradc{A\parr B}\lolipop\bot \\
\tradc{\bot\orth} &= 1 \isoc (1\lolipop\bot)\lolipop\bot \isoc \bot\lolipop\bot = \tradc{\bot}\lolipop\bot \\
\tradc{(A\with B)\orth} &= ((\tradc{B\orth}\lolipop\bot)\with(\tradc{A\orth}\lolipop\bot))\lolipop\bot\\
  &\isoc (((\tradc{A}\lolipop\bot)\lolipop\bot)\with((\tradc{B}\lolipop\bot)\lolipop\bot))\lolipop\bot\\
  &\isoc (\tradc{A}\with\tradc{B})\lolipop\bot = \tradc{A\with B}\lolipop\bot \\
\tradc{\top\orth} &= \top\lolipop\bot = \tradc{\top}\lolipop\bot \\
\tradc{(A\oplus B)\orth} &= \tradc{B\orth}\with\tradc{A\orth} \isoc (\tradc{A}\lolipop\bot)\with(\tradc{B}\lolipop\bot) \\
  &\isoc (((\tradc{A}\lolipop\bot)\with(\tradc{B}\lolipop\bot))\lolipop\bot)\lolipop\bot \\
  &= \tradc{A\oplus B}\lolipop\bot \\
\tradc{0\orth} &= \top \isoc (\top\lolipop\bot)\lolipop\bot = \tradc{0}\lolipop\bot \\
\end{align*}
}
\item By induction on $F$, the key case being:
\begin{align*}
\tradc{\tradm{F\lolipop G}} &= \tradc{\tradm{F}\orth\parr\tradm{G}} = ((\tradc{\tradm{F}\orth}\lolipop\bot)\tens(\tradc{\tradm{G}}\lolipop\bot))\lolipop\bot \\
  &\isoc (((F\lolipop\bot)\lolipop\bot)\tens(G\lolipop\bot))\lolipop\bot \isoc (F\tens(G\lolipop\bot))\lolipop\bot\\
  &\isoc F\lolipop((G\lolipop\bot)\lolipop\bot) \isoc F\lolipop G
\end{align*}
\item We prove that the image of each equation of \autoref{tab:eqisos} through $\tradc{\_}$ is derivable with equations of \autoref{tab:eqisos_cat}.
Associativity, commutativity and unitality for $\tens$ and $\with$ are immediate.
{\allowdisplaybreaks
\begin{align*}
\tradc{A\parr(B\parr C)}
  &= ((\tradc{A} \lolipop \bot) \tens ((((\tradc{B} \lolipop \bot) \tens (\tradc{C} \lolipop \bot)) \lolipop \bot) \lolipop \bot)) \lolipop \bot\\
  &\isoc ((\tradc{A} \lolipop \bot) \tens ((\tradc{B} \lolipop \bot) \tens (\tradc{C} \lolipop \bot))) \lolipop \bot\\
  &\isoc (((\tradc{A} \lolipop \bot) \tens (\tradc{B} \lolipop \bot)) \tens (\tradc{C} \lolipop \bot)) \lolipop \bot\\
  &\isoc (((((\tradc{A} \lolipop \bot) \tens (\tradc{B} \lolipop \bot) \lolipop \bot) \lolipop \bot)) \tens (\tradc{C} \lolipop \bot)) \lolipop \bot\\
  &= \tradc{(A\parr B)\parr C} \\
\tradc{A\parr B}
  &= ((\tradc{A} \lolipop \bot) \tens (\tradc{B} \lolipop \bot)) \lolipop \bot \\
  &\isoc ((\tradc{B} \lolipop \bot) \tens (\tradc{A} \lolipop \bot)) \lolipop \bot = \tradc{B\parr A} \\
\tradc{A\parr \bot}
  &= ((\tradc{A} \lolipop \bot) \tens (\bot \lolipop \bot)) \lolipop \bot\\
  &\isoc ((\tradc{A} \lolipop \bot) \tens ((1\lolipop\bot)\lolipop\bot)) \lolipop \bot \isoc ((\tradc{A} \lolipop \bot) \tens 1) \lolipop \bot \\
  &\isoc (\tradc{A} \lolipop \bot) \lolipop \bot \isoc \tradc{A}
\end{align*}
}
Associativity, commutativity and unitality for $\oplus$ follow the same pattern as for $\parr$.

Then we have:
{\allowdisplaybreaks
\begin{align*}
\tradc{A\tens(B\oplus C)}
  &= \tradc{A}\tens(((\tradc{B}\lolipop\bot)\with(\tradc{C}\lolipop\bot))\lolipop\bot)  \\
  &\isoc ((\tradc{A}\tens(((\tradc{B}\lolipop\bot)\with(\tradc{C}\lolipop\bot))\lolipop\bot))\lolipop\bot)\lolipop\bot  \\
  &\isoc (\tradc{A}\lolipop(((\tradc{B}\lolipop\bot)\with(\tradc{C}\lolipop\bot))\lolipop\bot)\lolipop\bot)\lolipop\bot  \\
  &\isoc (\tradc{A}\lolipop((\tradc{B}\lolipop\bot)\with(\tradc{C}\lolipop\bot)))\lolipop\bot  \\
  &\isoc ((\tradc{A}\lolipop(\tradc{B}\lolipop\bot))\with(\tradc{A}\lolipop(\tradc{C}\lolipop\bot)))\lolipop\bot  \\
  &\isoc (((\tradc{A}\tens\tradc{B})\lolipop\bot)\with((\tradc{A}\tens\tradc{C})\lolipop\bot))\lolipop\bot  \\
  &= \tradc{(A\tens B)\oplus (A\tens C)} \\
\tradc{A\tens 0}
  &= \tradc{A} \tens (\top \lolipop \bot)\\
  &\isoc ((\tradc{A} \tens (\top \lolipop \bot)) \lolipop \bot) \lolipop \bot\\
  &\isoc (\tradc{A} \lolipop ((\top \lolipop \bot) \lolipop \bot)) \lolipop \bot\\
  &\isoc (\tradc{A} \lolipop \top) \lolipop \bot\\
  &\isoc \top \lolipop \bot = \tradc{0} \\
\tradc{A\parr(B\with C)}
  &= ((\tradc{A}\lolipop \bot)\tens((\tradc{B}\with\tradc{C})\lolipop\bot))\lolipop\bot \\
  &\isoc (\tradc{A}\lolipop \bot)\lolipop(((\tradc{B}\with\tradc{C})\lolipop\bot)\lolipop\bot) \\
  &\isoc (\tradc{A}\lolipop \bot)\lolipop(\tradc{B}\with\tradc{C}) \\
  &\isoc ((\tradc{A}\lolipop \bot)\lolipop\tradc{B})\with((\tradc{A}\lolipop \bot)\lolipop\tradc{C}) \\
  &\isoc ((\tradc{A}\lolipop \bot)\lolipop((\tradc{B}\lolipop\bot)\lolipop\bot))\with\\
    &\tag{$(\tradc{A}\lolipop \bot)\lolipop((\tradc{C}\lolipop\bot)\lolipop\bot))$} \\
  &\isoc (((\tradc{A}\lolipop \bot)\tens(\tradc{B}\lolipop\bot))\lolipop\bot)\with\\
    &\tag{$(((\tradc{A}\lolipop \bot)\tens(\tradc{C}\lolipop\bot))\lolipop\bot)$} \\
  &= \tradc{(A\parr B) \with (A\parr C)} \\
\tradc{A\parr\top}
  &= ((\tradc{A}\lolipop \bot)\tens(\top\lolipop\bot))\lolipop\bot \\
  &\isoc (\tradc{A}\lolipop \bot)\lolipop((\top\lolipop\bot)\lolipop\bot) \\
  &\isoc (\tradc{A}\lolipop \bot)\lolipop\top \isoc \top = \tradc{\top} \qedhere
\end{align*}
}
\end{itemize}
\end{proof}

\begin{thm}[Isomorphisms in $\star$-autonomous categories with finite products]
The equational theory $\thD$ (\autoref{tab:eqisos_cat}) is sound and complete for isomorphisms in $\star$-autonomous categories with finite products.
\end{thm}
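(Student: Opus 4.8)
The plan is to establish soundness and completeness separately, reusing the concrete category \MALL\ built in \autoref{subsec:cat_MALL}, the translation facts of \autoref{lem:trad_isos}, and the completeness of $\Eu$ for MALL (\autoref{th:iso_complete}). The two translations $\tradm{\_}$ and $\tradc{\_}$ together with these facts are designed exactly so that the categorical question can be bounced into MALL, solved there, and bounced back.

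For soundness I would argue directly at the categorical level, since what is needed is that $\thD$-equal formulas are isomorphic in \emph{every} $\star$-autonomous category with finite products, not merely in \MALL. It suffices to exhibit, in an arbitrary such category, a canonical natural isomorphism realizing each equation of $\thD$: the equations of $\thS$ (associativity, commutativity and unitality of $\tens$ and of $\with$) come from the symmetric monoidal structure and the universal property of the product; the currying law $(F\tens G)\lolipop H = F\lolipop(G\lolipop H)$ and the unit law $1\lolipop F = F$ are the closed structure; the distributivity $F\lolipop(G\with H)=(F\lolipop G)\with(F\lolipop H)$ and the absorption $F\lolipop\top=\top$ follow from the internal hom $F\lolipop(\_)$ being a right adjoint, hence preserving the terminal object and binary products in its second argument; and $(F\lolipop\bot)\lolipop\bot = F$ is precisely the dualizing-object axiom. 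Since isomorphism is a congruence and $\thD$ is generated by these equations, $F\isoc G$ entails that $F$ and $G$ are isomorphic in all such categories.

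For completeness I would transfer a universally valid isomorphism to \MALL\ and then into MALL. Assume $F$ and $G$ are isomorphic in every $\star$-autonomous category with finite products. As \MALL\ is one such category (with atoms interpreted as themselves), $F$ and $G$ are isomorphic there; and since the objects $F$ and $G$ are interpreted in \MALL\ by $\tradm{F}$ and $\tradm{G}$ (this being exactly what $\tradm{\_}$ computes), while an isomorphism of objects of \MALL\ is exactly a MALL type isomorphism, I obtain $\tradm{F}\iso\tradm{G}$. By completeness of $\Eu$ for MALL (\autoref{th:iso_complete}) this gives $\tradm{F}\isom\tradm{G}$. The third item of \autoref{lem:trad_isos} then yields $\tradc{\tradm{F}}\isoc\tradc{\tradm{G}}$, and its second item gives $\tradc{\tradm{F}}\isoc F$ and $\tradc{\tradm{G}}\isoc G$. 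Chaining these, $F\isoc\tradc{\tradm{F}}\isoc\tradc{\tradm{G}}\isoc G$, which is the desired $F\isoc G$.

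The main obstacle I anticipate is not the logical chaining, which is short once \autoref{lem:trad_isos} is available, but two points of care. First, the soundness step must genuinely be carried out inside arbitrary categories, so the distributivity and absorption isomorphisms have to be derived from the adjunction and the universal property of products rather than merely observed in the syntactic model; this is where the (routine) bulk of the work sits. Second, the reduction ``isomorphic in all categories $\Rightarrow$ isomorphic in \MALL'' must be made precise with the atoms interpreted as themselves, so that a universal isomorphism instantiates to an honest \MALL-isomorphism between the objects $\tradm{F}$ and $\tradm{G}$; confirming that such an object isomorphism coincides with a MALL type isomorphism, as recorded in \autoref{subsec:cat_MALL}, is the only conceptual subtlety.
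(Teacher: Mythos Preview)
Your proposal is correct and follows essentially the same route as the paper: soundness is dispatched by checking each equation of $\thD$ against the structural axioms of $\star$-autonomous categories with finite products, and completeness is obtained by instantiating the universal isomorphism in the syntactic model \MALL, invoking \autoref{th:iso_complete} to get $\tradm{F}\isom\tradm{G}$, and then chaining the two items of \autoref{lem:trad_isos} exactly as you do. Your write-up of soundness is more detailed than the paper's one-line ``by definition'', but the content is the same.
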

\begin{proof}
Soundness follows by definition of $\star$-autonomous categories with finite products.
For completeness, take an isomorphism $F\iso G$.
It yields an isomorphism $\tradm{F}\iso\tradm{G}$ in \MALL.
As $A\iso B$ in \MALL\ is generated by $\Eu$ (\autoref{th:iso_complete}), we get $\tradm{F}\isom\tradm{G}$.
From \autoref{lem:trad_isos}, we deduce $F\isoc\tradc{\tradm{F}}\isoc\tradc{\tradm{G}}\isoc G$.
\end{proof}


\subsection{Isomorphisms of symmetric monoidal closed categories with finite products}
\label{subsec:smcc_isos}

Isomorphisms in symmetric monoidal closed categories (SMCC) have been characterized~\cite{isossmcc} and proved to correspond to equations in the first two lines of \autoref{tab:eqisos_cat}.
We extend this result to finite products by proving the soundness and completeness of the theory $\thS$ presented in \autoref{tab:eqisos_cat}.

\begin{thm}[Isomorphisms in SMCC with finite products]\label{thm:isos_smcc}
The equational theory $\thS$ (\autoref{tab:eqisos_cat}) is sound and complete for isomorphisms in symmetric monoidal closed categories with finite products.
\end{thm}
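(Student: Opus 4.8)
The plan is to derive this theorem from the completeness of $\thD$ for $\star$-autonomous categories with finite products (the preceding result), rather than redo a syntactic analysis, by isolating exactly what the dualizing object contributes.

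\textbf{Soundness.} First I would check that each equation of $\thS$ names an isomorphism in every symmetric monoidal closed category with finite products. The equations of the first block are the coherence isomorphisms of a symmetric monoidal structure together with the two standard closed-structure isomorphisms, currying $(F\tens G)\lolipop H\iso F\lolipop(G\lolipop H)$ and $1\lolipop F\iso F$; those of the second block are the universal isomorphisms of a finite product ($\with$ associative, commutative, with unit $F\with\top\iso F$). Only the two mixed equations $F\lolipop(G\with H)\iso(F\lolipop G)\with(F\lolipop H)$ and $F\lolipop\top\iso\top$ require the interaction of both structures, and they hold because $F\lolipop(-)$ is a right adjoint (to $(-)\tens F$) and hence preserves all finite products, in particular the terminal object. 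Since $\iso$ is a congruence, soundness of $\thS$ follows.

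\textbf{Reduction to $\thD$.} For completeness, assume $A\iso B$ in every SMCC with finite products. A generic such category carries no distinguished dualizing object, so $\bot$ behaves as an arbitrary generator; replacing each occurrence of $\bot$ in $A$ and $B$ by a fresh atom changes neither side of the statement, and I may assume $A$ and $B$ are $\bot$-free. Now every $\star$-autonomous category with finite products is in particular an SMCC with finite products; hence $A$ and $B$ are isomorphic in all of them, and the completeness theorem for $\thD$ yields $A\isoc B$. It therefore remains only to upgrade this to $A\isoo B$, which is exactly the following conservativity statement.

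\textbf{Conservativity bridge (the key lemma).} For $\bot$-free formulas, $A\isoc B$ implies $A\isoo B$; equivalently, $\thD$ is a conservative extension of $\thS$ on the $\bot$-free fragment. This is the heart of the matter, since the extra axiom $(F\lolipop\bot)\lolipop\bot=F$ of $\thD$ is about $\bot$, which is absent from $A$ and $B$ yet may legitimately appear in intermediate formulas of a $\thD$-derivation. I would prove it syntactically: orient the dualizing axiom as the size-decreasing rewrite $(F\lolipop\bot)\lolipop\bot\rightarrow F$ on top of $\thS$, establish that this rule is terminating and Church–Rosser \emph{modulo} $\thS$ (checking the critical pairs against the $\thS$-equations, notably $F\lolipop\top=\top$, $1\lolipop F=F$ and the distributivity over $\with$), and conclude that two formulas are $\thD$-equal iff their $\bot$-eliminated normal forms are $\thS$-equal. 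A $\bot$-free formula is already its own normal form, so $A\isoc B$ forces $A\isoo B$. Combined with the reduction step and soundness, this proves $A\iso B \iff A\isoo B$.

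\textbf{Main obstacle.} The delicate point is precisely this conservativity: the semantic shortcut one is tempted to take — a structure-preserving full and faithful embedding of an SMCC with finite products into a $\star$-autonomous one (via a $\mathrm{Chu}$-style construction) — only reflects isomorphisms and hence, after \autoref{th:iso_complete}, recovers the relation $\thD$ on $\bot$-free objects, not $\thS$; so it does not by itself separate the two theories, and in addition $\mathrm{Chu}(\mathcal C,d)$ preserves the product only under a finite-coproduct hypothesis that an SMCC with finite products need not satisfy. I therefore expect the genuine work to be the coherence/rewriting argument eliminating detours through $\bot$, which I would cross-check against the product-free case of Soloviev~\cite{isossmcc} (the theory $\thCP$ of the first two lines of \autoref{tab:eqisos_cat}) to validate the non-product interactions before adding $\with$ and $\top$.
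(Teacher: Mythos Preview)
Your overall reduction is correct: soundness is routine, and from $F\iso G$ in all SMCC with finite products you correctly derive $F\isoc G$ via the $\star$-autonomous case (SMCC formulas are already $\bot$-free by definition of the language, so your atom-replacement step is vacuous). Everything hinges on the conservativity claim $F\isoc G\Rightarrow F\isoo G$ for $\bot$-free $F,G$, and here your route diverges sharply from the paper's.

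The paper never touches $\thD$ for this theorem. It translates $F,G$ into MALL via $\tradm{\_}$ and observes that the image always lies in the Danos--Regnier class of \emph{output formulas}, a polarity grammar $o/\iform$ with the non-ambiguity property that no MALL formula is both input and output. A back-translation $\trado{\_}$ from output formulas to the SMCC language satisfies $\trado{\tradm{F}}=F$, and the core of the proof is an equation-by-equation check that each $\Eu$-axiom, when one side is an output formula, forces the other side to be output as well and maps under $\trado{\_}$ to a $\thS$-derivable equality. Thus the whole $\Eu$-derivation $\tradm{F}\isom\tradm{G}$ stays inside output formulas and descends to $F\isoo G$. The structural invariant ``being output'' is the device that replaces your rewriting argument.

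Your Church--Rosser-modulo-$\thS$ proposal would yield the same conservativity, but you have not carried out the critical-pair analysis, and it is less clean than you suggest: for instance $\bot\lolipop\bot$ is irreducible for the bare rule yet $\thS$-equal (via $1\lolipop F=F$) to the redex $(1\lolipop\bot)\lolipop\bot$, so normality modulo $\thS$ already involves non-trivial interactions with the unit laws, and similar overlaps arise with currying and the $\with$-distributivity. Establishing confluence modulo an equational theory this rich is essentially the whole theorem, and it remains open in your sketch; the paper's polarity invariant sidesteps this by controlling the derivation step by step rather than the entire $\thS$-equivalence class of each intermediate formula.
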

\begin{proof}
The language of SMCC with finite products is the language of $\star$-autonomous categories with finite products in which we remove $\bot$.
In particular the translation $\tradm{\_}$ can be used to translate the associated formulas into MALL formulas.
In order to analyse the image of this restricted translation, we consider the following grammar of MALL formulas (Danos-Regnier \emph{output formulas} $o$ and \emph{input formulas} $\iform$~\cite{lamarchegames}):
\begin{equation*}
\begin{array}{ccccccccccccccc}
  o & \coloncoloneqq & X & \mid & o\tens o & \mid & o\parr\iform & \mid & \iform\parr o & \mid & 1 & \mid & o\with o & \mid & \top \\
  \iform & \coloncoloneqq & X\orth & \mid & \iform\parr\iform & \mid & \iform\tens o & \mid & o\tens\iform & \mid & \bot & \mid & \iform\oplus\iform & \mid & 0
\end{array}
\end{equation*}
The dual of an output formula is an input formula and conversely.
Moreover no MALL formula is both an input and an output formula -- let us call this the \emph{non-ambiguity property}.
One can check by induction on a formula $F$ in the language of SMCC with finite products that $\tradm{F}$ is an output formula.
We define a translation back from output formulas to SMCC formulas (which is well defined thanks to the non-ambiguity property):
\begin{equation*}
  \begin{array}{rcl@{\qquad}rcl}
    \trado{X} & = & X \\
    \bigtrado{o\tens o'} & = & \trado{o}\tens\trado{o'} &\trado{1} & = & 1 \\
     \bigtrado{o\with o'} & = & \trado{o}\with\trado{o'} & \trado{\top} & = & \top \\
   \bigtrado{o\parr\iform} & = & \bigtrado{\iform\orth}\lolipop\trado{o} &
      \bigtrado{\iform\parr o} & = & \bigtrado{\iform\orth}\lolipop\trado{o}
  \end{array}
\end{equation*}
We use the notation $\tradi{\iform}=\bigtrado{\iform\orth}$ (so that $\bigtradi{o\orth}=\trado{o}$).
We can check, by induction on $F$, that $\trado{\tradm{F}} = F$.
We now prove that when $o\isom A$ (\resp\ $\iform\isom A$) is an equation from \autoref{tab:eqisos} (or its symmetric version) then $A$ is an output (\resp\ input) formula and $\trado{o}\isoo\trado{A}$ (\resp\ $\tradi{\iform}\isoo\tradi{A}$):
\begin{itemize}
\item If $o$ or $\iform$ is of the shape $A\tens(B\tens C)$, we have the following possibilities:
  \begin{itemize}
  \item $A$, $B$ and $C$ are output, then $(A\tens B)\tens C$ is output and:
    \begin{align*}
      \bigtrado{A\tens(B\tens C)} &= \trado{A}\tens(\trado{B}\tens\trado{C}) \\
       &\isoo (\trado{A}\tens\trado{B})\tens\trado{C} = \bigtrado{(A\tens B)\tens C}
    \end{align*}
  \item $A$ and $B$ are output and $C$ is input, then $(A\tens B)\tens C$ is input and:
    \begin{align*}
      \bigtradi{A\tens(B\tens C)} &= \trado{A}\lolipop(\trado{B}\lolipop\tradi{C}) \\
       &\isoo (\trado{A}\tens\trado{B})\lolipop\tradi{C} = \bigtradi{(A\tens B)\tens C}
    \end{align*}
  \item $A$ and $C$ are output and $B$ is input, then $(A\tens B)\tens C$ is input and:
    \begin{align*}
      \bigtradi{A\tens(B\tens C)} &= \trado{A}\lolipop(\trado{C}\lolipop\tradi{B}) \\
       &\isoo (\trado{A}\tens\trado{C})\lolipop\tradi{B} \isoo (\trado{C}\tens\trado{A})\lolipop\tradi{B} \\
       &\isoo \trado{C}\lolipop(\trado{A}\lolipop\tradi{B}) = \bigtradi{(A\tens B)\tens C}
    \end{align*}
  \item $A$ is input and $B$ and $C$ are output, then $(A\tens B)\tens C$ is input and:
    \begin{align*}
      \bigtradi{A\tens(B\tens C)} &= (\trado{B}\tens\trado{C})\lolipop\tradi{A} \\
          &\isoo (\trado{C}\tens\trado{B})\lolipop\tradi{A} \\
       &\isoo \trado{C}\lolipop(\trado{B}\lolipop\tradi{A}) = \bigtradi{(A\tens B)\tens C}
    \end{align*}
  \end{itemize}
The symmetric case follows the same pattern, as well as associativity of $\parr$.
\item If $o$ or $\iform$ is of the shape $A\tens B$, we have the following possibilities:
  \begin{itemize}
  \item $A$ and $B$ are output, then $B\tens A$ is output and $\bigtrado{A\tens B} = \trado{A}\tens\trado{B} \isoo \trado{B}\tens\trado{A} = \bigtrado{B\tens A}$
  \item $A$ is output and $B$ is input, then $B\tens A$ is input and $\bigtradi{A\tens B} = \trado{A}\lolipop\tradi{B} = \bigtradi{B\tens A}$
  \item $A$ is input and $B$ is output, then $B\tens A$ is input and $\bigtradi{A\tens B} = \trado{B}\lolipop\tradi{A} = \bigtradi{B\tens A}$
  \end{itemize}
The commutativity of $\parr$ follows the same pattern.
\item If $o$ or $\iform$ is of the shape $A\tens 1$ then either $A$ is output and $\bigtrado{A\tens 1} = \trado{A}\tens 1 \isoo \trado{A}$,
or $A$ is input and $\bigtradi{A\tens 1} = 1\lolipop\tradi{A} \isoo \tradi{A}$.
The symmetric case follows the same pattern, as well as unitality for $\parr$.
\item If $o = A\with (B\with C)$ then $A$, $B$ and $C$ are output and $(A\with B)\with C$ as well.
We have $\bigtrado{A\with (B\with C)}=\trado{A}\with(\trado{B}\with\trado{C})\isoo(\trado{A}\with\trado{B})\with\trado{C}=\bigtrado{(A\with B)\with C}$.
The symmetric case follows the same pattern, as well as associativity of $\oplus$.
\item If $o = A\with B$ then $A$ and $B$ are output and $B\with A$ as well.
We have $\bigtrado{A\with B}=\trado{A}\with\trado{B}\isoo\trado{B}\with\trado{A}=\bigtrado{B\with A}$.
The commutativity of $\oplus$ follows the same pattern.
\item If $o = A\with\top$ then $A$ is output and $\bigtrado{A\with\top} = \trado{A}\with\top \isoo \trado{A}$.
The symmetric case follows the same pattern, as well as unitality for $\oplus$.
\item If $\iform = A\tens(B\oplus C)$ then $A$ is output and $B$ and $C$ are input, and $(A\tens B)\oplus(A\tens C)$ is input. We have:
  \begin{align*}
    \bigtradi{A\tens(B\oplus C)} &= \trado{A}\lolipop(\tradi{C}\with\tradi{B}) \\
       &\isoo (\trado{A}\lolipop\tradi{C})\with(\trado{A}\lolipop\tradi{B}) = \bigtradi{(A\tens B)\oplus(A\tens C)}
  \end{align*}
The symmetric case follows the same pattern, as well as distributivity of $\parr$ over $\with$.
\item If $\iform = A\tens 0$ then $A$ is output and $\bigtradi{A\tens 0} = \trado{A}\lolipop\top \isoo \top = \tradi{0}$.
The symmetric case follows the same pattern, as well as cancellation of $\parr$ by $\top$.
\end{itemize}
Assume now that $F\iso G$ in the class of SMCC with finite products.
As \MALL\ is such a SMCC with finite products, we have $\tradm{F}\iso\tradm{G}$ in \MALL, thus $\tradm{F}\isom\tradm{G}$ by \autoref{th:iso_complete}.
As $\tradm{F}$ is an output formula, by induction on the length of the equational derivation of $\tradm{F}\isom\tradm{G}$, we get that all the intermediary steps involve output formulas and each equation is mapped to $\isoo$ by $\bigtrado{\_}$ so that $\trado{\tradm{F}}\isoo\trado{\tradm{G}}$.
We finally get $F=\trado{\tradm{F}}\isoo\trado{\tradm{G}}= G$.

Conversely soundness easily comes from the definition of SMCC and products.
\end{proof}


\section{Conclusion}
\label{sec:conclu}

Extending the result of Balat and Di~Cosmo in~\cite{mllisos}, we give an equational theory characterising type isomorphisms in multiplicative-additive linear logic with units as well as in $\star$-autonomous categories with finite products: the one described in \autoref{tab:eqisos} on \autopageref{tab:eqisos} (and in \autoref{tab:eqisos_cat} for $\star$-autonomous categories).
We get as a sub-result that isomorphisms for additive linear logic (\resp\ unit-free additive linear logic) are given by the equational theory $\Eu$ restricted to additive (\resp\ additive unit-free) formulas -- and more generally this applies to any fragment of MALL, namely multiplicative-additive linear logic without units, with additive units but without multiplicative units, with multiplicative units but without additive units, as well as multiplicative linear logic with and without units.
Proof-nets were a major tool to prove completeness, as notions like fullness and \unique ness are much harder to define and manipulate in sequent calculus.
However, we could not use them for taking care of the (additive) units, because there is no known appropriate notion of proof-nets.
We have thus been forced to consider rule commutations in the sequent calculus of MALL with units.

The immediate question to address is the extension of our results to the characterization of type isomorphisms for full propositional linear logic, thus including the exponential connectives.
This is clearly not immediate since the interaction between additive and exponential connectives is not well described in proof-nets.

Another perspective is to look at categories.
A $\star$-autonomous category with finite products automatically has finite coproducts given by $F\oplus G \coloneqq ((F\lolipop\bot)\with(G\lolipop\bot))\lolipop\bot$ and $0 \coloneqq \top\lolipop\bot$.
From equations of \autoref{tab:eqisos_cat}, one can derive:
\begin{equation*}
\left.\begin{array}{rcl}
  F\oplus G &=& G\oplus F \\
  F\oplus (G\oplus H) &=& (F\oplus G)\oplus H \\
  F\oplus 0 &=& F \\
  F\tens(G\oplus H) &=& (F\tens G)\oplus(F\tens H) \\
  F\tens 0 &=& 0 \\
  (F\oplus G)\lolipop H &=& (F\lolipop H)\with(G\lolipop H) \\
  0\lolipop H &=& \top
\end{array}\qquad\right\}\thCP
\end{equation*}
In the weaker setting of SMCC, finite products do not induce finite coproducts.
It justifies the possibility of considering them separately.
We solved the case of products only (\autoref{thm:isos_smcc}) and have conjectures for SMCC with both products and coproducts on one side, and with coproducts only on the other side.

We conjecture that isomorphisms in SMCC with both finite products \emph{and} finite coproducts correspond to adding the equations of theory $\thCP$ to $\thS$ (\autoref{tab:eqisos_cat}).
Our approach through $\star$-autonomous categories does not help since $\top\lolipop(\top\oplus\top)$ and $(0\with 0)\lolipop 0$ are isomorphic in $\star$-autonomous categories but not in SMCC with finite products and coproducts.

About SMCC with finite coproducts only (without products), it is important to notice that an initial object $0$ in a SMCC induces that $0\lolipop F$ is a terminal object for any $F$.
Thus, the theory of isomorphisms includes the equation $0\lolipop F\iso 0\lolipop G$ even if it does not occur in $\thCP$ (it might be the only missing equation).
Regarding a characterization through $\star$-autonomous categories, it is again not possible since, if we denote by $\top$ a terminal object, $(\top\lolipop 0)\lolipop 0$ and $\top\lolipop(\top\tens\top)$ are isomorphic in $\star$-autonomous categories but not in SMCC with finite coproducts.

For the curious reader, we prove in \autoref{sec:conjectures_isos_smcc} that our two examples of isomorphisms in $\star$-autonomous categories are not isomorphic in the considered SMCC.
This is done by bringing one of our results about patterns in isomorphisms of multiplicative-additive linear
logic to the setting of intuitionistic multiplicative-additive linear logic.

A more general problem than isomorphisms is the study of type retractions, where only one of the two compositions yields an identity.
It is much more difficult -- see for example~\cite{retractmany}.
The question is mostly open in the case of linear logic.
Even in multiplicative linear logic (where there is for example a retraction between $A$ and $(A\lolipop A)\lolipop A = (A\tens A\orth)\parr A$ which is not an isomorphism, and where the associated proof-nets are not bipartite), no characterization is known.
In the multiplicative-additive fragment, the problem looks even harder, with more retractions; for instance the one depicted on \autoref{fig:no_anticut}, but there are also retractions between $A$ and $A\oplus A$.


\section*{Acknowledgements}

We warmly thank the anonymous referees for their useful and detailed remarks, and in particular for bringing our attention to~\cite{calculusformall}.

This work was performed within the framework of the LABEX MILYON (ANR-10-LABX-0070) of Universit\'e de Lyon, within the program ``Investissements d'Avenir'' (ANR-11-IDEX-0007), and supported by the projects DyVerSe (ANR-19-CE48-0010), QuaReMe (ANR-20-CE48-0005) and ReCiProg (ANR-21-CE48-0019), all operated by the French National Research Agency (ANR). This work was also supported by the IRN Linear Logic.


\bibliographystyle{alphaurl}
\bibliography{MALL_isos_FSCD_2023}


\newpage
\appendix


\section{Rule commutation is the core of cut-elimination}
\label{subsec:proofs_add_1}

In this appendix is given a complete proof of \autoref{cor:CReqc_simpl} (\ie\ \cite[Theorem~5.1]{calculusformall}).
We had no knowledge of~\cite{calculusformall} when writing the conference paper~\cite{DiGuardiaLaurent23} on which this journal paper is based.
Our initial proof of this result (extended with the mix-rules) is given in the first author's PhD thesis~\cite[Theorem~2.47]{phddiguardia}.
After comparing our work with the one of Cockett and Pastro, the core of the two proofs happens to be very similar.
There are three main differences between them, in addition to the used formalism.
The first one is the choice of the theorems from rewriting theory which are used: Cockett and Pastro apply a result~\cite[Proposition~B.3]{calculusformall} they proved that resembles~\cite[Theorem~2.2]{aototoyama2012} while we used a simpler (but less powerful) theorem~\cite{huet80}.
In particular, we proved a stronger result when closing rewriting diagrams (known as local coherence) while their theorem allowed them to prove a weaker one, that needs less technical tools.
The second difference is a simplification by Cockett and Pastro allowing not to consider $cut-cut$ commutations~\cite[Lemma~C.2]{calculusformall}.
The last one is the norm for normalization: while they provide one that decreases by cut-elimination and is stable through $\top$-free rule commutations, we provide a norm stable by all rule commutations as well as $cut-cut$ commutations (and still decreasing for other cut-elimination steps).
Furthermore, the proof of~\cite[Theorem~5.1]{calculusformall} seems to contains a mistake due to a $\with-\tens$ commutation duplicating a cut-rule (see more explanations in \autoref{subsec:gen_eqc}), and a critical pair is lacking when closing rewriting diagrams.

We present here a fusion of these two demonstrations, in the setting of sequent calculus, with as rewriting theorem~\cite[Theorem~2.2]{aototoyama2012}, with simplifications on $cut-cut$ commutations, corrected diagrams and the lacking case from~\cite{calculusformall}.
We also detail some arguments implicit in the work of Cockett and Pastro, \eg\ why there are no more critical pairs than the ones provided (and the one they forgot).
The main idea to patch the proof of~\cite{calculusformall} is to consider only rule commutations with no $cut$-rule above, as we did in our proof, and then to normalize above the considered rules before applying a commutation between them.
Note also that, contrary to~\cite{calculusformall}, we do not assume proofs to be axiom-expanded since it almost does not impact the reasoning.

We work in sequent calculus, hence by proof we mean a sequent calculus proof, never considering proof-nets.
We use standard notations from relation algebra and rewriting theory: given a relation $\lhd$, $\lhd^*$ (\resp\ $\lhd^+$, \resp\ $\lhd^=$) is the transitive reflexive (\resp\ transitive, \resp\ reflexive) closure of $\lhd$, while $\rhd$ is the converse relation -- symmetric relations will correspond to symmetric symbols.
We denote by $\cdot$ the composition of relations.

We deduce \autoref{cor:CReqc_simpl} from a Church-Rosser property, that will be found by applying a result from Aoto and Tomaya~\cite{aototoyama2012}.

\begin{defi}
\label{def:creq}
Let $\sim$ and $\to$ be relations on a set $A$ such that $\sim$ is an equivalence relation.
The relation $\to$ is \emph{Church-Rosser modulo $\sim$} if $(\to\cup\gets\cup\sim)^*\subseteq ~\to^*\cdot\sim\cdot~\leftindex^*{\gets}$ (see \autoref{fig:def_lconeq_lcoheq}).
\end{defi}

\begin{thmC}[{\cite[Theorem~2.2]{aototoyama2012}}]
\label{lem:aototomaya}
Let $\vdashv$, $\to$ and $\leadsto$ be relations on a set $A$ such that $\vdashv$ is symmetric and $\leadsto~\subseteq~\vdashv$.
Set ${\Rightarrow~=~\to \cup \leadsto}$.
Suppose:
\begin{enumerate}[label=\textnormal{(\roman*)}]
\item\label{item:aototomaya:1}
$\to \cdot \leadsto^*$ is strongly normalizing;
\item\label{item:aototomaya:2}
$\gets \cdot \to~\subseteq~\Rightarrow^* \cdot \vdashveq \cdot \leftindex^*{\Leftarrow}$;
\item\label{item:aototomaya:3}
$\vdashv \cdot \to~\subseteq~(\vdashveq \cdot \leftindex^*{\Leftarrow}) \cup (\to \cdot \Rightarrow^* \cdot \vdashveq \cdot \leftindex^*{\Leftarrow})$.
\end{enumerate}
Then $\to$ is Church-Rosser modulo $\astvdashv$.
\end{thmC}

We denote by $\betabarto$ a $\betato$ step other than a $cut-cut$ commutation, and call $\betabar$-equality $\eqbb$ the equivalence closure of $\betabarto$ steps (as $\beta$-equality $\eqb$ is the equivalence closure of $\betato$ steps).
Also note $\eqcc$ the $cut-cut$ commutation (which is a symmetric relation).

We recall $\eqcone$ (defined in \autoref{def:rule_comm}) is one-step rule commutation \emph{of cut-free MALL}, \ie\ which is not a commutation involving a $cut$-rule nor having above the commuted rules a sub-proof with a $cut$-rule (but it may have a $cut$-rule in its external context); for instance in the $\top-\tens$ commutation creating or deleting a sub-proof $\pi$, $\pi$ is cut-free.\footnote{Actually, we only need it for the $\top-\tens$ and $\with-\tens$ commutations, but ask it for all commutations to homogenize proofs.}
As $\eqcone$ is symmetric, $\eqc$ is an equivalence relation.

We will instantiate \autoref{lem:aototomaya} as follows.
\begin{cor}
\label{lem:aototomaya_use}
Suppose:
\begin{enumerate}[label=\textnormal{(\roman*)}]
\item\label{item:aototomaya_use:1}
$\betabarto \cdot \eqc$ is strongly normalizing;
\item\label{item:aototomaya_use:2}
$\betabarfrom \cdot \betabarto~\subseteq~\betabartostar \cdot \eqceq \cdot \betabarfromstar$;
\item\label{item:aototomaya_use:3}
$\eqcone \cdot \betabarto~\subseteq~\betabarto \cdot \eqc \cdot \betabartostar \cdot \eqc \cdot \betabarfromstar$.
\end{enumerate}
Then $\betabarto$ is Church-Rosser modulo $\eqc$.
\end{cor}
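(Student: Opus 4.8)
The plan is to obtain \autoref{lem:aototomaya_use} as a direct instantiation of \autoref{lem:aototomaya}, choosing the three abstract relations so that the abstract hypotheses specialize exactly to the three concrete hypotheses we assume. Concretely, I would set $\to\ =\ \betabarto$, take the symmetric relation $\vdashv$ to be $\eqcone$ (rule commutation in cut-free MALL, which is symmetric by \autoref{def:rule_comm}), and set $\leadsto\ =\ \eqcone$ as well, so that trivially $\leadsto\ \subseteq\ \vdashv$. With these choices $\Rightarrow\ =\ \to\cup\leadsto\ =\ \betabarto\cup\eqcone$, and the reflexive-transitive closures match up: $\Rightarrow^*$ becomes a mix of $\betabarto$- and $\eqcone$-steps, while $\astvdashv$, the equivalence closure of $\vdashv=\eqcone$, is precisely $\eqc$.

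The main bookkeeping is to check that each abstract hypothesis \ref{item:aototomaya:1}--\ref{item:aototomaya:3} collapses to the corresponding concrete one \ref{item:aototomaya_use:1}--\ref{item:aototomaya_use:3} under this dictionary. For \ref{item:aototomaya:1}, $\to\cdot\leadsto^*\ =\ \betabarto\cdot\eqc$, which is exactly the strong normalization asserted in \ref{item:aototomaya_use:1}. For \ref{item:aototomaya:2}, since $\vdashveq\ =\ \eqceq$ and $\Leftarrow$ stands for the converse of $\Rightarrow$, the inclusion $\gets\cdot\to\ \subseteq\ \Rightarrow^*\cdot\vdashveq\cdot\leftindex^*{\Leftarrow}$ reads $\betabarfrom\cdot\betabarto\ \subseteq\ (\betabarto\cup\eqcone)^*\cdot\eqceq\cdot(\betabarfrom\cup\eqcone)^*$; here I would note that the concrete hypothesis \ref{item:aototomaya_use:2} gives the stronger inclusion into $\betabartostar\cdot\eqceq\cdot\betabarfromstar$, which lies inside the weaker abstract right-hand side, so \ref{item:aototomaya:2} holds. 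For \ref{item:aototomaya:3}, $\vdashv\cdot\to\ =\ \eqcone\cdot\betabarto$, and \ref{item:aototomaya_use:3} places this inside $\betabarto\cdot\eqc\cdot\betabartostar\cdot\eqc\cdot\betabarfromstar$; I would check this is contained in $\to\cdot\Rightarrow^*\cdot\vdashveq\cdot\leftindex^*{\Leftarrow}$ by rewriting the initial $\betabarto$ as the leading $\to$, absorbing $\eqc\cdot\betabartostar\cdot\eqc$ into $\Rightarrow^*$, taking $\vdashveq$ trivially reflexive, and reading $\betabarfromstar$ as part of $\leftindex^*{\Leftarrow}$.

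Having verified \ref{item:aototomaya:1}--\ref{item:aototomaya:3}, \autoref{lem:aototomaya} yields that $\to\ =\ \betabarto$ is Church-Rosser modulo $\astvdashv\ =\ \eqc$, which is the desired conclusion. I expect no serious obstacle here: the real content lives in the three hypotheses \ref{item:aototomaya_use:1}--\ref{item:aototomaya_use:3} (which are assumed in the statement of the corollary and proved separately elsewhere), so the proof of the corollary itself is a matching-up of the abstract template with the concrete relations. The only subtlety worth stating explicitly is the identification $\astvdashv\ =\ \eqc$, i.e.\ that the symmetric closure operation $\astsymbol{\cdot}$ applied to $\eqcone$ produces exactly the equivalence relation $\eqc$; since $\eqcone$ is already symmetric, its transitive reflexive closure is an equivalence relation and coincides with $\eqc$ as defined in the excerpt, so this identification is immediate.
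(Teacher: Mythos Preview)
Your proposal is correct and takes exactly the same approach as the paper: instantiate \autoref{lem:aototomaya} with $\to=\betabarto$, $\vdashv=\eqcone$, $\leadsto=\eqcone$. The paper's proof is a single sentence to this effect; your additional bookkeeping (checking that each concrete hypothesis implies the corresponding abstract one and that $\astvdashv=\eqc$) is correct and just spells out what the paper leaves implicit.
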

\begin{proof}
Direct use of \autoref{lem:aototomaya} with $\vdashv~\coloneqq~\eqcone$, $\leadsto~\coloneqq~\eqcone$ and $\to~\coloneqq~\betabarto$.
\end{proof}
We prove the first item in \autoref{sec:sn_betabar_eqc}, and the two other hypotheses in \autoref{sec:lcon_lcoh}.
Finally, we show in \autoref{sec:no_cut-cut} that $\betabar$-equality and $\beta$-equality coincide, leading to a proof of \autoref{cor:CReqc_simpl}.

\begin{figure}
\centering
\begin{tikzpicture}
\begin{myscope}
	\node[draw=none,minimum size=3mm] (u0) at (0,1) {$\cdot$};
	\node[draw=none,minimum size=3mm] (u1) at (1,2) {$\cdot$};
	\node[draw=none,minimum size=3mm] (u2) at (2,2) {$\cdot$};
	\node[draw=none,minimum size=3mm] (u3) at (3,1) {$\cdot$};
	\node[draw=none,minimum size=3mm] (u4) at (4,1) {$\cdot$};
	\node[draw=none,minimum size=3mm] (u5) at (5,2) {$\cdot$};
	\node[draw=none,minimum size=3mm] (u6) at (6,2) {$\cdot$};
	\node[draw=none,minimum size=3mm] at (6.5,2) {$\dots$};
	\node[draw=none,minimum size=3mm] (u7) at (7,2) {$\cdot$};
	\node[draw=none,minimum size=3mm] (u8) at (8,1) {$\cdot$};

	\path (u1) edge node[draw=none, sloped, align=center]{$*$ \\} (u0);
	\node[draw=none] at (1.5,2) {$\sim$};
	\path (u2) edge node[draw=none, sloped, align=center]{$*$ \\} (u3);
	\node[draw=none] at (3.5,1) {$\sim$};
	\path (u5) edge node[draw=none, sloped, align=center]{$*$ \\} (u4);
	\node[draw=none] at (5.5,2) {$\sim$};
	\path (u7) edge node[draw=none, sloped, align=center]{$*$ \\} (u8);
\end{myscope}
\begin{myscopec}{red}
	\node[draw=none,minimum size=3mm] (bl) at (3.5,0) {$\cdot$};
	\node[draw=none,minimum size=3mm] (br) at (4.5,0) {$\cdot$};
	\node[draw=none] at (4,0) {$\sim$};

	\path (u0) edge[dashed] node[draw=none, sloped, align=center]{$*$ \\} (bl);
	\path (u8) edge[dashed] node[draw=none, sloped, align=center]{$*$ \\} (br);
\end{myscopec}
\end{tikzpicture}
\caption{Diagram of Church-Rosser modulo $\sim$ (\autoref{def:creq}), with hypotheses in solid black and conclusions in dashed red}
\label{fig:def_lconeq_lcoheq}
\end{figure}

\subsection{Strong normalization}
\label{sec:sn_betabar_eqc}

The goal of this section is proving the strong normalization of $\betabarto \cdot \eqc$, namely \autoref{lem:beta_wn}.
We do so by giving a measure which is preserved by $\eqcone$ and decreases during a $\betabarto$ step.
This measure is quite different from the one of Cockett and Pastro~\cite[Section~B.1]{calculusformall}, for ours is stable by all rule commutations and not only $\top$-free ones.
It is also quite original for it depends only on the sequents and formulas we apply the $cut$-rules to, as well as which $cut$-rules are above which other $cut$-rules, but nothing else about the surrounding proof.
Such a measure is usually not good for cut-elimination, and seems hard to come by without a linear point of view.

\begin{defi}
\label{def:mass_formula}
The \emph{mass} $\mass{A}$ of a formula $A$ is a natural number defined by induction:
\begin{itemize}
\item $\mass{X} = \mass{X\orth} = \mass{1} = \mass{\bot} = \mass{\top} = \mass{0} = 2$
\item $\mass{A\tens B} = \mass{A\parr B} = \mass{A\with B} = \mass{A\oplus B} = (\mass{A} + 1) \times (\mass{B} + 1)$
\end{itemize}
We extend this notion to sequents by defining $\mass{A_1,\dots,A_n} = \prod_{i=1}^n \mass{A_i}$ (with the usual convention that the empty product equals $1$).

The \emph{mass} $\mass{c}$ of a $cut$-rule $c$ of the shape
{
\AxiomC{$\fCenter A\orth, \Gamma$}
\AxiomC{$\fCenter A, \Delta$}
\Rcut{$\Gamma, \Delta$}
\DisplayProof}
is defined as $\mass{c} = \mass{A} \times \mass{\Gamma,\Delta}$.
\end{defi}

\begin{fact}
\label{lem:mass_formula_pos}
For any formula $A$, $\mass{A} = \mass{A\orth} > 1$.
\end{fact}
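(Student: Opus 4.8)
The statement to prove is \autoref{lem:mass_formula_pos}: for any formula $A$, $\mass{A} = \mass{A\orth} > 1$.

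My plan is to prove both conjuncts by a single induction on the structure of the formula $A$, following the inductive definition of mass given in \autoref{def:mass_formula} and the inductive definition of orthogonality $(\cdot)\orth$ given in \autoref{subsec:mall}. The key observation driving the proof is that the mass function is defined identically on all four binary connectives ($\tens$, $\parr$, $\with$, $\oplus$) and identically on all six leaf cases (atoms, negated atoms, and the four units), so duality -- which only ever exchanges a connective or unit for one in the same ``mass class'' -- cannot change the mass.

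First I would handle the strict inequality $\mass{A} > 1$. For the base cases (atoms $X$, negated atoms $X\orth$, and the units $1, \bot, \top, 0$) the mass is exactly $2 > 1$ by definition. For the inductive step, if $A$ is built from a binary connective applied to subformulas $B$ and $C$, then $\mass{A} = (\mass{B}+1)\times(\mass{C}+1)$; by the induction hypothesis $\mass{B} > 1$ and $\mass{C} > 1$, so each factor is at least $3$, giving $\mass{A} \geq 9 > 1$. (In fact one only needs $\mass{B} \geq 0$ to get each factor $\geq 1$, but the induction hypothesis gives more than enough.)

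Next I would prove $\mass{A} = \mass{A\orth}$, again by induction on $A$. For the leaf cases, duality maps each leaf to another leaf of the same mass: $X\orth$ has mass $2 = \mass{X}$, and the units are permuted among themselves ($1\orth = \bot$, $\top\orth = 0$, etc.), all of mass $2$, so equality is immediate. For the inductive step, I use the De Morgan laws: for instance $(B\tens C)\orth = C\orth\parr B\orth$, so $\mass{(B\tens C)\orth} = (\mass{C\orth}+1)\times(\mass{B\orth}+1)$, which by the induction hypothesis equals $(\mass{C}+1)\times(\mass{B}+1)$, and this equals $(\mass{B}+1)\times(\mass{C}+1) = \mass{B\tens C}$ by commutativity of multiplication. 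The same computation works verbatim for the other three binary connectives, since each dual is again a binary connective of the same mass class with the two arguments swapped and individually dualized. The swap of arguments is absorbed by commutativity, and the dualization of arguments is absorbed by the induction hypothesis.

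There is no real obstacle here; the result is a routine structural induction. The only point worth care is making sure the mass function genuinely does not distinguish a connective from its De Morgan dual, which is exactly why the definition assigns the same formula $(\mass{A}+1)\times(\mass{B}+1)$ to all four binary connectives and the constant $2$ to all leaves and units -- this uniformity is precisely what makes duality mass-preserving, and it is the design choice that makes this lemma hold.
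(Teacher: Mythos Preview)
Your proposal is correct. The paper states this as a \emph{Fact} with no proof given, treating it as immediate from the definitions; your structural induction is exactly the routine verification one would write out if pressed, and there is nothing to compare against.
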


Taking the multiset of the masses of $cut$-rules as our measure is not enough when the cut-elimination step we apply is not on a top-most $cut$-rule.
The problem comes from a $\with-cut$ commutative case where the duplicated sub-proof contains a cut-rule, whose mass may be bigger than the one of the reduced $cut$-rule.
A solution is to add to a $cut$-rule the masses of all $cut$-rules below it.

\begin{defi}
\label{def:density_proof}
In a proof $\pi$, we define a partial order $\ordercut$ between $cut$-rules by $c' \ordercut c$ when $c$ is in the sub-proof of $\pi$ of root $c'$.
This relation is reflexive: $c \ordercut c$.

The \emph{density} $\density{c}$ of a $cut$-rule $c$ in a proof $\pi$ is defined as $\density{c} = \sum_{c' \ordercut c} \mass{c'}$.

The \emph{density} $\density{\pi}$ of a proof $\pi$ is the multiset of the densities of its $cut$-rules.
\end{defi}

For example, consider the following proof $\pi$, with $cut$-rules $c_1$ to $c_5$:
\begin{center}
\resizebox{\textwidth}{!}{
\Rax{$X$}
\Rax{$X$}
\Rax{$X$}
\RightLabel{$c_1$}
\BinaryInfC{$\fCenter$ $X\orth, X$}
\Rplusone{$X\orth, X\oplus 0$}
\RightLabel{$c_2$}
\BinaryInfC{$\fCenter$ $X\orth, X\oplus 0$}
\Rtop{$\top, X$}
\Rax{$X$}
\Rwith{$\top\with X\orth, X$}
\RightLabel{$c_3$}
\BinaryInfC{$\fCenter$ $X\orth,X$}
\Rax{$X$}
\Rax{$X$}
\RightLabel{$c_4$}
\BinaryInfC{$\fCenter$ $X\orth,X$}
\RightLabel{$c_5$}
\BinaryInfC{$\fCenter$ $X\orth,X$}
\DisplayProof
}
\end{center}
The masses of the $cut$-rules are
$\mass{c_1} = \mass{c_4} = \mass{c_5} = 8$ and $\mass{c_2} = \mass{c_3} = 36$.
The ordering of $cut$-rules is $c_5\ordercut c_3 \ordercut c_2 \ordercut c_1$ and $c_5\ordercut c_4$.
Thus, $\density{c_5} = \mass{c_5}$, $\density{c_4} = \mass{c_5} + \mass{c_4}$, $\density{c_3} = \mass{c_5} + \mass{c_3}$, $\density{c_2} = \mass{c_5} + \mass{c_3} + \mass{c_2}$ and $\density{c_1} = \mass{c_5} + \mass{c_3} + \mass{c_2} + \mass{c_1}$.

Proving that a cut-elimination step $\betabarto$ decreases density is a matter of checking cases, while it is easy to show a rule commutation preserves it.

\begin{lem}
\label{lem:betabar_mass}
Consider a cut-elimination step $\tau \betabarto \phi$ and call $c$ the (unique) $cut$-rule in $\tau$ involved in this step and $(c_i)_{i\in\{ 1 ; \dots ; n \}}$ the $n \in \{ 0 ; 1 ; 2 \}$ resulting $cut$-rules in $\phi$.
Then $\mass{c} > \sum_{i=1}^n \mass{c_i}$.
\end{lem}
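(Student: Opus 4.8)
The statement to prove is \autoref{lem:betabar_mass}: for a single cut-elimination step $\tau \betabarto \phi$ with involved $cut$-rule $c$ producing the $cut$-rules $(c_i)_{i=1}^n$ (with $n \in \{0;1;2\}$), we must show the strict inequality $\mass{c} > \sum_{i=1}^n \mass{c_i}$. Since $\betabarto$ is a single cut-elimination step that is not a $cut-cut$ commutation, the natural approach is an exhaustive case analysis following the key cases of \autoref{tab:cut_elim_pt_key} and the commutative cases of \autoref{tab:cut_elim_pt} (excluding $cut-cut$). In every case I would write down the cut formula $A$ and the context $\Gamma,\Delta$ of $c$, compute $\mass{c} = \mass{A} \times \mass{\Gamma,\Delta}$, then compute the masses of the resulting cuts $c_i$ from their own cut formulas and contexts, and verify the inequality. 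The crucial arithmetic facts I would rely on are \autoref{lem:mass_formula_pos} (so that $\mass{A} = \mass{A\orth} > 1$ for every formula, hence $\mass{B} \geq 2$ always), and the multiplicativity built into \autoref{def:mass_formula}, namely $\mass{A\odot B} = (\mass{A}+1)(\mass{B}+1)$ for each binary connective $\odot$, together with $\mass{\Sigma_1,\Sigma_2} = \mass{\Sigma_1}\times\mass{\Sigma_2}$ for sequents.

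\textbf{The erasing and key cases.} The cases with $n=0$ (the $\top-cut$ case, where the whole right branch is erased and no cut remains) are immediate, since $\mass{c} > 0 = \sum_{i=1}^0 \mass{c_i}$. For the $ax$ and $\bot-1$ key cases, again $n=0$, so these are trivial as well. The interesting key cases are $\parr-\tens$ and $\with-\oplus_i$, which produce $n=2$ and $n=1$ cuts respectively on \emph{strict sub-formulas} of $A$. Here I expect the inequality to follow from the product structure: if $A = A_1 \tens A_2$ so that $\mass{A} = (\mass{A_1}+1)(\mass{A_2}+1)$, and the two resulting cuts are on $A_1$ and $A_2$ with the contexts split appropriately, then the key algebraic step is a bound of the form $(\mass{A_1}+1)(\mass{A_2}+1) \times \mass{\Gamma,\Delta,\Sigma} > \mass{A_1}\times\mass{\Gamma,\Delta,\Sigma,\dots} + \mass{A_2}\times\mass{\dots}$, which holds because $(m_1+1)(m_2+1) = m_1 m_2 + m_1 + m_2 + 1 > m_1 + m_2$ once we account for the context factors (and every context factor is $\geq 1$, while the shared context contributes multiplicatively to $\mass{A}$'s side more than additively on the right).

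\textbf{The commutative cases and the main obstacle.} For the commutative cases of \autoref{tab:cut_elim_pt} the cut formula $A$ is unchanged but the context shrinks: a connective is pushed below the cut, so on the $\phi$ side the resulting cut(s) have a context missing one formula (now reconstructed below the cut), strictly decreasing the context mass. Since $\mass{\cdot}$ of a formula is always $> 1$, removing a formula from the context of a cut strictly divides down its mass, giving the strict inequality even when $n=1$. The genuinely delicate case is $\with-cut$, which is the one that motivated the whole density construction: here $n=2$ and the right sub-proof is \emph{duplicated}, so two cuts on the \emph{same} formula $A$ appear. I expect this to be the main obstacle, because naively $2\mass{A}$ could exceed $\mass{c}$. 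The resolution is that in $\with-cut$ the cut formula in $\tau$ is $A$ but its context on the $\with$-side contains the formula $B \with C$, so $\mass{c} = \mass{A}\times\mass{B\with C}\times\mass{\Gamma,\Delta}$ where $\mass{B\with C}=(\mass{B}+1)(\mass{C}+1)$; the two resulting cuts have contexts containing only $B$ (resp.\ $C$) in place of $B \with C$, giving total mass $\mass{A}\times\mass{\Gamma,\Delta}\times(\mass{B}+\mass{C})$. The inequality then reduces to $(\mass{B}+1)(\mass{C}+1) > \mass{B}+\mass{C}$, which is clear. I would organize the write-up as a table or itemized list mirroring the cut-elimination tables, treating the $\with-cut$ case with explicit attention to the duplicated context, and checking the remaining commutative cases ($\parr-cut$, $\tens-cut-1$, $\tens-cut-2$, $\oplus_1-cut$, $\oplus_2-cut$, $\bot-cut$) uniformly via the "context strictly shrinks, formula masses exceed $1$" principle.
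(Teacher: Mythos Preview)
Your proposal is correct and follows essentially the same approach as the paper: an exhaustive case analysis over the key and commutative cut-elimination steps, with the $\with$-$cut$ case singled out and resolved via $(\mass{B}+1)(\mass{C}+1) > \mass{B}+\mass{C}$. One small imprecision: in the $\parr$-$cut$ commutative case the context does not literally lose a formula but trades $B\parr C$ for the pair $B,C$, so the relevant inequality is $(\mass{B}+1)(\mass{C}+1) > \mass{B}\cdot\mass{C}$ rather than a pure ``removal'' argument---but this is exactly what the paper computes and you would see it immediately upon writing the case out.
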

\begin{proof}
It suffices to compute the masses before and after each cut-elimination step.
In the following case study, we use implicitly \autoref{lem:mass_formula_pos}.

\proofpar{$ax$ key case}
Here

\begin{adjustbox}{}
$
\tau = \quad
\Rax{$A$}
\Rsub{$\pi$}{$A, \Gamma$}
\RightLabel{$c$}
\BinaryInfC{$\fCenter A,\Gamma$}
\noLine
\UnaryInfC{$\rho$}
\DisplayProof
\qquad
\phi = \quad
\Rsub{$\pi$}{$A, \Gamma$}
\noLine
\UnaryInfC{$\rho$}
\DisplayProof
$
\end{adjustbox}
We have $n = 0$ and $\mass{c} = \mass{A} \times \mass{A} \times \mass{\Gamma} > 0 = \sum_{i=1}^n \mass{c_i}$.

\proofpar{$\parr-\tens$ key case}
Here

\begin{adjustbox}{}
$
\tau = \quad
\Rsub{$\pi_1$}{$A, \Gamma$}
\Rsub{$\pi_2$}{$B, \Delta$}
\Rtens{$A\tens B,\Gamma,\Delta$}
\Rsub{$\pi_3$}{$B\orth,A\orth,\Sigma$}
\Rparr{$B\orth\parr A\orth,\Sigma$}
\RightLabel{$c$}
\BinaryInfC{$\fCenter \Gamma,\Delta,\Sigma$}
\noLine
\UnaryInfC{$\rho$}
\DisplayProof
\qquad
\phi = \quad
\Rsub{$\pi_1$}{$A, \Gamma$}
\Rsub{$\pi_2$}{$B, \Delta$}
\Rsub{$\pi_3$}{$B\orth,A\orth,\Sigma$}
\RightLabel{$c_2$}
\BinaryInfC{$\fCenter A\orth,\Delta,\Sigma$}
\RightLabel{$c_1$}
\BinaryInfC{$\fCenter \Gamma,\Delta,\Sigma$}
\noLine
\UnaryInfC{$\rho$}
\DisplayProof
$
\end{adjustbox}
We have $n = 2$ and:
\begin{flalign*}
& \mass{c} - \sum_{i=1}^n \mass{c_i}\\
&= \mass{A\tens B} \times \mass{\Gamma, \Delta, \Sigma} - \mass{A} \times \mass{\Gamma, \Delta, \Sigma} - \mass{B} \times \mass{A\orth, \Delta, \Sigma}\\
&\geq (\mass{A\tens B} - \mass{A} - \mass{B} \times \mass{A}) \times \mass{\Gamma, \Delta, \Sigma}\\
&= (\mass{B} + 1) \times \mass{\Gamma, \Delta, \Sigma} > 0
\end{flalign*}

\proofpar{$\with-\oplus_1$ key case}
Here

\begin{adjustbox}{}
$
\tau = \quad
\Rsub{$\pi_1$}{$A, \Gamma$}
\Rsub{$\pi_2$}{$B, \Gamma$}
\Rwith{$A\with B,\Gamma$}
\Rsub{$\pi_3$}{$B\orth,\Delta$}
\Rplusone{$B\orth\oplus A\orth,\Delta$}
\RightLabel{$c$}
\BinaryInfC{$\fCenter \Gamma,\Delta$}
\noLine
\UnaryInfC{$\rho$}
\DisplayProof
\qquad
\phi = \quad
\Rsub{$\pi_2$}{$B, \Gamma$}
\Rsub{$\pi_3$}{$B\orth,\Delta$}
\RightLabel{$c_1$}
\BinaryInfC{$\fCenter \Gamma,\Delta$}
\noLine
\UnaryInfC{$\rho$}
\DisplayProof
$
\end{adjustbox}
We have $n = 1$ and $\mass{c} = \mass{A \with B} \times \mass{\Gamma,\Delta} > \mass{B} \times \mass{\Gamma,\Delta} = \sum_{i=1}^n \mass{c_i}$.

\proofpar{$\with-\oplus_2$ key case}
This case is very similar to the $\with-\oplus_1$ key case.

\proofpar{$\bot-1$ key case}
Here

\begin{adjustbox}{}
$
\tau = \quad
\Rone{}
\Rsub{$\pi$}{$\Gamma$}
\Rbot{$\Gamma,\bot$}
\RightLabel{$c$}
\BinaryInfC{$\fCenter \Gamma$}
\noLine
\UnaryInfC{$\rho$}
\DisplayProof
\qquad
\phi = \quad
\Rsub{$\pi$}{$\Gamma$}
\noLine
\UnaryInfC{$\rho$}
\DisplayProof
$
\end{adjustbox}
We have $n = 0$ and $\mass{c} = 2 \times \mass{\Gamma}  > 0 = \sum_{i=1}^n \mass{c_i}$.

\proofpar{$\parr-cut$ commutative case}
Here

\begin{adjustbox}{}
$
\tau = \quad
\Rsub{$\pi_1$}{$A,B,C,\Gamma$}
\Rparr{$A,B\parr C,\Gamma$}
\Rsub{$\pi_2$}{$A\orth,\Delta$}
\RightLabel{$c$}
\BinaryInfC{$\fCenter B\parr C,\Gamma,\Delta$}
\noLine
\UnaryInfC{$\rho$}
\DisplayProof
\qquad
\phi = \quad
\Rsub{$\pi_1$}{$A,B,C,\Gamma$}
\Rsub{$\pi_2$}{$A\orth,\Delta$}
\RightLabel{$c_1$}
\BinaryInfC{$\fCenter B, C,\Gamma,\Delta$}
\Rparr{$B\parr C,\Gamma,\Delta$}
\noLine
\UnaryInfC{$\rho$}
\DisplayProof
$
\end{adjustbox}
We have $n = 1$ and:
\begin{flalign*}
& \mass{c} - \sum_{i=1}^n \mass{c_i}\\
&= \mass{A} \times \mass{B\parr C} \times \mass{\Gamma, \Delta} - \mass{A} \times \mass{B} \times \mass{C} \times \mass{\Gamma, \Delta} \\
&= \mass{A} \times (\mass{B\parr C} - \mass{B} \times \mass{C}) \times \mass{\Gamma, \Delta}\\
&= \mass{A} \times (\mass{B} + \mass{C} + 1) \times \mass{\Gamma, \Delta} > 0
\end{flalign*}

\proofpar{$\tens-cut-1$, $\tens-cut-2$, $\oplus_1-cut$, $\oplus_2-cut$ and $\bot-cut$ commutative cases}
These cases are quite similar to the $\parr-cut$ commutative case.

\proofpar{$\with-cut$ commutative case}
Here

\begin{adjustbox}{}
$
\tau = \quad
\Rsub{$\pi_1$}{$A,B,\Gamma$}
\Rsub{$\pi_2$}{$A,C,\Gamma$}
\Rwith{$A,B\with C,\Gamma$}
\Rsub{$\pi_3$}{$A\orth,\Delta$}
\RightLabel{$c$}
\BinaryInfC{$\fCenter B\with C,\Gamma,\Delta$}
\noLine
\UnaryInfC{$\rho$}
\DisplayProof
\qquad
\phi = \quad
\Rsub{$\pi_1$}{$A,B,\Gamma$}
\Rsub{$\pi_3$}{$A\orth,\Delta$}
\RightLabel{$c_1$}
\BinaryInfC{$\fCenter B,\Gamma,\Delta$}
\Rsub{$\pi_2$}{$A,C,\Gamma$}
\Rsub{$\pi_3$}{$A\orth,\Delta$}
\RightLabel{$c_2$}
\BinaryInfC{$\fCenter C,\Gamma,\Delta$}
\Rwith{$B\with C,\Gamma,\Delta$}
\noLine
\UnaryInfC{$\rho$}
\DisplayProof
$
\end{adjustbox}
We have $n = 2$ and:
\begin{flalign*}
& \mass{c} - \sum_{i=1}^n \mass{c_i}\\
&= \mass{A} \times \mass{\Gamma,\Delta} \times \mass{B\with C} - \mass{A} \times \mass{\Gamma,\Delta} \times \mass{B} -\mass{A} \times \mass{\Gamma,\Delta} \times \mass{C} \\
&= \mass{A} \times \mass{\Gamma, \Delta} \times (\mass{B\with C} - \mass{B} - \mass{C})\\
&= \mass{A} \times \mass{\Gamma, \Delta} \times (\mass{B} \times \mass{C} + 1) > 0
\end{flalign*}

\proofpar{$\top-cut$ commutative case}
Here

\begin{adjustbox}{}
$
\tau = \quad
\Rtop{$A,\top,\Gamma$}
\Rsub{$\pi$}{$A\orth,\Delta$}
\RightLabel{$c$}
\BinaryInfC{$\fCenter \top,\Gamma,\Delta$}
\noLine
\UnaryInfC{$\rho$}
\DisplayProof
\qquad\qquad
\phi = \quad
\Rtop{$\top,\Gamma,\Delta$}
\noLine
\UnaryInfC{$\rho$}
\DisplayProof
$
\end{adjustbox}
We have $n = 0$ and $\mass{c} = 2 \times \mass{A} \times \mass{\Gamma,\Delta} > 0 = \sum_{i=1}^n \mass{c_i}$.
\end{proof}

\begin{lem}
\label{lem:betabar_density}
If $\tau \betabarto \phi$ then $\density{\tau} > \density{\phi}$.
\end{lem}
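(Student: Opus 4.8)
The goal is to show that a single $\betabarto$ step strictly decreases the density $\density{\cdot}$, which is a multiset of natural numbers ordered by the standard multiset (Dershowitz--Manna) ordering. The plan is to analyze how a cut-elimination step $\tau \betabarto \phi$ affects the underlying poset of $cut$-rules and their masses, and then to argue that this translates into a strict multiset decrease. Recall that $\density{\pi}$ is the multiset $\{\density{c} \mid c \text{ a } cut\text{-rule of } \pi\}$, where $\density{c} = \sum_{c' \ordercut c} \mass{c'}$ sums the masses of all $cut$-rules lying below $c$ (including $c$ itself).

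First I would fix notation: let $c$ be the $cut$-rule in $\tau$ that is reduced by the step, and let $c_1, \dots, c_n$ (with $n \in \{0;1;2\}$) be the resulting $cut$-rules in $\phi$. The crucial structural observation is that the poset of $cut$-rules is only modified locally: every $cut$-rule $d$ strictly \emph{below} $c$ in $\tau$ (\ie\ with $c \ordercut d$ but $c \neq d$) has a unique counterpart in $\phi$, and the set of $cut$-rules below that counterpart is obtained from the set below $d$ by replacing $c$ with the $c_i$ that survive below it. Here one must be careful about the $\with-cut$ commutative case: it duplicates the sub-proof $\pi_3$, so a $cut$-rule sitting inside $\pi_3$ gets \emph{two} copies in $\phi$ — this is precisely the subtlety that motivated augmenting the mass into the density in \autoref{def:density_proof}. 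For $cut$-rules strictly \emph{above} $c$, or incomparable with $c$, their own mass and their set of lower $cut$-rules are affected only through the replacement of $c$ by the $c_i$ below them.

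The key quantitative input is \autoref{lem:betabar_mass}, which gives $\mass{c} > \sum_{i=1}^n \mass{c_i}$. The plan is to leverage this as follows. Consider any $cut$-rule $d$ of $\tau$ that remains (possibly duplicated) in $\phi$. If $d$ is not below $c$, then $c$ does not contribute to $\density{d}$ in $\tau$, and the corresponding density in $\phi$ can only stay the same or decrease, since $\mass{c}$ was not part of the sum and the $c_i$ that lie below $d$ replace nothing or replace already-smaller contributions; in the $\with-cut$ duplication, the two copies of an internal $cut$-rule each acquire strictly smaller densities than $d$ had, because the reduced $c$ of mass $\mass{c}$ is replaced by the smaller $c_i$. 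If $d$ lies strictly below $c$, then $\density{d}$ in $\tau$ contains the summand $\mass{c}$, which is replaced in $\phi$ by $\sum_i \mass{c_i}$ (restricted to those $c_i$ that remain below the counterpart of $d$), and this is strictly smaller by \autoref{lem:betabar_mass}; all other summands are unchanged. Finally, the element $\density{c}$ itself is removed from the multiset and replaced by $\density{c_1}, \dots, \density{c_n}$, each of which is strictly smaller than $\density{c}$, again by \autoref{lem:betabar_mass} combined with the fact that the $cut$-rules below each $c_i$ form a subset of those below $c$.

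Assembling these observations, every element of the multiset $\density{\phi}$ is dominated by some element of $\density{\tau}$ that it replaces, with at least one strict decrease occurring (either $\density{c}$ being split into strictly smaller pieces, or an erasure when $n=0$). By the definition of the multiset ordering, this yields $\density{\tau} > \density{\phi}$. The main obstacle I anticipate is bookkeeping the $\with-cut$ case correctly: one must verify that even though the number of $cut$-rules can increase by duplication, each duplicated copy inherits a \emph{strictly smaller} density than its ancestor, so that the multiset comparison still goes through. This is exactly where the decision to sum masses downward (rather than take the bare multiset of masses) pays off, since the duplicated copies sit below the same lower $cut$-rules but now reference the reduced, smaller $c_i$ in place of $c$.
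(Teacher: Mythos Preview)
Your approach is the same as the paper's: invoke \autoref{lem:betabar_mass} and track how the density of each $cut$-rule changes, with special attention to the $\with-cut$ duplication. The paper's proof is essentially the paragraph you wrote at the end.

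However, your write-up has a systematic above/below confusion that makes the case analysis formally wrong as stated. You correctly recall that $\density{d}=\sum_{c'\ordercut d}\mass{c'}$ sums over cuts \emph{below} $d$ (toward the root). Hence $\mass{c}$ appears as a summand in $\density{d}$ precisely when $c\ordercut d$, that is, when $d$ lies \emph{above} $c$ --- not below. Your parenthetical ``\ie\ with $c\ordercut d$'' is the right relation, but labelling this ``$d$ strictly below $c$'' is backwards; and the subsequent case split inherits the error (``if $d$ is not below $c$ then $c$ does not contribute to $\density{d}$'' is false when $d$ is above $c$). The cuts that may be duplicated in the $\with-cut$ case are those in $\pi_3$, which sit \emph{above} $c$, so they belong to the case where $\mass{c}$ is a summand, not to the other one. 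If you swap the two case labels (and note that cuts strictly below or incomparable to $c$ keep their density unchanged, since neither $c$ nor any $c_i$ lies below them), your argument becomes the paper's. One further small imprecision: in the $\parr-\tens$ key case the set of cuts below $c_2$ is not literally a subset of those below $c$ (it contains the new $c_1$), but your conclusion $\density{c_i}<\density{c}$ still follows from $\mass{c_1}+\mass{c_2}<\mass{c}$.
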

\begin{proof}
Using \autoref{lem:betabar_mass}, each step replaces a $cut$-rule with $cut$-rules whose sum of masses is smaller.
Other masses stay the same as the only modified sequents are not the conclusion sequent of other $cut$-rules than the one eliminated.
The only reduction where there can be more $cut$-rules below a given one, notwithstanding the replacement of the eliminated one, is when a sub-proof containing a $cut$-rule is duplicated in a $\with-cut$ commutative case.
In such a case, the only $cut$-rules not keeping their densities are $c$ the eliminated one -- yielding two $cut$-rules $c_1$ and $c_2$ of smaller masses -- and the $cut$-rules above it, by definition of $\ordercut$.
Any duplicated $cut$-rule goes from a density $\alpha + \mass{c}$ to two $cut$-rules of smaller densities $\alpha + \mass{c_1}$ and $\alpha + \mass{c_2}$, and other $cut$-rules above go from $\alpha + \mass{c}$ to either $\alpha + \mass{c_1}$ or $\alpha + \mass{c_2}$.
Thus, $\density{\tau} > \density{\phi}$.
\end{proof}

\begin{lem}
\label{lem:eqc_density}
If $\pi \eqcone \pi'$, then $\density{\pi} = \density{\pi'}$.
\end{lem}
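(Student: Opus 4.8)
The plan is to verify that the commutation leaves untouched every datum on which a cut-rule's density depends. The key structural fact I would extract from \autoref{def:rule_comm} is that a one-step rule commutation $\eqcone$ reorders two non-$cut$-rules with \emph{no $cut$-rule above them}; moreover, in a $\comm{\top}{\tens}$ commutation the created or erased sub-proof is cut-free, and in a $\comm{\with}{\tens}$ commutation the duplicated or superimposed sub-proof is cut-free. Consequently, no $cut$-rule is created, erased, or duplicated by a $\eqcone$ step, and every $cut$-rule of $\pi$ sits strictly below the two commuted rules.

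Next I would observe that each equation of Tables~\ref{tab:rule_comm} and~\ref{tab:rule_comm_u} has the same conclusion sequent on both sides: a rule commutation preserves the conclusion of the block it acts on. Hence the entire portion of the proof at and below the commutation — in particular every $cut$-rule $c$, together with its cut formula and its conclusion sequent — is identical in $\pi$ and $\pi'$. By \autoref{def:mass_formula}, the mass $\mass{c}$ is therefore the same in both proofs.

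It then remains to check that the order $\ordercut$ restricted to $cut$-rules is preserved. Since all $cut$-rules lie below the commuted rules, which are merely permuted together with the (cut-free) material above them, the relative position of any two $cut$-rules is unchanged; thus $c'\ordercut c$ holds in $\pi$ exactly when it holds in $\pi'$. Combining this with the invariance of masses, each density $\density{c} = \sum_{c' \ordercut c} \mass{c'}$ is unchanged, and therefore the multisets coincide: $\density{\pi} = \density{\pi'}$.

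The step I expect to require the most care is the treatment of the $\comm{\top}{\tens}$ and $\comm{\with}{\tens}$ commutations, which create, erase, or duplicate sub-proofs. Everything hinges here on the cut-free restriction built into $\eqcone$: without it, a $\comm{\with}{\tens}$ commutation could duplicate a sub-proof containing a $cut$-rule, duplicating a summand of the density multiset and breaking the invariant — precisely the subtlety flagged in \autoref{subsec:gen_eqc}. Granting the restriction, these cases reduce to the generic argument above.
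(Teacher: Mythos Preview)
Your proof is correct and follows essentially the same approach as the paper's: both argue that the cut-free restriction on $\eqcone$ prevents any $cut$-rule from being created, erased, or duplicated, and that the conclusion sequents (hence masses) and relative positions of $cut$-rules are unchanged. One minor imprecision: a $cut$-rule need not sit strictly \emph{below} the commuted rules---it could lie in an incomparable branch---but your argument goes through unchanged since such a rule is equally unaffected.
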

\begin{proof}
As rule commutations act below a cut-free proof, they cannot erase nor duplicate $cut$-rules.
As a rule commutation only changes the sequents between the rules it commutes, it does not change the sequent below any $cut$-rule, and in particular does not modify the mass of any $cut$-rule.
Thence, it preserves the density of all $cut$-rules, which depends only on the masses of the $cut$-rules.
\end{proof}

\begin{prop}
\label{lem:beta_wn}
The relation $\betabarto \cdot \eqc$ is strongly normalizing.
\end{prop}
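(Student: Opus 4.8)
The plan is to use the density $\density{\cdot}$ built in Definitions~\ref{def:mass_formula} and~\ref{def:density_proof} as a termination measure, together with the two preservation/decrease lemmas just established. Recall that $\density{\pi}$ is a \emph{finite multiset of natural numbers}, and such multisets are to be compared via the Dershowitz--Manna multiset ordering. Since the natural numbers are well-founded, this multiset ordering is itself well-founded, so it suffices to exhibit a strictly decreasing measure along every step of $\betabarto \cdot \eqc$.

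First I would unfold what a single step of the composite relation $\betabarto \cdot \eqc$ is: it is one cut-elimination step $\pi \betabarto \phi$ (not a $cut-cut$ commutation), followed by a possibly empty sequence of rule commutations $\phi \eqc \psi$. I would then apply \autoref{lem:betabar_density} to the $\betabar$-step to obtain $\density{\pi} > \density{\phi}$ in the multiset order. Next, since $\eqc$ is the reflexive transitive closure of $\eqcone$, I would apply \autoref{lem:eqc_density} along each $\eqcone$ step occurring in $\phi \eqc \psi$ to get $\density{\phi} = \density{\psi}$ (the empty commutation sequence giving this equality trivially via reflexivity). Combining the two yields $\density{\pi} > \density{\psi}$, so that every step of $\betabarto \cdot \eqc$ strictly decreases the density.

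Finally, well-foundedness of the multiset ordering forbids any infinite strictly decreasing chain of densities, hence there is no infinite sequence of $\betabarto \cdot \eqc$ steps, which is exactly strong normalization. There is no real obstacle left here, as both ingredients are already proved; the point to emphasize is precisely the interplay that makes the argument work: the measure must be \emph{exactly preserved} (not merely bounded) by rule commutations, so that interleaving arbitrarily many commutations with a single $\betabar$-step cannot compensate the strict decrease. This is why it was essential earlier to design a density that is stable under \emph{all} rule commutations (and not only the $\top$-free ones).
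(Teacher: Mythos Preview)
Your argument is correct and follows exactly the paper's approach: combine \autoref{lem:betabar_density} (strict decrease of density under $\betabarto$) with \autoref{lem:eqc_density} (preservation of density under $\eqcone$, hence under $\eqc$) to conclude that each $\betabarto \cdot \eqc$ step strictly decreases the well-founded multiset measure. The paper's proof is just a two-line compression of what you wrote.
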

\begin{proof}
By Lemmas~\ref{lem:betabar_density} and~\ref{lem:eqc_density}, a step of $\betabarto$ decreases the density of a proof while one of $\eqcone$ preserves it.
Hence, a step of $\betabarto \cdot \eqc$ decreases the density, ensuring termination.
\end{proof}

\begin{cor}
\label{cor:beta_wn}
Cut-elimination $\betato$ and $\betabarto$ are weakly normalizing, with as normal forms cut-free proofs.
\end{cor}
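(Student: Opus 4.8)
The final statement to prove is \autoref{cor:beta_wn}: cut-elimination $\betato$ and $\betabarto$ are weakly normalizing, with cut-free proofs as normal forms. The plan is to bootstrap this from the strong normalization of the composite relation $\betabarto \cdot \eqc$ established in \autoref{lem:beta_wn}, together with the observation that rule commutations never create or remove cuts.

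First I would handle $\betabarto$. Since \autoref{lem:beta_wn} asserts that $\betabarto \cdot \eqc$ is strongly normalizing, and $\betabarto$ is a special case of $\betabarto \cdot \eqc$ (taking the empty $\eqc$-sequence, using that $\eqc$ is reflexive), it follows immediately that $\betabarto$ alone is strongly normalizing, hence in particular weakly normalizing. A proof in $\betabar$-normal form cannot contain any cut that is reducible by a non-$(cut\text{-}cut)$ cut-elimination step. The key point is then to argue that such a proof must in fact be cut-free: if a proof still contained a $cut$-rule, one could always find a topmost $cut$-rule, and by inspecting the cut-elimination tables (Tables~\ref{tab:cut_elim_pt_key} and~\ref{tab:cut_elim_pt}), some reduction step applies to it. The only potentially problematic step is a $cut$-$cut$ commutation, but that merely permutes two adjacent cuts rather than eliminating one; so a topmost cut whose premises are introduced by logical or unit rules always admits a key or commutative case, and by induction on the structure one reaches a genuine $\betabarto$-redex. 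Thus every $\betabar$-normal form is cut-free.

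Next I would treat $\betato$, which includes the $cut$-$cut$ commutations $\eqcc$. Here the obstacle is that $\betato$ contains $\eqcc$, a symmetric relation that does not decrease the density measure of \autoref{def:density_proof}, so one cannot directly invoke strong normalization. The strategy for weak normalization is to produce, for any proof $\pi$, some terminating reduction sequence to a cut-free proof. The plan is to reduce cuts by always selecting a topmost $cut$-rule and applying a $\betabarto$ step to it whenever possible — deliberately never performing a $cut$-$cut$ commutation. Since $\betabarto$ is strongly normalizing, this procedure terminates; and by the argument of the previous paragraph, the proof it terminates at is cut-free. Because $\betabarto \subseteq \betato$, this same sequence is a valid $\betato$-reduction to a cut-free form, witnessing that $\betato$ is weakly normalizing with cut-free normal forms.

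The main obstacle I anticipate is the careful justification that a $\betabar$-normal proof is cut-free, i.e.\ that a topmost cut can always be reduced by some step other than a $cut$-$cut$ commutation. This requires checking that the premises of a topmost cut, being introduced by non-cut rules, always match one of the key or commutative cases in the cut-elimination tables; the $cut$-$cut$ case is excluded precisely because both premises being cut-free above means at most one of the two cuts in a $cut$-$cut$ pattern is topmost. Everything else is routine: the strong normalization of $\betabarto$ follows formally from \autoref{lem:beta_wn} and reflexivity of $\eqc$, and the weak normalization of the full $\betato$ reuses the same reduction sequence.
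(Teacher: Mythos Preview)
Your proposal is correct and follows essentially the same approach as the paper: deduce strong (hence weak) normalization of $\betabarto$ from \autoref{lem:beta_wn} via $\betabarto\subseteq\betabarto\cdot\eqc$, observe that any $\betabar$-normal form is cut-free (since a topmost cut always admits a non-$cut$-$cut$ reduction), and conclude weak normalization of $\betato$ by reusing the same $\betabarto$-sequence. The paper's proof is just a terser version of yours; your extra discussion of the topmost-cut argument and the explicit reduction strategy for $\betato$ is sound but more detailed than necessary (any $\betabarto$-sequence works, not only the topmost-first one, and no ``induction on the structure'' is needed since the topmost cut is directly a $\betabarto$-redex).
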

\begin{proof}
Using \autoref{lem:beta_wn}, one can reach a normal form for $\betabarto~\subseteq~\betato$.
But proofs in normal form for $\betabarto$ correspond to cut-free proofs: as long as there is a $cut$-rule, a $\betabarto$ step can be applied.
Thus, no $\eqcc$ step can be applied on the reached normal form either.
\end{proof}

Having a measure preserved not only by rule commutations, but also by cut-cut commutations, is doable but much more complicated~\cite{phddiguardia}.


\subsection{Reduction diagrams}
\label{sec:lcon_lcoh}

We prove here Items~\ref{item:aototomaya_use:2} and~\ref{item:aototomaya_use:3} of \autoref{lem:aototomaya_use}.

In this section we denote graphically some proofs with the following convention.
When writing proofs as in

\begin{adjustbox}{}
\AxiomC{$\rho$}
\RightLabel{$r_2$}\UnaryInfC{\emptyforproof}
\RightLabel{$r_1$}\UnaryInfC{\emptyforproof}
\DisplayProof
$\eqcone$
\AxiomC{$\rho$}
\RightLabel{$r_1$}\UnaryInfC{\emptyforproof}
\RightLabel{$r_2$}\UnaryInfC{\emptyforproof}
\DisplayProof
\end{adjustbox}
we abuse notations in the cases where $r_1$ or $r_2$ is a $\with$- or $\top$-rule.
The meaning is that, if say $r_1$ is a $\with$-rule, then $r_2$ is duplicated, and even possibly a whole sub-proof if $r_2$ is a $\tens$-rule for instance.
Similarly, if $r_1$ is a $\top$-rule, then this schema means that on the left hand-side $r_2$ and $\rho$ are not here, and are created by the $\top$-commutation.

We begin by proving \autoref{item:aototomaya_use:2}, corresponding to reduction diagrams~0, 1 and~2 of~\cite[Section~B.2.1]{calculusformall} (including all three versions of these diagrams, namely one without units, one for additive units and one for multiplicative units).
The main difference compared to~\cite{calculusformall} is on the cases corresponding to their reduction diagram~2, which here corresponds to \autoref{fig:proof_betafrom_betato}: since rule commutations are allowed only in the absence of $cut$-rules above, we have to normalize the sub-proof(s) above the considered rules before commuting them.

\begin{lem}
\label{lem:betafrom_betato}
Let $\pi$, $\pi_1$ and $\pi_2$ be MALL proofs such that $\pi_1\betabarfrom\pi\betabarto\pi_2$.
Then there exist $\pi_1'$ and $\pi_2'$ such that $\pi_1\betabartostar\pi_1'\eqceq\pi_2'\betabarfromstar\pi_2$.
Diagrammatically:
\begin{center}
\begin{tikzpicture}
\begin{myscope}
	\node[draw=none,minimum size=3mm] (pit) at (0,2) {$\pi$};
	\node[draw=none,minimum size=3mm] (pil) at (-1,1) {$\pi_1$};
	\node[draw=none,minimum size=3mm] (pir) at (1,1) {$\pi_2$};

	\path (pit) \edgelabel{$\betabar$} (pil);
	\path (pit) \edgelabel{$\betabar$} (pir);
\end{myscope}
\begin{myscopec}{red}
	\node[draw=none,minimum size=3mm] (varpil) at (-.5,0) {$\pi_1'$};
	\node[draw=none,minimum size=3mm] (varpir) at (.5,0) {$\pi_2'$};

	\path (pil) edge[dashed] node[draw=none, sloped, align=center]{$\betabar^*$ \\} (varpil);
	\node[draw=none] at (0,0) {$\eqceq$};
	\path (pir) edge[dashed] node[draw=none, sloped, align=center]{$\betabar^*$ \\} (varpir);
\end{myscopec}
\end{tikzpicture}
\end{center}
More precisely, we need a step of $\eqcone$ exactly when both $\betabarto$ steps are different commutative cases on the same $cut$-rule.
\end{lem}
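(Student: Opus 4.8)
The plan is to prove this local confluence result (\autoref{item:aototomaya_use:2}) by a case analysis on how the two $\betabarto$ steps from $\pi$ interact, following the standard critical-pair methodology but adapted to the constraint that rule commutations $\eqcone$ are only allowed above cut-free subproofs. I would organize the cases according to the relative positions of the two $cut$-rules $c$ and $c'$ that are reduced by $\pi\betabarto\pi_1$ and $\pi\betabarto\pi_2$.

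First I would dispatch the \emph{disjoint} case: if the two steps act on $cut$-rules that are in incomparable positions (neither above the other, and not the same $cut$-rule), then the two reductions commute strictly, giving $\pi_1\betabarto\pi_2'$ and $\pi_2\betabarto\pi_1'$ with $\pi_1'=\pi_2'$, so we close with $\eqceq$ being equality. Next I would handle the case where the two steps are on the \emph{same} $cut$-rule $c$: here at most one of them can be a key case (since a key case fully determines the reduct), and if both are commutative cases pushing $c$ in the same direction they coincide; the delicate sub-case is when both are \emph{different} commutative cases on $c$, \eg\ $c$ commuting upward into its left premise versus into its right premise. This is exactly where the statement warns that a genuine $\eqcone$ step is needed: after performing both commutations the two resulting proofs differ only by the order of the two rules that $c$ was pushed past, which is precisely a rule commutation. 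I would exhibit the closing $\pi_1\betabartostar\pi_1'\eqcone\pi_2'\betabarfromstar\pi_2$ for each such pair, checking that the two rules involved are cut-free above (they are, being the premises of $c$ before reduction) so that $\eqcone$ legitimately applies.

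The main obstacle, and the case requiring the most care, is the \emph{nested} situation where one $cut$-rule lies above the other, say $c'$ sits inside a premise of $c$, and the reduction on $c$ is a commutative case that duplicates the subproof containing $c'$ (the $\with-cut$ commutation). Then reducing $c$ first produces two copies of $c'$, and we must reduce both copies to match the proof obtained by reducing $c'$ first; this is why the closure is stated with $\betabartostar$ (multiple steps) rather than a single step. I would argue that the duplicated copies of the $c'$-reduction are independent and can both be performed, after which the two proofs become equal (or $\eqceq$-related if a residual commutation survives). Throughout, I would lean on the schematic conventions introduced just before the statement for depicting $\with$- and $\top$-rules that duplicate or erase subproofs, and I would verify that in the erasing cases (a $\top-cut$ or $ax$ key case that deletes a subproof containing the other redex) the deleted redex simply disappears, trivially closing the diagram with fewer steps on that branch. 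The bookkeeping is tedious but routine once the three position-relations (disjoint, same, nested) are separated; the only place where $\eqcone$ is genuinely unavoidable is the same-$cut$ different-commutation sub-case, matching the final sentence of the statement.
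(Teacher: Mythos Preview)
Your case organization (disjoint / same $cut$-rule / nested) is essentially equivalent to the paper's (distinct rules / shared rule), and your disjoint and nested cases are fine. The gap is in the same-$cut$, two-different-commutative-cases sub-case, and it is precisely the error this appendix exists to correct.

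You correctly identify that after pushing $c$ past $r_1$ on one side and past $r_2$ on the other, and then completing each side with the remaining commutation, the two proofs differ only by the order of $r_1$ and $r_2$. But your justification that ``the two rules involved are cut-free above (they are, being the premises of $c$ before reduction)'' is wrong. After both commutations, the configuration has $c$ sitting \emph{above} $r_1$ and $r_2$, with the original sub-proofs $\rho_1,\rho_2$ above $c$. The sub-proof above the pair $r_1,r_2$ therefore contains the $cut$-rule $c$ itself, plus any cuts already present in $\rho_1$ or $\rho_2$. By \autoref{def:rule_comm}, $\eqcone$ requires the sub-proof above the commuted rules to be cut-free, so $\eqcone$ cannot be applied here.

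The paper's fix is to first normalize that sub-proof using weak normalization of $\betabarto$ (\autoref{cor:beta_wn}), performing the same reduction sequence on both sides (and in every duplicate, if $r_1$ or $r_2$ is a $\with$-rule) so that both sides end up with the \emph{same} cut-free sub-proof $\tau$ above $r_1,r_2$. Only then does the $r_1$--$r_2$ commutation become a legitimate $\eqcone$ step; see \autoref{fig:proof_betafrom_betato}. The $\betabartostar$ in the closing diagram thus includes this normalization, not merely the single extra commutative step you describe. This normalization-before-commutation move is exactly what the paper flags as missing from~\cite{calculusformall}, and your proposal reproduces that omission.
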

\begin{proof}
If the $\pi\betabarto\pi_1$ and $\pi\betabarto\pi_2$ steps involve only distinct rules then, taking into account that rules of one may be duplicated or erased by the other step, they commute and we have a proof $\pi'$ such that $\pi_1\betabartostar\pi'\betabarfromstar\pi_2$, by applying one reduction after the other.

From now on, we assume both steps involve (at least) one common rule.
If both reductions share all of their rules, then the two reductions are the same: $\pi_1=\pi_2$ and we are done (recall \autoref{rem:tens_parr_key_cut} for our convention on the $\parr-\tens$ key case).
Hence, we assume they do not share all of their rules.
We distinguish cases according to the kinds of the $\betabarto$ steps.

\proofpar{If one step is a key case other than an $ax$ one.}
Remark that on the three rules of a non-$ax$ key case, no other $\betabarto$ step can be applied (only a $cut-cut$ commutation could have been applied, but this case does not belong to $\betabarto$).
Whenceforth, this case cannot happen as it would lead to the two reductions sharing all of their rules.

\proofpar{If both steps are $ax$ key cases.}
(This corresponds to the reduction diagram~0 of~\cite[Section~B.2.1]{calculusformall}.)
As the two reductions share one rule, but not all rules, the shared rule must be the $cut$-rule, with as premises two $ax$-rules.
We can check that this critical pair leads to the same resulting proof from both choices of cut-elimination.
Thus $\pi_1=\pi_2$.

\proofpar{If one step is an $ax$ key case and the other a commutative case.}
(This corresponds to the reduction diagram~1 of~\cite[Section~B.2.1]{calculusformall}.)
By symmetry, assume $\pi\betabarto\pi_2$ is the $ax$ key case.
For the two reductions share a rule, and the $ax$-rule cannot participate in a commutative step, the shared rule must be the $cut$-rule.
We can still do this $ax$ key step after the commutation (maybe twice in case of duplication, or zero time in case of erasure), recovering $\pi_2$.
Thus:
\begin{itemize}
\item $\pi_1\betabarto\pi_2$ ($ax$-key case and not a $\with-cut$ nor $\top-cut$ commutative case) or
\item $\pi_1\betabarto\cdot\betabarto\pi_2$ ($ax$-key case and $\with-cut$ commutative case) or
\item $\pi_1=\pi_2$ ($ax$-key case and $\top-cut$ commutative case).
\end{itemize}

\proofpar{If both steps are commutative cases.}
(This corresponds to the reduction diagram~2 of~\cite[Section~B.2.1]{calculusformall}, with a modification.)
Here again, as the two reductions share a rule, it must be the $cut$-rule, because there is at most one $cut$-rule directly below a given rule.
As the reductions do not share both of their rules, in $\pi\betabarto\pi_1$ we sent a rule $r_1$ from a branch of the $cut$ below it, and in $\pi\betabarto\pi_2$ we do similarly on a rule $r_2$ in the other branch.
This case, more complex than the previous ones, is depicted schematically on \autoref{fig:proof_betafrom_betato}.
We can in $\pi_1$ commute the $cut$-rule and $r_2$ -- maybe twice in case of a duplication, or zero in case of an erasure -- obtaining $\pi_1'$, and similarly in $\pi_2$ the $cut$-rule and $r_1$, yielding $\pi_2'$.
The two resulting proofs differ exactly by a commutation of $r_1$ and $r_2$ (even if both are $\top$-rules, they differ by a $\top-\top$ commutation).
For we apply rule commutation only on cut-free sub-proofs, we first eliminate all $cut$-rules above these two rules, in the same way in all sub-proofs (and in case of duplication, in the same way in all duplicates of the sub-proofs).
This can be done thanks to weak normalization of $\betabarto$ (\autoref{cor:beta_wn}).
Finally, we commute $r_1$ and $r_2$, and thus $\pi_1\betabartostar\cdot\eqcone\cdot\betabarfromstar\pi_2$ if both steps are commutative cases.

\begin{figure}
\begin{adjustbox}{}
\begin{tikzpicture}
	\node (pil) at (0,0) {$\pi_1=${\AxiomC{$\rho_1$}\AxiomC{$\rho_2$}\RightLabel{$r_2$}\UnaryInfC{\emptyforproof}\RightLabel{$cut$}\BinaryInfC{\emptyforproof}\RightLabel{$r_1$}\UnaryInfC{$\rho_3$}\DisplayProof}};
	\node (pit) at (3,3) {$\pi=${\AxiomC{$\rho_1$}\RightLabel{$r_1$}\UnaryInfC{\emptyforproof}\AxiomC{$\rho_2$}\RightLabel{$r_2$}\UnaryInfC{\emptyforproof}\RightLabel{$cut$}\BinaryInfC{$\rho_3$}\DisplayProof}};
	\node (pir) at (6,0) {$\pi_2=${\AxiomC{$\rho_1$}\RightLabel{$r_1$}\UnaryInfC{\emptyforproof}\AxiomC{$\rho_2$}\RightLabel{$cut$}\BinaryInfC{\emptyforproof}\RightLabel{$r_2$}\UnaryInfC{$\rho_3$}\DisplayProof}};

	\node (pil1) at (0,-3) {$\pi_1'=${\AxiomC{$\rho_1$}\AxiomC{$\rho_2$}\RightLabel{$cut$}\BinaryInfC{\emptyforproof}\RightLabel{$r_2$}\UnaryInfC{\emptyforproof}\RightLabel{$r_1$}\UnaryInfC{$\rho_3$}\DisplayProof}};
	\node (pir1) at (6,-3) {$\pi_2'=${\AxiomC{$\rho_1$}\AxiomC{$\rho_2$}\RightLabel{$cut$}\BinaryInfC{\emptyforproof}\RightLabel{$r_1$}\UnaryInfC{\emptyforproof}\RightLabel{$r_2$}\UnaryInfC{$\rho_3$}\DisplayProof}};

	\node (pil2) at (0,-6) {{\AxiomC{$\tau^\text{cut-free}$}\RightLabel{$r_2$}\UnaryInfC{\emptyforproof}\RightLabel{$r_1$}\UnaryInfC{$\rho_3$}\DisplayProof}};
	\node (pir2) at (6,-6) {{\AxiomC{$\tau^\text{cut-free}$}\RightLabel{$r_1$}\UnaryInfC{\emptyforproof}\RightLabel{$r_2$}\UnaryInfC{$\rho_3$}\DisplayProof}};
\begin{myscope}
	\path (pit) \edgelabel{$\betabar$} (pil);
	\path (pit) \edgelabel{$\betabar$} (pir);
\end{myscope}
\begin{myscopec}{red}
	\path (pil) \edgelabel{$\betabar^*$} (pil1);
	\path (pir) \edgelabel{$\betabar^*$} (pir1);

	\path (pil1) \edgelabel{$\betabar^*$} (pil2);
	\path (pir1) \edgelabel{$\betabar^*$} (pir2);

	\path (pil2) \edgelabeld{$\eqcone$} (pir2);
\end{myscopec}
\end{tikzpicture}
\end{adjustbox}
\caption{Schematic representation of the last case of the proof of \autoref{lem:betafrom_betato}}
\label{fig:proof_betafrom_betato}
\end{figure}

The only difficulty here is proving that (the normal forms of) the two proofs $\pi_1'$ and $\pi_2'$ differ by a commutation of $r_1$ and $r_2$ as claimed.
This is a simple but tedious case analysis on the kind of rules $r_1$ and $r_2$ can be, and checking that in every case we indeed can apply an $\eqcone$ step.
As there are $8$ possible commutative cases for $\betabarto$ and we have two such steps, this leads to $8^2 = 64$ cases.
We give here only one of those, where $r_1$ is a $\with$-rule and $r_2$ is a $\parr$-rule.
In this case, our proofs are:
{\allowdisplaybreaks
\begin{flalign*}
\pi &=
\Rsub{$\rho_1$}{$A\orth, B, \Gamma$}
\Rsub{$\rho_2$}{$A\orth, C, \Gamma$}
\Rwith{$A\orth, B\with C, \Gamma$}
\Rsub{$\rho_3$}{$A, D, E, \Delta$}
\Rparr{$A, D\parr E, \Delta$}
\Rcut{$B\with C, D\parr E, \Gamma, \Delta$}
\DisplayProof\\
\pi_1 &=
\Rsub{$\rho_1$}{$A\orth, B, \Gamma$}
\Rsub{$\rho_3$}{$A, D, E, \Delta$}
\Rparr{$A, D\parr E, \Delta$}
\Rcut{$B, D\parr E, \Gamma, \Delta$}
\Rsub{$\rho_2$}{$A\orth, C, \Gamma$}
\Rsub{$\rho_3$}{$A, D, E, \Delta$}
\Rparr{$A, D\parr E, \Delta$}
\Rcut{$C, D\parr E, \Gamma, \Delta$}
\Rwith{$B\with C, D\parr E, \Gamma, \Delta$}
\DisplayProof\\
\pi_2 &=
\Rsub{$\rho_1$}{$A\orth, B, \Gamma$}
\Rsub{$\rho_2$}{$A\orth, C, \Gamma$}
\Rwith{$A\orth, B\with C, \Gamma$}
\Rsub{$\rho_3$}{$A, D, E, \Delta$}
\Rcut{$B\with C, D, E, \Gamma, \Delta$}
\Rparr{$B\with C, D\parr E, \Gamma, \Delta$}
\DisplayProof
\end{flalign*}
}
Following the method described above, we apply two $\parr-cut$ commutative cases in $\pi_1$ and one $\with-cut$ commutative case in $\pi_2$, obtaining:
\begin{flalign*}
\pi_1' &=
\Rsub{$\rho_1$}{$A\orth, B, \Gamma$}
\Rsub{$\rho_3$}{$A, D, E, \Delta$}
\Rcut{$B, D, E, \Gamma, \Delta$}
\Rparr{$B, D\parr E, \Gamma, \Delta$}
\Rsub{$\rho_2$}{$A\orth, C, \Gamma$}
\Rsub{$\rho_3$}{$A, D, E, \Delta$}
\Rcut{$C, D, E, \Gamma, \Delta$}
\Rparr{$C, D\parr E, \Gamma, \Delta$}
\Rwith{$B\with C, D\parr E, \Gamma, \Delta$}
\DisplayProof\\
\pi_2' &=
\Rsub{$\rho_1$}{$A\orth, B, \Gamma$}
\Rsub{$\rho_3$}{$A, D, E, \Delta$}
\Rcut{$B, D, E, \Gamma, \Delta$}
\Rsub{$\rho_2$}{$A\orth, C, \Gamma$}
\Rsub{$\rho_3$}{$A, D, E, \Delta$}
\Rcut{$C, D, E, \Gamma, \Delta$}
\Rwith{$B\with C, D, E, \Gamma, \Delta$}
\Rparr{$B\with C, D\parr E, \Gamma, \Delta$}
\DisplayProof
\end{flalign*}
Then, we eliminate all $cut$-rules above these $\with$ and $\parr$-rules, in the same way in both proofs (thanks to \autoref{cor:beta_wn}), yielding:
\begin{flalign*}
\pi_1'' &=
\Rsub{$\tau_1^{cut-free}$}{$B, D, E, \Gamma, \Delta$}
\Rparr{$B, D\parr E, \Gamma, \Delta$}
\Rsub{$\tau_2^{cut-free}$}{$C, D, E, \Gamma, \Delta$}
\Rparr{$C, D\parr E, \Gamma, \Delta$}
\Rwith{$B\with C, D\parr E, \Gamma, \Delta$}
\DisplayProof\\
\pi_2'' &=
\Rsub{$\tau_1^{cut-free}$}{$B, D, E, \Gamma, \Delta$}
\Rsub{$\tau_2^{cut-free}$}{$C, D, E, \Gamma, \Delta$}
\Rwith{$B\with C, D, E, \Gamma, \Delta$}
\Rparr{$B\with C, D\parr E, \Gamma, \Delta$}
\DisplayProof
\end{flalign*}
We observe these two last proofs are equal up to a $\with-\parr$ commutation, which gives us that $\pi_1\betabarto\cdot\betabarto\pi_1'\betabartoplus\pi_1''\eqcone\pi_2''\betabarfromplus\pi_2'\betabarfrom\pi_2$.
\end{proof}

Now, let us prove \autoref{item:aototomaya_use:3} of \autoref{lem:aototomaya_use}, corresponding to reduction diagrams~3 to~6 of~\cite[Section~B.2.1]{calculusformall} (again including all three versions of these diagrams).
As for the previous lemma, the main difference with~\cite{calculusformall} is that we normalize above the rules we wish to commute, which is a new step to integrate in reduction diagrams~3 and~4 of~\cite[Section~B.2.1]{calculusformall}.
We also consider a critical pair that was missing in~\cite[Section~B.2.1]{calculusformall}.

\begin{lem}
\label{lem:eqc_betato}
Let $\pi_1$, $\pi_2$ and $\pi_3$ be proofs such that $\pi_1\eqcone\pi_2\betabarto\pi_3$.
Then, there exist $\pi_1'$, $\pi_2'$, $\pi_3'$ and $\pi_4'$ such that $\pi_1\betabarto \pi_1' \eqc \pi_2' \betabartostar \pi_3' \eqc \pi_4' \betabarfromstar\pi_3$.
Diagrammatically:
\begin{center}
\begin{tikzpicture}
\begin{myscope}
	\node[draw=none,minimum size=3mm] (pi1) at (0,2) {$\pi_1$};
	\node[draw=none,minimum size=3mm] (pi2) at (1,2) {$\pi_2$};
	\node[draw=none,minimum size=3mm] (pi3) at (2,1) {$\pi_3$};

	\node[draw=none] at (.5,2) {$\eqcone$};
	\path (pi2) \edgelabel{$\betabar$} (pi3);
\end{myscope}
\begin{myscopec}{red}
	\node[draw=none,minimum size=3mm] (c0) at (-1,1) {$\pi_1'$};
	\node[draw=none,minimum size=3mm] (c1) at (-2,1) {$\pi_2'$};
	\node[draw=none,minimum size=3mm] (c2) at (0,-.5) {$\pi_3'$};
	\node[draw=none,minimum size=3mm] (c3) at (1,-.5) {$\pi_4'$};

	\path (pi1) edge[dashed] node[draw=none, sloped, align=center]{$\betabar$ \\} (c0);
	\node[draw=none] at (-1.5,1) {$\eqc$};
	\path (c1) edge[dashed] node[draw=none, sloped, align=center]{\\ \\$\betabar^*$} (c2);
	\node[draw=none] at (.5,-.5) {$\eqc$};
	\path (pi3) edge[dashed] node[draw=none, sloped, align=center]{$\betabar^*$ \\} (c3);
\end{myscopec}
\end{tikzpicture}
\end{center}
\end{lem}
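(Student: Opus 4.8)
The statement to prove is \autoref{lem:eqc_betato}: given $\pi_1\eqcone\pi_2\betabarto\pi_3$, we must close the diagram with the prescribed combination of $\betabarto$ and $\eqc$ steps. Let me sketch my approach.

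The plan is to proceed by a case analysis on how the rule commutation $\pi_1\eqcone\pi_2$ and the cut-elimination step $\pi_2\betabarto\pi_3$ interact, exactly mirroring the structure of the preceding \autoref{lem:betafrom_betato}. First I would dispose of the trivial situation: if the two rules commuted in $\pi_1\eqcone\pi_2$ are entirely disjoint from the rule(s) involved in the $\betabarto$ step $\pi_2\betabarto\pi_3$ (up to duplication or erasure by a $\with$- or $\top$-rule), then the two operations act on separate regions of the proof and commute outright, so that $\pi_1\betabarto\cdot\eqcone\pi_3$ directly and the diagram closes with no intermediate normalization needed. The substantial cases arise when the $cut$-rule eliminated by $\betabarto$ is one of the two rules being permuted by $\eqcone$, or sits immediately adjacent to them.

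Next I would organize the interacting cases according to the kind of the $\betabarto$ step. Since $\eqcone$ is rule commutation in a \emph{cut-free} setting (it has no $cut$-rule directly above the commuted rules), the $cut$-rule targeted by $\betabarto$ must lie \emph{below} the pair of commuted rules, so the relevant scenario is that one of the commuted rules is a premise of (or becomes a premise of) the $cut$. The three families to check are: (a) the $\betabarto$ step is an $ax$ key case sharing its $cut$ with the commutation; (b) it is a commutative case acting on the same $cut$-rule; and (c) no genuine interaction beyond adjacency. In each I would first perform the cut-elimination step on $\pi_1$ to obtain $\pi_1'$, then — crucially — normalize all cuts lying above the two rules we intend to permute, uniformly across all copies produced by any $\with$- or $\tens$-duplication, invoking weak normalization of $\betabarto$ (\autoref{cor:beta_wn}). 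Only after this normalization can the rule commutation $\eqcone$ be legitimately applied, since our commutations demand cut-freeness above. This normalize-then-commute pattern is what the intermediate $\eqc$ on each side of the diagram records. As in the proof of \autoref{lem:betafrom_betato}, the bulk of the verification is a tedious but routine case check over the possible kinds of the commuted rules $r_1,r_2$ and the $\betabarto$ step, confirming that the two resulting normal forms differ precisely by one $\eqcone$ step (including the degenerate $\top-\top$ and $\bot-\bot$ identifications).

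I would pay special attention to the critical pair that \cite{calculusformall} overlooked: the configuration where the rule commutation and the cut-elimination compete for the same $cut$-rule in a way not covered by the naive disjoint case, typically when a $\with-\tens$ or $\top-\tens$ commutation duplicates or creates the very $cut$ that $\betabarto$ would reduce. Here the restriction to cut-free commutations is exactly what rescues the argument: by first normalizing above, we prevent a commutation from duplicating a live $cut$-rule and thereby avoid the measure increase that broke the original proof. The main obstacle, as in the companion lemma, is not any single case but the sheer combinatorial breadth — verifying that after the uniform normalization the two branches reconverge up to a single $\eqcone$ in all rule-kind combinations — together with correctly tracking duplications so that the $\eqc$ closures (rather than single $\eqcone$ steps) genuinely suffice. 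I expect the cleanest presentation to give one representative interacting case in full (e.g.\ a $\with-\parr$ commutation against a $\parr-cut$ commutative step, paralleling \autoref{fig:proof_betafrom_betato}) and to assert the remaining cases follow by the same normalize-then-commute recipe.
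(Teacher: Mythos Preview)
Your overall strategy---case analysis on the interaction, then normalize-above-before-commuting---matches the paper, but your case organization has two genuine gaps.

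First, your three families (a)/(b)/(c) omit the non-$ax$ \emph{key} cases. When $\pi_2\betabarto\pi_3$ is a $\parr{-}\tens$, $\with{-}\oplus_i$, or $\bot{-}1$ key case, the rule shared with the commutation is the $\tens$ (or $\parr$, $\with$, $\oplus$, $\bot$) rule that in $\pi_1$ no longer sits directly above the $cut$---some other rule $r$ has been inserted between them by $\eqcone$. You cannot ``perform the cut-elimination step on $\pi_1$'' directly here, because the key case is blocked. The paper's Case~2 handles this by first applying the \emph{commutative} case $r{-}cut$ in $\pi_1$ (this is the required $\pi_1\betabarto\pi_1'$), then the key case, and from $\pi_3$ pushing $r$ below the newly created cut(s) via commutative cases---no $\eqcone$ step is needed to close this branch. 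Your sketch does not account for this at all.

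Second, you misidentify the critical pair missing from~\cite{calculusformall}. It is \emph{not} a $\with{-}\tens$ or $\top{-}\tens$ commutation duplicating a live cut: that cannot happen precisely because $\eqcone$ is restricted to cut-free subproofs (the duplication-of-cuts problem is the \emph{measure} issue in the strong-normalization proof, not a missing coherence diagram). The actual missing case is the paper's~3.d: in the commutative-case scenario with $r$ above $s$ above $cut$, and $t$ on the other branch, the situation where \emph{neither} $s$ nor $t$ commutes with the $cut$ (both introduce the cut formula) and $t$ is not an axiom. Here one applies the key case between $s$ and $t$ on both sides and closes with further $\betabarto$ commutative steps---again no $\eqcone$ is needed. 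Relatedly, the commutative-case analysis is more ramified than you suggest: the paper splits on whether $s$ commutes with the cut, and if not, on the nature of $t$ (commutes / is $ax$ / neither), giving sub-cases 3.a--3.d with genuinely different closing moves (in 3.b one commutes $t$ down first, producing the extra $\eqc$ in the diagram). Your uniform ``normalize then commute'' recipe only covers 3.a.
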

\begin{proof}
An easily handled case is when the $\pi_1\eqcone\pi_2$ and $\pi_2\betabarto\pi_3$ steps involve only distinct rules; assume for now this is the case.
A first general sub-case is when rules of one step are neither erased nor duplicated by the other.
Then these steps commute and $\pi_1\betabarto\cdot\eqcone\pi_3$, using the same steps in the other order (because these are local transformations).

Now, consider the case where the two steps still involve distinct rules, but the $\pi_2\betabarto\pi_3$ step duplicates a sub-proof containing the rules of $\pi_1\eqcone\pi_2$ (which may happen if $\pi_2\betabarto\pi_3$ is a $\with-cut$ commutative case).
We apply the corresponding $\betabarto$ step first in $\pi_1$, yielding $\pi_1\betabarto\pi_1^1$, and then the $\eqcone$ step twice, once for each occurrence, to recover $\pi_3$: we get $\pi_1\betabarto\pi_1^1\eqcone\cdot\eqcone\pi_3$.

Another general case is when the rules involved in the two steps are distinct, but the $\betabarto$ step eliminates a sub-proof containing the rules of the $\eqcone$ step (this can arise when using a $\with-\oplus_i$ key case or a $\top-cut$ commutative case).
In this case, doing in $\pi_1$ the $\betabarto$ step directly yields $\pi_3$: $\pi_1\betabarto\pi_3$.

Remark that, if these steps use distinct rules, the $\pi_1\eqcone\pi_2$ step cannot duplicate nor erase the rules involved in $\pi_2\betabarto\pi_3$.
Indeed, this may happen if the $\eqcone$ step is a $\with-\tens$ or $\top-\tens$ commutative case, but we assumed that a sub-proof corresponding to a rule commutation is cut-free, and a $\betabarto$ step involves a $cut$-rule by definition.

From now on, we suppose both steps involve at least one common rule, which cannot be a $cut$ one for no commutation of $\eqcone$ involves a $cut$-rule.
In fact, there is exactly one shared rule.
Indeed, commuting rules of an $\eqcone$ step are made of one rule on top of another rule, so both cannot be above a $cut$-rule.
We distinguish cases according to the kind of $\pi_2\betabarto\pi_3$.

\proofpar{1. If $\pi_2\betabarto\pi_3$ is an $ax$ key case.}
As an $ax$-rule never commutes, the two steps share no rule, contradiction.

\proofpar{2. If $\pi_2\betabarto\pi_3$ is another key case.}
(This corresponds to the reduction diagrams~5 and~6 of~\cite[Section~B.2.1]{calculusformall}.)
We only treat the $\parr-\tens$ key case.
The cases where $\pi_2\betabarto\pi_3$ is a $\with-\oplus_i$ or $\bot-1$ key case are similar, and even simpler as less $cut$-rules result from the reduction.
Here, $\pi_2$ and $\pi_3$ are the following proofs:
\begin{center}
\bottomAlignProof
\Rsub{$\rho_1$}{$A,\Gamma$}
\Rsub{$\rho_2$}{$B,\Delta$}
\Rtens{$A\tens B,\Gamma,\Delta$}
\Rsub{$\rho_3$}{$B\orth,A\orth,\Sigma$}
\Rparr{$B\orth\parr A\orth,\Sigma$}
\Rcut{$\Gamma,\Delta,\Sigma$}
\noLine
\UnaryInfC{$\rho_4$}
\DisplayProof
\hskip 1.25em
\bottomAlignProof
\Rsub{$\rho_1$}{$A,\Gamma$}
\Rsub{$\rho_2$}{$B,\Delta$}
\Rsub{$\rho_3$}{$B\orth,A\orth,\Sigma$}
\Rcut{$A\orth,\Delta,\Sigma$}
\Rcut{$\Gamma,\Delta,\Sigma$}
\noLine
\UnaryInfC{$\rho_4$}
\DisplayProof
\end{center}
(up to symmetry of the $cut$-rule, the case where $\pi_2$ has a $\parr$-rule on the left and a $\tens$ one on the right being symmetric and solved similarly; and up to the order of the formulas in each sequent).

By our assumption, $\pi_1\eqcone\pi_2$ was a step pushing down the $\tens$ or $\parr$-rule, and up some non $cut$-rule $r$.
On \autoref{fig:proof_eqcone_betabarto_2} is depicted the reasoning we apply here, in the case where $r$ commutes with the $\tens$-rule.
We can in $\pi_1$ commute the $cut$-rule up and $r$ down (as $r$ cannot be the rule of the main connective of the formula on which we cut, nor an $ax$-rule).
This yields a proof $\pi_1^1$ such that $\pi_1\betabarto\pi_1^1$ with this commutative step, with $\pi_1^1$ being $\pi_2$ except $r$ is below the $cut$-rule and not above the $\tens$ or $\parr$-rule (by abuse, for if $r$ is a $\top$-rule then the $cut$-rule is not here anymore).
Thus, $\pi_1^1\betabarto\pi_1^2$ using the same step as in $\pi_2\betabarto\pi_3$; unless if $r$ is a $\top$-rule, in which case there is nothing to do and we set $\pi_1^1=\pi_1^2$; or if $r$ is a $\with$-rule, where we have to apply this step in both occurrences, obtaining $\pi_1^1\betabarto\cdot\betabarto\pi_1^2$.
In any case, $\pi_1^1\betabartostar\pi_1^2$.
Observe that $\pi_1^2$ is like $\pi_3$, except that $r$ is above some $cut$-rule(s) in $\pi_3$ and below in $\pi_1^2$.
But, using $\betabarto$ in $\pi_3$, $r$ can commute down one or two of the $cut$-rules created by the key case, yielding $\pi_1^2$ (including if $r$ is a $\with$ or $\top$-rule).
Therefore, $\pi_1\betabarto\pi_1^1\betabartostar\pi_1^2\betabarfromplus\pi_3$, concluding this case.

\proofpar{3. If $\pi_2\betabarto\pi_3$ is a commutative case.}
As $\pi_1\eqcone\pi_2$ and $\pi_2\betabarto\pi_3$ have exactly one rule in common, the $\eqcone$ step involves the rule $r$ that will be commuted down in the $\betabarto$ step, and another rule that we call $s$ ($r$ and $s$ are not $cut$-rules).
The proof $\pi_1$ has from top to bottom $r$, $s$ and $cut$, $\pi_2$ has $s$, $r$ and $cut$, and $\pi_3$ has $s$, $cut$ and $r$.
We will also consider the rule $t$ on the other branch of the $cut$-rule.
Schematically (and up to symmetry):
\begin{equation*}
\pi_1=\text{\AxiomC{}\RightLabel{$r$}\UnaryInfC{\emptyforproof}\RightLabel{$s$}\UnaryInfC{\emptyforproof}\AxiomC{}\RightLabel{$t$}\UnaryInfC{\emptyforproof}\RightLabel{$cut$}\BinaryInfC{\emptyforproof}\DisplayProof}
\qquad
\pi_2=\text{\AxiomC{}\RightLabel{$s$}\UnaryInfC{\emptyforproof}\RightLabel{$r$}\UnaryInfC{\emptyforproof}\AxiomC{}\RightLabel{$t$}\UnaryInfC{\emptyforproof}\RightLabel{$cut$}\BinaryInfC{\emptyforproof}\DisplayProof}
\qquad
\pi_3=\text{\AxiomC{}\RightLabel{$s$}\UnaryInfC{\emptyforproof}\AxiomC{}\RightLabel{$t$}\UnaryInfC{\emptyforproof}\RightLabel{$cut$}\BinaryInfC{\emptyforproof}\RightLabel{$r$}\UnaryInfC{\emptyforproof}\DisplayProof}
\end{equation*}
We have here different sub-cases, according to whether the $cut$-rule commutes with $s$ and/or $t$.
More exactly, we consider the following exhaustive cases:
\begin{itemize}
\item
$s$ commutes with the $cut$-rule;
\item
$s$ does not commute with the $cut$-rule but $t$ does;
\item
$s$ does not commute with the $cut$-rule and $t$ is an $ax$-rule;
\item
neither $s$ nor $t$ commute with the $cut$-rule and $t$ is not an $ax$-rule.
\end{itemize}

\proofpar{3.a. If $s$ commutes with the $cut$-rule.}
(This corresponds to the reduction diagram~3 of~\cite[Section~B.2.1]{calculusformall}, with a modification.)
Our reasoning for this case is depicted on \autoref{fig:proof_eqcone_betabarto_3a}.
In $\pi_1$, we commute $s-cut$ then $r-cut$, yielding $\pi_1\betabarto\pi_1^1\betabartostar\pi_1^2$ (the $\betabartostar$ being of length one, except if $s$ is a $\with$-rule, in which case we apply the $r-cut$ commutation for both occurrences, or if $s$ is a $\top$-rule, in which case there is no commutation to apply).
The proof $\pi_1^2$ has from top to bottom $cut$, $r$ and $s$.
Meanwhile, in $\pi_3$ we commute $s-cut$ (twice if $r$ is a $\with$-rule, or zero time if it is a $\top$-rule), yielding $\pi_3^1$ having from top to bottom $cut$, $s$ and $r$.
Now, both $\pi_1^2$ and $\pi_3^1$ have above $r$ and $s$ a same proof (maybe duplicated or erased).
We use normalization of $\betabarto$ (\autoref{cor:beta_wn}) to eliminate all $cut$-rules in this sub-proof, in the same way for all its occurrences in $\pi_1^2$ and $\pi_3^1$, obtaining proofs $\pi_1^3$ and $\pi_3^2$ equal up to the commutation of $r$ and $s$ (the very same one that was used in $\pi_1\eqcone\pi_2$).
We thus obtain $\pi_1\betabarto\pi_1^1\betabartostar\pi_1^2\betabartostar\pi_1^3\eqcone\pi_3^2\betabarfromstar\pi_3^1\betabarfromstar\pi_3$.
To formally check that indeed we can apply a $\eqcone$ step, we would have to check all cases depending on the kind of rules $r$ and $s$ are.
We do not write this tedious case study; a motivated reader can easily check some of these cases.

\proofpar{3.b. If $t$ commutes with the $cut$-rule whereas $s$ does not.}
(This corresponds to the reduction diagram~4 of~\cite[Section~B.2.1]{calculusformall}, with a modification.)
Remark $s$ cannot be an $ax$-rule, for it commutes with $r$.
This case is represented on \autoref{fig:proof_eqcone_betabarto_3b}.
On $\pi_1$, we apply a commutative $\betabarto$ step between $t$ and the $cut$-rule, giving
$\pi_1^1=$\AxiomC{}\RightLabel{$r$}\UnaryInfC{\emptyforproof}\RightLabel{$s$}\UnaryInfC{\emptyforproof}\AxiomC{}\RightLabel{$cut$}\BinaryInfC{\emptyforproof}\RightLabel{$t$}\UnaryInfC{\emptyforproof}\DisplayProof.
We then apply the same rule commutation between $r$ and $s$ that was done in $\pi_1 \eqcone \pi_2$, obtaining
$\pi_1^2=$\AxiomC{}\RightLabel{$s$}\UnaryInfC{\emptyforproof}\RightLabel{$r$}\UnaryInfC{\emptyforproof}\AxiomC{}\RightLabel{$cut$}\BinaryInfC{\emptyforproof}\RightLabel{$t$}\UnaryInfC{\emptyforproof}\DisplayProof
(as usual, we do it twice if $t$ is a $\with$-rule and have nothing to do if $t$ is a $\top$-rule).
After that, we apply a commutative step between $r$ and the $cut$-rule, the same one as in $\pi_2\betabarto\pi_3$, yielding
$\pi_1^3=$\AxiomC{}\RightLabel{$s$}\UnaryInfC{\emptyforproof}\AxiomC{}\RightLabel{$cut$}\BinaryInfC{\emptyforproof}\RightLabel{$r$}\UnaryInfC{\emptyforproof}\RightLabel{$t$}\UnaryInfC{\emptyforproof}\DisplayProof.
On the other hand, applying in $\pi_3$ a commutative case with $t$ gives
$\pi_3^1=$\AxiomC{}\RightLabel{$s$}\UnaryInfC{\emptyforproof}\AxiomC{}\RightLabel{$cut$}\BinaryInfC{\emptyforproof}\RightLabel{$t$}\UnaryInfC{\emptyforproof}\RightLabel{$r$}\UnaryInfC{\emptyforproof}\DisplayProof.
One can check that $\pi_1^3$ and $\pi_3^1$ differ only by a commutation between $r$ and $t$ -- as usual, up to checking many cases depending on the kind of rules $r$ and $t$ are.
We normalize the sub-proof(s) above $r$ and $t$ in the same way in $\pi_1^3$ and $\pi_3^1$ (using \autoref{cor:beta_wn}), obtaining $\pi_1^4$ and $\pi_3^2$ such that $\pi_1^4 \eqcone \pi_3^2$.
This solves this case:
$\pi_1\betabarto\pi_1^1\eqc\pi_1^2\betabartostar\pi_1^3\betabartostar\pi_1^4
\eqc
\betabarfromstar\pi_3^2\betabarfromstar\pi_3^1\betabarfromstar\pi_3$.

\proofpar{3.c. If $s$ does not commute with the $cut$-rule and $t$ is an $ax$-rule.}
(This is an absent case in~\cite[Section~B.2.1]{calculusformall} for they work with axiom-expanded proofs.)
This case is represented on \autoref{fig:proof_eqcone_betabarto_3c}.
Observe that here both $s$ and $t$ introduce the cut formula.
In both $\pi_1$ and $\pi_3$, we apply an $ax$ key case using $t$, giving respectively
$\pi_1^1=$\AxiomC{}\RightLabel{$r$}\UnaryInfC{\emptyforproof}\RightLabel{$s$}\UnaryInfC{\emptyforproof}\DisplayProof
and
$\pi_3^1=$\AxiomC{}\RightLabel{$s$}\UnaryInfC{\emptyforproof}\RightLabel{$r$}\UnaryInfC{\emptyforproof}\DisplayProof
(as usual, in $\pi_3$ we do this key case twice if $r$ is a $\with$-rule and have nothing to do if $r$ is a $\top$-rule).
One can check that $\pi_1^1 \eqcone \pi_3^1$ through a commutation between $r$ and $s$, with no $cut$-rule above $r$ and $s$ as there was a commutation $\pi_1 \eqcone \pi_2$.
Thus, we have
$\pi_1\betabarto\pi_1^1
\eqcone
\pi_3^1\betabarfromstar\pi_3$.

\proofpar{3.d. If $s$ and $t$ both do not commute with the $cut$-rule and $t$ is not an $ax$-rule.}
(This is a missing case in~\cite[Section~B.2.1]{calculusformall}.)
This case is represented on \autoref{fig:proof_eqcone_betabarto_3d}.
In both $\pi_1$ and $\pi_3$ we apply the sole possible key case using $s$ and $t$, obtaining respectively
$\pi_1^1=$\AxiomC{}\RightLabel{$r$}\UnaryInfC{\emptyforproof}\RightLabel{$cut^*$}\UnaryInfC{\emptyforproof}\DisplayProof
and
$\pi_3^1=$\AxiomC{}\RightLabel{$cut^*$}\UnaryInfC{\emptyforproof}\RightLabel{$r$}\UnaryInfC{\emptyforproof}\DisplayProof
(where $cut^*$ represents 0, 1 or 2 $cut$-rules).
We remark that $\pi_3^1$ can be obtained from $\pi_1^1$ by commuting $r$ with the produced $cut$-rules -- except if the key case was a $\with-\oplus_i$ one that erased $r$, in which case $\pi_1^1 = \pi_3^1$.
Thus
$\pi_1\betabarto\pi_1^1\betabartostar
\pi_3^1\betabarfromstar\pi_3$.
\end{proof}

\begin{figure}
\begin{adjustbox}{}
\begin{tikzpicture}
	\node (pi1) at (0,0) {$\pi_1=${\Rsub{$\rho_1'$}{$A,\Gamma'$}\Rsub{$\rho_2$}{$B,\Delta$}\Rtens{$A\tens B,\Gamma',\Delta$}\RightLabel{$r$}\UnaryInfC{$\fCenter A\tens B,\Gamma,\Delta$}\Rsub{$\rho_3$}{$B\orth,A\orth,\Sigma$}\Rparr{$B\orth\parr A\orth,\Sigma$}\Rcut{$\Gamma,\Delta,\Sigma$}\noLine\UnaryInfC{$\rho_4$}\DisplayProof}};
	\node (pi2) at (10,0) {$\pi_2=${\Rsub{$\rho_1'$}{$A,\Gamma'$}\RightLabel{$r$}\UnaryInfC{$\fCenter A,\Gamma$}\Rsub{$\rho_2$}{$B,\Delta$}\Rtens{$A\tens B,\Gamma,\Delta$}\Rsub{$\rho_3$}{$B\orth,A\orth,\Sigma$}\Rparr{$B\orth\parr A\orth,\Sigma$}\Rcut{$\Gamma,\Delta,\Sigma$}\noLine\UnaryInfC{$\rho_4$}\DisplayProof}};
	\node (pi3) at (10,-4) {$\pi_3=${\Rsub{$\rho_1'$}{$A,\Gamma'$}\RightLabel{$r$}\UnaryInfC{$\fCenter A,\Gamma$}\Rsub{$\rho_2$}{$B,\Delta$}\Rsub{$\rho_3$}{$B\orth,A\orth,\Sigma$}\Rcut{$A\orth,\Delta,\Sigma$}\Rcut{$\Gamma,\Delta,\Sigma$}\noLine\UnaryInfC{$\rho_4$}\DisplayProof}};

	\node (pi11) at (0,-4) {$\pi_1^1=${\Rsub{$\rho_1'$}{$A,\Gamma'$}\Rsub{$\rho_2$}{$B,\Delta$}\Rtens{$A\tens B,\Gamma',\Delta$}\Rsub{$\rho_3$}{$B\orth,A\orth,\Sigma$}\Rparr{$B\orth\parr A\orth,\Sigma$}\Rcut{$\Gamma',\Delta,\Sigma$}\RightLabel{$r$}\UnaryInfC{$\fCenter \Gamma,\Delta,\Sigma$}\noLine\UnaryInfC{$\rho_4$}\DisplayProof}};

	\node (pi12) at (5,-8) {$\pi_1^2=${\Rsub{$\rho_1'$}{$A,\Gamma'$}\Rsub{$\rho_2$}{$B,\Delta$}\Rsub{$\rho_3$}{$B\orth,A\orth,\Sigma$}\Rcut{$A\orth,\Delta,\Sigma$}\Rcut{$\Gamma',\Delta,\Sigma$}\RightLabel{$r$}\UnaryInfC{$\fCenter \Gamma,\Delta,\Sigma$}\noLine\UnaryInfC{$\rho_4$}\DisplayProof}};
\begin{myscope}
	\path (pi1) \edgelabeld{$\eqcone$} (pi2);
	\path (pi2) \edgelabel{$\betabar$} (pi3);
\end{myscope}
\begin{myscopec}{red}
	\path (pi1) \edgelabel{$\betabar$} (pi11);
	\path (pi11) \edgelabel{$\betabar^*$} (pi12);
	\path (pi3) \edgelabel{$\betabar^+$} (pi12);
\end{myscopec}
\end{tikzpicture}
\end{adjustbox}
\caption{Schematic representation of case 2 in the proof of \autoref{lem:eqc_betato}}
\label{fig:proof_eqcone_betabarto_2}
\end{figure}

\begin{figure}
\centering
\begin{tikzpicture}
	\node (pi1) at (0,0) {$\pi_1=${\AxiomC{$\rho_1$}\RightLabel{$r$}\UnaryInfC{\emptyforproof}\RightLabel{$s$}\UnaryInfC{\emptyforproof}\AxiomC{$\rho_2$}\RightLabel{$t$}\UnaryInfC{\emptyforproof}\RightLabel{$cut$}\BinaryInfC{\emptyforproof}\DisplayProof}};
	\node (pi2) at (5,0) {$\pi_2=${\AxiomC{$\rho_1$}\RightLabel{$s$}\UnaryInfC{\emptyforproof}\RightLabel{$r$}\UnaryInfC{\emptyforproof}\AxiomC{$\rho_2$}\RightLabel{$cut$}\BinaryInfC{\emptyforproof}\DisplayProof}};
	\node (pi3) at (5,-2.5) {$\pi_3=${\AxiomC{$\rho_1$}\RightLabel{$s$}\UnaryInfC{\emptyforproof}\AxiomC{$\rho_2$}\RightLabel{$t$}\UnaryInfC{\emptyforproof}\RightLabel{$cut$}\BinaryInfC{\emptyforproof}\RightLabel{$r$}\UnaryInfC{\emptyforproof}\DisplayProof}};

	\node (pi11) at (0,-2.5) {$\pi_1^1=${\AxiomC{$\rho_1$}\RightLabel{$r$}\UnaryInfC{\emptyforproof}\AxiomC{$\rho_2$}\RightLabel{$t$}\UnaryInfC{\emptyforproof}\RightLabel{$cut$}\BinaryInfC{\emptyforproof}\RightLabel{$s$}\UnaryInfC{\emptyforproof}\DisplayProof}};

	\node (pi12) at (0,-5) {$\pi_1^2=${\AxiomC{$\rho_1$}\AxiomC{$\rho_2$}\RightLabel{$t$}\UnaryInfC{\emptyforproof}\RightLabel{$cut$}\BinaryInfC{\emptyforproof}\RightLabel{$r$}\UnaryInfC{\emptyforproof}\RightLabel{$s$}\UnaryInfC{\emptyforproof}\DisplayProof}};
	\node (pi32) at (5,-5) {$\pi_3^1=${\AxiomC{$\rho_1$}\AxiomC{$\rho_2$}\RightLabel{$t$}\UnaryInfC{\emptyforproof}\RightLabel{$cut$}\BinaryInfC{\emptyforproof}\RightLabel{$s$}\UnaryInfC{\emptyforproof}\RightLabel{$r$}\UnaryInfC{\emptyforproof}\DisplayProof}};

	\node (pi13) at (0,-7.5) {$\pi_1^3=${\AxiomC{$\tau^\text{cut-free}$}\RightLabel{$r$}\UnaryInfC{\emptyforproof}\RightLabel{$s$}\UnaryInfC{\emptyforproof}\DisplayProof}};
	\node (pi33) at (5,-7.5) {$\pi_3^2=${\AxiomC{$\tau^\text{cut-free}$}\RightLabel{$s$}\UnaryInfC{\emptyforproof}\RightLabel{$r$}\UnaryInfC{\emptyforproof}\DisplayProof}};
\begin{myscope}
	\path (pi1) \edgelabeld{$\eqcone$} (pi2);
	\path (pi2) \edgelabel{$\betabar$} (pi3);
\end{myscope}
\begin{myscopec}{red}
	\path (pi1) \edgelabel{$\betabar$} (pi11);

	\path (pi11) \edgelabel{$\betabar^*$} (pi12);
	\path (pi3) \edgelabel{$\betabar^*$} (pi32);

	\path (pi12) \edgelabel{$\betabar^*$} (pi13);
	\path (pi32) \edgelabel{$\betabar^*$} (pi33);

	\path (pi13) \edgelabeld{$\eqcone$} (pi33);
\end{myscopec}
\end{tikzpicture}
\caption{Schematic representation of case 3.a in the proof of \autoref{lem:eqc_betato}}
\label{fig:proof_eqcone_betabarto_3a}
\end{figure}

\begin{figure}
\begin{adjustbox}{}
\begin{tikzpicture}
	\node (pi1) at (0,0) {$\pi_1=${\AxiomC{$\rho_1$}\RightLabel{$r$}\UnaryInfC{\emptyforproof}\RightLabel{$s$}\UnaryInfC{\emptyforproof}\AxiomC{$\rho_2$}\RightLabel{$t$}\UnaryInfC{\emptyforproof}\RightLabel{$cut$}\BinaryInfC{$\rho_3$}\DisplayProof}};
	\node (pi2) at (5,0) {$\pi_2=${\AxiomC{$\rho_1$}\RightLabel{$s$}\UnaryInfC{\emptyforproof}\RightLabel{$r$}\UnaryInfC{\emptyforproof}\AxiomC{$\rho_2$}\RightLabel{$t$}\UnaryInfC{\emptyforproof}\RightLabel{$cut$}\BinaryInfC{$\rho_3$}\DisplayProof}};
	\node (pi3) at (5,-3) {$\pi_3=${\AxiomC{$\rho_1$}\RightLabel{$s$}\UnaryInfC{}\AxiomC{$\rho_2$}\RightLabel{$t$}\UnaryInfC{\emptyforproof}\RightLabel{$cut$}\BinaryInfC{\emptyforproof}\RightLabel{$r$}\UnaryInfC{$\rho_3$}\DisplayProof}};

	\node (pi11) at (0,-3) {$\pi_1^1=${\AxiomC{$\rho_1$}\RightLabel{$r$}\UnaryInfC{\emptyforproof}\RightLabel{$s$}\UnaryInfC{\emptyforproof}\AxiomC{$\rho_2$}\RightLabel{$cut$}\BinaryInfC{\emptyforproof}\RightLabel{$t$}\UnaryInfC{$\rho_3$}\DisplayProof}};
	\node (pi12) at (-5,-3) {$\pi_1^2=${\AxiomC{$\rho_1$}\RightLabel{$s$}\UnaryInfC{\emptyforproof}\RightLabel{$r$}\UnaryInfC{\emptyforproof}\AxiomC{$\rho_2$}\RightLabel{$cut$}\BinaryInfC{\emptyforproof}\RightLabel{$t$}\UnaryInfC{$\rho_3$}\DisplayProof}};
	\node (pi13) at (-5,-6) {$\pi_1^3=${\AxiomC{$\rho_1$}\RightLabel{$s$}\UnaryInfC{\emptyforproof}\AxiomC{$\rho_2$}\RightLabel{$cut$}\BinaryInfC{\emptyforproof}\RightLabel{$r$}\UnaryInfC{\emptyforproof}\RightLabel{$t$}\UnaryInfC{$\rho_3$}\DisplayProof}};
	\node (pi14) at (-5,-9) {$\pi_1^4=${\AxiomC{$\tau^\text{cut-free}$}\RightLabel{$r$}\UnaryInfC{\emptyforproof}\RightLabel{$t$}\UnaryInfC{$\rho_3$}\DisplayProof}};

	\node (pi31) at (5,-6) {$\pi_3^1=${\AxiomC{$\rho_1$}\RightLabel{$s$}\UnaryInfC{\emptyforproof}\AxiomC{$\rho_2$}\RightLabel{$cut$}\BinaryInfC{\emptyforproof}\RightLabel{$t$}\UnaryInfC{\emptyforproof}\RightLabel{$r$}\UnaryInfC{$\rho_3$}\DisplayProof}};
	\node (pi32) at (5,-9) {$\pi_3^2=${\AxiomC{$\tau^\text{cut-free}$}\RightLabel{$t$}\UnaryInfC{\emptyforproof}\RightLabel{$r$}\UnaryInfC{$\rho_3$}\DisplayProof}};
\begin{myscope}
	\path (pi1) \edgelabeld{$\eqcone$} (pi2);
	\path (pi2) \edgelabel{$\betabar$} (pi3);
\end{myscope}
\begin{myscopec}{red}
	\path (pi1) \edgelabel{$\betabar$} (pi11);
	\path (pi11) \edgelabeld{$\eqc$} (pi12);
	\path (pi12) \edgelabel{$\betabar^*$} (pi13);
	\path (pi13) \edgelabel{$\betabar^*$} (pi14);
	
	\path (pi3) \edgelabel{$\betabar^*$} (pi31);
	\path (pi31) \edgelabel{$\betabar^*$} (pi32);

	\path (pi14) \edgelabeld{$\eqceq$} (pi32);
\end{myscopec}
\end{tikzpicture}
\end{adjustbox}
\caption{Schematic representation of case 3.b in the proof of \autoref{lem:eqc_betato}}
\label{fig:proof_eqcone_betabarto_3b}
\end{figure}

\begin{figure}
\begin{adjustbox}{}
\begin{tikzpicture}
	\node (pi1) at (0,0) {$\pi_1=${\AxiomC{$\rho_1$}\RightLabel{$r$}\UnaryInfC{\emptyforproof}\RightLabel{$s$}\UnaryInfC{\emptyforproof}\AxiomC{}\RightLabel{$ax$}\UnaryInfC{\emptyforproof}\RightLabel{$cut$}\BinaryInfC{$\rho_3$}\DisplayProof}};
	\node (pi2) at (5,0) {$\pi_2=${\AxiomC{$\rho_1$}\RightLabel{$s$}\UnaryInfC{\emptyforproof}\RightLabel{$r$}\UnaryInfC{\emptyforproof}\AxiomC{}\RightLabel{$ax$}\UnaryInfC{\emptyforproof}\RightLabel{$cut$}\BinaryInfC{$\rho_3$}\DisplayProof}};
	\node (pi3) at (5,-3) {$\pi_3=${\AxiomC{$\rho_1$}\RightLabel{$s$}\UnaryInfC{}\AxiomC{}\RightLabel{$ax$}\UnaryInfC{\emptyforproof}\RightLabel{$cut$}\BinaryInfC{\emptyforproof}\RightLabel{$r$}\UnaryInfC{$\rho_3$}\DisplayProof}};

	\node (pi11) at (0,-6) {$\pi_1^1=${\AxiomC{$\rho_1$}\RightLabel{$r$}\UnaryInfC{\emptyforproof}\RightLabel{$s$}\UnaryInfC{$\rho_3$}\DisplayProof}};

	\node (pi31) at (5,-6) {$\pi_3^1=${\AxiomC{$\rho_1$}\RightLabel{$s$}\UnaryInfC{\emptyforproof}\RightLabel{$r$}\UnaryInfC{$\rho_3$}\DisplayProof}};
\begin{myscope}
	\path (pi1) \edgelabeld{$\eqcone$} (pi2);
	\path (pi2) \edgelabel{$\betabar$} (pi3);
\end{myscope}
\begin{myscopec}{red}
	\path (pi1) \edgelabel{$\betabar$} (pi11);

	\path (pi3) \edgelabel{$\betabar^*$} (pi31);

	\path (pi11) \edgelabeld{$\eqcone$} (pi31);
\end{myscopec}
\end{tikzpicture}
\end{adjustbox}
\caption{Schematic representation of case 3.c in the proof of \autoref{lem:eqc_betato}}
\label{fig:proof_eqcone_betabarto_3c}
\end{figure}

\begin{figure}
\begin{adjustbox}{}
\begin{tikzpicture}
	\node (pi1) at (0,0) {$\pi_1=${\AxiomC{$\rho_1$}\RightLabel{$r$}\UnaryInfC{\emptyforproof}\RightLabel{$s$}\UnaryInfC{\emptyforproof}\AxiomC{$\rho_2$}\RightLabel{$t$}\UnaryInfC{\emptyforproof}\RightLabel{$cut$}\BinaryInfC{$\rho_3$}\DisplayProof}};
	\node (pi2) at (5,0) {$\pi_2=${\AxiomC{$\rho_1$}\RightLabel{$s$}\UnaryInfC{\emptyforproof}\RightLabel{$r$}\UnaryInfC{\emptyforproof}\AxiomC{$\rho_2$}\RightLabel{$t$}\UnaryInfC{\emptyforproof}\RightLabel{$cut$}\BinaryInfC{$\rho_3$}\DisplayProof}};
	\node (pi3) at (5,-3) {$\pi_3=${\AxiomC{$\rho_1$}\RightLabel{$s$}\UnaryInfC{}\AxiomC{$\rho_2$}\RightLabel{$t$}\UnaryInfC{\emptyforproof}\RightLabel{$cut$}\BinaryInfC{\emptyforproof}\RightLabel{$r$}\UnaryInfC{$\rho_3$}\DisplayProof}};

	\node (pi11) at (0,-3) {$\pi_1^1=${\AxiomC{$\tau$}\RightLabel{$r$}\UnaryInfC{\emptyforproof}\RightLabel{$cut^*$}\UnaryInfC{$\rho_3$}\DisplayProof}};

	\node (pi31) at (5,-6) {$\pi_3^1=${\AxiomC{$\tau$}\RightLabel{$cut^*$}\UnaryInfC{\emptyforproof}\RightLabel{$r$}\UnaryInfC{$\rho_3$}\DisplayProof}};
\begin{myscope}
	\path (pi1) \edgelabeld{$\eqcone$} (pi2);
	\path (pi2) \edgelabel{$\betabar$} (pi3);
\end{myscope}
\begin{myscopec}{red}
	\path (pi1) \edgelabel{$\betabar$} (pi11);

	\path (pi3) \edgelabel{$\betabar^*$} (pi31);

	\path (pi11) \edgelabel{$\betabar^*$} (pi31);
\end{myscopec}
\end{tikzpicture}
\end{adjustbox}
\caption{Schematic representation of case 3.d in the proof of \autoref{lem:eqc_betato}}
\label{fig:proof_eqcone_betabarto_3d}
\end{figure}

We can now prove the wished result.

\begin{prop}
\label{th:CReqc}
In the sequent calculus of MALL, $\betabarto$ is Church-Rosser modulo $\eqc$.
\end{prop}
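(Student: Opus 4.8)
The statement is exactly the conclusion of the rewriting-theoretic machinery set up above, so the plan is simply to invoke \autoref{lem:aototomaya_use} and discharge its three hypotheses, each of which has already been reduced to a self-contained lemma. First I would recall that \autoref{lem:aototomaya_use} instantiates the abstract criterion of Aoto and Toyama (\autoref{lem:aototomaya}) with $\vdashv$ and $\leadsto$ both taken to be $\eqcone$ and with $\to$ taken to be $\betabarto$; under that instantiation the three numbered premises of \autoref{lem:aototomaya} become precisely Items~\ref{item:aototomaya_use:1}, \ref{item:aototomaya_use:2} and~\ref{item:aototomaya_use:3}. So the whole argument is a matter of assembling the pieces proved in Sections~\ref{sec:sn_betabar_eqc} and~\ref{sec:lcon_lcoh}.

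Concretely, I would carry out the following steps in order. The strong-normalization hypothesis~\ref{item:aototomaya_use:1}, that $\betabarto \cdot \eqc$ is strongly normalizing, is \autoref{lem:beta_wn}, obtained via the density measure that decreases under $\betabarto$ (\autoref{lem:betabar_density}) and is preserved by $\eqcone$ (\autoref{lem:eqc_density}). The local-confluence-type hypothesis~\ref{item:aototomaya_use:2}, namely ${\betabarfrom \cdot \betabarto~\subseteq~\betabartostar \cdot \eqceq \cdot \betabarfromstar}$, is exactly the content of \autoref{lem:betafrom_betato}. Finally the coherence hypothesis~\ref{item:aototomaya_use:3}, that ${\eqcone \cdot \betabarto~\subseteq~\betabarto \cdot \eqc \cdot \betabartostar \cdot \eqc \cdot \betabarfromstar}$, is \autoref{lem:eqc_betato}. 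Having all three in hand, \autoref{lem:aototomaya_use} directly yields that $\betabarto$ is Church-Rosser modulo $\eqc$, which is the statement to prove.

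\textbf{Where the real work lies.} The genuine difficulty is not in the short assembly just described but in the two diagram-closing lemmas \autoref{lem:betafrom_betato} and \autoref{lem:eqc_betato}, already established above. The crucial and delicate point — the one that repairs the gap in the original argument of Cockett and Pastro — is that rule commutations $\eqcone$ are permitted only in the absence of a $cut$-rule above the commuted rules. Consequently, whenever two reduction steps share a $cut$-rule and one wants to exhibit the closing $\eqcone$ step (as in case~3 of \autoref{lem:eqc_betato} and the commutative subcase of \autoref{lem:betafrom_betato}), one must first normalize the subproof sitting above the two rules to be commuted, using weak normalization of $\betabarto$ (\autoref{cor:beta_wn}), before the $\eqcone$ step is legitimate. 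The subtle $\with$-$\tens$ interaction, where a commutation may duplicate a $cut$-rule, is precisely why the density measure rather than a naive multiset of cut-masses is needed for hypothesis~\ref{item:aototomaya_use:1}. Since all of these lemmas are already proved, the present proof is purely the instantiation, and I would keep it to the single invocation of \autoref{lem:aototomaya_use}.

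\begin{proof}
This is a direct application of \autoref{lem:aototomaya_use}.
Its hypothesis~\ref{item:aototomaya_use:1} holds by \autoref{lem:beta_wn}, hypothesis~\ref{item:aototomaya_use:2} by \autoref{lem:betafrom_betato}, and hypothesis~\ref{item:aototomaya_use:3} by \autoref{lem:eqc_betato}.
Therefore $\betabarto$ is Church-Rosser modulo $\eqc$.
\end{proof}
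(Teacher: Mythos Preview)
Your proof is correct and is essentially identical to the paper's own proof, which simply cites \autoref{lem:aototomaya_use} together with \autoref{lem:beta_wn}, \autoref{lem:betafrom_betato} and \autoref{lem:eqc_betato}. Your additional explanatory paragraphs accurately describe where the real work was done in the preceding lemmas, but the proof itself is the same one-line assembly.
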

\begin{proof}
By \autoref{lem:aototomaya_use}, \autoref{lem:beta_wn} and Lemmas~\ref{lem:betafrom_betato} and~\ref{lem:eqc_betato}.
\end{proof}

\subsection{Removing cut-cut commutations}
\label{sec:no_cut-cut}

We prove here that $\beta$-equality and $\betabar$-equality coincide.
This corresponds to~\cite[Lemmas~C.2 and~C.3]{calculusformall}, proven here with two considerations that were at best implicit in the proof of~\cite{calculusformall}.
The first is that $cut-cut$ commutations and $ax$ key cases lead to proofs that are equal not directly up to $cut-cut$ commutation but up to \emph{symmetry of a $cut$-rule}.
The second is that the demonstration of~\cite[Lemmas~C.2 and~C.3]{calculusformall} proves an induction hypothesis on a single $cut-cut$ commutation, but applies it on a sequence of such commutations.

\begin{defi}
\label{rem:cut_comm_in_beta}
The \emph{symmetry of a $cut$-rule} is the identification of the following (sub-) proofs:
\begin{center}
$\pi_1=$
\Rsub{$\rho_1$}{$A,\Gamma$}
\Rsub{$\rho_2$}{$A\orth,\Delta$}
\Rcut{$\Gamma,\Delta$}
\DisplayProof
\hskip 2em and \hskip 2em
$\pi_2=$
\Rsub{$\rho_2$}{$A\orth,\Delta$}
\Rsub{$\rho_1$}{$A,\Gamma$}
\Rcut{$\Gamma,\Delta$}
\DisplayProof
\end{center}
\end{defi}

This relation is included in $\eqb$, using a $cut-cut$ commutation.
Indeed, define respectively $\pi_1'$ and $\pi_2'$ the following two proofs, where $A_1$ and $A_2$ are both the formula $A$, with indices to follow their occurrences:

\begin{adjustbox}{}
\Raxocc{$A_1\orth,A_2$}
\Rsub{$\rho_1$}{$A_1,\Gamma$}
\Rcut{$A_2,\Gamma$}
\Rsub{$\rho_2$}{$A_2\orth,\Delta$}
\Rcut{$\Gamma,\Delta$}
\DisplayProof
\hskip1em and \hskip1em
\Raxocc{$A_1\orth,A_2$}
\Rsub{$\rho_2$}{$A_2\orth,\Delta$}
\Rcut{$A_1\orth,\Delta$}
\Rsub{$\rho_1$}{$A_1,\Gamma$}
\Rcut{$\Gamma,\Delta$}
\DisplayProof
\end{adjustbox}
Then $\pi_1\betabarfrom\pi_1'\eqcc\pi_2'\betabarto\pi_2$, these $\betabarto$ steps being $ax$ key cases.

We now prove this relation is in $\betabar$-equality.

\begin{lem}
\label{lem:comm_cut_betabarto}
Let $\pi_1$ and $\pi_2$ be two proofs equal up to symmetry of $cut$-rules.
Then $\pi_1\betabartostar\cdot\betabarfromstar\pi_2$.
\end{lem}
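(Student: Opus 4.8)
The plan is to prove a common-reduct statement by a well-founded induction, reducing the two proofs one cut at a time while keeping the invariant that they remain equal up to symmetry of cut-rules. The measure driving the induction is the density $\density{\cdot}$ of \autoref{def:density_proof}: by \autoref{lem:betabar_density} every $\betabarto$ step strictly decreases the density (so $\betabarto$ is in fact strongly normalizing), and a symmetry of a cut-rule leaves the density unchanged. Indeed, swapping the two premises of a cut-rule alters neither the mass $\mass{c}$ of any cut-rule $c$ (which depends only on its cut-formula and its conclusion sequent) nor the partial order $\ordercut$ on cut-rules (since the set of rules above each cut is unaffected), so the density is preserved just as it is by $\eqcone$ in \autoref{lem:eqc_density}. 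In particular $\density{\pi_1}=\density{\pi_2}$, and I would argue by induction on $\density{\pi_1}$ in the well-founded multiset order.

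For the induction step, first note that if $\pi_1$ is cut-free then there is no cut whose orientation could differ, so $\pi_1=\pi_2$ and the common reduct is the proof itself. Otherwise $\pi_1$ contains a cut, hence by \autoref{cor:beta_wn} some $\betabarto$ step applies to $\pi_1$; let $c$ be the cut-rule it reduces. Since $\pi_2$ shares the same underlying proof tree with $\pi_1$, the corresponding cut-rule of $\pi_2$ differs from $c$ at most by the orientation of $c$ and of the cut-rules lying above it. As the applicable key or commutative case at $c$ is governed only by which premises introduce the cut-formula (and the reduction rules of \autoref{def:beta_pt} are taken up to permutation of the two branches of a cut), the mirror step applies to $\pi_2$. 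Performing these two steps yields $\pi_1\betabarto\pi_1'$ and $\pi_2\betabarto\pi_2'$ with $\density{\pi_1'}<\density{\pi_1}$. It then remains to verify that $\pi_1'$ and $\pi_2'$ are again equal up to symmetry of cut-rules; granting this, the induction hypothesis applied to the pair $(\pi_1',\pi_2')$ (whose density is strictly smaller) produces a common reduct $\mu$, and $\pi_1\betabarto\pi_1'\betabartostar\mu\betabarfromstar\pi_2'\betabarfrom\pi_2$ concludes.

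The hard part is the preservation of this invariant, which I would establish by a case analysis on the kind of the reduction step, in the same spirit as the analyses in \autoref{lem:betafrom_betato} and \autoref{lem:eqc_betato}. A key case simply erases $c$, removing one orientation discrepancy. A commutative case that moves $c$ past a logical rule leaves the relative orientations of all surviving cuts unchanged, except that the residual cut(s) on the cut-formula reappear in $\pi_1'$ and $\pi_2'$ with possibly swapped branches — which is again a symmetry of cut-rules. The only delicate subcase is the $\with-cut$ commutative case, where a subproof is duplicated: any cut-rule inside the duplicated subproof whose orientation differed between $\pi_1$ and $\pi_2$ gets copied, but each copy still differs only by its orientation, so the two results remain equal up to symmetry of cut-rules. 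Because this invariant is maintained for an arbitrary set of discrepant cuts rather than for a single one, the argument applies uniformly to proofs related by a whole sequence of symmetries — precisely the point where the original argument was incomplete, since it established the statement for a single $\eqcc$ step yet invoked it along sequences of such commutations.
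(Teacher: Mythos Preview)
Your proof is correct and follows essentially the same approach as the paper's: both argue by induction that applying a $\betabarto$ step in parallel to $\pi_1$ and to the corresponding (possibly mirrored) cut in $\pi_2$ preserves the invariant of being equal up to symmetry of cut-rules, the base case being that cut-free proofs so related are equal. The only cosmetic difference is the induction measure---the paper inducts on the length of a fixed normalizing $\betabarto$-sequence for $\pi_1$ obtained from \autoref{cor:beta_wn}, while you induct on the density $\density{\pi_1}$; both work for the same reason.
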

\begin{proof}
The idea is to eliminate any $cut$-rule in $\pi_1$ and $\pi_2$ in a symmetric way, and show the resulting proofs are equal up to symmetry of $cut$-rules.
The result follows through weak normalization, as two cut-free proofs equal up to symmetry of $cut$-rules are simply equal.

We reason by induction on a sequence $\pi_1\betabartostar\rho$, with $\rho$ some cut-free proof found by weak normalization (\autoref{cor:beta_wn}).
If this sequence is empty, then $\pi_1$ is $cut$-free and so $\pi_1=\pi_2$.
Thus, take $\pi_1\betabarto\pi_1'$ its first step.
We apply the corresponding step in $\pi_2$, on the corresponding $cut$-rule which may be the symmetric version of $c$ (for if a $\betabarto$ step can be applied, then the one with switched premises can be applied on the symmetric version).
We obtain a proof $\pi_2'$ with the $cut$-rule still permuted compared to $\pi_1'$ in case of a commutative case (or the $cut$-rule erased when commuting with a $\top$, or permuted and duplicated with a $\with$) and $0$, $1$ or $2$ symmetric $cut$-rules resulting from a key case.
In all cases, $\pi_1'$ and $\pi_2'$ are equal up to symmetry of $cut$-rules, allowing us to conclude by induction hypothesis.
\end{proof}

\begin{lem}[{\cite[Lemmas~C.2 and~C.3]{calculusformall}, adapted}]
\label{lem:eqbb_eqb}
Let $\pi$ and $\tau$ be proofs such that $\pi \eqccast \tau$.
Then $\pi$ and $\tau$ are $\betabar$-equal.
\end{lem}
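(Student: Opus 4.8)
The plan is to prove the slightly more uniform statement that $\eqccast$ is contained in $\betabar$-equality, $\eqccast \subseteq \eqbb$, from which the lemma is immediate. Since $\eqbb$ is an equivalence relation it would in principle suffice to treat a single commutation $\pi \eqcc \tau$; however, as already observed, eliminating a cut may duplicate one of the two permuted cuts (through a $\with$–$cut$ commutative case), turning a single $cut$–$cut$ commutation into several. For this reason I keep the conclusion at the level of the reflexive–transitive closure $\eqccast$ and argue by induction. The well-founded measure is the $\betabar$-\emph{height} of $\pi$, that is, the length of a longest $\betabarto$-reduction issued from $\pi$; this is finite because $\betabarto$ is strongly normalizing (it is contained in $\betabarto \cdot \eqc$, which is strongly normalizing by \autoref{lem:beta_wn}). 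In the base case $\pi$ is $\betabar$-normal, hence cut-free by \autoref{cor:beta_wn}; since a $cut$–$cut$ commutation needs two $cut$-rules, every proof in the chain $\pi \eqccast \tau$ is then cut-free too, the chain is empty, and $\tau = \pi$.

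For the inductive step $\pi$ has at least one $cut$-rule, so by \autoref{cor:beta_wn} there is a step $\pi \betabarto \pi'$, and $\pi'$ has strictly smaller $\betabar$-height than $\pi$. The core is a transport (local-diagram) lemma stating that a single $\betabarto$ step commutes, up to $cut$–$cut$ permutations, with a single $cut$–$cut$ commutation: whenever $\sigma' \eqcc \sigma \betabarto \sigma''$, there is some $\sigma'''$ with $\sigma' \betabartostar \sigma'''$ and $\sigma'' \eqccast \sigma'''$. Applying this lemma successively to each commutation of the chain $\pi \eqccast \tau$, I push the reduction $\pi \betabarto \pi'$ all the way to $\tau$, obtaining $\tau \betabartostar \tau'$ together with $\pi' \eqccast \tau'$ (transitivity of $\eqccast$ glues the intermediate links). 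The induction hypothesis, available since the $\betabar$-height dropped, then gives $\pi' \eqbb \tau'$, and the chain $\pi \betabarto \pi' \eqbb \tau' \betabarfromstar \tau$ — made of $\betabarto$ and $\betabarfrom$ steps only — yields $\pi \eqbb \tau$.

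It remains to establish the transport lemma, which is where the real work lies and which is handled by the same style of case analysis as \autoref{lem:eqc_betato}. Writing the permuted configuration as $\cutf{(\cutf{\pi_1}{\pi_2}{B})}{\pi_3}{A} \eqcc \cutf{(\cutf{\pi_1}{\pi_3}{A})}{\pi_2}{B}$, I distinguish whether the step $\sigma \betabarto \sigma''$ touches a rule of the two permuted cuts. When it acts only on disjoint rules the two transformations commute, and $\sigma'$ reduces to a proof $\eqcc$-related to $\sigma''$. When it is a commutative case pushing a rule past one of the two cuts, I first normalize the subproofs lying above the rules to be permuted — exactly as in the proof of \autoref{lem:eqc_betato} — so that the residual discrepancy is a genuine $cut$–$cut$ permutation. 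The delicate situations are those where the step is a $\with$–$cut$ commutative case duplicating one of the permuted cuts, answered on the other side by performing the reduction and then re-permuting each copy, hence producing a sequence $\eqccast$ rather than a single $\eqcc$; and those where a key case or an $ax$ key case consumes one of the permuted cuts, in which case I invoke the identification of proofs up to symmetry of a $cut$-rule (\autoref{rem:cut_comm_in_beta}), already known to sit inside $\betabar$-equality by \autoref{lem:comm_cut_betabarto}, to absorb the resulting mismatch.

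The main obstacle is precisely the bookkeeping in this transport lemma: a single cut-elimination step can interact with a $cut$–$cut$ commutation in several ways, and one must check in each that the outcomes on the two sides differ only by a possibly empty, possibly multiple, sequence of $cut$–$cut$ commutations, after a suitable preliminary normalization above the permuted rules. The $\with$–$cut$ duplication case is what forces the conclusion to be phrased with $\eqccast$, and the cut-consuming cases are what require \autoref{lem:comm_cut_betabarto}; once these are dispatched, the induction on the $\betabar$-height closes the argument with no further difficulty.
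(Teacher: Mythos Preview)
Your overall architecture is sensible and different from the paper's, but there are two genuine gaps, both of which the paper's proof is explicitly designed to avoid.

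First, your transport lemma does not hold with the conclusion $\sigma'' \eqccast \sigma'''$ in all cases. Take the configuration where the two permuted cuts share the branch $\mu_1$ and $\mu_1$ is an $ax$-rule. Performing the $ax$ key case on the inner cut of $\sigma$ and on the inner cut of $\sigma'$ yields two proofs that differ by \emph{symmetry of a $cut$-rule}, not by a sequence of $cut$--$cut$ commutations. You notice this and say you ``absorb'' it via \autoref{lem:comm_cut_betabarto}, but that lemma only gives $\betabar$-equality, not an $\eqccast$ relation; once such a step appears in the middle of the chain, the next transport step no longer has an $\eqccast$ hypothesis to work from, and your iteration breaks.

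Second, and more structurally, even if the transport lemma held as stated, iterating it along $\pi \eqccast \tau$ is not justified. Your local lemma turns one $\betabarto$ step into possibly several on the other side, and (in the key-case situation) turns one $\eqcc$ link into possibly two. When you then transport the resulting $\betabartostar$ sequence across the next $\eqcc$ link, after the first step you are left with an $\eqccast$ chain of unknown length against which you must transport further steps --- you are back to needing the full strip lemma you are trying to prove. Your induction parameter is the $\betabar$-height of $\pi$, but nothing tells you that $\eqcc$ preserves $\betabar$-height (the paper explicitly remarks that its density measure is \emph{not} preserved by $\eqcc$), so you cannot bound the intermediate proofs and the nested iteration has no termination argument. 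This is precisely the defect the authors flag in the original Cockett--Pastro proof (``proves an induction hypothesis on a single $cut$--$cut$ commutation, but applies it on a sequence'').

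The paper sidesteps both issues by a different induction, on the pair $(cr(\pi), n)$ where $cr$ counts non-$cut$ rules slice-wise and $n$ is the length of the $\eqccast$ chain. The point is that $cr$ \emph{is} preserved by $\eqcc$, so all proofs in the chain share it, and each case either drops $n$ or, when a key case is applied and $n$ may grow, strictly drops $cr$. The $\mu_1=ax$ case is then dispatched directly via \autoref{lem:comm_cut_betabarto} with no recursive call. If you want to rescue your approach, you would need to strengthen the statement to $(\eqcc \cup \text{cut-sym})^\ast$ and find a well-founded measure compatible with the blow-up in both directions; at that point you are essentially reinventing the paper's $(cr,n)$ argument.
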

\begin{proof}
We reason by induction on $(cr(\pi), n)$ where $cr(\pi)$ is the sum over slices of $\pi$ of the number of non-$cut$-rules in these slices, and $n$ the number of  $\eqcc$ steps in $\pi \eqccast \tau$.
Remark the number of non-cut-rules of a slice in a proof is preserved by $cut-cut$ commutations, so that $cr(\pi) = cr(\tau)$.

If $n = 0$ we are done as $\pi = \tau$.
Otherwise, $\pi \eqccast \tau$ is a non-empty sequence $\pi \eqcc \phi \eqccast \tau$.
By induction hypothesis, $\phi \eqbb \tau$ (using $cr(\pi) = cr(\phi) = cr(\tau)$), so we have to prove $\pi \eqbb \phi$.
As $\pi \eqcc \phi$, we have:

\begin{adjustbox}{}
$\pi=$
{\Rsub{$\mu_1$}{$A\orth, B\orth, \Gamma_1$}\Rsub{$\mu_2$}{$B, \Gamma_2$}\Rcut{$A\orth, \Gamma_1, \Gamma_2$}\Rsub{$\mu_3$}{$A, \Gamma_3$}\Rcut{$\Gamma_1, \Gamma_2, \Gamma_3$}\DisplayProof}
and\hskip1em
$\phi=$
{\Rsub{$\mu_1$}{$A\orth, B\orth, \Gamma_1$}\Rsub{$\mu_3$}{$A, \Gamma_3$}\Rcut{$B\orth, \Gamma_1, \Gamma_3$}\Rsub{$\mu_2$}{$B, \Gamma_2$}\Rcut{$\Gamma_1, \Gamma_2, \Gamma_3$}\DisplayProof}
\end{adjustbox}
(or one of the three other analogous situations, up to switching branches of the two $cut$-rules).
Remark we use here that $\beta$-equality is contextual in order to assume the two commuted cut-rules are the last rules of $\pi$ and $\phi$.
By eliminating cuts in $\mu_1$, $\mu_2$ and $\mu_3$ using $\betabarto$ (\autoref{cor:beta_wn}), we can assume them cut-free.
Remark that $cr(\mu_i) \leq cr(\pi)$ for each $i$, as cut-elimination in MALL reduces the number of non-$cut$-rules in a slice.
We conclude through a case study of the last rules of $\mu_1$, $\mu_2$ and $\mu_3$.

\proofpar{1. If the last rule of one of the $\mu_i$ is an $ax$-rule.}
If the last rule of $\mu_2$ or $\mu_3$ is an $ax$-rule, then using an $ax$ key case in $\pi$ and $\phi$ we obtain $\pi \betabarto \cdot \betabarfrom \phi$.
If the last rule of $\mu_1$ is an $ax$-rule, applying an $ax$ key case in $\pi$ and $\phi$ yield two proofs equal up to symmetry of a $cut$-rule, so $\betabar$-equal (\autoref{lem:comm_cut_betabarto}).
Thus, we assume from now on that the last rule of each $\mu_i$ is not an $ax$-rule.

\proofpar{2. If the last rule of one of the $\mu_i$ commutes with both $cut$-rules.}
This case is represented on \autoref{fig:proof_eqbb_eqb_2}.
If the last rule $r$ of $\mu_1$ is not the rule introducing the formula $A\orth$ nor the formula $B\orth$, by applying in both $\pi$ and $\phi$ two commutative cases on $r$ with the two successive $cut$-rules below it, we obtain respectively

\begin{adjustbox}{}
$\pi'=$
{\Rsub{$\mu_1'$}{$A\orth, B\orth, \Gamma_1'$}\Rsub{$\mu_2$}{$B, \Gamma_2$}\Rcut{$A\orth, \Gamma_1', \Gamma_2$}\Rsub{$\mu_3$}{$A, \Gamma_3$}\Rcut{$\Gamma_1', \Gamma_2, \Gamma_3$}\RightLabel{$r$}\UnaryInfC{$\fCenter \Gamma_1, \Gamma_2, \Gamma_3$}\DisplayProof}
and
$\phi'=$
{\Rsub{$\mu_1'$}{$A\orth, B\orth, \Gamma_1'$}\Rsub{$\mu_3$}{$A, \Gamma_3$}\Rcut{$B\orth, \Gamma_1', \Gamma_3$}\Rsub{$\mu_2$}{$B, \Gamma_2$}\Rcut{$\Gamma_1', \Gamma_2, \Gamma_3$}\RightLabel{$r$}\UnaryInfC{$\fCenter \Gamma_1, \Gamma_2, \Gamma_3$}\DisplayProof}
\end{adjustbox}
with $\mu_1$ being {\bottomAlignProof\AxiomC{$\mu_1'$}\noLine\UnaryInfC{$\fCenter A\orth, B\orth, \Gamma_1'$}\RightLabel{$r$}\UnaryInfC{$\fCenter A\orth, B\orth, \Gamma_1$}\DisplayProof},
$\pi \betabartoplus \pi'$ and $\phi \betabartoplus \phi'$.
We conclude by induction hypothesis

\begin{adjustbox}{}
{\Rsub{$\mu_1'$}{$A\orth, B\orth, \Gamma_1'$}\Rsub{$\mu_2$}{$B, \Gamma_2$}\Rcut{$A\orth, \Gamma_1', \Gamma_2$}\Rsub{$\mu_3$}{$A, \Gamma_3$}\Rcut{$\Gamma_1', \Gamma_2, \Gamma_3$}	\DisplayProof}
$\eqbb$
{\Rsub{$\mu'_1$}{$A\orth, B\orth, \Gamma_1'$}\Rsub{$\mu_3$}{$A, \Gamma_3$}\Rcut{$B\orth, \Gamma_1', \Gamma_3$}\Rsub{$\mu_2$}{$B, \Gamma_2$}\Rcut{$\Gamma_1', \Gamma_2, \Gamma_3$}\DisplayProof}
\end{adjustbox}
(which have at least the non-$cut$-rule $r$ removed with respect to $\pi$), so that $\pi \betabartoplus \pi' \eqbb \phi' \betabarfromplus \phi$.
Remark that we apply the induction hypothesis two times in case $r$ is a $\with$-rule (once in each branch of $r$) and do not apply it if $r$ is a $\top$-rule.
The cases where the last rule of $\mu_2$ is not the rule introducing the formula $B\orth$, and where the last rule of $\mu_3$ is not the rule introducing the formula $A\orth$, are similar.

\proofpar{3. If the last rule of all $\mu_i$ do not commute with the $cut$-rules.}
This case is represented on \autoref{fig:proof_eqbb_eqb_3}.
If the last rule $r_1$ of $\mu_1$ is the rule introducing the formula $B\orth$, and the last rule $r_2$ of $\mu_2$ the one introducing $B$, we can apply a key case in $\pi$ and a commutative case on $r_1$ followed by a key case in $\phi$.
The proofs we obtain are

\begin{adjustbox}{}
$\pi'=$
{\AxiomC{$\mu_1'$}\noLine\UnaryInfC{$\fCenter A\orth, \Gamma'_1$}\AxiomC{$\mu_2'$}\noLine\UnaryInfC{$\fCenter \Gamma'_2$}\RightLabel{$cut^*$}\BinaryInfC{$\fCenter A\orth, \Gamma_1, \Gamma_2$}\Rsub{$\mu_3$}{$A, \Gamma_3$}\Rcut{$\Gamma_1, \Gamma_2, \Gamma_3$}\DisplayProof}
and
$\phi'=$
{\AxiomC{$\mu_1'$}\noLine\UnaryInfC{$\fCenter A\orth, \Gamma'_1$}\Rsub{$\mu_3$}{$A, \Gamma_3$}\Rcut{$\Gamma'_1, \Gamma_3$}\AxiomC{$\mu_2'$}\noLine\UnaryInfC{$\fCenter \Gamma'_2$}\RightLabel{$cut^*$}\BinaryInfC{$\fCenter \Gamma_1, \Gamma_2, \Gamma_3$}\DisplayProof}
\end{adjustbox}
with $\mu_1=${\AxiomC{$\mu_1'$}\noLine\UnaryInfC{$\fCenter A\orth, \Gamma'_1$}\RightLabel{$r_1$}\UnaryInfC{$\fCenter A\orth, B\orth, \Gamma_1$}\DisplayProof},
$\mu_2=${\AxiomC{$\mu_2'$}\noLine\UnaryInfC{$\fCenter \Gamma'_2$}\RightLabel{$r_2$}\UnaryInfC{$\fCenter B, \Gamma_2$}\DisplayProof},
$\pi \betabarto \pi'$ and $\phi \betabartoplus \phi'$.
The notation $cut^*$ here means we have a number of $cut$-rules: 0 if the key case was a $\bot-1$, 1 if it was a $\with-\oplus_1$ or $\with-\oplus_2$, 2 if it was a $\parr-\tens$.
Remark that $\pi' \eqccast \phi'$ (this sequence being of size the number of $cut$-rules represented by $cut^*$) with $cr(\pi') < cr(\pi)$ for we removed at least rules $r_1$ and $r_2$.
Hence, we conclude $\pi \betabarto \pi' \eqbb \phi' \betabarfromplus \phi$.
The case where the last rule of $\mu_1$ introduces $A\orth$, and the last rule of $\mu_2$ introduces $A$, is similar.
\end{proof}

\begin{figure}
\begin{adjustbox}{}
\begin{tikzpicture}
	\node (pi) at (0,0) {$\pi=${\AxiomC{$\mu_1'$}\noLine\UnaryInfC{$\fCenter A\orth, B\orth, \Gamma_1'$}\RightLabel{$r$}\UnaryInfC{$\fCenter A\orth, B\orth, \Gamma_1$}\Rsub{$\mu_2$}{$B, \Gamma_2$}\Rcut{$A\orth, \Gamma_1, \Gamma_2$}\Rsub{$\mu_3$}{$A, \Gamma_3$}\Rcut{$\Gamma_1, \Gamma_2, \Gamma_3$}\DisplayProof}};
	\node (phi) at (10,0) {$\phi=${\AxiomC{$\mu_1'$}\noLine\UnaryInfC{$\fCenter A\orth, B\orth, \Gamma_1'$}\RightLabel{$r$}\UnaryInfC{$\fCenter A\orth, B\orth, \Gamma_1$}\Rsub{$\mu_3$}{$A, \Gamma_3$}\Rcut{$B\orth, \Gamma_1, \Gamma_3$}\Rsub{$\mu_2$}{$B, \Gamma_2$}\Rcut{$\Gamma_1, \Gamma_2, \Gamma_3$}\DisplayProof}};

	\node (pi') at (0,-4) {$\pi'=${\Rsub{$\mu_1'$}{$A\orth, B\orth, \Gamma_1'$}\Rsub{$\mu_2$}{$B, \Gamma_2$}\Rcut{$A\orth, \Gamma_1', \Gamma_2$}\Rsub{$\mu_3$}{$A, \Gamma_3$}\Rcut{$\Gamma_1', \Gamma_2, \Gamma_3$}\RightLabel{$r$}\UnaryInfC{$\fCenter \Gamma_1, \Gamma_2, \Gamma_3$}\DisplayProof}};

	\node (phi') at (10,-4) {$\phi'=${\Rsub{$\mu_1'$}{$A\orth, B\orth, \Gamma_1'$}\Rsub{$\mu_3$}{$A, \Gamma_3$}\Rcut{$B\orth, \Gamma_1', \Gamma_3$}\Rsub{$\mu_2$}{$B, \Gamma_2$}\Rcut{$\Gamma_1', \Gamma_2, \Gamma_3$}\RightLabel{$r$}\UnaryInfC{$\fCenter \Gamma_1, \Gamma_2, \Gamma_3$}\DisplayProof}};
\begin{myscope}
	\path (pi) \edgelabeld{$\eqcc$} (phi);
\end{myscope}
\begin{myscopec}{red}
	\path (pi) \edgelabel{$\betabar^+$} (pi');
	\path (phi) \edgelabel{$\betabar^+$} (phi');
	\path (pi') \edgelabeld{$\eqcc$} (phi');
\end{myscopec}
\end{tikzpicture}
\end{adjustbox}
\caption{Schematic representation of case 2 in the proof of \autoref{lem:eqbb_eqb}}
\label{fig:proof_eqbb_eqb_2}
\end{figure}

\begin{figure}
\begin{adjustbox}{}
\begin{tikzpicture}
	\node (pi) at (0,0) {$\pi=${\AxiomC{$\mu_1'$}\noLine\UnaryInfC{$\fCenter A\orth, \Gamma'_1$}\RightLabel{$r_1$}\UnaryInfC{$\fCenter A\orth, B\orth, \Gamma_1$}\AxiomC{$\mu_2'$}\noLine\UnaryInfC{$\fCenter \Gamma'_2$}\RightLabel{$r_2$}\UnaryInfC{$\fCenter B, \Gamma_2$}\Rcut{$A\orth, \Gamma_1, \Gamma_2$}\Rsub{$\mu_3$}{$A, \Gamma_3$}\Rcut{$\Gamma_1, \Gamma_2, \Gamma_3$}\DisplayProof}};
	\node (phi) at (10,0) {$\phi=${\AxiomC{$\mu_1'$}\noLine\UnaryInfC{$\fCenter A\orth, \Gamma'_1$}\RightLabel{$r_1$}\UnaryInfC{$\fCenter A\orth, B\orth, \Gamma_1$}\Rsub{$\mu_3$}{$A, \Gamma_3$}\Rcut{$B\orth, \Gamma_1, \Gamma_3$}\AxiomC{$\mu_2'$}\noLine\UnaryInfC{$\fCenter \Gamma'_2$}\RightLabel{$r_2$}\UnaryInfC{$\fCenter B, \Gamma_2$}\Rcut{$\Gamma_1, \Gamma_2, \Gamma_3$}\DisplayProof}};

	\node (pi') at (0,-4) {$\pi'=${\AxiomC{$\mu_1'$}\noLine\UnaryInfC{$\fCenter A\orth, \Gamma'_1$}\AxiomC{$\mu_2'$}\noLine\UnaryInfC{$\fCenter \Gamma'_2$}\RightLabel{$cut^*$}\BinaryInfC{$\fCenter A\orth, \Gamma_1, \Gamma_2$}\Rsub{$\mu_3$}{$A, \Gamma_3$}\Rcut{$\Gamma_1, \Gamma_2, \Gamma_3$}\DisplayProof}};

	\node (phi') at (10,-4) {$\phi'=${\AxiomC{$\mu_1'$}\noLine\UnaryInfC{$\fCenter A\orth, \Gamma'_1$}\Rsub{$\mu_3$}{$A, \Gamma_3$}\Rcut{$\Gamma'_1, \Gamma_3$}\AxiomC{$\mu_2'$}\noLine\UnaryInfC{$\fCenter \Gamma'_2$}\RightLabel{$cut^*$}\BinaryInfC{$\fCenter \Gamma_1, \Gamma_2, \Gamma_3$}\DisplayProof}};
\begin{myscope}
	\path (pi) \edgelabeld{$\eqcc$} (phi);
\end{myscope}
\begin{myscopec}{red}
	\path (pi) \edgelabel{$\betabar$} (pi');
	\path (phi) \edgelabel{$\betabar^+$} (phi');
	\path (pi') \edgelabeld{$\eqccast$} (phi');
\end{myscopec}
\end{tikzpicture}
\end{adjustbox}
\caption{Schematic representation of case 3 in the proof of \autoref{lem:eqbb_eqb}}
\label{fig:proof_eqbb_eqb_3}
\end{figure}

The theorem claimed in the main part then follows.

\theqbeqc*
\begin{proof}
The normal forms are $\beta$-equal, so $\betabar$-equal (\autoref{lem:eqbb_eqb}).
By \autoref{th:CReqc}, they are related by $\eqc$ since no reduction $\betabarto$ can be applied to them.
\end{proof}


\section{Conjectures on isomorphisms of symmetric monoidal closed categories with finite (co)products}
\label{sec:conjectures_isos_smcc}

As a follow-up of \autoref{subsec:smcc_isos},
we present here some conjectures about isomorphisms in SMCC with finite products and finite coproducts on one hand, and in SMCC with finite coproducts (but not finite products) on the other hand.
We also give in these two cases formulas that are not isomorphic, while they are isomorphic in $\star$-autonomous categories with finite products.

We conjecture that isomorphisms in SMCC with both finite products \emph{and} finite coproducts correspond to adding the equations of theory $\thCP$ (\autoref{sec:conclu}) to $\thS$ (\autoref{tab:eqisos_cat}).
However, our approach through $\star$-autonomous categories does not work since for example $\top\lolipop(\top\oplus\top)$ and $(0\with 0)\lolipop 0$ are isomorphic in $\star$-autonomous categories but not in SMCC with finite products and coproducts.
Indeed, in $\star$-autonomous categories one has:
\begin{align*}
  \top\lolipop(\top\oplus\top) &\isoc \top\lolipop(((\top\lolipop\bot)\with(\top\lolipop\bot))\lolipop\bot)\\ &\isoc (\top\tens(0\with 0))\lolipop\bot \isoc ((0\with 0)\tens\top)\lolipop\bot\\ &\isoc (0\with 0)\lolipop(\top\lolipop\bot) \isoc (0\with 0)\lolipop 0
\end{align*}
or directly by interpreting these formulas into MALL, one gets:
\begin{equation*}
\tradm{\top\lolipop(\top\oplus\top)}\isom 0\parr(\top\oplus\top) \isom (\top\oplus\top)\parr 0\isom\tradm{(0\with 0)\lolipop 0}
\end{equation*}
(the analogue of the non-ambiguity property, used in \autoref{subsec:smcc_isos}, does not hold).
That this isomorphism is not valid in the SMCC setting follows from the succeeding necessary condition.

\begin{lem}
\label{lem:isoimall}
If $F\iso G$ in symmetric monoidal closed categories with finite products and coproducts with $F$ and $G$ distributed (\ie\ $\tradm{F}$ and $\tradm{G}$ are distributed) then there exist cut-free proofs of $F\vdash G$ and $G\vdash F$ in IMALL (intuitionistic multiplicative additive linear logic)~\cite{biermancatll} whose left $0$ rules introduce $0\vdash 0$ sequents only and right $\top$ rules introduce $\top\vdash\top$ sequents only.
\end{lem}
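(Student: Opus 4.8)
The plan is to prove the statement by adapting the pattern-analysis technology developed in Section~\ref{sec:red_unit-free} to the intuitionistic setting. The key observation is that an isomorphism $F \iso G$ in symmetric monoidal closed categories with finite products and coproducts gives, semantically, a pair of mutually inverse morphisms, which by the standard Curry--Howard reading correspond to IMALL proofs of the sequents $F \vdash G$ and $G \vdash F$ composing to identities. First I would make precise the bridge between the categorical isomorphism and IMALL: since such categories model IMALL, an isomorphism produces proofs $\pi$ of $F \vdash G$ and $\pi'$ of $G \vdash F$ whose two composites by cut are equal, up to cut-elimination and axiom-expansion, to the identity proofs (the IMALL analogues of $\id{\tradm{F}}$ and $\id{\tradm{G}}$). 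After expanding axioms and eliminating cuts (using weak normalization of cut-elimination, the IMALL counterpart of \autoref{cor:beta_wn}), I obtain cut-free proofs of $F \vdash G$ and $G \vdash F$.

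\textbf{Key steps.}
The heart of the argument is to transport the constraints on unit rules established in \autoref{prop:eqc_id} and Lemmas~\ref{lem:eqct_id_top} and~\ref{lem:eqct_id_bot} to IMALL. First I would observe that the two-sided intuitionistic sequent calculus for IMALL can be translated into the one-sided MALL calculus we have been using (a left-hand formula $H$ in IMALL corresponds to $\tradm{H}\orth$ on the one-sided side), so that an IMALL proof of $F \vdash G$ maps to a MALL proof of $\fCenter \tradm{F}\orth, \tradm{G}$. Under this translation, a left $0$-rule corresponds to a $\top$-rule acting on a sub-formula of $\tradm{F}\orth$ (since $\tradc{0\orth}=\top$ morally, and $0$ on the left is $\top$ on the right), and a right $\top$-rule corresponds similarly to a $\top$-rule. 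Because $F$ and $G$ are assumed distributed, \autoref{lem:eqct_id_top} applies to the translated proofs: every $\top$-rule sits in a $\top/0$-pattern {\Rtop{$\top,0$}\DisplayProof}, with the $\top$ being the dual of a $0$ in the opposite conclusion. Reading this back through the translation yields exactly that each left $0$-rule introduces a sequent of the form $0 \vdash 0$ and each right $\top$-rule a sequent $\top \vdash \top$.

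\textbf{Main obstacle.}
The delicate point, and the step I expect to be hardest, is to \emph{verify the faithfulness of the translation} between IMALL and one-sided MALL at the level of proof transformations, so that the pattern results genuinely transfer. In particular, I would need to check that cut-elimination and the relevant rule commutations in IMALL are simulated by their MALL counterparts, and that the notion of ``distributed'' on the two sides is compatible, \ie\ that $\tradm{F}$ and $\tradm{G}$ being distributed in the sense of \autoref{def:dist} is what the hypothesis supplies. A subtlety is that IMALL has no negation as a connective, so the self-dual organization of the one-sided calculus that underpins \autoref{prop:eqc_id} (where $\top$ in $A$ is dual to $0$ in $A\orth$) must be re-expressed purely intuitionistically as the pairing of a left-$0$ occurrence in one conclusion with a right-$\top$ occurrence matching it. Once this dictionary is set up carefully, the conclusion is essentially a restatement of the already-proved \autoref{lem:eqct_id_top}; the remaining work is bookkeeping to ensure that no left-$0$ or right-$\top$ rule can acquire extra formulas in its principal sequent, which is precisely what the $\top/0$-pattern forbids. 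This finally yields the cut-free IMALL proofs with the stated restriction on their $0$- and $\top$-rules.
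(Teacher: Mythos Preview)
Your proposal is correct and follows essentially the same route as the paper: embed the IMALL proofs witnessing the isomorphism into one-sided MALL, invoke \autoref{lem:eqct_id_top} (which applies because $\tradm{F}$ and $\tradm{G}$ are distributed and the translated proofs still witness an isomorphism, \MALL\ being itself an SMCC with finite products and coproducts), and read the resulting $\top/0$-pattern back as the constraint on left-$0$ and right-$\top$ rules. The ``main obstacle'' you flag is less of an obstacle than you fear: you do not need to check that IMALL cut-elimination is simulated step-by-step in MALL, only that the image of your IMALL isomorphism is a MALL isomorphism, which is immediate from \MALL\ being a model of the class of categories in question.
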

\begin{proof}
If $F$ and $G$ are isomorphic in SMCC with finite products and coproducts, the associated isomorphisms can be represented as IMALL proofs which we can assume to be cut-free.
These proofs can be interpreted as MALL proofs (corresponding to the fact that \MALL\ is an SMCC with finite products and coproducts).
By \autoref{lem:eqct_id_top}, these MALL proofs have their $\top$ rules introducing $\vdash\top,0$ sequents only which gives the required property on the IMALL proofs we started with.
\end{proof}

We deduce that $\top\lolipop(\top\oplus\top) \iso (0\with 0)\lolipop 0$ is an isomorphism of MALL but not one of IMALL, using \autoref{lem:isoimall} and since all cut-free proofs of $(0\with 0)\lolipop 0\vdash \top\lolipop(\top\oplus\top)$ in IMALL have the following shape:
\begin{prooftree}
  \AxiomC{}
  \RightLabel{$\top$R}
  \UnaryInfC{$(0\with 0)\lolipop 0, \top\vdash \top$}
  \RightLabel{$\oplus_i$R}
  \UnaryInfC{$(0\with 0)\lolipop 0, \top\vdash \top\oplus\top$}
  \RightLabel{$\lolipop$R}
  \UnaryInfC{$(0\with 0)\lolipop 0\vdash \top\lolipop(\top\oplus\top)$}
\end{prooftree}

Let us now investigate SMCC with finite coproducts only (without products).
It is important to notice that an initial object $0$ in a SMCC induces that $0\lolipop F$ is a terminal object for any $F$.
This first means that we cannot uncorrelate completely products and coproducts.
It also means that the theory of isomorphisms includes the equation $0\lolipop F\iso 0\lolipop G$ even if it does not occur in $\thCP$.\footnote{Note this equation is derivable from $\thCP$ through the use of $\top$, but $\top$ does not belong to the language here.}
It might be the only missing equation: we conjecture isomorphisms are characterized by the following equations:
\begin{equation*}
\begin{array}{rcl}
  F\tens(G\tens H) & = & (F\tens G)\tens H \\
  F\tens G & = & G\tens F \\
  F\tens 1 & = & F \\
  F\oplus (G\oplus H) &=& (F\oplus G)\oplus H \\
  F\oplus G &=& G\oplus F \\
  F\oplus 0 &=& F \\
  F\tens(G\oplus H) &=& (F\tens G)\oplus(F\tens H) \\
  F\tens 0 &=& 0 \\
  (F\tens G)\lolipop H & = & F\lolipop (G\lolipop H) \\
  1 \lolipop F & = & F \\
  0\lolipop F & = & 0\lolipop G
\end{array}
\end{equation*}

Regarding a characterization through $\star$-autonomous categories, it is again not possible since, if we denote by $\top$ a terminal object, $(\top\lolipop 0)\lolipop 0$
and $\top\lolipop(\top\tens\top)$
are isomorphic in $\star$-autonomous categories (using $0\isoc\top\lolipop\bot$):
\begin{align*}
  (\top\lolipop 0)\lolipop 0 &\isoc (\top\lolipop(\top\lolipop\bot))\lolipop(\top\lolipop\bot) \\
  &\isoc ((\top\tens\top)\lolipop\bot)\lolipop(\top\lolipop\bot) \\
  &\isoc (((\top\tens\top)\lolipop\bot)\tens\top)\lolipop\bot \\
  &\isoc (\top\tens((\top\tens\top)\lolipop\bot))\lolipop\bot \\
  &\isoc \top\lolipop(((\top\tens\top)\lolipop\bot)\lolipop\bot) \isoc \top\lolipop(\top\tens\top)
\end{align*}
(or $\tradm{(\top\lolipop 0)\lolipop 0} \isom (\top\tens\top)\parr 0 \isom 0\parr(\top\tens\top) \isom \tradm{\top\lolipop(\top\tens\top)}$ in MALL).
But they are not isomorphic in SMCC with initial and terminal objects by \autoref{lem:isoimall} since all cut-free proofs of $(\top\lolipop 0)\lolipop 0\vdash \top\lolipop(\top\tens\top)$ in IMALL have the following shape (up to permuting the premises of the ($\tens$R)-rule):
\begin{equation*}
  \AxiomC{}
  \RightLabel{$\top$R}
  \UnaryInfC{$(\top\lolipop 0)\lolipop 0\vdash \top$}
  \AxiomC{}
  \RightLabel{$\top$R}
  \UnaryInfC{$\top\vdash \top$}
  \RightLabel{$\tens$R}
  \BinaryInfC{$(\top\lolipop 0)\lolipop 0, \top\vdash \top\tens\top$}
  \RightLabel{$\lolipop$R}
  \UnaryInfC{$(\top\lolipop 0)\lolipop 0\vdash \top\lolipop(\top\tens\top)$}
  \DisplayProof
\qquad\text{or}\qquad
  \AxiomC{}
  \RightLabel{$\top$R}
  \UnaryInfC{$(\top\lolipop 0)\lolipop 0, \top\vdash \top$}
  \AxiomC{}
  \RightLabel{$\top$R}
  \UnaryInfC{$\vdash \top$}
  \RightLabel{$\tens$R}
  \BinaryInfC{$(\top\lolipop 0)\lolipop 0, \top\vdash \top\tens\top$}
  \RightLabel{$\lolipop$R}
  \UnaryInfC{$(\top\lolipop 0)\lolipop 0\vdash \top\lolipop(\top\tens\top)$}
  \DisplayProof
\end{equation*}
\end{document}